\documentclass[11pt,a4paper]{article}
\usepackage{amsfonts,amssymb,amsmath,amsthm,cite}
\usepackage{graphicx}
\setlength{\parindent}{0in}

\textheight=22.5cm
\textwidth=16.cm
\oddsidemargin=0cm
\evensidemargin=\oddsidemargin
\topmargin=0.6cm
\topskip=0cm
\headheight=0.6cm
\headsep=0.2cm

\DeclareMathOperator{\Dom}{Dom}          
\DeclareMathOperator{\Ima}{Im}           
\DeclareMathOperator{\Ker}{Ker}          
\DeclareMathOperator{\Res}{Res}          
\DeclareMathOperator{\Tr}{Tr}            

\newtheorem{assumption}{Assumption}[section]
\newtheorem{theorem}[assumption]{Theorem}
\newtheorem{corollary}[assumption]{Corollary}
\newtheorem{conjecture}[assumption]{Conjecture}
\newtheorem{lemma}[assumption]{Lemma}
\newtheorem{definition}[assumption]{Definition}
\newtheorem{prop}[assumption]{Proposition}
\newtheorem{remark}[assumption]{Remark}

\newcommand{\Th}{\Theta}
\renewcommand{\th}{\theta}
\newcommand{\A}{\mathcal{A}}             
\renewcommand{\a}{\alpha}                
\newcommand{\B}{\mathcal{B}}             
\newcommand{\C}{\mathbb{C}}              
\newcommand{\Coo}{C^\infty}              
\newcommand{\DD}{\mathcal{D}}            
\newcommand{\eps}{\varepsilon}           
\newcommand{\Ga}{\Gamma}                 
\newcommand{\ga}{\gamma}                 
\renewcommand{\H}{\mathcal{H}}           
\newcommand{\half}{{\mathchoice{\thalf}{\thalf}{\shalf}{\shalf}}}
\newcommand{\hideqed}{\renewcommand{\qed}{}} 
\renewcommand{\L}{\mathcal{L}}           
\newcommand{\la}{\lambda}                
\newcommand{\N}{\mathbb{N}}              

\newcommand{\Q}{\mathbb{Q}}              
\newcommand{\R}{\mathbb{R}}              


\newcommand{\set}[1]{\{\,#1\,\}}         
\newcommand{\shalf}{{\scriptstyle\frac{1}{2}}}  
\renewcommand{\SS}{\mathcal{S}}          
\newcommand{\T}{\mathbb{T}}              
\newcommand{\thalf}{\tfrac{1}{2}}        
\newcommand{\wh}{\widehat}               
\newcommand{\wt}{\widetilde}             
\newcommand{\Z}{\mathbb{Z}}              

\def\<#1,#2>{\langle#1\,,\,#2\rangle}        
\newcommand{\norm}[1]{\left\lVert#1\right\rVert}   

\newbox\ncintdbox \newbox\ncinttbox
\setbox0=\hbox{$-$} \setbox2=\hbox{$\displaystyle\int$}
\setbox\ncintdbox=\hbox{\rlap{\hbox to \wd2{\kern-.1em\box2\relax\hfil}}\box0\kern.1em}
\setbox0=\hbox{$\vcenter{\hrule width 4pt}$}
\setbox2=\hbox{$\textstyle\int$}
\setbox\ncinttbox=\hbox{\rlap{\hbox
    to \wd2{\kern-.14em\box2\relax\hfil}}\box0\kern.1em}
\newcommand{\ncint}{\mathop{\mathchoice{\copy\ncintdbox}
    {\copy\ncinttbox}{\copy\ncinttbox}
    {\copy\ncinttbox}}\nolimits}

\begin{document}

\thispagestyle{empty}

\begin{center}

CENTRE DE PHYSIQUE TH\'EORIQUE$\,^1$\\
CNRS--Luminy, Case 907\\
13288 Marseille Cedex 9\\
FRANCE\\

\vspace{3cm}

{\Large\textbf{Spectral action on noncommutative torus}} \\
\vspace{0.5cm}

{\large D. Essouabri$^{2}$, B. Iochum$^{1, 3}$, C. Levy$^{1,3}$
and A. Sitarz$^{4,5}$}

\vspace{1cm}
{\it Dedicated to Alain Connes on the occasion of his 60th birthday}

\vspace{1.5cm}

{\large\textbf{Abstract}}
\end{center}

\begin{quote}
The spectral action on noncommutative torus is obtained, using a
Chamseddine--Connes formula via computations of zeta functions.
The importance of a Diophantine condition is outlined. Several
results on holomorphic continuation of series of holomorphic
functions are obtained in this context.
\end{quote}

\vspace{1cm}

February 2007

\vspace{1cm}

\noindent
PACS numbers: 11.10.Nx, 02.30.Sa, 11.15.Kc

MSC--2000 classes: 46H35, 46L52, 58B34

CPT-P06-2007

\vspace{1cm}

\noindent $^1$ UMR 6207

-- Unit\'e Mixte de Recherche du CNRS et des
Universit\'es Aix-Marseille I, Aix-Marseille II et de l'Universit\'e
du Sud Toulon-Var

-- Laboratoire affili\'e \`a la FRUMAM -- FR 2291\\
$^2$ Universit\'e de Caen (Campus II),
 Laboratoire de Math. Nicolas Oresme (CNRS UMR 6139),
B.P. 5186, 14032 Caen, France, essoua@math.unicaen.fr \\
$^3$ Also at Universit\'e de Provence,
iochum@cpt.univ-mrs.fr, levy@cpt.univ-mrs.fr\\
$^4$ Institute of Physics, Jagiellonian University,
Reymonta 4, 30-059 Krak\'ow, Poland\\
sitarz@if.uj.edu.pl \\
$^5$ Partially supported by MNII Grant 115/E-343/SPB/6.PR UE/DIE
50/2005--2008
\newpage

\section{Introduction}

The spectral action introduced by
Chamseddine--Connes plays an important role \cite{CC} in
noncommutative geometry. More
precisely, given a spectral triple $(\A,\H,\DD)$ where $\A$ is an
algebra acting on the Hilbert space $\H$ and $\DD$ is a Dirac-like
operator (see \cite{Book,Polaris}), they proposed a physical action
depending only on the spectrum of the covariant Dirac operator
\begin{equation}
\label{covDirac} \DD_{A}:=\DD + A + \epsilon \,JAJ^{-1}
\end{equation}
where $A$ is a one-form represented on $\H$,
so has the decomposition
\begin{equation}
\label{oneform}
A=\sum_{i}a_{i}[\DD,\,b_{i}],
\end{equation}
with $a_{i}$, $b_{i}\in \A$, $J$ is a real
structure on the triple corresponding to charge conjugation and
$\epsilon \in \set{1,-1}$ depending on the dimension of this triple
and comes from the commutation relation
\begin{equation}
    \label{Jcom}
J\DD=\epsilon \, \DD J.
\end{equation}

This action is defined by
\begin{equation}
\label{action}
\SS(\DD_{A},\Phi,\Lambda):=\Tr \big( \Phi( \DD_{A} /\Lambda) \big)
\end{equation}
where $\Phi$ is any even positive cut-off function which could
be replaced by a step function up to some mathematical difficulties
investigated in \cite{Odysseus}. This means that $\SS$ counts the
spectral values of $\vert \DD_{A} \vert$ less than the mass scale
$\Lambda$ (note that the resolvent of $\DD_{A}$ is compact since, by
assumption, the same is true for $\DD$, see Lemma \ref{compres}
below).

In \cite{GI2002}, the spectral action on NC-tori has been computed
only for operators of the form $\DD + A$ and computed for $\DD_{A}$
in \cite{GIVas}.  It appears that the implementation of
the real structure via $J$, does change the spectral action, up to a
coefficient when the torus has dimension 4.
Here we prove that this can be also directly obtained from
the Chamseddine--Connes analysis of \cite{CC1} that we follow quite
closely. Actually,
\begin{align}
\label{formuleaction}
    \SS(\DD_{A},\Phi,\Lambda) \, = \,\sum_{0<k\in Sd^+} \Phi_{k}\,
    \Lambda^{k} \ncint \vert D_{A}\vert^{-k} + \Phi(0) \,
    \zeta_{D_{A}}(0) +\mathcal{O}(\Lambda^{-1})
\end{align}
where $D_A = \DD_A + P_A$, $P_A$ the projection on $\Ker \DD_A$,
$\Phi_{k}= \half\int_{0}^{\infty} \Phi(t) \, t^{k/2-1} \, dt$ and
$Sd^+$ is the strictly positive part of the dimension
spectrum of $(\A,\H,\DD)$. As we will see,
$Sd^+=\set{1,2,\cdots,n}$ and $\ncint |D_{A}|^{-n}=\ncint |D|^{-n}$.
Moreover, the coefficient
$\zeta_{D_A}(0)$ related to the constant term in
(\ref{formuleaction}) can
be computed from the unperturbed spectral action since it has been
proved in \cite{CC1} (with
an invertible Dirac operator and a 1-form $A$ such that $\DD+A$ is
also invertible) that
\begin{align} \label{constant}
\zeta_{\DD+A}(0)-\zeta_{\DD}(0)= \sum_{q=1}^{n}\tfrac{(-1)^{q}}{q}
\ncint (A\DD^{-1})^{q},
\end{align}
using $\zeta_X(s)=\Tr(|X|^{-s})$. We will see how this formula can be
extended to
the case a noninvertible Dirac operator and noninvertible
perturbation of the form $\DD+\wt A$ where 
$\wt A:=A+\eps JAJ^{-1}$.

All this results on spectral action are quite important in physics,
especially in quantum field theory and particle physics, where one
adds to the effective action some counterterms explicitly given by
(\ref{constant}), see for instance
\cite{Carminati,CC,CC1,CCM,Gayral,Goursac,GI2002,GIVas,
Knecht,Strelchenko,Vassilevich2002,Vassilevich2005,Vassilevich2007}.

Since the computation of zeta functions is crucial here, we
investigate in section 2 residues of series and integrals.
This section contains independent interesting results on the
holomorphy of series of holomorphic functions. In particular,
the necessity of a Diophantine constraint is naturally emphasized.

In section 3, we revisit the notions of pseudodifferential
operators and their associated zeta functions and of dimension
spectrum. The reality operator $J$ is
incorporated and we pay a particular attention to kernels of
operators which can play a role in the constant term of
(\ref{formuleaction}). This section concerns general spectral triple
with simple dimension spectrum.

Section 4 is devoted to the example of the noncommutative torus.
It is shown that it has a vanishing tadpole.

In section 5, all previous technical points are then widely used for
the computation of terms in (\ref{formuleaction}) or
(\ref{constant}).

Finally, the spectral action (\ref{constant}) is obtained in section 6
and we conjecture that the noncommutative spectral action
of $\DD_{A}$ has terms proportional to the spectral action
of $\DD+A$ on the commutative torus.

\section{Residues of series and integral, holomorphic continuation,
etc}
Notations:

In the following, the prime in ${\sum}'$ means that we omit terms
with division by zero in the summand.
$B^{n}$ (resp. $S^{n-1}$) is the closed ball (resp.
the sphere) of $\R^n$ with center $0$ and radius 1 and the
Lebesgue measure on $S^{n-1}$ will be noted $dS$.

For any $x=(x_1,\dots,x_n) \in \R^n$ we denote by
$|x|=\sqrt{x_1^2+\dots+x_n^2}$ the
euclidean norm and $|x|_1 :=|x_1|+\dots +|x_n|$.

$\N =\{1, 2,\dots \}$ is the set of positive integers and $\N_0 =\N
\cup \{0\}$ the set of non negative integers.

By  $f(x,y) \ll_{y} g(x)$
uniformly in $x$, we mean that $\vert f(x,y)\vert \leq a(y)
\, \vert g(x) \vert$ for all $x$ and $y$ for some $a(y)>0$.

\subsection{Residues of series and integral}
In order to be able to compute later the residues of certain series,
we prove here the following

\begin{theorem}\label{res-int} Let $P(X)=\sum_{j=0}^{d} P_j(X)
 \in \C[X_1,\cdots,X_n]$ be
a polynomial function where $P_j$
is the homogeneous part of $P$ of degree $j$. The function
$$ \zeta^P(s):={\sum}'_{k\in\Z^n} \tfrac{P(k)}{|k|^s}, \,\,\, s \in \C
$$ has a meromorphic continuation to the whole complex plane
$\C$. \par Moreover $\zeta^P(s)$ is not entire if and only if
$\mathcal{P}_P:= \{j \  : \
\int_{u\in S^{n-1}} P_j(u)\, dS(u)\neq 0 \}\neq \varnothing$.
In that case, $\zeta^P$ has only simple poles at the points $j+n$,
$j\in \mathcal{P}_{P}$, with
$$
\underset{s=j+n}{\Res} \, \zeta^P(s) = \int_{u\in S^{n-1}} P_j(u)\,
dS(u).
$$
\end{theorem}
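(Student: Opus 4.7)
By linearity of $P \mapsto \zeta^P$, it suffices to treat $P = P_j$ homogeneous of some degree $j$. If $j$ is odd, the substitution $k \mapsto -k$ in the sum forces $\zeta^{P_j}(s) = 0$ in the half-plane of absolute convergence $\Re(s) > j + n$, hence on all of $\C$; simultaneously $P_j(-u) = -P_j(u)$ gives $\int_{S^{n-1}} P_j\, dS = 0$, so the statement holds trivially. Assume henceforth $j$ even.

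The plan is to pass to a theta series via the Mellin identity $|k|^{-s} = \Gamma(s/2)^{-1}\int_0^\infty t^{s/2-1} e^{-t|k|^2}\, dt$, which for $\Re(s) > j + n$ gives by Fubini
\[\Gamma(s/2)\,\zeta^{P_j}(s) = \int_0^\infty t^{s/2-1}\, \theta_j(t)\, dt, \quad \theta_j(t) := {\sum_{k\in\Z^n}}' P_j(k)\, e^{-t|k|^2}.\]
Since $\theta_j(t) = O(e^{-t})$ as $t \to \infty$, the piece $\int_1^\infty$ is entire in $s$, and the whole analysis reduces to the small-$t$ behavior of $\theta_j$.

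For the latter, I would add back the (vanishing, since $j \geq 1$) $k = 0$ term and apply Poisson summation to the Schwartz function $f_t(x) := P_j(x)\, e^{-t|x|^2}$. Using the Gaussian Fourier transform together with $\widehat{x^\alpha g} = (-2\pi i)^{-|\alpha|}\, \partial^\alpha \widehat g$, each $\widehat{f_t}(m)$ with $m \neq 0$ carries a factor $e^{-\pi^2|m|^2/t}$ and so is super-exponentially small as $t \to 0^+$, while the zero mode $\widehat{f_t}(0) = \int_{\R^n} P_j(x)\, e^{-t|x|^2}\, dx$ evaluates in polar coordinates to $\tfrac{1}{2}\, \Gamma\bigl(\tfrac{n+j}{2}\bigr)\, t^{-(n+j)/2}\int_{S^{n-1}} P_j\, dS$. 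Hence
\[\theta_j(t) = \tfrac{1}{2}\, \Gamma\bigl(\tfrac{n+j}{2}\bigr)\, t^{-(n+j)/2}\, \int_{S^{n-1}} P_j\, dS + r_j(t),\]
with $r_j$ smooth on $(0,1]$ and super-exponentially small at $0$.

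Plugging back, $\int_0^1 t^{s/2-1} r_j(t)\, dt$ is entire in $s$, while the leading term integrates to $\Gamma\bigl(\tfrac{n+j}{2}\bigr)\bigl(\int_{S^{n-1}} P_j\, dS\bigr)/(s - n - j)$. So $\Gamma(s/2)\,\zeta^{P_j}(s)$ continues meromorphically to $\C$ with at most one simple pole, at $s = n + j$, of residue $\Gamma\bigl(\tfrac{n+j}{2}\bigr) \int_{S^{n-1}} P_j\, dS$; dividing by $\Gamma(s/2)$, which is analytic and equal to $\Gamma\bigl(\tfrac{n+j}{2}\bigr)$ at $s = n+j$, yields the stated residue of $\zeta^{P_j}$, and the trivial zeros of $1/\Gamma(s/2)$ at $s = 0, -2, -4, \ldots$ absorb any would-be spurious poles. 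Summing over $j$ completes the proof. The principal obstacle is the Poisson step: one must justify the interchange and pin down the leading coefficient $\widehat{f_t}(0)$ exactly as the sphere integral of $P_j$, which is what converts the Fourier-analytic input into the stated geometric residue.
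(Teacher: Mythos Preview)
Your argument is correct and complete for $j\geq 1$, but note that the degree-zero case slips through: for $j=0$ the term $P_0(0)=P_0$ does not vanish, so after Poisson summation you have $\theta_0(t)=\widehat{f_t}(0)+\sum_{m\neq 0}\widehat{f_t}(m)-P_0$. The constant $-P_0$ produces $\int_0^1 t^{s/2-1}(-P_0)\,dt=-2P_0/s$, a simple pole of $\Gamma(s/2)\zeta^{P_0}(s)$ at $s=0$, which is precisely what your remark about the zeros of $1/\Gamma(s/2)$ is meant to handle; it would be cleaner to say so explicitly, since as written the parenthetical ``vanishing, since $j\geq 1$'' quietly excludes the most important case (the Epstein zeta function itself).

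Apart from this, your route is genuinely different from the paper's. The paper does not use Mellin transforms or Poisson summation at all: it shows directly, via an iterated Euler--Maclaurin argument in each coordinate (Lemmas~\ref{majPQs} and~\ref{deltaf}), that the difference
\[
\Delta_P(s)={\sum_{k\in\Z^n}}'\,\tfrac{P(k)}{|k|^s}-\int_{\R^n\setminus B^n}\tfrac{P(x)}{|x|^s}\,dx
\]
extends to an entire function, and then evaluates the integral in polar coordinates to get $\tfrac{1}{j+n-s}\int_{S^{n-1}}P_j\,dS$. Your theta/Poisson method is the classical Riemann approach; it has the advantage of yielding, almost for free, a functional equation and the precise values at nonpositive even integers, while the paper's Euler--Maclaurin comparison is more elementary (no Fourier analysis) and makes the ``sum $\approx$ integral'' heuristic literal. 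Both land on the same polar-coordinate computation of the leading term, which is where the sphere integral and the residue emerge.
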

\medskip

The proof of this theorem is based on the following lemmas.

\begin{lemma}\label{majPQs}
For any polynomial $P\in \C[X_1,\dots,X_n]$ of total
degree $\delta (P):=\sum_{i=1}^n deg_{X_i}P$ and any $\alpha \in
\N_0^n$, we have
$$\partial^{\alpha} \left(P(x) |x|^{-s}\right)\ll_{P, \alpha, n}
  (1+|s|)^{|\alpha|_1} |x|^{-\sigma -|\alpha|_1 +\delta (P)}$$
uniformly  in  $x \in \R^n$ verifying $|x|\geq 1$, where
$\sigma=\Re(s)$.
\end{lemma}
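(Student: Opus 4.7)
My plan is to reduce the estimate to derivatives of the pure power $|x|^{-s}$ by means of the Leibniz rule, writing
$$\partial^\alpha\bigl(P(x)|x|^{-s}\bigr)=\sum_{\beta\leq\alpha}\binom{\alpha}{\beta}(\partial^\beta P)(x)\,\partial^{\alpha-\beta}|x|^{-s}.$$
For the polynomial factor, $\partial^\beta P$ has total degree at most $\delta(P)-|\beta|_1$, so $|(\partial^\beta P)(x)|\leq C_{P,\beta}\,|x|^{\delta(P)-|\beta|_1}$ whenever $|x|\geq 1$; this is elementary (with the convention that the bound is trivially correct when $\partial^\beta P\equiv 0$). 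The actual work is localised in the bound for $\partial^{\alpha-\beta}|x|^{-s}$ with the stated $(1+|s|)^{|\alpha-\beta|_1}$ control in $s$.

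The central step will be a structural claim, proved by induction on $|\gamma|_1$: for every $\gamma\in\N_0^n$ one can write
$$\partial^\gamma|x|^{-s}=\sum_j p_{\gamma,j}(s)\,x^{\mu_{\gamma,j}}\,|x|^{-s-2m_{\gamma,j}},$$
a finite sum in which each $p_{\gamma,j}\in\C[s]$ has degree at most $|\gamma|_1$ and every summand is homogeneous of degree $-s-|\gamma|_1$ in $x$, i.e.\ $|\mu_{\gamma,j}|_1-2m_{\gamma,j}=-|\gamma|_1$. The base case is trivial, and the inductive step uses only $\partial_i|x|^{-s-2m}=-(s+2m)\,x_i\,|x|^{-s-2m-2}$: applying $\partial_i$ to a term $p(s)\,x^\mu\,|x|^{-s-2m}$ produces two new terms, one obtained by differentiating $x^\mu$ (which leaves $p$ unchanged and decreases $|\mu|_1$ by $1$), and one obtained by differentiating $|x|^{-s-2m}$ (which multiplies $p$ by $-(s+2m)$, raising its $s$-degree by $1$, while increasing both $|\mu|_1$ and $m$ by $1$). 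In both cases the homogeneity identity and the degree bound are preserved. Estimating $|p_{\gamma,j}(s)|\ll(1+|s|)^{|\gamma|_1}$ and using $|x^{\mu_{\gamma,j}}|\cdot|x|^{-\sigma-2m_{\gamma,j}}\leq|x|^{-\sigma-|\gamma|_1}$ then gives
$$\bigl|\partial^\gamma|x|^{-s}\bigr|\ll_{\gamma,n}(1+|s|)^{|\gamma|_1}\,|x|^{-\sigma-|\gamma|_1}\qquad(|x|\geq 1).$$

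Putting the two pieces together, each term in the Leibniz expansion is dominated, for $|x|\geq 1$, by
$$C\,(1+|s|)^{|\alpha-\beta|_1}\,|x|^{\delta(P)-|\beta|_1}\,|x|^{-\sigma-|\alpha-\beta|_1}=C\,(1+|s|)^{|\alpha-\beta|_1}\,|x|^{-\sigma-|\alpha|_1+\delta(P)},$$
so that the exponent of $|x|$ is independent of $\beta$ and already matches the one asserted in the lemma. Bounding $(1+|s|)^{|\alpha-\beta|_1}$ by $(1+|s|)^{|\alpha|_1}$ and summing the finitely many contributions yields the estimate. The main obstacle is the inductive structural claim for $\partial^\gamma|x|^{-s}$, since one has to track simultaneously the degree of the $s$-polynomial and the $x$-homogeneity of every term produced; once that bookkeeping is in place, everything else is a mechanical application of the Leibniz rule.
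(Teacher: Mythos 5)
Your proof is correct and takes essentially the same approach as the paper: both arguments combine the Leibniz rule with an inductive analysis of $\partial^\gamma|x|^{-s}$. The only cosmetic difference is that the paper first reduces to monomials $P=X^\gamma$ by linearity and proves the full closed formula $\partial^{\alpha}(|x|^{-s})=\alpha!\sum_{\beta+2\mu=\alpha}\binom{-s/2}{|\beta|_1+|\mu|_1}\frac{(|\beta|_1+|\mu|_1)!}{\beta!\,\mu!}\frac{x^\beta}{|x|^{\sigma+2(|\beta|_1+|\mu|_1)}}$, whereas you prove only the degree-and-homogeneity bookkeeping statement needed for the estimate; this carries exactly the same information as far as the lemma is concerned.
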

\begin{proof}
By linearity, we may assume without loss of generality
that $P(X)=X^{\gamma}$ is a monomial. It is easy to
prove (for example by induction on $|\alpha|_1$) that for all $\alpha
\in \N_0^n$ and
$x \in \R^n \setminus \{0\}$:
$$
\partial^{\alpha} \left(|x|^{-s}\right)
=\alpha! \sum_{\genfrac{}{}{0pt}{2}{\beta, \mu \in \N_0^n}{\beta + 2
\mu =
\alpha }}
\genfrac(){0pt}{1}{-s/2}{|\beta|_1 +|\mu|_1}
\tfrac{(|\beta|_1+|\mu|_1)!}{\beta! ~ \mu!}
\tfrac{x^{\beta}}{|x|^{\sigma+2(|\beta|_1+|\mu|_1)}}.
$$
It follows that for all $\alpha \in \N_0^n$, we have uniformly in  $x
\in \R^n$
verifying $|x|\geq 1$:
\begin{equation}\label{majQs}
\partial^{\alpha} \left(|x|^{-s}\right) \ll_{\alpha,  n}
  (1+|s|)^{|\alpha|_1} |x|^{-\sigma -|\alpha|_1}\,.
\end{equation}

By Leibniz formula and (\ref{majQs}), we have uniformly in $x\in \R^n$
verifying $|x|\geq 1$:
\begin{align*}
\partial^{\alpha} \left(x^\gamma |x|^{-s}\right) & = \, \sum_{\beta
\leq \alpha}
\genfrac(){0pt}{1}{\alpha}{\beta} \,  \partial^{\beta} (x^\gamma)~
\partial^{\alpha
  -\beta} \left(|x|^{-s}\right) \\
& \ll_{\gamma, \alpha, n}  \sum_{\beta \leq \alpha; \beta \leq
\gamma} x^{\gamma
-\beta}~ (1+|s|)^{|\alpha|_1-|\beta|_1}~
|x|^{-\sigma-|\alpha|_1+|\beta|_1} \\
& \ll_{\gamma, \alpha, n}  (1+|s|)^{|\alpha|_1}~
|x|^{-\sigma-|\alpha|_1+|\gamma|_1}.
\tag*{\qed}
\end{align*}
\hideqed
\end{proof}

\begin{lemma}\label{deltaf} Let $P\in \C[X_1,\dots,X_n]$ be a
  polynomial of degree $d$.
Then, the difference
$$
\Delta_P(s):={\sum}'_{k\in\Z^n}
\tfrac{P(k)}{|k|^s}-\int_{\R^n\setminus B^{n}}
\tfrac{P(x)}{|x|^s} \, dx
$$
which is defined for $\Re(s)>d+n$, extends holomorphically on the
whole complex plane
$\C$.
\end{lemma}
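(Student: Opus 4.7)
The plan is an iterated Euler--Maclaurin argument: combine a multivariate Taylor expansion over unit cubes at lattice points with an integration-by-parts identity on the exterior of the unit ball. Set $Q_k := k + [-\tfrac12,\tfrac12]^n$ and choose $K$ large enough that $\Omega := \bigcup_{|k|\geq K} Q_k$ is contained in $\R^n \setminus B^n$. The finite tail $\sum_{0 < |k| < K} f(k)$ and the bounded discrepancy $\int_{\R^n \setminus B^n} f\,dx - \int_\Omega f\,dx$ (integrals over compact sets where $|x|\geq 1$) are entire in $s$, so modulo entire functions,
$$\Delta_P(s) \equiv S_0(s) := \sum_{|k|\geq K}\Big[f(k) - \int_{Q_k} f(x)\,dx\Big], \qquad f(x) := P(x)|x|^{-s}.$$

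Next, I would expand $f$ about each $k$ by Taylor's formula to order $N$ and integrate over $Q_k$. The symmetry of $Q_k$ about $k$ annihilates odd multi-indices, while Lemma~\ref{majPQs} bounds the remainder by $(1+|s|)^N |k|^{-\sigma - N + d}$, making the summed remainder holomorphic on $\{\Re s > d + n - N\}$. What remains is a finite linear combination, with explicit constants $C_\alpha$, of the quantities
$$T^{(\alpha)}(s) := \sum_{|k|\geq K}\partial^\alpha f(k), \qquad \alpha \text{ even},\ 2 \leq |\alpha|_1 \leq N-1.$$

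The crucial observation is that $\int_\Omega \partial^\alpha f(x)\,dx$ is an \emph{entire} function of $s$ whenever $|\alpha|_1 \geq 1$: repeated integration by parts on $\Omega$ (with no contribution at infinity, by Lemma~\ref{majPQs}) reduces this integral to one over $S^{n-1}$ of a function polynomial in both $x$ and $s$ --- using the explicit formula for $\partial^\beta|x|^{-s}$ from the proof of Lemma~\ref{majPQs} --- which is manifestly entire. Splitting $T^{(\alpha)} = \int_\Omega \partial^\alpha f\,dx + S_1^{(\alpha)}$, the reduced piece
$$S_1^{(\alpha)}(s) := \sum_{|k|\geq K}\partial^\alpha f(k) - \int_\Omega \partial^\alpha f(x)\,dx$$
has the same structural form as $S_0$ with $f$ replaced by $\partial^\alpha f$ and, by Lemma~\ref{majPQs}, is absolutely convergent on the enlarged half-plane $\{\Re s > d + n - |\alpha|_1\}$.

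Iterating the Taylor+IBP step on each $S_1^{(\alpha)}$, then on the resulting $S_2^{(\alpha+\beta)}$, and so on through $L$ stages, expresses $\Delta_P$ as a finite sum of three types of terms: (i) entire functions from the IBP integrals produced at each stage, (ii) remainder terms holomorphic on $\{\Re s > d + n - N\}$, and (iii) terminal pieces $S_L^{(\gamma)}$ with $|\gamma|_1 \geq 2L$, absolutely convergent on $\{\Re s > d + n - 2L\}$. For any target $\sigma_0 \in \R$, choosing $L$ and $N$ large enough makes every term holomorphic on $\{\Re s > \sigma_0\}$, so $\Delta_P$ extends to an entire function. The main obstacle is precisely the IBP identity: without it, each $T^{(\alpha)}$ is only meromorphic and one would have to track separately the cancellation of its poles against those of other correction sums; integration by parts on the exterior domain is what keeps the iteration closed among holomorphic objects and makes the extension go through uniformly in $s$.
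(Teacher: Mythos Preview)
Your argument is essentially correct and works, with one descriptive slip: the integration by parts on $\Omega=\bigcup_{|k|\ge K}Q_k$ produces a boundary integral over the \emph{inner staircase boundary} $\partial_0\Omega$ (a finite union of cube faces), not over $S^{n-1}$. This is harmless for your purposes, since $\partial_0\Omega$ is compact and contained in $\{|x|\ge 1\}$, so the boundary term is still an integral of a function that is entire in $s$ pointwise and uniformly bounded on compacts in $s$; hence $\int_\Omega\partial^\alpha f\,dx$ extends entire, exactly as you need. A single integration by parts (in any direction $j$ with $\alpha_j\ge1$) already suffices; ``repeated'' IBP is not needed.

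The paper takes a different route. It introduces a smooth cutoff $\psi\in C^\infty(\R^n)$ with $\psi\equiv1$ on $\{|x|\ge1\}$ and $\psi\equiv0$ on $\{|x|\le\tfrac12\}$, sets $f(x,s)=\psi(x)P(x)|x|^{-s}$, and applies the \emph{one--dimensional} Euler--Maclaurin formula in each coordinate successively. After one application in $x_n$, the difference $\sum_{m_n}f(m',m_n)-\int_\R f(m',t)\,dt$ is a Bernoulli--function integral of $\partial_{x_n}^{N+1}f$, which by Lemma~\ref{majPQs} is $\mathcal O\big((1+|s|)^{N+1}(1+|m'|)^{-\sigma-N+\delta(P)}\big)$; summing over $m'\in\Z^{n-1}$ then gives holomorphy on $\{\sigma>\delta(P)+n-N\}$. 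Iterating $n$ times replaces all sums by integrals. The discrepancy between the $\psi$-weighted integral and $\int_{\R^n\setminus B^n}$ is over a compact annulus, hence entire.

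Both arguments are Euler--Maclaurin in spirit. The paper's coordinate-by-coordinate version is shorter and avoids the IBP step entirely: the classical one-dimensional formula already absorbs the ``integral of a derivative is a boundary value'' identity into the Bernoulli remainder. Your cube-based approach is more geometric and avoids the smooth cutoff, at the cost of having to make the IBP observation explicit and to iterate a multivariable Taylor expansion rather than a ready-made formula.
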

\begin{proof}
We fix in the sequel a function $\psi\in C^\infty(\R^n ,\R)$
verifying for all $x\in
\R^n$
$$
0\leq \psi(x) \leq 1, \quad \psi(x)=1 \text{ if }|x|\geq 1
\quad \text{and} \quad \psi(x)=0 \text{ if } |x|\leq 1/2.
$$
The function $f(x,s) :=  \psi (x)~P(x)~ |x|^{-s}$, $x\in \R^n$ and
$s\in \C$,
is in ${\cal
  C}^\infty (\R^n \times \C)$  and depends holomorphically on $s$.

Lemma \ref{majPQs} above shows that $f$ is
a ``gauged symbol'' in the terminology of \cite[p. 4]{GSW}.
Thus \cite[Theorem 2.1]{GSW} implies that $\Delta_P(s)$ extends
holomorphically on the whole complex plane $\C$. However, to be
complete, we will give here a short proof of Lemma \ref{deltaf}:
\par It follows from the classical Euler--Maclaurin formula that for
any
function $h: \R \rightarrow \C$ of class ${\cal C}^{N+1}$
verifying $ \lim_{|t|\rightarrow +\infty} h^{(k)}(t)=0$ and $\int_{\R}
|h^{(k)} (t)| ~dt <+\infty$ for any
$k=0 \dots,N+1$, that we have
$$
\sum_{k\in \Z} h(k) = \int_{\R} h(t) + \tfrac{(-1)^N}{(N+1)!}
\int_{\R} B_{N+1}(t)~h^{(N+1)}(t) ~dt
$$
where $B_{N+1}$ is the Bernoulli function of
order $N+1$ (it is a bounded periodic function.)
\par Fix $m' \in \Z^{n-1}$ and $s\in
\C$. Applying this to the function $h(t):= \psi (m',t)~P(m',t)
~|(m',t)|^{-s}$ (we use Lemma \ref{majPQs} to verify hypothesis),
 we obtain that for any $N\in \N_0$:
\begin{equation}\label{*1}
\sum_{m_n \in \Z} \psi (m',m_n) ~P(m',m_n) ~|(m',m_n)|^{-s}
= \int_{\R} \psi(m',t)
~P(m',t) ~|(m',t)|^{-s} ~dt +{\cal R}_N(m';s)
\end{equation}
where $ {\cal R}_N(m';s):=\tfrac{(-1)^N}{(N+1)!} \int_{\R} B_{N+1}(t)~
\tfrac{\partial^{N+1}}{{\partial x_n}^{N+1}} \left(\psi(m',t)~P(m',t)
~|(m',t)|^{-s}\right)~dt$.\\
By Lemma \ref{majPQs},
$$
\int_{\R} {\Big |} B_{N+1}(t)~
\tfrac{\partial^{N+1}}{{\partial x_n}^{N+1}} \left(\psi (m',t)
~P(m',t)~|(m',t)|^{-s}\right){\Big |}~dt \ll_{P,n, N}
(1+|s|)^{N+1}~ (|m'|+1)^{-\sigma
-N+ \delta(P)}.
$$
Thus $ \sum_{m' \in \Z^{n-1}} {\cal R}_N(m';s)$
converges absolutely and define a holomorphic function in
the half plane $\{\sigma
=\Re (s) > \delta(P)+n-N\}$. \par Since $N$ is an arbitrary integer,
by letting
$N\rightarrow \infty$ and using $(\ref{*1})$ above, we conclude that:
$$s\mapsto \sum_{(m',m_n) \in \Z^{n-1}\times \Z} \psi (m',m_n)
~P(m',m_n)
~|(m',m_n)|^{-s}-\sum_{m'
  \in \Z^{n-1}} \int_{\R} \psi (m',t) ~P(m',t)~|(m',t)|^{-s}~dt$$ has
a holomorphic continuation to the whole complex plane $\C$.\par
After $n$ iterations, we obtain that
$$s\mapsto {\sum}_{m\in \Z^{n}} \psi(m)~P(m)
~|m|^{-s}-\int_{\R^n} \psi(x)~P(x) ~|x|^{-s}~dx$$ has a
holomorphic continuation to the whole $\C$.\\
To finish the proof of Lemma \ref{deltaf}, it is enough to notice
that:

\hspace{1 cm} $\bullet$ $\psi(0)=0$ and  $\psi (m)=1$, $\forall m\in
\Z^n\setminus \{0\}$;

\hspace{1 cm} $\bullet$ $s\mapsto \int_{B^n} \psi(x)~P(x)~|x|^{-s}~dx
= \int_{\{x\in \R^n : 1/2\leq |x|\leq 1\}} \psi(x)~P(x)~|x|^{-s}~dx$
is a holomorphic
function on $\C$.
\end{proof}

\begin{proof}[Proof of Theorem \ref{res-int}]

Using the polar decomposition of the volume form
$dx=\rho^{n-1}\,d\rho\, dS$ in $\R^n$, we get for $\Re(s)>d+n$,
$$
\int_{\R^n \setminus B^{n}} \tfrac{P_j(x)}{|x|^s}dx =
\int_{1}^{\infty}
\tfrac{\rho^{j+n-1}}{\rho^s}\int_{S^{n-1}} P_j(u)\, dS(u) =
\tfrac{1}{j+n-s}
\int_{S^{n-1}} P_j(u)\, dS(u).
$$
Lemma \ref{deltaf} now gives the result.
\end{proof}

\subsection{Holomorphy of certain series}
Before stating the main result of this section, we give first in the
following some
preliminaries from Diophantine approximation theory:

\begin{definition}\label{ba}
(i) Let $\delta >0$. A vector $a \in \R^n$ is said to be
$\delta-$diophantine
if there exists $c >0$ such that $|q . a -m| \geq c \,|q|^{-\delta}$,
$\forall q \in \Z^n \setminus \set{0}$ and $\forall m \in \Z$. \\
We note ${\cal BV}(\delta )$ the set of $\delta-$diophantine
vectors and ${\cal
BV} :=\cup_{\delta >0} {\cal BV}(\delta)$ the set of diophantine vectors.\par
(ii) A matrix $\Th \in {\cal M}_{n}(\R)$ (real $n \times n$ matrices)
will be
said to be diophantine if there
exists $u \in \Z^n$ such that  ${}^t\Th (u)$ is a diophantine
vector of $\R^n$.
\end{definition}
{\bf Remark.} A classical result from Diophantine approximation
asserts
that for all $\delta >n$, the Lebesgue measure of
$\R^n \setminus {\cal BV}(\delta)$ is zero (i.e almost any element of
$\R^n$ is $\delta-$diophantine.)
\par Let $\Th \in {\cal M}_n(\R)$. If its row
of index $i$ is a diophantine vector of $\R^n$ (i.e. if $L_i
\in {\cal BV}$)
then ${}^t \Th (e_i) \in {\cal BV}$ and thus $\Th$ is a diophantine matrix. It
follows that almost any matrix of ${\cal M}_n(\R)\approx \R^{n^2}$ is
diophantine.\par

\medskip

The goal of this section is to show the following
\begin{theorem}\label{analytic}
Let $P\in \C[X_1,\cdots,X_n]$ be a homogeneous polynomial of degree
$d$ and let $b$ be in
$\mathcal{S}(\Z^{n} \times \dots \times \Z^{n})$ ($q$ times,
$q\in\N$). Then,

(i)
Let $a \in \R^n$. We define $f_a(s):={\sum}'_{k\in \Z^n}
\frac{P(k)}{|k|^s}\,
e^{2\pi i k.a}$.

\quad 1.
If $a\in \Z^n$, then $f_a$ has a meromorphic
continuation to the whole complex plane
$\C$.\\ Moreover if  $S$ is the unit sphere and $dS$
its Lebesgue measure, then
$f_a$ is not entire if and only if
$\int_{u\in S^{n-1}} P(u)\, dS(u)\neq 0$. In that case, $f_a$
has only a simple pole at the point $d+n$, with
$\underset{s=d+n}{\Res} \, f_a(s) = \int_{u\in
  S^{n-1}} P(u)\, dS(u)$.

\quad 2.
If $a\in \R^n\setminus \Z^n$, then $f_a(s)$ extends holomorphically
to the whole
complex plane $\C$.

(ii)
 Suppose that  $\Th \in {\cal M}_{n}(\R)$ is diophantine.
 For any $(\eps_i)_i\in \{-1,0,1\}^{q}$, the function
$$
g(s):={\sum}_{l\in (\Z^n)^{q}} \, b(l) \,f_{\Th\,
\sum_i \eps_i l_i}(s)
$$
extends meromorphically to the whole complex plane  $\C$
with only one possible pole
on $s= d+n$.\par Moreover, if we set
${\cal Z}:=\{l\in(\Z^n)^{q} \ : \ \sum_{i=1}^q \eps_i
l_i= 0\}$ and $V:=\sum_{l\in {\cal Z}} \, b(l)$, then

1. If $V\int_{S^{n-1}} P(u)\, dS(u)\neq 0$, then $s=d+n$ is a
simple pole of $g(s)$ and
$$
\underset{s=d+n}{\Res} \, g(s) = V\,
\int_{u\in
  S^{n-1}} P(u)\, dS(u).
$$

2. If $V\int_{S^{n-1}} P(u)\, dS(u)=0$, then $g(s)$ extends
holomorphically to the
whole complex plane $\C$.

(iii)
Suppose that $\Th \in \mathcal{M}_n(\R)$ is diophantine. For any $(\eps_i)_i\in
\{-1,0,1\}^{q}$, the function
$$
g_0(s):={\sum}_{l\in (\Z^n)^{q}\setminus
  {\cal Z}} \,
b(l)\,f_{\Th\, \sum_{i=1}^q \eps_i l_i}(s)
$$
where ${\cal Z}:=\{l\in(\Z^n)^{q} \ : \
\sum_{i=1}^q \eps_i l_i= 0\}$ extends holomorphically to the whole
complex plane $\C$.
\end{theorem}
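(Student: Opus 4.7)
The plan is to dispatch (i).1 as an immediate corollary of Theorem~\ref{res-int}, derive (i).2 by a summation-by-parts argument exploiting the nonintegrality of some coordinate of $a$, and then reduce (ii) to (iii) by splitting the $l$-sum into $l\in{\cal Z}$ (where (i).1 applies with $a=0$) and $l\notin{\cal Z}$, which is exactly $g_0$. The substance of the theorem therefore lies in (iii), which is also the only place where the Diophantine hypothesis is used.

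Part (i).1 is immediate: when $a\in\Z^n$ the factor $e^{2\pi i k\cdot a}$ equals $1$, so $f_a=\zeta^P$ and Theorem~\ref{res-int} applies verbatim to the single homogeneous component $P$ of degree $d$. For (i).2, pick $j$ with $a_j\notin\Z$ and set $\rho:=1-e^{2\pi i a_j}\neq 0$ and $\phi(k):=P(k)|k|^{-s}$. Translation invariance of the $\Z^n$-sum yields
$$\rho\,f_a(s)={\sum}'_{k}\bigl(\phi(k)-\phi(k-e_j)\bigr)\,e^{2\pi i k\cdot a}+R_1(s),$$
where $R_1$ is an entire correction from the finitely many boundary indices affected by the prime on the sum. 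Iterating $N$ times and controlling the discrete differences by derivatives along unit segments, Lemma~\ref{majPQs} gives $|\Delta_j^N\phi(k)|\ll(1+|s|)^N|k|^{-\sigma-N+d}$, so the resulting series converges absolutely on $\sigma>d+n-N$; since $\rho^{-N}$ is a constant and $N$ is arbitrary, $f_a$ extends to an entire function.

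The heart of the proof is (iii). For $l\notin{\cal Z}$ one has $y(l):=\sum_i\eps_i l_i\in\Z^n\setminus\{0\}$, and the argument of (i).2 must be run \emph{uniformly} in $l$. The Diophantine hypothesis supplies $u\in\Z^n$ and constants $\delta,c>0$ with
$$|u\cdot\Th y-m|\geq c\,|y|^{-\delta}\qquad\forall\,y\in\Z^n\setminus\{0\},\;m\in\Z,$$
hence $|1-e^{2\pi i u\cdot\Th y}|\gg|y|^{-\delta}$. Running the same summation by parts with shift $k\mapsto k-u$, after $N$ iterations one obtains
$$|f_{\Th y(l)}(s)|\ll_{N}(1+|s|)^N\,|y(l)|^{N\delta}\qquad\text{on }\sigma>d+n-N.$$
Since $b\in\SS((\Z^n)^q)$ decays faster than any polynomial in $l$, and $|y(l)|\leq\sum_i|l_i|$, the series $\sum_{l\notin{\cal Z}}b(l)\,f_{\Th y(l)}(s)$ converges absolutely and locally uniformly in $s$, yielding holomorphic extension to $\sigma>d+n-N$. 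Arbitrariness of $N$ then gives entireness of $g_0$.

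With (iii) in hand, (ii) follows by writing $g(s)=V\,f_0(s)+g_0(s)$ with $V=\sum_{l\in{\cal Z}}b(l)$, applying (i).1 to $f_0$ to obtain the meromorphic structure with unique possible simple pole at $s=d+n$ and residue $V\int_{S^{n-1}}P(u)\,dS(u)$, and invoking (iii) for entireness of $g_0$; the two sub-cases are just the alternatives $V\int_{S^{n-1}}P(u)\,dS(u)\neq 0$ and $=0$. The main obstacle, I expect, is keeping the summation-by-parts bound in (iii) uniform in the Schwartz parameter $l$ while correctly tracking the polynomial loss $|y|^{N\delta}$ produced by the small denominators: this is exactly what the Diophantine condition is designed for, and the delicate point is to ensure that the Schwartz decay of $b$ absorbs this loss simultaneously for every $N$.
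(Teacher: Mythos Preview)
Your proposal is correct and shares the paper's central mechanism---Abel summation against the shift $k\mapsto k\pm u$, with the factor $1-e^{2\pi i\,u\cdot a}$ inverted and the Diophantine bound controlling the resulting small denominator uniformly in $l$. The reductions you give for (i).1 and for (ii) from (iii) are exactly those in the paper.

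Where you differ is in how the effective bound on $f_a$ is obtained. The paper isolates this as Lemma~\ref{ieffective}: it introduces the class ${\cal F}$ of symbols $P(X)/(|X|^2+1)^{r/2}$, first proves the bound for the smoothed series $F_1$ (with $|k|^2+1$ in place of $|k|^2$) by \emph{induction on the degree} $g=\deg P-r$, the inductive step being the shift $k\mapsto k+u$ followed by a binomial expansion of $(1+(2k\cdot u+|u|^2)/|k|^2)^{-(s+r)/2}$, and then transfers the result to $F_0$ by expanding $(1-(|k|^2+1)^{-1})^{-s/2}$. Your route bypasses all of this: you iterate the raw shift $N$ times and bound the $N$-th finite difference of $P(k)|k|^{-s}$ directly by the derivative estimate of Lemma~\ref{majPQs}. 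This is genuinely more economical---no auxiliary class ${\cal F}$, no degree induction, no passage through the smoothed series---at the cost of having to track the finitely many boundary terms produced by the evolving ``prime'' on the sum (exclusion of $k\in\{0,u,\dots,Nu\}$), which you correctly flag as entire corrections bounded independently of $a$. One small remark: Lemma~\ref{majPQs} as stated gives the exponent $\delta(P)$ rather than $d$, but for homogeneous $P$ the sharper bound with $d$ follows by applying the argument monomial by monomial, so your stated estimate is fine.
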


{\it Proof of Theorem \ref{analytic}}:
First we remark that

$\hspace{1cm}$ If $a \in \Z^n$ then
$f_a(s)={\sum}'_{k\in \Z^n} \frac{P(k)}{|k|^s}$. So,
the point $(i.1)$ follows from Theorem \ref{res-int};

$\hspace{1cm}$
$ g(s):=\sum_{l\in (\Z^n)^{q}\setminus {\cal Z}} \, b(l) \,f_{\Th\,
  \sum_i \eps_i l_i}(s)
+\left(\sum_{l\in {\cal Z}} \, b(l)\right) {\sum}'_{k\in \Z^n}
\frac{P(k)}{|k|^s}$. Thus, the point $(ii)$ rises

$\hspace{1.1cm}$easily from $(iii)$ and Theorem \ref{res-int}.

So, to complete the proof, it remains to prove the items $(i.2)$ and
$(iii)$.\par
The direct
proof of $(i.2)$ is easy but is not sufficient to deduce $(iii)$ of
which the proof is
more delicate and requires a more precise (i.e. more effective)
version of
$(i.2)$. The next lemma gives such crucial version, but before,
let us give some notations:
$$
{\cal F}:=\{\tfrac{P(X)}{(X_1^2+\dots +X_n^2+1)^{r/2}}
 \, :\,
P(X) \in \C[X_1,\dots, X_n] {\mbox { and }} r \in \N_0\}.
$$
\hspace{1cm}
  We set $g=$deg$(G) =$deg$(P) -r \in \Z$, the degree of
$G=\frac{P(X)}{(X_1^2+\dots +X_n^2+1)^{r/2}}\in
{\cal F}$.

\hspace{1cm} By convention we set deg$(0)=-\infty$.

\begin{lemma}\label{ieffective}
Let $a \in \R^n$. We assume that $d\left(a . u, \Z\right):=
\inf_{m\in \Z} |a . u  -m| >0$ for some $u \in \Z^n$.
For all $G\in {\cal F}$, we define formally,
\begin{align*}
    F_0(G;a;s):={\sum}'_{k\in\Z^n}
\tfrac{G(k)}{|k|^{s}}\, e^{2\pi i \,k . a}  \quad \text{and} \quad
F_1(G;a;s):={\sum}_{k \in \Z^n}
\tfrac{G(k)}{(|k|^2+1)^{s/2}} \,e^{2\pi i \,k . a} .
\end{align*}

Then for all $N\in \N$, all $G\in {\cal F}$ and all $i\in \{0,1\}$,
there exist
positive constants $C_i:=C_i(G,N,u)$, $B_i:=B_i(G,N,u)$ and
$A_i:=A_i(G,N,u)$ such
that $s\mapsto F_i(G;\a;s)$ extends  holomorphically to the half-plane
$\{\Re(s)>-N\}$ and verifies in it:
$$F_i(G;a;s)\leq C_i (1+|s|)^{B_i} \,
\big(d\left(a . u, \Z\right)\big)^{-A_i}.$$
\end{lemma}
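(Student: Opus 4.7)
The plan is to obtain both the holomorphic extension and the polynomial bound simultaneously by iterating a discrete summation-by-parts identity in the direction $u$. The starting point, from $(1-e^{2\pi i u \cdot a})e^{2\pi i k \cdot a}=e^{2\pi i k \cdot a}-e^{2\pi i (k+u)\cdot a}$ together with a shift $k\mapsto k-u$, is
\[
\sum_{k\in\Z^n} H(k)\,e^{2\pi i k \cdot a} \;=\; \frac{1}{(1-e^{2\pi i u \cdot a})^M} \sum_{k\in\Z^n} (\Delta_u^M H)(k)\, e^{2\pi i k \cdot a},
\]
where $\Delta_u H(X):=H(X)-H(X-u)$, valid whenever both sides converge. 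The Diophantine hypothesis enters only through the elementary lower bound $|1-e^{2\pi i u \cdot a}|=2|\sin(\pi a \cdot u)|\geq 4\,d(a \cdot u,\Z)$, so the scalar prefactor contributes $(4\,d(a \cdot u,\Z))^{-M}$. Since $\Delta_u$ behaves as a first-order derivative in the $u$-direction, iterating it $M$ times decreases the growth of the summand by $M$ powers of $|k|$; choosing $M$ large therefore trades the blow-up $d(a \cdot u,\Z)^{-M}$ for absolute convergence of the transformed series.

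For $F_1$, set $H_s(X):=G(X)(|X|^2+1)^{-s/2}$, which is $C^\infty$ on $\R^n$ and holomorphic in $s$. A direct variant of Lemma \ref{majPQs} (Leibniz applied to $G$ and to $(|X|^2+1)^{-s/2}$) gives
\[
|\partial^\alpha H_s(X)|\;\ll_{G,\alpha,n}\; (1+|s|)^{|\alpha|_1}\,(1+|X|)^{g-|\alpha|_1-\Re(s)},
\]
with $g:=\deg G$. Iterating the mean value theorem then yields $|\Delta_u^M H_s(k)|\ll_{G,u,M,n} (1+|s|)^M (1+|k|)^{g-M-\Re(s)}$ uniformly for $|k|\geq 2M|u|$, and an $O((1+|s|)^M)$ bound on the remaining finitely many integer points. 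Any integer $M>N+n+g$ then makes the Abel-transformed series absolutely and uniformly convergent on $\{\Re(s)>-N\}$, which simultaneously produces the holomorphic extension and the bound $|F_1(G;a;s)|\leq C_1(1+|s|)^M d(a \cdot u,\Z)^{-M}$.

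The case $F_0$ is identical except that $G(X)|X|^{-s}$ is not smooth at $X=0$. I remove the singularity by a smooth cutoff: fix $\chi\in C^\infty(\R^n,[0,1])$ with $\chi\equiv 1$ on $|X|\leq K_0$ and $\chi\equiv 0$ on $|X|\geq K_0+1$, with $K_0:=2M|u|+1$, and split
\[
F_0(G;a;s) \;=\; \sum_{0<|k|\leq K_0+1} \frac{G(k)\chi(k)}{|k|^s}\,e^{2\pi i k \cdot a} \;+\; \sum_{k\in\Z^n} \tilde H_s(k)\,e^{2\pi i k \cdot a},
\]
with $\tilde H_s(X):=G(X)|X|^{-s}(1-\chi(X))$, which is globally $C^\infty$ since it vanishes in a neighbourhood of the origin. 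The first sum is a finite sum of entire functions and is bounded by a polynomial in $(1+|s|)$ uniformly on $\{\Re(s)>-N\}$. For the second sum, the choice of $K_0$ ensures that all shifted points $k-ju$ with $0\leq j\leq M$ contributing to $\Delta_u^M \tilde H_s(k)$ lie outside $\{|X|<K_0\}$ once $|k|$ is large, so Lemma \ref{majPQs} again controls every derivative of $\tilde H_s$, and the $F_1$ argument applies verbatim.

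The main source of care will be this handling of the singularity of $|X|^{-s}$ at the origin in the $F_0$ case: the cutoff must be chosen so that iterated discrete differences never sample values of $H_s$ near $0$, and the explicit dependence of $C_i,A_i,B_i$ on $G,N,u$ must be tracked throughout. This explicitness is crucial because the lemma will be applied in the proof of part (iii) of Theorem \ref{analytic}, where the bound is summed against the Schwartz function $b(l)$ over $l\in(\Z^n)^q$, with $\Theta\sum_i\eps_i l_i$ playing the role of $a$ and Schwartz decay of $b$ compensating the polynomial growth in the bound.
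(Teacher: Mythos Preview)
Your approach is correct and takes a genuinely different route from the paper's, although both hinge on the same Abel-type identity $(1-e^{2\pi i u\cdot a})\sum_k H(k)\,e^{2\pi ik\cdot a}=\sum_k (\Delta_u H)(k)\,e^{2\pi ik\cdot a}$.

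The paper argues by induction on $\deg G$ inside the family $\mathcal{F}$: a \emph{single} shift $k\mapsto k+u$, combined with the binomial expansion of $\bigl(1+(2k\cdot u+|u|^2)/(|k|^2+1)\bigr)^{-(s+r)/2}$ to finite order, expresses $(1-e^{2\pi i u\cdot a})F_1(G;a;s)$ as a polynomial-in-$s$ combination of functions $F_1(G_{(\alpha,j);u};a;s)$ of strictly smaller degree plus a controlled remainder; the induction terminates once $\deg G\leq-(n+N+1)$, where the series is already absolutely convergent. The passage from $F_0$ to $F_1$ is then made algebraically by expanding $|k|^{-s}=(|k|^2+1)^{-s/2}\bigl(1-(|k|^2+1)^{-1}\bigr)^{-s/2}$ binomially. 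You instead iterate the shift $M$ times in one stroke and control $\Delta_u^M H_s$ analytically via the $M$-th directional derivative and the mean-value theorem; your smooth cutoff for $F_0$ replaces the paper's binomial reduction. This is more direct (no induction, no algebraic bookkeeping in $\mathcal{F}$) and makes the uniformity in $a$ very transparent. The only point needing care is that your derivative estimate $|\partial^\alpha H_s(X)|\ll_{G,\alpha,n}(1+|s|)^{|\alpha|_1}(1+|X|)^{g-|\alpha|_1-\Re(s)}$ is not literally uniform on the unbounded half-plane: the natural factor coming out of Leibniz is $(1+|X|^2)^{-\Re(s)/2}$, which differs from $(1+|X|)^{-\Re(s)}$ by $2^{|\Re(s)|/2}$. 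This is harmless, since on $\{\Re(s)>-N\}$ one has $(1+|X|^2)^{-\Re(s)/2}\leq(1+|X|^2)^{N/2}\ll_N(1+|X|)^{N}$, so your condition $M>N+n+g$ still makes the transformed series absolutely and uniformly convergent and the conclusion stands.
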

\begin{remark} The important point here is that we obtain an explicit
  bound of $F_i(G;\a;s)$ in $\{\Re(s)>-N\}$ which depends on the
  vector $a$ only through $d(a.u,\Z)$, so depends on $u$ and
indirectly on $a$ (in the sequel, $a$ will vary.) In particular the
constants $C_i:=C_i(G,N,u)$,
$B_i=B_i(G,N)$ and $A_i:=A_i(G,N)$ do not depend on the vector $a$
but only on
$u$. {\it This is crucial for the proof of items $(ii)$ and
$(iii)$ of Theorem \ref{analytic}!}
\end{remark}

\subsubsection{Proof of Lemma \ref{ieffective} for $i=1$:}
Let $N\in \N_0$ be a fixed integer, and set $g_0:= n+N+1$.\\
We will prove Lemma \ref{ieffective} by induction on $g=$deg$(G)\in
\Z$. More
precisely, in order to prove case $i=1$, it suffices to
prove that:

\hspace{1cm} Lemma \ref{ieffective} is true for all $G\in {\cal F}$
  verifying deg$(G)\leq -g_0$.

\hspace{1cm} Let $g\in \Z$ with $g\geq -g_0+1$.
If Lemma \ref{ieffective} is true for all $G\in {\cal F}$
  such that deg$(G)\leq g -1$,

\hspace{1cm} then it is also true for all
$G\in {\cal F}$ satisfying deg$(G)= g$.

$\bullet$ {Step 1: Checking Lemma
\ref{ieffective} for
deg$(G) \leq -g_0:= -(n+N+1)$.}\\
Let $G(X)=\frac{P(X)}{(X_1^2+\dots +X_n^2+1)^{r/2}} \in {\cal F}$
verifying deg$(G)\leq -g_0$.
It is easy to see that we have uniformly in  $s=\sigma +i\tau  \in \C$
and in $k \in
\Z^n$:
\begin{align*}
\tfrac{|G(k) \,e^{2\pi i \,k .
a}|}{(|k|^2+1)^{\sigma/2}}=&\tfrac{|P(k)|}{(|k|^2+1)^{(r+\sigma)/2}}\ll_G
\tfrac{1}{(|k|^2+1)^{(r+\sigma-deg(P))/2}}
\ll_G  \tfrac{1}{(|k|^2+1)^{(\sigma-deg(G))/2}}\ll_G
\tfrac{1}{(|k|^2+1)^{(\sigma+g_0)/2}}.
\end{align*}
It follows that $F_1(G;a;s)=\sum_{k \in \Z^n }
\frac{G(k)}{(|k|^2+1)^{s/2}} \,e^{2\pi i
\, k . a}$ converges absolutely and defines a holomorphic function in
the half plane
$\{\sigma >-N\}$. Therefore, we have for any $s\in \{\Re(s) >-N\}$:
$$|F_1(G;a;s)|\ll_G \sum_{k \in \Z^n }
\tfrac{1}{(|k|^2+1)^{(-N+g_0)/2}}\ll_G \sum_{k \in \Z^n }
\tfrac{1}{(|k|^2+1)^{(n+1)/2}}\ll_G 1.$$
Thus, Lemma \ref{ieffective} is true when deg$(G)\leq -g_0$.

$\bullet$ { Step 2: Induction. }\\
Now let $g\in \Z$ satisfying $g\geq -g_0+1$ and suppose that Lemma
\ref{ieffective} is
valid for all $G\in {\cal F}$ verifying deg$(G) \leq g-1$. Let $G\in
{\cal F}$ with deg$(G)=g$. We will prove that  $G$ also
verifies conclusions of Lemma \ref{ieffective}:\\
There exist $P\in \C[X_1,\dots,X_n]$ of degree $d\geq 0$ and
$r\in \N_0$ such that $G(X)=\frac{P(X)}{(X_1^2+\dots
+X_n^2+1)^{r/2}}$ and
$g=$deg$(G)=d-r$.\\
Since $G(k)\ll (|k|^2+1)^{g/2}$ uniformly in $k \in \Z^n$, we deduce
that
$F_1(G;a;s)$ converges absolutely in $\{\sigma=\Re(s)>n+g\}$.\\
Since $k\mapsto k+u$ is a bijection from  $\Z^n $ into $\Z^n$, it
follows
that we also have for $\Re(s)>n+g$
\begin{align*}
F_1(G;a;s)&=\sum_{k \in \Z^n }
\tfrac{P(k)}{(|k|^2+1)^{(s+r)/2}} \,e^{2\pi i \,k . a}
= \sum_{k \in \Z^n }
\tfrac{P(k+u)}{(|k+u|^2+1)^{(s+r)/2}} \,  e^{2\pi i \, (k+u) .  a}\\
&= e^{2\pi i \,u . a} \sum_{k \in \Z^n } \tfrac{P(k+u)}{(|k|^2+2 k . u
+|u|^2+1)^{(s+r)/2}} \,e^{2\pi i \,k . a}\\
&= e^{2\pi i \,u . a} \sum_{\alpha \in \N_0^n;
|\alpha|_1=\alpha_1+\dots+\alpha_n \leq
d} \tfrac{u^{\alpha}}{\alpha !} \sum_{k \in \Z^n }
\tfrac{\partial^{\alpha}
P(k)}{(|k|^2+2 k . u +|u|^2+1)^{(s+r)/2}} \, e^{2\pi i \,k . a}\\
&= e^{2\pi i \,u . a} \sum_{|\alpha|_1\leq d}
\tfrac{u^{\alpha}}{\alpha !} \sum_{k \in
\Z^n } \tfrac{\partial^{\alpha} P(k)}{(|k|^2+1)^{(s+r)/2}}
\big(1+\tfrac{2 k . u
+|u|^2}{(|k|^2+1)}\big)^{-(s+r)/2} \, e^{2\pi i \,k . a}.
\end{align*}
Let $M:= \sup(N+n+g, 0)\in \N_0$. We have uniformly in $k \in \Z^n$
$$
\big(1+\tfrac{2 k . u + |u|^2}{(|k|^2+1)}\big)^{-(s+r)/2}=
\sum_{j=0}^M \genfrac(){0pt}{1}{-(s+r)/2}{j} \tfrac{\left(2 k . u +
|u|^2\right)^j}{(|k|^2+1)^j}+
O_{M, u}\big(  \tfrac{(1+|s|)^{M+1}}{(|k|^2+1)^{(M+1)/2}}\big).
$$
Thus, for
$\sigma =\Re(s)>n+d$,
\begin{eqnarray}\label{f0dev}
F_1(G;a;s)&=& e^{2\pi i \,u . a} \sum_{|\alpha|_1 \leq d}
\tfrac{u^{\alpha}}{\alpha !}
\sum_{k \in \Z^n } \tfrac{\partial^{\alpha}
P(k)}{(|k|^2+1)^{(s+r)/2}} \big(1+\tfrac{2
k . u +|u|^2}{(|k|^2+1)}\big)^{-(s+r)/2}
e^{2\pi i \,k . a}\nonumber \\
&=& e^{2\pi i \,u . a} \sum_{|\alpha|_1 \leq d} \sum_{j=0}^M
\tfrac{u^{\alpha}}{\alpha
!}  \genfrac(){0pt}{1}{-(s+r)/2}{j} \sum_{k \in \Z^n }
\tfrac{\partial^{\alpha} P(k) \left(2 k . u
+|u|^2\right)^j } {(|k|^2+1)^{(s+r+2j)/2}} \, e^{2\pi i
  \,k . a}\nonumber \\
& & \hspace{1cm}+O_{G, M,u} \big((1+|s|)^{M+1}\sum_{k \in \Z^n }
\tfrac{1}{(|k|^2+1)^{(\sigma
+M+1-g)/2}}\big).
\end{eqnarray}
Set $I:=\left\{(\alpha ,j) \in \N_0^n \times \{0,\dots,M\} \mid
|\alpha|_1 \leq d\right\}$ and $I^*:=I\setminus \set{(0,0)}$.\\
Set also $ G_{(\alpha ,j);u}(X):= \tfrac{\partial^{\alpha} P(X)
\left(2 X . u
+|u|^2\right)^j }
{(|X|^2+1)^{(r+2j)/2}}\in {\cal F}$ for all $(\alpha ,j) \in I^*$.\\
Since $M\geq N+n+g$, it follows from (\ref{f0dev}) that
\begin{eqnarray}\label{crucial1}
(1-e^{2\pi i \,u . a})~F_1(G;a;s)= e^{2\pi i \,u . a} \sum_{(\alpha,
j)\in I^*}
\tfrac{u^{\alpha}}{\alpha !}  \genfrac(){0pt}{1}{-(s+r)/2}{j}
F_1\left(G_{(\alpha ,j);u};\a;s\right)  +R_N(G; a; u; s)
\end{eqnarray}
where $s\mapsto R_N(G; a; u; s)$ is a holomorphic function in the
half plane
$\{\sigma =\Re(s) >-N\}$, in which it satisfies the bound
$R_N(G; a; u; s)\ll_{G,N,u} 1 $.\\
Moreover it is easy to see that, for any $(\alpha, j)\in I^*$,
$$
\text{deg}\hspace{-.06cm}\left(G_{(\alpha
,j);u}\right)=\text{deg}\hspace{-.05cm}\left(\partial^{\alpha}
P \right)+j -(r+2j)\leq d-|\alpha|_1 +j -(r+2j)=g-|\alpha|_1 -j\leq
g-1.
$$
Relation (\ref{crucial1}) and the induction hypothesis imply then that
\begin{equation}\label{crucial2}
(1-e^{2\pi i \,u . a })~F_1(G;a;s) {\mbox { verifies the conclusions
of Lemma \ref{ieffective}}}.
\end{equation}
Since $ |1-e^{2\pi i \,u . a}|=2|\sin(\pi u . a)|\geq d\left(u  . a,
\Z\right)$,
then (\ref{crucial2}) implies that $F_1(G;a;s)$ satisfies
conclusions of Lemma \ref{ieffective}. This completes the induction
and the proof for $i=1$.

\subsubsection{Proof of Lemma \ref{ieffective} for $i=0$:}
Let  $N\in \N$ be a fixed integer.
Let $G(X)=\frac{P(X)}{(X_1^2+\dots +X_n^2+1)^{r/2}}
\in {\cal F}$ and $g=$ deg$(G)=d-r$ where $d\geq 0$ is the degree of
the polynomial $P$.
Set also $M:=\sup (N+g+n, 0)\in \N_0$.\par
Since $P(k)\ll |k|^d$ for $k \in \Z^n\setminus
\set{0}$, it follows that $F_0(G;a;s)$ and $F_1(G;a;s)$ converge
absolutely in the
half plane
$\{\sigma =\Re(s)>n+g\}$.\\
Moreover, we have for $s=\sigma +i\tau  \in \C$ verifying $\sigma
>n+g$:
\begin{align}\label{crucial3}
F_0(G;a;s)&= \sum_{k \in \Z^n \setminus \set{0}}
\tfrac{G(k)}{(|k|^2+1-1)^{s/2}} \, e^{2\pi i \,k . a}
={\sum_{k \in \Z^n }}' \tfrac{G(k)}{(|k|^2+1)^{s/2}}
\left(1-\tfrac{1}{|k|^2+1}\right)^{-s/2} \,
e^{2\pi i \,k . a} \nonumber \\
&= {\sum_{k \in \Z^n }}' \,  \sum_{j=0}^M
\genfrac(){0pt}{1}{-s/2}{j} (-1)^j
\tfrac{G(k)}{(|k|^2+1)^{(s+2j)/2}} \,
e^{2\pi i \,k . a} \nonumber\\
&\hspace{2cm} +O_M\big((1+|s|)^{M+1} {\sum_{k \in \Z^n }}'
\tfrac{|G(k)|}{(|k|^2+1)^{(\sigma +2M+2)/2}}\big)\nonumber \\
&=\sum_{j=0}^M  \genfrac(){0pt}{1}{-s/2}{j}
(-1)^j F_1(G;a; s+2j) \nonumber\\
&\hspace{2cm}+ O_M\big[(1+|s|)^{M+1} \big(1+{\sum_{k \in \Z^n }}'
\tfrac{|G(k)|}{(|k|^2+1)^{(\sigma +2M+2)/2}}\big)\big].
\end{align}
In addition we have  uniformly in $s=\sigma +i\tau \in \C$ verifying
$\sigma >-N$,
$$
{\sum_{k \in \Z^n }}'
\tfrac{|G(k)|}{(|k|^2+1)^{(\sigma +2M+2)/2}}\ll {\sum_{k \in \Z^n }}'
\tfrac{|k|^g}{(|k|^2+1)^{(-N +2M+2)/2}}\ll {\sum_{k \in \Z^n }}'
\tfrac{1}{|k|^{n+1}}<+\infty.
$$
So (\ref{crucial3}) and Lemma \ref{ieffective} for
$i=1$ imply that Lemma \ref{ieffective} is also true for $i=0$. This
completes the
proof of Lemma \ref{ieffective}. \qed

\subsubsection{Proof of item $(i.2)$ of Theorem \ref{analytic}:}
Since  $a\in \R^n\setminus \Z^n$, there exists $i_0\in
\{1,\dots,n\}$ such that $a_{i_0}\not \in \Z$.
In particular $d(a . e_{i_0}, \Z)=d(a_{i_0},\Z)>0$. Therefore,  $a$
satisfies the assumption of Lemma \ref{ieffective} with  $u=e_{i_0}$.
Thus, for all $N\in \N$, $s\mapsto f_{a}(s)=F_0(P;a;s)$ has a
holomorphic
continuation to the half-plane $\{\Re (s)>-N\}$. It follows, by
letting
$N\rightarrow \infty$, that $s\mapsto f_{a}(s)$ has a holomorphic
continuation
to the whole complex plane $\C$.

\subsubsection{Proof of item $(iii)$ of Theorem \ref{analytic}:}
Let $\Th\in {\cal M}_n(\R)$, $(\eps_i)_i\in \{-1,0,1\}^{q}$ and $b
\in {\cal
S}(\Z^n\times \Z^n)$. We assume that
$\Th$ is a diophantine matrix. Set ${\cal Z}:= \set{
l=(l_1,\dots,l_q)\in (\Z^n)^q \,:  \, \sum_i \eps_i l_i =0}$ and
$P\in \C[X_1,\dots,X_n]$ of degree  $d\geq 0$.\\
It is easy to see that for $\sigma >n+d$:
\begin{align*}
\sum_{l\in (\Z^n)^{q}\setminus {\cal Z}} \, |b(l)| \,  {\sum_{k \in
\Z^n }}'
\tfrac{|P(k)|}{|k|^{\sigma}} \,|e^{2\pi i \,k . \Th\, \sum_i \eps_i
l_i }| & \ll_P  \sum_{l\in (\Z^n)^q\setminus {\cal Z}} |b (l)| \,
{\sum_{k \in \Z^n }}'
\tfrac{1}{|k|^{\sigma -d}}
\ll_{P,\sigma} \sum_{l \in (\Z^n)^q \setminus {\cal Z}} |b (l)|\\
&<+\infty.
\end{align*}
So
$$
g_0(s):=\sum_{l \in (\Z^n)^q \setminus {\cal Z}}
b (l) \, f_{\Th\, \sum_i \eps_i l_i}(s)= \sum_{l \in (\Z^n)^q
\setminus
  {\cal Z}}  b(l) \, {\sum_{k \in \Z^n }}'
\tfrac{P(k)}{|k|^{s}} e^{2\pi i \,k . \Th \, \sum_i \eps_i l_i}
$$
converges absolutely in the half plane $\{\Re(s) >n+d\}$.\\
Moreover with the notations of Lemma \ref{ieffective}, we have for
all $s=\sigma
+i\tau \in \C$ verifying $\sigma >n+d$:
\begin{equation}\label{crucial4}
g_0(s)=\sum_{l \in (\Z^n)^q\setminus {\cal Z}}
b(l) f_{\Th\, \sum_i \eps_i l_i}(s)=\sum_{l
\in (\Z^n)^q  \setminus {\cal Z}} b(l) F_0(P; \Th\, {\sum}_i \eps_i
l_i; s)
\end{equation}
But $\Th$ is diophantine, so there exists $u \in \Z^n$ and
$\delta ,c>0$ such
$$
|q . \,{}^t\Th u -m| \geq c\, (1+|q|)^{-\delta},\,  \forall q \in
\Z^n \setminus \set{0},
\, \forall m\in \Z.
$$
We deduce that $\forall l \in (\Z^n)^q \setminus {\cal Z},$
$$
\quad |\big(\Th\, {\sum}_i \eps_i l_i \big) .  u -m|= |\big({\sum}_i
\eps_i l_i\big) . {}^t\Th u -m| \geq c \,
\big(1+|{\sum}_i \eps_i l_i| \big)^{-\delta} \geq c \,
(1+|l|)^{-\delta}.
$$
It follows that there exists $u \in \Z^n$, $\delta >0$ and $c>0$ such
that
\begin{equation}\label{hypothesisOK!}
\forall l \in (\Z^n)^q \setminus {\cal Z}, \quad d\big((\Th\,
{\sum}_i \eps_i l_i) .  u;
\Z\big) \geq c \, (1+|l|)^{-\delta}.
\end{equation}
{\it Therefore, for any $l \in (\Z^n)^q \setminus {\cal Z}$,
the vector $a=\Th\, \sum_i \eps_i l_i$
verifies the assumption of Lemma \ref{ieffective} with the same $u$.
Moreover
$\delta$ and $c$ in
(\ref{hypothesisOK!}) are also independent on $l$}.\\
We fix now $N\in \N$. Lemma \ref{ieffective} implies that there exist
positive
constants $C_0:=C_0(P,N,u)$, $B_0:=B_i(P,N,u)$ and $A_0:=A_0(P,N,u)$
such that for all
$l \in (\Z^n)^q \setminus {\cal Z}$, $s\mapsto F_0(P; \Th\, \sum_i
\eps_i l_i; s)$ extends
{\it holomorphically} to the half plane $\{\Re(s)>-N\}$ and verifies
in it the bound
$$
F_0(P;\Th\, \sum_i \eps_i l_i; s)\leq C_0 \left(1+|s|\right)^{B_0} \,
d\big((\Th\, {\sum}_i \eps_i l_i) .  u; \Z\big) ^{-A_0}.
$$
This and (\ref{hypothesisOK!}) imply that for any compact set $K$
included
in the half plane $\{\Re(s)>-N\}$, there exist two constants
$C:=C(P,N,c, \delta,u, K)$ and  $D
:=D(P,N,c,\delta, u)$ (independent on $l \in (\Z^n)^q \setminus {\cal
Z}$) such that
\begin{equation}\label{crucialbig}
\forall s\in K {\mbox { and }} \forall l \in (\Z^n)^q \setminus {\cal
Z}, \quad F_0(P; \Th\,
\sum_i \eps_i l_i; s)\leq C \left(1+|l|\right)^{D}.
\end{equation}
It follows that $s\mapsto \sum_{l \in (\Z^n)^q \setminus {\cal Z}}
b(l) F_0(P;
\Th\, {\sum}_i \eps_i l_i;s)$
has a holomorphic continuation to the half plane
$\{\Re(s)>-N\}$.\\
This and ( \ref{crucial4}) imply that $
s\mapsto g_0(s)=\sum_{l \in (\Z^n)^q \setminus {\cal Z}} b(l)
f_{\Th\, \sum_i \eps_i l_i}(s)$
has a holomorphic continuation to
$\{\Re(s)>-N\}$. Since $N$ is an arbitrary integer, by letting
$N\rightarrow \infty$,
it follows that $s\mapsto g_0(s)$ has a holomorphic continuation to
the whole
complex plane $\C$ which completes the proof of the theorem. \qed

\begin{remark}
By equation (\ref{crucial2}), we see that a Diophantine
condition is sufficient to get Lemma \ref{ieffective}. Our Diophantine
condition appears also (in equivalent form) in Connes
\cite[Prop. 49]{NCDG} (see Remark 4.2 below). The following heuristic
argument shows that our condition seems to be necessary in order
to get the result of Theorem \ref{analytic}:\\
For simplicity we assume $n=1$ (but the argument extends easily to
any $n$).\\
Let $\theta \in \R \setminus \Q$. We know (see this reflection
formula in \cite[p. 6]{Elizaldebook})
that for any $l\in \Z\setminus\{0\}$,
$$
g_{\theta l}(s):={\sum_{k\in \Z}}' \, \tfrac{e^{2\pi i \theta l
k}}{|k|^s}=
\tfrac{\pi^{s-1/2}}{\Gamma (\tfrac{1-s}{2})}{\Gamma (\tfrac{s}{2})}
~h_{\theta l}(1-s) {\mbox { where }}
h_{\theta l}(s):={\sum_{k\in \Z}}'\,\tfrac{1}{|\theta l+ k|^s}.
$$
So, for any $(a_l) \in {\cal S}(\Z)$, the existence of meromorphic
continuation of
$g_0(s):=\sum_{l\in \Z} ' a_l \, g_{\theta l}(s)$ is equivalent to
the
existence of meromorphic continuation of
$$
h_0(s):={\sum_{l\in \Z}}' a_l \, h_{\theta l}(s)=
{\sum_{l\in \Z}} ' a_l \, {\sum_{k\in \Z}} ' \tfrac{1}{|\theta l+
k|^s}.
$$
So, for at least one $\sigma_0 \in \R$, we must have
$\tfrac{|a_l|}{|\theta l+ k|^{\sigma_0}} = O(1)
{\mbox { uniformly in }} k, l \in \Z^*.$

It follows that for any $(a_l) \in {\cal S}(\Z)$,
$|\theta l+ k| \gg |a_l|^{1/\sigma_0}$
uniformly in $k, l \in \Z^*$. Therefore, our Diophantine condition
seems
to be necessary.
\end{remark}

\subsubsection{Commutation between sum and residue}

Let $p\in \N$. Recall that $\mathcal{S}((\Z^n)^p)$ is the set of the
Schwartz
sequences on $(\Z^n)^p$. In other words, $b\in \mathcal{S}((\Z^n)^p)$
if and only if
for all $r\in \N_{0}$, $(1+|l_1|^2+\cdots
|l_p|^2)^r\,|b(l_1,\cdots,l_p)|^2$ is bounded on
$(\Z^n)^p$. We note that if $Q\in\R[X_1,\cdots,X_{np}]$ is a
polynomial, $(a_j)\in
\mathcal{S}(\Z^n)^{p}$, $b\in \mathcal{S}(\Z^n)$ and $\phi$ a
real-valued function, then
 $l:=(l_1,\cdots,l_p)\mapsto \wt a(l)\, b(-\wh l_p)\, Q(l)\,
 e^{i\phi(l)}$ is a Schwartz
sequence on $(\Z^n)^p$, where
\begin{align*}
\wt a(l) &:= a_1(l_1)\cdots a_p(l_p),\\
\wh l_i &:= l_1+\ldots +l_i.
\end{align*}

In the following, we will use several times the fact that for any
$(k,l)\in (\Z^n)^2$
such that $k\neq0$ and $k\neq -l$, we have
\begin{equation}\label{trick-0}
\frac{1}{|k+l|^2} = \frac{1}{|k|^2} -
\frac{2k.l+|l|^2}{|k|^2|k+l|^2}\,.
\end{equation}

\begin{lemma}\label{R-poly} There exists a polynomial $P\in
\R[X_1,\cdots,X_p]$ of degree $4p$ and with positive coefficients
such that for any
$k\in \Z^n$, and $l:=(l_1,\cdots,l_p)\in (\Z^n)^p$ such that $k\neq
0$ and $k\neq -\wh
l_i$ for all $1\leq i \leq p$, the following holds:
$$
\frac{1}{|k+\wh l_1|^2\ldots |k+\wh l_p|^2} \leq \frac{1}{|k|^{2p}}\
P(|l_1|,\cdots,|l_p|).
$$
\end{lemma}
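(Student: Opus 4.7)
The plan is to reduce the product bound to a single-factor estimate of the form $1/|k+m|^2\le (1+|m|)^2/|k|^2$ valid for $m\in\Z^n$ whenever $k,k+m\in\Z^n\setminus\{0\}$, and then take the product over the $p$ shifts $m=\wh l_i$, crudely bounding $|\wh l_i|$ by $|l_1|+\cdots+|l_p|$.

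The crux is the single-factor bound. Starting from the triangle inequality $|k|\le |k+m|+|m|$, I would exploit that $k+m\in\Z^n\setminus\{0\}$ forces $|k+m|\ge1$, so $|m|\le |m|\cdot |k+m|$ and hence
\[
|k|\le |k+m|+|m|\cdot|k+m|=|k+m|(1+|m|).
\]
Squaring and inverting yields $1/|k+m|^2\le (1+|m|)^2/|k|^2$. The integrality of $k+m$ is essential: it is what allows the additive shift $|m|$ to be absorbed into a multiplicative factor. Without it, the reverse triangle inequality $|k+m|\ge \bigl||k|-|m|\bigr|$ would be too weak, since $|k+m|$ can be arbitrarily small in a real-variable context.

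Multiplying the $p$ factorwise bounds and then using $|\wh l_i|=|l_1+\cdots+l_i|\le |l_1|+\cdots+|l_p|$ for each $i$ gives
\[
\prod_{i=1}^p \frac{1}{|k+\wh l_i|^2}\le \frac{1}{|k|^{2p}}\prod_{i=1}^p (1+|\wh l_i|)^2\le \frac{(1+|l_1|+\cdots+|l_p|)^{2p}}{|k|^{2p}}.
\]
Defining $P(X_1,\ldots,X_p):=(1+X_1+\cdots+X_p)^{4p}$ produces a polynomial of degree $4p$ whose expansion via the multinomial theorem has only nonnegative coefficients; since $X_j=|l_j|\ge 0$ implies $(1+\sum_j X_j)^{2p}\le (1+\sum_j X_j)^{4p}$, this $P$ bounds the product as required. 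The main obstacle is identifying the correct one-step inequality — once that is in hand, the remainder is straightforward bookkeeping. Observe that the identity (\ref{trick-0}) is not needed at this stage, although it will be invoked repeatedly in the subsequent zeta-function manipulations.
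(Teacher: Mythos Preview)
Your proof is correct and in fact cleaner than the paper's. The paper obtains the single-factor bound by iterating the identity (\ref{trick-0}) twice, applying Cauchy--Schwarz, and then using $|k+\wh l_i|^2\ge1$ to absorb the remaining denominator; this yields
\[
\tfrac{1}{|k+\wh l_i|^2}\le \tfrac{5}{|k|^2}\bigl(1+4(\textstyle\sum_j|l_j|)^4\bigr),
\]
whence $P(X_1,\dots,X_p)=5^p\bigl(1+4(\sum_j X_j)^4\bigr)^p$ of degree exactly $4p$. Your route bypasses (\ref{trick-0}) entirely: the triangle inequality plus the lattice bound $|k+m|\ge1$ already gives $|k|\le(1+|m|)\,|k+m|$, hence the sharper estimate $1/|k+\wh l_i|^2\le(1+|\wh l_i|)^2/|k|^2$, producing a bound of degree only $2p$ which you then inflate to $4p$ to match the statement. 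Your argument is more elementary and gives a stronger inequality; the paper's version has the minor advantage that degree $4p$ arises naturally rather than by padding, but for the intended application (Lemma \ref{abs-som}) either bound suffices, since only polynomial growth in the $|l_j|$ is needed.
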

\begin{proof} Let's fix $i$ such that $1\leq i\leq p$.
Using two times (\ref{trick-0}), Cauchy--Schwarz inequality and the
fact that $|k+\wh
l_i|^2\geq 1$, we get
\begin{align*}
\tfrac{1}{|k+\wh l_i|^2} &\leq \tfrac{1}{|k|^2} + \tfrac{2|k||\wh
l_i|+|\wh
l_i|^2}{|k|^4} + \tfrac{(2|k||\wh l_i|+|\wh l_i|^2)^2}{|k|^4|k+\wh
l_i|^2}\\
&\leq \tfrac{1}{|k|^2}+ \tfrac{2}{|k|^3}|\wh l_i| +
\big(\tfrac{1}{|k|^4}+\tfrac{4}{|k|^2}\big)|\wh l_i|^2 +
\tfrac{4}{|k|^3}|\wh l_i|^3 +
\tfrac{1}{|k|^4}|\wh l_i|^4.
\end{align*}
Since $|k|\geq 1$, and $|\wh l_i|^j \leq |\wh l_i|^4$ if $1\leq j\leq
4$, we find
\begin{align*}
&\tfrac{1}{|k+\wh l_i|^2} \leq \tfrac{5}{|k|^2} {\sum}_{j=0}^4 \,|\wh
l_i|^j \leq
\tfrac{5}{|k|^2} \big(1 + 4 |\wh l_i|^4\big)
\leq \tfrac{5}{|k|^2} \big(1 + 4 ({\sum}_{j=1}^p \, |l_j|)^4\big), \\
&\tfrac{1}{|k+\wh l_1|^2\ldots |k+\wh l_p|^2}  \leq
\tfrac{5^p}{|k|^{2p}} \big(1 + 4
({\sum}_{j=1}^p \,|l_j|)^{4}\big)^p.
\end{align*}
Taking $P(X_1,\cdots,X_p):= 5^p \big(1 + 4 ({\sum}_{j=1}^p
X_j)^{4}\big)^p$ now gives
the result.
\end{proof}

\begin{lemma}\label{abs-som}
Let $b\in \mathcal{S}((\Z^n)^p)$, $p\in\N$, $P_j\in
\R[X_1,\cdots,X_n]$ be a
homogeneous polynomial function of degree $j$, $k\in \Z^n$,
$l:=(l_1,\cdots,l_p)\in
(\Z^n)^p$, $r\in \N_{0}$, $\phi$ be a real-valued function on
$\Z^n\times (\Z^n)^{p}$ and
$$
h(s,k,l):=\frac{b(l)\, P_j(k)\, e^{i\phi(k,l)}}{|k|^{s+r}|k+\wh
l_1|^2\cdots |k+\wh
l_p|^2} \, ,
$$
with $h(s,k,l):=0$ if, for $k\neq 0$, one of the
denominators is zero.

For all $s\in \C$ such that $\Re(s)>n+j-r-2p$, the series
$$
H(s):={{\sum}'}_{(k,l)\in (\Z^n)^{p+1}} h(s,k,l)
$$
is absolutely summable. In particular,
$$
{\sum_{k\in\Z^n}}' \sum_{l\in (\Z^n)^p} h(s,k,l) = \sum_{l\in
  (\Z^n)^p}
{\sum_{k\in\Z^n}} ' h(s,k,l)\,.
$$
\end{lemma}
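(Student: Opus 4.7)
The plan is to bound $|h(s,k,l)|$ termwise and check absolute convergence of the resulting majorant, after which Fubini gives the asserted interchange of summations.

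First I would take absolute values. Since $P_j$ is homogeneous of degree $j$, we have $|P_j(k)|\ll |k|^j$ uniformly in $k\in\Z^n\setminus\{0\}$, and $|e^{i\phi(k,l)}|=1$. For any $(k,l)$ with $k\neq 0$ and $k\neq -\hat l_i$ for every $1\leq i\leq p$, Lemma \ref{R-poly} applied to the remaining denominator yields
$$
|h(s,k,l)| \;\leq\; \frac{|b(l)|\,|P_j(k)|}{|k|^{\sigma+r}\,|k+\hat l_1|^2\cdots |k+\hat l_p|^2}
\;\ll\; \frac{|b(l)|\,P(|l_1|,\dots,|l_p|)}{|k|^{\sigma+r+2p-j}},
$$
where $\sigma=\Re(s)$ and $P$ is the positive-coefficient polynomial of degree $4p$ provided by that lemma. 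Terms with $k=0$ or $k=-\hat l_i$ vanish by convention and contribute nothing.

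Next I would carry out the double sum. Since $b\in\mathcal{S}((\Z^n)^p)$ decays faster than any inverse polynomial, the product $|b(l)|\,P(|l_1|,\dots,|l_p|)$ is still a Schwartz sequence on $(\Z^n)^p$, so
$$
C \;:=\; \sum_{l\in(\Z^n)^p} |b(l)|\,P(|l_1|,\dots,|l_p|) \;<\; \infty.
$$
On the other hand, for $\sigma+r+2p-j>n$, i.e.\ $\Re(s)>n+j-r-2p$, the series ${\sum}'_{k\in\Z^n}|k|^{-(\sigma+r+2p-j)}$ converges by standard comparison with a radial integral. Combining these two facts gives
$$
{{\sum}'}_{(k,l)\in(\Z^n)^{p+1}} |h(s,k,l)| \;\ll\; C\cdot {{\sum}'}_{k\in\Z^n}\frac{1}{|k|^{\sigma+r+2p-j}} \;<\; \infty,
$$
which is the claimed absolute summability.

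Once absolute summability is established, Fubini's theorem for counting measure on $\Z^n\times(\Z^n)^p$ immediately justifies the exchange
$$
{\sum_{k\in\Z^n}}'\sum_{l\in(\Z^n)^p} h(s,k,l) \;=\; \sum_{l\in(\Z^n)^p}{\sum_{k\in\Z^n}}'\, h(s,k,l).
$$
There is no real obstacle here; the only delicate point is a minor bookkeeping issue, namely handling the finitely many $k$ (for each $l$) where some $k+\hat l_i=0$, but these are excluded by the convention $h(s,k,l)=0$ and do not disturb the majorant. The heart of the argument is the polynomial bound from Lemma \ref{R-poly}, which transforms the awkward product of denominators into a single power of $|k|$ at the cost of a Schwartz-absorbable polynomial factor in $l$.
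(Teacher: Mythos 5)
Your proof is correct and follows essentially the same route as the paper's: bound $|h(s,k,l)|$ via Lemma \ref{R-poly} to factor the majorant as (Schwartz sequence in $l$) $\times$ (convergent series in $k$), then invoke Fubini. The only cosmetic difference is that you absorb $|P_j(k)|\ll|k|^j$ into the $k$-power at once, whereas the paper keeps $|P_j(k)|$ explicit in the factor $G(k)$; this changes nothing.
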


\begin{proof}
Let $s=\sigma+i\tau \in \C$ such that
$\sigma>n+j-r-2p$. By Lemma \ref{R-poly} we get, for $k\neq 0$,
$$
|h(s,k,l)|\leq |b(l)\, P_j(k)|\, |k|^{-r-\sigma-2p}\, P(l),
$$
where $P(l):=P(|l_1|,\cdots,|l_p|)$ and $P$ is a polynomial of degree
$4p$ with
positive coefficients. Thus,
$|h(s,k,l)|\leq  F(l)\, G(k)$
where $F(l):=|b(l)|\, P(l)$ and $G(k):= |P_j(k)| |k|^{-r-\sigma-2p}$.
The summability
of $\sum_{l\in (\Z^n)^p} F(l)$ is implied by the fact that $b\in
\mathcal{S}((\Z^n)^p)$. The summability of ${\sum}'_{k\in \Z^n} G(k)$
is a consequence
of the fact that $\sigma>n+j-r-2p$. Finally, as a product of two
summable series,
${\sum_{k,l}} F(l) G(k)$ is a summable series, which proves that
${\sum_{k,l}}h(s,k,l)$ is also absolutely summable.
\end{proof}

\begin{definition}
Let $f$ be a function on $D\times (\Z^n)^p$ where $D$ is an open
neighborhood of $0$ in $\C$.

We say that
$f$ satisfies (H1) if and only if there exists $\rho>0$ such that

\hspace{1cm} (i) for any $l$, $s\mapsto f(s,l)$ extends as a
holomorphic function on $U_\rho$,
where $U_\rho$ is the
open disk of center 0 and radius $\rho$,

\hspace{1cm}(ii) the series $\sum_{l\in (\Z^n)^p}
\norm{H(\cdot,l)}_{\infty,\rho}$
is summable,where $\norm{H(\cdot,l)}_{\infty,\rho}:=\sup_{s\in
U_{\rho}}|H(s,l)|$.\\
 We say that
$f$ satisfies (H2) if and only if there exists $\rho>0$ such that

\hspace{1cm}
(i) for any $l$, $s\mapsto f(s,l)$ extends as a holomorphic function
on
$U_\rho-\{0\}$,

\hspace{1cm}
(ii) for any $\delta$ such that $0<\delta<\rho$, the series
$\sum_{l\in (\Z^n)^p}
\norm{H(\cdot,l)}_{\infty,\delta,\rho}$ is summable, where
$\norm{H(\cdot,l)}_{\infty,\delta,\rho}:=\sup_{\delta<|s|<\rho}|H(s,l)|$.
\end{definition}

\begin{remark}
Note that (H1) implies (H2). Moreover, if $f$ satisfies (H1)
(resp. (H2) for $\rho>0$, then it is straightforward to check that
$f:s\mapsto \sum_{l\in (\Z^n)^p} f(s,l)$
extends as an holomorphic function on $U_\rho$ (resp. on $U_\rho
\setminus \set{0}$).
\end{remark}

\begin{corollary}\label{res-somH} With the same notations of Lemma
\ref{abs-som},
suppose that $r+2p-j>n$, then, the function
$H(s,l):={\sum}'_{k\in \Z^n} h(s,k,l)$ satisfies (H1).
\end{corollary}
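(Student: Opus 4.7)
The plan is to choose $\rho$ with $0 < \rho < r + 2p - j - n$ — which is possible by the hypothesis $r + 2p - j > n$ — and then verify both clauses of (H1) on the open disk $U_\rho$ centered at $0$.

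For clause (i), I would fix $l \in (\Z^n)^p$ and establish holomorphy of $s \mapsto H(s,l)$ on $U_\rho$ term-by-term. Each nonzero summand $h(\cdot,k,l)$ depends on $s$ only through the factor $|k|^{-(s+r)}$ with $|k|\geq 1$, hence is entire. Lemma \ref{R-poly} applied to the denominator of $h$ gives, for $k \neq 0$,
\[
|h(s,k,l)| \;\leq\; |b(l)|\, P(|l_1|,\dots,|l_p|)\, |P_j(k)|\, |k|^{-\Re(s) - r - 2p},
\]
where $P$ is the polynomial furnished by that lemma. Since $s \in U_\rho$ forces $\Re(s) > -\rho$ and $|k|\geq 1$, this is majorized uniformly in $s \in U_\rho$ by $|b(l)|\, P(|l_1|,\dots,|l_p|)\, |P_j(k)|\, |k|^{\rho - r - 2p}$, and the choice $\rho < r + 2p - j - n$ makes the sum ${\sum}'_{k\in \Z^n} |P_j(k)|\, |k|^{\rho - r - 2p}$ finite (since $P_j$ is homogeneous of degree $j$, the summand is $\mathcal{O}(|k|^{j+\rho-r-2p})$ with $j+\rho-r-2p < -n$). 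The Weierstrass M-test together with the standard theorem on locally uniform limits of holomorphic functions then yields holomorphy of $H(\cdot, l)$ on $U_\rho$.

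For clause (ii), taking $\sup_{s \in U_\rho}$ of the previous bound and then summing in $k$ gives
\[
\norm{H(\cdot,l)}_{\infty, \rho} \;\leq\; C\, |b(l)|\, P(|l_1|,\dots,|l_p|),
\qquad C := {\sum}'_{k \in \Z^n} |P_j(k)|\, |k|^{\rho - r - 2p} < \infty.
\]
Since $b \in \mathcal{S}((\Z^n)^p)$ and $P(|l_1|,\dots,|l_p|)$ is polynomial in the $|l_i|$, the sequence $l \mapsto |b(l)|\, P(|l_1|,\dots,|l_p|)$ is still Schwartz and hence summable on $(\Z^n)^p$, which establishes (ii). No real obstacle arises: the hypothesis $r + 2p - j > n$ provides exactly the margin needed to push the region of absolute convergence of Lemma \ref{abs-som} past $\Re(s) = 0$ by a positive amount $\rho$. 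The only point requiring care is the order of estimation — dominating $|h(s,k,l)|$ uniformly in $s \in U_\rho$ before summing first over $k$ and then over $l$ — which is legitimate precisely because the chosen majorant factorizes into a function of $k$ alone times a function of $l$ alone and is independent of $s$.
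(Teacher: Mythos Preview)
Your proof is correct and follows essentially the same approach as the paper: choose $\rho$ with $0<\rho<r+2p-j-n$, use Lemma~\ref{R-poly} to bound $|h(s,k,l)|$ uniformly on $U_\rho$ by $|b(l)|\,P(|l_1|,\dots,|l_p|)\,|P_j(k)|\,|k|^{\rho-r-2p}$, and conclude both (i) and (ii) from the resulting factorized summable majorant. The paper's proof is slightly terser (it simply notes that $U_\rho$ lies in the half-plane of absolute convergence for (i)), but the argument is the same.
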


\begin{proof} $(i)$ Let's fix $\rho>0$ such that $\rho < r+2p-j-n$.
Since $r+2p-j>n$, $U_\rho$ is inside the half-plane of absolute
convergence of
the series defined by $H(s,l)$. Thus, $s\mapsto H(s,l)$ is
holomorphic on $U_\rho$.\\
$(ii)$ Since $\big||k|^{-s}\big|\leq |k|^{\rho}$ for all $s\in
U_\rho$ and
$k\in\Z^n \setminus \set{0}$, we get as in the above proof
$$
|h(s,k,l)|\leq |b(l)\, P_j(k)| \, |k|^{-r+\rho-2p} \,
P(|l_1|,\cdots,|l_p|).
$$
Since $\rho < r+2p-j-n$, the series ${\sum}'_{k\in \Z^n} |P_j(k)|
|k|^{-r+\rho-2p}$ is
summable.

Thus, $\norm{H(\cdot,l)}_{\infty,\rho} \leq K \, F(l)$ where $K :=
{{\sum_k}}'|P_j(k)| |k|^{-r+\rho-2p}<\infty$. We have already seen
that the series
$\sum_l F(l)$ is summable, so we get the result.
\end{proof}

We note that if $f$ and $g$ both satisfy (H1) (or (H2)),
then so does $f+g$. In the
following, we will use the equivalence relation
$$
f\sim g \Longleftrightarrow f-g \text{ satisfies (H1)}.
$$

\begin{lemma}\label{res-som}
Let $f$ and $g$ be two functions on $D\times (\Z^n)^p$ where $D$ is
an open
neighborhood of $0$ in $\C$, such that $f\sim g$ and such that $g$
satisfies (H2). Then
$$
\underset{s=0}{\Res}\sum_{l \in (\Z^n)^p} f(s,l)=\sum_{l \in (\Z^n)^p}
\underset{s=0}{\Res}\ g(s,l)\, .
$$
\end{lemma}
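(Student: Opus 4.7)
The plan is to decompose $f=g+(f-g)$ and reduce the problem to swapping sum and contour integral for $g$ alone. Since $f-g$ satisfies (H1), the remark preceding this lemma yields that $s\mapsto \sum_{l\in(\Z^n)^p}(f-g)(s,l)$ extends to a holomorphic function on $U_\rho$ for some $\rho>0$, so its residue at $s=0$ vanishes. Combined with the fact that $s\mapsto\sum_l g(s,l)$ is holomorphic on $U_\rho\setminus\set{0}$ by (H2) and the same remark, this shows that $\sum_l f(s,l)$ is holomorphic on $U_\rho\setminus\set{0}$ and
$$\Res_{s=0}\sum_l f(s,l)=\Res_{s=0}\sum_l g(s,l).$$
It therefore suffices to prove $\Res_{s=0}\sum_l g(s,l)=\sum_l \Res_{s=0} g(s,l)$.

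Next I would express each residue as a Cauchy contour integral. Fix $\rho>0$ as in (H2) and choose $\delta,r$ with $0<\delta<r<\rho$. For every $l$ the function $s\mapsto g(s,l)$ is holomorphic on $U_\rho\setminus\set{0}$, so Cauchy's formula gives
$$\Res_{s=0}g(s,l)=\frac{1}{2\pi i}\oint_{|s|=r} g(s,l)\,ds,$$
and likewise, since $s\mapsto\sum_l g(s,l)$ is holomorphic on $U_\rho\setminus\set{0}$,
$$\Res_{s=0}\sum_l g(s,l)=\frac{1}{2\pi i}\oint_{|s|=r}\sum_l g(s,l)\,ds.$$

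The remaining step is the interchange of sum and contour integral. For $s$ on the circle $|s|=r$ one has the pointwise bound $|g(s,l)|\leq \norm{g(\cdot,l)}_{\infty,\delta,\rho}$ because $\delta<r<\rho$, and by (H2)(ii) the right-hand side is summable over $l\in(\Z^n)^p$. Hence Fubini's theorem applied to the product of counting measure on $(\Z^n)^p$ with arc-length measure on $\{|s|=r\}$ yields
$$\frac{1}{2\pi i}\oint_{|s|=r}\sum_l g(s,l)\,ds=\sum_l\frac{1}{2\pi i}\oint_{|s|=r}g(s,l)\,ds=\sum_l\Res_{s=0}g(s,l),$$
which closes the argument.

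The proof is essentially bookkeeping once one identifies the role of each hypothesis: (H1) makes the replacement $f\leadsto g$ harmless at the level of residues, while (H2) provides precisely the uniform bound on a small annulus needed to apply Fubini on a circle sitting inside it. The only subtle point, and the one place where care is required, is choosing the contour radius $r$ to lie \emph{strictly} inside the open annulus $(\delta,\rho)$; this is exactly why (H2) is phrased with summability on the full sub-annulus $\set{\delta<|s|<\rho}$ rather than on a single circle.
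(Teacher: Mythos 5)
Your proof is correct and takes essentially the same route as the paper: both rely on writing the residue as a contour integral and using the (H2) summability estimate to justify interchanging $\sum_l$ with $\oint$. The only cosmetic difference is that you shift from $f$ to $g$ at the level of the full sums (using holomorphy of $\sum_l (f-g)$), whereas the paper notes $\Res_{s=0} g(\cdot,l)=\Res_{s=0} f(\cdot,l)$ for each individual $l$ and then carries out the contour-integral interchange for $f$ rather than for $g$.
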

\begin{proof} Since $f\sim g$, $f$ satisfies (H2) for a certain
$\rho>0$.
Let's fix $\eta$ such that $0<\eta<\rho$ and define $C_\eta$ as the
circle of center 0
and radius $\eta$. We have
$$
\underset{s=0}{\Res}\ g(s,l) = \underset{s=0}{\Res}\ f(s,l) = \tfrac
{1}{2\pi
i}\oint_{C_\eta} f(s,l)\, ds = \int_I u(t,l) dt\, .
$$
where $I=[0,2\pi]$ and $u(t,l):=\tfrac {1}{2\pi} \eta e^{it}
f(\eta\,e^{i t},l) $. The
fact that $f$ satisfies (H2) entails that the series $\sum_{l\in
(\Z^n)^p}
\norm{f(\cdot,l)}_{\infty,C_\eta}$ is summable. Thus, since
$\norm{u(\cdot,l)}_{\infty} = \tfrac {1}{2\pi} \eta
\norm{f(\cdot,l)}_{\infty,C_\eta}$, the series $\sum_{l\in (\Z^n)^p}
\norm{u(\cdot,l)}_{\infty}$ is summable, so, as a consequence,
$\int_I \sum_{l\in
(\Z^n)^p}   u(t,l) dt = \sum_{l\in (\Z^n)^p}  \int_I u(t,l)dt$ which
gives the result.
\end{proof}

\subsection{Computation of residues of zeta functions}

Since, we will have to compute residues of series, let us introduce
the following

\begin{definition}
\begin{align*}
\zeta(s)&:=\sum_{n=1}^{\infty} n^{-s},\\
Z_{n}(s)&:={\sum_{k\in\Z^{n}}}'\,\,\vert k\vert^{-s},\\
\zeta_{p_{1},\dots,p_{n}}(s)&:={\sum_{k\in\Z^{n}}}'\,\,
\frac{k_1^{p_{1}}\cdots
k_n^{p_{n}}}{\vert k\vert^{s}}\,\text{ , for } p_{i}\in \N,
\end{align*}
\end{definition}
\noindent where $\zeta(s)$ is the Riemann zeta function (see
\cite{HW} or
\cite{Edery}).

By the symmetry $k\rightarrow -k$, it is clear that these functions
$\zeta_{p_{1},\dots,p_{n}}$ all vanish for odd values of $p_{i}$.

Let us now compute
$\zeta_{0,\cdots,0,1_{i},0\cdots,0,1_{j},0\cdots,0}(s)$ in terms of
$Z_{n}(s)$:\\
Since $\zeta_{0,\cdots,0,1_{i},0\cdots,0,1_{j},0\cdots,0}(s)
=A_{i}(s)\,\delta_{ij}$,
exchanging the components $k_{i}$ and $k_{j}$, we get
\begin{align*}
\zeta_{0,\cdots,0,1_{i},0\cdots,0,1_{j},0\cdots,0}(s)
=\tfrac{\delta_{ij}}{n}\,Z_{n}(s-2).
\end{align*}
Similarly,
\begin{align*}
{\sum}'_{\Z^{n}}\,\tfrac{k_{1}^{2}k_{2}^{2}}{\vert k\vert^{s+8}}
=\tfrac{1}{n(n-1)}
Z_{n}(s+4)- \tfrac{1}{n-1}{\sum}'_{\Z^{n}}\, \tfrac{k_{1}^4}{\vert
k\vert^{s+8}}
\end{align*}
but it is difficult to write explicitly
$\zeta_{p_{1},\dots,p_{n}}(s)$ in terms of
$Z_{n}(s-4)$ and other $Z_{n}(s-m)$ when at least four indices
$p_{i}$ are non zero.

When all $p_{i}$ are even, $\zeta_{p_{1},\dots,p_{n}}(s)$ is a
nonzero series of
fractions $\tfrac{P(k)}{\vert k\vert ^s}$ where $P$ is a homogeneous
polynomial of
degree $p_{1}+\cdots +p_{n}$. Theorem \ref{res-int} now gives us the
following

\begin{prop}
    \label{calculres}
$\zeta_{p_{1},\dots,p_{n}}$ has a meromorphic extension to the whole
plane with a
unique pole at $n+p_{1} +\cdots +p_{n}$. This pole is simple and the
residue at this
pole is
\begin{align}
    \label{formule1}
\underset{s=n+p_{1} +\cdots
+p_{n}}{\Res} \,\zeta_{p_{1},\dots,p_{n}}(s)= 2 \,
\tfrac{\Gamma(\tfrac{p_{1}+1}{2}) \cdots
\Gamma(\tfrac{p_{n}+1}{2})}
{\Gamma
(\tfrac{n+p_{1}+ \cdots + p_{n}}{2})}
\end{align}
when all $p_{i}$ are even or this residue is zero otherwise.\\
In particular, for $n=2$,
\begin{align}
\underset{s=0}{\Res} \,\,{\sum_{k\in\Z^2}}'\,\tfrac{k_{i}k_{j}} {\vert
k\vert^{s+4}}=\delta_{ij}\,\pi\, , \label{formulen=2}
\end{align}
and for $n=4$,
\begin{align}
&\underset{s=0}{\Res} \,\,{\sum_{k\in\Z^{4}}}'\,\tfrac{k_{i}k_{j}}
{\vert k\vert^{s+6}}=\delta_{ij}\tfrac{\pi^2}{2}\, ,\nonumber\\
&\label{formule2} \underset{s=0}{\Res}
\,{\sum_{k\in\Z^{4}}}'\,\tfrac{k_{i}k_{j}k_{l}k_{m}}
{\vert k\vert^{s+8}}=(\delta_{ij}\delta_{lm}+\delta_{il}\delta_{jm}
+\delta_{im}\delta_{jl})\,\tfrac{\pi^2}{12}\, .
\end{align}
\end{prop}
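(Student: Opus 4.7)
The plan is to deduce the proposition as a direct application of Theorem \ref{res-int} to the monomial $P(k) = k_1^{p_1}\cdots k_n^{p_n}$, which is homogeneous of a single degree $d := p_1+\cdots+p_n$. Since $P$ has only the one homogeneous component $P_d = P$, Theorem \ref{res-int} immediately gives that $\zeta_{p_1,\dots,p_n}$ is meromorphic on $\C$, with at most one simple pole, located at $s = n + d$, whose residue equals $\int_{S^{n-1}} u_1^{p_1}\cdots u_n^{p_n}\, dS(u)$.

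Next I would analyze this spherical integral. If any $p_i$ is odd, the substitution $u_i \mapsto -u_i$ on $S^{n-1}$ preserves the measure but flips the sign of the integrand, so the integral vanishes; in this case the set $\mathcal{P}_P$ of Theorem \ref{res-int} is empty and $\zeta_{p_1,\dots,p_n}$ is entire. If all $p_i$ are even, I would compute the integral by the standard Gaussian trick: evaluate
\[
\int_{\R^n} x_1^{p_1}\cdots x_n^{p_n}\, e^{-|x|^2}\, dx
\]
in two ways. Separating into one-dimensional integrals and using $\int_{\R} t^{2m} e^{-t^2}\,dt = \Gamma(m+\tfrac12)$ gives the product $\prod_i \Gamma(\tfrac{p_i+1}{2})$. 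On the other hand, polar coordinates split the integral into the desired sphere integral times $\int_0^\infty r^{n+d-1} e^{-r^2}\,dr = \tfrac12\,\Gamma(\tfrac{n+d}{2})$. Equating these two expressions yields
\[
\int_{S^{n-1}} u_1^{p_1}\cdots u_n^{p_n}\, dS(u) \,=\, \frac{2\,\Gamma(\tfrac{p_1+1}{2})\cdots\Gamma(\tfrac{p_n+1}{2})}{\Gamma(\tfrac{n+d}{2})}.
\]

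For the specialized formulas \eqref{formulen=2} and \eqref{formule2}, I would shift $s \mapsto s - 2k$ so that $s=0$ becomes the pole $s' = n + d$, and then substitute classical gamma values: $\Gamma(1/2) = \sqrt{\pi}$, $\Gamma(3/2) = \tfrac12\sqrt{\pi}$, $\Gamma(5/2) = \tfrac34\sqrt{\pi}$, $\Gamma(2)=1$, $\Gamma(3)=2$, $\Gamma(4)=6$. For $n=2$ with one squared index I get $\pi$; for $n=4$ with one squared index I get $\pi^2/2$; for $n=4$ with a quartic monomial, pure fourth powers give $\pi^2/4$ while the pairing $k_i^2 k_j^2$ ($i\neq j$) gives $\pi^2/12$. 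These values are correctly assembled by the symmetric tensor $\delta_{ij}\delta_{lm}+\delta_{il}\delta_{jm}+\delta_{im}\delta_{jl}$, which takes the value $3$ on the diagonal $i=j=l=m$ and the value $1$ on each pairing configuration.

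There is no real obstacle here; the content is entirely in Theorem \ref{res-int} plus the elementary spherical moment formula. The only delicate point is the bookkeeping of the shift in the variable $s$ between the stated corollaries $n=2,4$ and the general formula, and the combinatorics of the Kronecker-delta tensor for the fourth-moment case, which must be checked by verifying agreement on the two distinct configurations (all indices equal, or two pairs of equal indices).
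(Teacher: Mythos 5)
Your proof is correct and follows the same route as the paper: apply Theorem~\ref{res-int} to the single homogeneous monomial $P(k)=k_1^{p_1}\cdots k_n^{p_n}$ to identify the unique possible pole and its residue as the spherical moment integral, then evaluate that integral. The paper simply cites Schwartz \cite{Schwartz} for the moment formula, whereas you derive it via the Gaussian/polar-coordinates trick; this is exactly the standard derivation of the cited identity, so the two proofs are essentially identical in content. Your verification of the $n=2$ and $n=4$ specializations, including the $3\cdot\tfrac{\pi^2}{12}=\tfrac{\pi^2}{4}$ check for the all-equal-indices case and $\tfrac{\pi^2}{12}$ for the paired case, matches the paper's suggestion to test the two index configurations $i=j=l=m$ and $i=j\ne l=m$.
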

\begin{proof}
Equation (\ref{formule1}) follows from Theorem (\ref{res-int})
$$
\underset{s=n+p_{1} +\cdots
+p_{n}}{\Res}\, \, \zeta_{p_{1},\dots,p_{n}}(s)=
\int_{k \in
S^{n-1}}k_{1}^{p_{1}} \cdots k_{n}^{p_{n}}\, dS(k)
$$
and standard formulae (see for instance \cite[VIII,1;22]{Schwartz}).
Equation
(\ref{formulen=2}) is a straightforward consequence of Equation
(\ref{formule1}).
Equation (\ref{formule2}) can be checked for the cases $i=j\neq l=m$
and
$i=j=l=m$.
\end{proof}
 Note that $Z_n(s)$ is an Epstein zeta function associated to the
quadratic form $q(x):=x_1^2+...+x_n^2$, so $Z_n$ satisfies the
following functional equation
$$
Z_n(s)= \pi^{s-n/2} \Gamma (n/2 -s/2)\Gamma (s/2)^{-1}\,
Z_n(n-s).
$$
Since $\pi^{s-n/2} \Gamma (n/2 -s/2) \,\Gamma (s/2)^{-1}=0$
for any negative even integer $n$ and $Z_n(s)$ is meromorphic on $\C$
with only one pole at $s=n$ with residue $2 \pi^{n/2} \Gamma
(n/2)^{-1}$ according to previous proposition, so we get $Z_n(0)=
-1$. We have proved that
\begin{align}
\label{formule}
    \underset{s=0}{\Res} \,\, Z_{n}(s+n)&=2\pi^{n/2} \, \Gamma
(n/2)^{-1},\\
Z_n(0)&= -1.
\label{Zn0}
\end{align}

\subsection{Meromorphic continuation of a class of zeta functions}

Let $n,q\in \N$, $q\geq2$, and $p=(p_1,\dots,p_{q-1}) \in
\N_0^{q-1}$.\\
Set $I:=\{ i \mid p_i\neq 0\}$ and assume that $I\neq \emptyset$ and
$${\cal I}:=\{\alpha =(\alpha_i)_{i\in I} \mid \forall i\in I ~
\alpha_i=(\alpha_{i,1},\dots, \alpha_{i,p_i})\in
\N_0^{p_i}\}=\prod_{i\in I} \N_0^{p_i}.$$

We will use in the sequel also the following notations:

\hspace{1cm} - for $x=(x_1,\dots,x_t) \in \R^t$ recall that
$|x|_1=|x_1|+\dots+|x_t|$ and $|x|=\sqrt{x_1^2+\dots+x_t^2}$;

\hspace{1cm} - for all $\alpha =(\alpha_i)_{i\in I}
  \in {\cal I} =\prod_{i\in I} \N_0^{p_i}$,
$$|\alpha|_1=\sum_{i\in I} |\alpha_i|_1 =\sum_{i\in I}
\sum_{j=1}^{p_i} |\alpha_{i,j}| {\mbox { and }}
\genfrac(){0pt}{1}{1/2}{\alpha} =\prod_{i\in I}
\genfrac(){0pt}{1}{1/2}{\alpha_{i}}=
\prod_{i\in I} \prod_{j=1}^{p_i}
\genfrac(){0pt}{1}{1/2}{\alpha_{i,j}}.$$

\subsubsection{A family of polynomials}
In this paragraph we define a family of polynomials which plays an
important role later.

Consider first the variables:

- for $X_1,\dots, X_n$ we set $X=(X_1,\dots,X_n)$;

- for any $i=1,\dots,2q$, we consider the variables $Y_{i,1},\dots,
  Y_{i,n}$ and set
$Y_i:=(Y_{i,1},\dots, Y_{i,n})$ and $Y:=(Y_1,\dots,Y_{2q})$;

- for $Y=(Y_1,\dots,Y_{2q})$, we set for any $1\leq j\leq q$,
$\wt Y_j:= Y_1+\cdots+ Y_j+  Y_{q+1}+\cdots + Y_{q+j}$.

We define for all $\alpha =(\alpha_i)_{i\in I}\in {\cal I}
=\prod_{i\in I} \N_0^{p_i}$ the polynomial
\begin{equation}
\label{palphaxy}
P_\alpha(X,Y):= \prod_{i\in I} \prod_{j=1}^{p_i}
(2\langle X, \wt Y_i\rangle + |\wt
    Y_i|^2)^{\alpha_{i,j}}.
\end{equation}

It is clear that $P_\alpha(X,Y) \in \Z[X,Y]$, deg$_X P_\alpha \leq
|\alpha|_1$ and deg$_Y P_\alpha \leq 2 |\alpha|_1$.

Let us fix a polynomial $Q\in \R[X_1,\cdots,X_n]$ and
note $d:= \deg Q$.
For $\a\in  {\cal I}$, we want to expand $P_\alpha(X,Y) \, Q(X)$ in
homogeneous polynomials in $X$ and $Y$ so defining
$$
L(\a):=\set{\beta\in\N_0^{(2q+1)n}\, :\,
|\beta|_1-d_\beta \leq 2|\a|_1 \text{ and } d_\beta\leq |\a|_1+d}
$$
where
$d_\beta := \sum_1^n \beta_i$, we set
$$
\genfrac(){0pt}{1}{1/2}{\alpha} P_\alpha(X,Y) \, Q(X) =:
\sum_{\beta\in L(\a)} c_{\a,\beta} \,  X^\beta Y^\beta
$$
where $c_{\a,\beta}\in \R$, $X^\beta:= X_1^{\beta_{1}} \cdots
X_n^{\beta_{n}}$ and $Y^{\beta}:=
Y_{1,1}^{\beta_{n+1}}\cdots Y_{2q,n}^{\beta_{(2q+1)n}}$. By
definition, $X^\beta$ is a
homogeneous polynomial of degree in $X$ equals to $d_\beta$.
We note $$M_{\a,\beta}(Y):=c_{\a,\beta} \, Y^\beta.$$

\subsubsection{Residues of a class of zeta functions}

In this section we will prove the following result, used in
Proposition \ref{zeta(0)} for
the computation of the spectrum dimension of the noncommutative
torus:

\begin{theorem}
\label{zetageneral}
(i) Let $\tfrac{1}{2\pi}\Th$ be a
diophantine matrix, and $\wt a \in \mathcal{S}
\big((\Z^{n})^{2q}\big)$. Then
$$
s\mapsto f(s):= \sum_{l\in [(\Z^n)^{q}]^2} \wt a_{l}\ {\sum_{k\in
\Z^n}}'\, \prod_{i=1}^{q-1}|k+\wt l_i|^{p_i} |k|^{-s}\, Q(k)\,
e^{ik.\Th \sum_1^{q} l_j}
$$
has a meromorphic continuation to the whole complex
 plane $\C$ with at most simple
possible poles at the points $s=n+d+|p|_1-m$ where $m\in \N_0$.

(ii) Let $m\in \N_0$ and set
$I(m):= \set{(\a,\beta)\in \mathcal{I}\times \N_0^{(2q+1)n} \, :
\, \beta\in L(\a) \text{ and } m=2|\a|_1 -d_\beta +d } $.
Then $I(m)$ is a finite set and $s=n+d+|p|_1-m$ is a pole of $f$
if and only if
$$
C(f,m):= \sum_{l\in Z} \wt a_l
\sum_{(\a,\beta)\in I(m)} M_{\a,\beta}(l)  \int_{u\in S^{n-1}}
u^\beta\, dS(u) \neq 0,
$$
with $Z:=\{l \,: \, \sum_1^{q} l_j=0 \}$ and the convention
$\sum_{\emptyset} =0$.
In that case $s=n+d+|p|_1-m$ is a simple pole of residue
$\underset{s= n+d+|p|_1 -m}{\Res} \, f(s) = C (f,m)$.
\end{theorem}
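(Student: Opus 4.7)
The plan is to reduce Theorem~\ref{zetageneral} to Theorem~\ref{analytic} by an asymptotic expansion of $\prod_{i=1}^{q-1}|k+\wt l_i|^{p_i}$ in the large-$|k|$ regime. Writing $v_i:=(2\langle k,\wt l_i\rangle+|\wt l_i|^2)/|k|^2$, the identity $|k+\wt l_i|^{p_i}=|k|^{p_i}\prod_{j=1}^{p_i}(1+v_i)^{1/2}$ and the binomial expansion of each factor $(1+v_i)^{1/2}$ truncated at order $N$ produce, for every $N\in\N$,
\begin{equation*}
\prod_{i=1}^{q-1}|k+\wt l_i|^{p_i}\,|k|^{-s}Q(k)=\sum_{\alpha\in\mathcal{I},\,|\alpha|_1\leq N}\genfrac(){0pt}{1}{1/2}{\alpha}\,\frac{P_\alpha(k,l)\,Q(k)}{|k|^{s+2|\alpha|_1-|p|_1}}+\mathcal{R}_N(s,k,l),
\end{equation*}
where (using Taylor's theorem with integral remainder for $|k|\geq 2|\wt l_i|$ and absorbing the finitely many remaining $k$ into a polynomial-in-$l$ factor) the remainder obeys $|\mathcal{R}_N(s,k,l)|\ll_{N,Q,p}(1+|l|)^{C_N}|k|^{|p|_1+d-2N-2-\Re s}$.

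Next I would substitute the polynomial identity $\genfrac(){0pt}{1}{1/2}{\alpha}P_\alpha(X,Y)\,Q(X)=\sum_{\beta\in L(\alpha)}c_{\alpha,\beta}X^\beta Y^\beta$ into each main term. Splitting the multi-index $\beta$ into its $X$-part $\beta'$ (of length $n$, with $|\beta'|_1=d_\beta$) and its $Y$-part $\beta''$, the $(\alpha,\beta)$-contribution becomes
\begin{equation*}
c_{\alpha,\beta}\sum_{l\in(\Z^n)^{2q}}\wt a_l\,l^{\beta''}\,{\sum_{k\in\Z^n}}'\,\frac{k^{\beta'}}{|k|^{s+2|\alpha|_1-|p|_1}}\,e^{ik.\Theta\sum_{j=1}^{q}l_j}.
\end{equation*}
Since $b(l):=\wt a_l\,l^{\beta''}$ is Schwartz on $(\Z^n)^{2q}$, $k^{\beta'}$ is homogeneous of degree $d_\beta$, and $\tfrac{1}{2\pi}\Theta$ is diophantine, Theorem~\ref{analytic}(ii) applied with $q$ replaced by $2q$ and $\eps_j=1$ for $1\leq j\leq q$, $\eps_j=0$ otherwise, yields meromorphic continuation to $\C$ with a single possible simple pole at $s+2|\alpha|_1-|p|_1=n+d_\beta$, i.e.\ at $s=n+d+|p|_1-m$ where $m=2|\alpha|_1-d_\beta+d\in\N_0$, and of residue $\big(\sum_{l\in Z}\wt a_l\,M_{\alpha,\beta}(l)\big)\int_{S^{n-1}}u^{\beta'}\,dS(u)$.

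The set $I(m)$ is finite because $m=2|\alpha|_1-d_\beta+d$ bounds $|\alpha|_1$ (since $d_\beta\leq|\alpha|_1+d$ forces $|\alpha|_1\leq m$), and then $|\beta|_1\leq d_\beta+2|\alpha|_1$ is bounded too. Summing the residues of all $(\alpha,\beta)\in I(m)$ collects the full residue $C(f,m)$ announced. Finally, the remainder $\mathcal{R}_N$, combined with the Schwartz decay of $\wt a$, yields a contribution to $f(s)$ that is holomorphic on the half-plane $\{\Re s>n+d+|p|_1-2N-2\}$; letting $N\to\infty$ gives meromorphic continuation of $f$ to all of~$\C$ with the claimed pole set, and each such point is an actual simple pole iff $C(f,m)\neq 0$.

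The main obstacle I foresee is controlling the remainder $\mathcal{R}_N$ \emph{uniformly in $l$}: we need the polynomial-in-$l$ growth of $\mathcal{R}_N$ to be absorbed by the Schwartz decay of $\wt a$ so that Fubini and the passage to the limit $N\to\infty$ are legitimate, and simultaneously we need the effective bound of Lemma~\ref{ieffective} (hidden inside Theorem~\ref{analytic}(iii)) to control the $l$-sums of the meromorphic extensions uniformly on compacta of $\C$. Once these two uniformities are in place, the meromorphic continuation and the residue formula follow by term-by-term inspection.
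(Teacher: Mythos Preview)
Your proposal is correct and follows essentially the same route as the paper: a binomial expansion of each factor $(1+v_i)^{1/2}$ truncated at finite order produces the polynomials $P_\alpha(k,l)$, the substitution $\genfrac(){0pt}{1}{1/2}{\alpha}P_\alpha Q=\sum_{\beta\in L(\alpha)}M_{\alpha,\beta}(l)\,k^\beta$ reduces each main term to the setting of Theorem~\ref{analytic}(ii), and the remainder is handled by separating the finitely many $k$ with $|k|$ small relative to $|l|$ (the paper uses the threshold $|k|>U(l)=36(\sum|l_i|)^4$) so that its contribution is entire in $s$ and of polynomial growth in $l$, hence killed by the Schwartz decay of $\wt a$. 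Your identification of the uniformity-in-$l$ issue is exactly the point the paper addresses via Lemma~\ref{zetageneral-lem} and the bound $R_N(l,s)\ll_{N,\sigma}1+|l|^{A(N)}$; the uniform control of the meromorphic extensions of the main terms is already built into Theorem~\ref{analytic}(ii) (through Lemma~\ref{ieffective}), so once you invoke that theorem for each $(\alpha,\beta)$-piece no further work is needed there.
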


In order to prove the theorem above we need the following

\begin{lemma}
\label{zetageneral-lem}
For all $N\in \N$ we have
$$
 \prod_{i=1}^{q-1} |k+\wt l_i|^{p_i}=
\sum_{\alpha =(\alpha_i)_{i\in I}
  \in \prod_{i\in I}\{0,\dots,N\}^{p_i}}
  \genfrac(){0pt}{1}{1/2}{\alpha}\,
\tfrac{P_\alpha (k,l)}{|k|^{2|\alpha|_1-|p|_1}}
+\mathcal{O}_N(|k|^{|p|_1- (N+1)/2})
$$
uniformly in $k\in \Z^n$ and $l\in
(\Z^n)^{2q}$ verifying $|k| > U(l):=36\,
(\sum_{i=1,\, i\neq q}^{2q-1}|l_{i}|)^4$.
\end{lemma}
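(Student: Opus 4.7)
The plan is to Taylor-expand each factor $|k+\wt l_i|$ as a power series in $|k|^{-1}$ and identify the resulting coefficients with the polynomial $P_\alpha$. Writing $|k+\wt l_i|^2 = |k|^2(1+u_i)$ with
\[
u_i := \frac{2\langle k,\wt l_i\rangle + |\wt l_i|^2}{|k|^2},
\]
one has
\[
\prod_{i=1}^{q-1}|k+\wt l_i|^{p_i} = |k|^{|p|_1}\prod_{i\in I}\prod_{j=1}^{p_i}(1+u_i)^{1/2},
\]
so the problem reduces to expanding a product of $|p|_1$ square-root factors.

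The first step is to check that the binomial series converges. Setting $L := \sum_{i=1,\,i\neq q}^{2q-1}|l_i|$, so that $U(l)=36L^4$ and $|\wt l_i|\leq L$ for every $i\in I\subseteq\{1,\ldots,q-1\}$, Cauchy--Schwarz gives $|u_i|\leq 2L/|k|+L^2/|k|^2$, which is bounded by $1/10$ as soon as $|k| > 36L^4$ (the case $L=0$ being trivial since then $\wt l_i=0$). With $u_i$ well inside the disc of convergence, I Taylor-expand each factor,
\[
(1+u_i)^{1/2} = \sum_{m=0}^{N}\binom{1/2}{m}u_i^{m} + r_N(u_i), \qquad |r_N(u_i)| \ll_N |u_i|^{N+1}.
\]

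Multiplying these $|p|_1$ expansions and grouping the non-remainder contributions by a single multi-index $\alpha=(\alpha_{i,j})\in \prod_{i\in I}\{0,\ldots,N\}^{p_i}$, the main part reads
\[
\sum_\alpha \binom{1/2}{\alpha}\prod_{i\in I}\prod_{j=1}^{p_i} u_i^{\alpha_{i,j}} = \sum_\alpha \binom{1/2}{\alpha}\frac{P_\alpha(k,l)}{|k|^{2|\alpha|_1}},
\]
which is immediate from the factorisation $\binom{1/2}{\alpha}=\prod_{i,j}\binom{1/2}{\alpha_{i,j}}$ and the very definition \eqref{palphaxy} of $P_\alpha$. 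After multiplying by $|k|^{|p|_1}$, this reproduces the stated main term.

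For the error, any summand of the expanded product containing at least one remainder factor $r_N(u_i)$ is bounded by $O_N(|u_i|^{N+1}) = O_N((L/|k|)^{N+1})$, since the other factors are bounded uniformly in view of $|u_i|\leq 1/10$. The number of such summands depends only on $N$ and $|p|_1$, so the total error in $\prod(1+u_i)^{1/2}$ is $O_N((L/|k|)^{N+1})$. Multiplying by $|k|^{|p|_1}$ and using $L \leq (|k|/36)^{1/4}$ gives an error bound $O_N(|k|^{|p|_1-3(N+1)/4})$, which a fortiori dominates the claimed $O_N(|k|^{|p|_1-(N+1)/2})$. The only real obstacle is the combinatorial bookkeeping that turns the product of $|p|_1$ scalar Taylor expansions into the multi-index notation of $P_\alpha$ and that shows the implicit constants are uniform in $l$; once the estimate $|u_i|\leq 1/10$ is secured by the hypothesis $|k|>U(l)$, everything else is routine.
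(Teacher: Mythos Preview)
Your proof is correct and follows essentially the same route as the paper's: both write $|k+\wt l_i|=|k|(1+u_i)^{1/2}$, expand each square-root factor by the binomial series, and identify the product of the polynomial numerators with $P_\alpha$. The only cosmetic difference is in the remainder estimate: the paper bounds $|u_i|<\tfrac{1}{2\sqrt{|k|}}$ directly (via $2|k|L+L^2\le 3L^2|k|=\tfrac12\sqrt{U(l)}\,|k|$) and obtains the error exponent $(N+1)/2$, whereas you bound $|u_i|\ll L/|k|$ and then invoke $L\le (|k|/36)^{1/4}$, which yields the sharper exponent $3(N+1)/4$ that of course implies the stated one.
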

\begin{proof}
For $i=1,\dots,q-1$, we have uniformly in $k\in \Z^n$ and $l\in
(\Z^n)^{2q}$ verifying $|k| > U(l)$,
\begin{equation}
\label{devjustification}
\tfrac{\big|2\langle k, \wt l_i \rangle+|\wt l_i|^2\big|}{|k|^2}
\leq\tfrac{\sqrt{U(l)}}{2|k|} < \tfrac{1}{2\sqrt{|k|}}.
\end{equation}
In that case,
\begin{eqnarray*}
|k+\wt l_i|&=& \big(|k|^2+2\langle k, \wt l_i\rangle
  + |\wt l_i|^2\big)^{1/2} =
|k| \big(1+ \tfrac{2\langle k, \wt l_i\rangle
  + |\wt l_i|^2}{|k|^2}\big)^{1/2} =
 \sum_{u=0}^{\infty}  \genfrac(){0pt}{1}{1/2}{u} \,
\tfrac{1}{|k|^{2u-1}}P^i_u(k,l)
\end{eqnarray*}
where for all $i=1,\dots, q-1$ and for all $u\in \N_0$,
\begin{equation*}
P^i_u(k,l):=\big(2\langle k, \wt l_i\rangle + |\wt l_i|^2\big)^u,
\end{equation*}
with the convention $P^i_0(k,l):=1$.

In particular $P^i_u(k,l)\in \Z[k,l]$,
$\deg_{k} P^i_u\leq u$ and $\deg_{l} P^i_u\leq 2u$.
Inequality (\ref{devjustification}) implies that for all
$i=1,\dots,q-1$
and for all $u\in \N$,
$$
\tfrac{1}{|k|^{2u}}\,|P^i_u(k,l)|\leq \big(2\sqrt{|k|}\big)^{-u}
$$
uniformly in $k\in \Z^n$ and $l\in
(\Z^n)^{2q}$ verifying $|k| > U(l)$.

Let $N\in \N$. We deduce from the previous that for any
$k\in \Z^n$ and $l\in
(\Z^n)^{2q}$ verifying $|k| > U(l)$ and
for all $i=1,\dots,q-1$, we have
\begin{eqnarray*}
|k+\wt l_i|&=& \sum_{u=0}^{N}
\genfrac(){0pt}{1}{1/2}{u} \, \tfrac{1}{|k|^{2u-1}}P^i_u(k,l)+
\mathcal{O}\big(\sum_{u>N}|k|\,|\genfrac(){0pt}{1}{1/2}{u}|\,
(2\sqrt{|k|})^{-u}\big)\\
&=& \sum_{u=0}^{N} \genfrac(){0pt}{1}{1/2}{u} \,
\tfrac{1}{|k|^{2u-1}}P^i_u(k,l)+\mathcal{O}_N
\big(\tfrac{1}{|k|^{(N-1)/2}}\big).
\end{eqnarray*}
It follows that for any $N\in \N$, we have uniformly in
$k\in \Z^n$ and $l\in (\Z^n)^{2q}$ verifying $|k| > U(l)$ and
for all $i\in I$,
$$
|k+\wt l_i|^{p_i}=\sum_{\alpha_i \in
  \{0,\dots,N\}^{p_i}} \genfrac(){0pt}{1}{1/2}{\alpha_i} \,
\tfrac{1}{|k|^{2|\alpha_i|_1-p_i}}P^i_{\alpha_i} (k,l)
+\mathcal{O}_N \left(\tfrac{1}{|k|^{(N+1)/2-p_i}}\right)
$$
where $P^i_{\alpha_i} (k,l)=\prod_{j=1}^{p_i} P^i_{\alpha_{i,j}}(k,l)$
for all $\alpha_i =(\alpha_{i,1},\dots,\alpha_{i,p_i})\in
  \{0,\dots,N\}^{p_i}$ and
$$
\prod_{i\in I} |k+\wt l_i|^{p_i}=\sum_{\alpha=(\alpha_i) \in
  \prod_{i\in I} \{0,\dots,N\}^{p_i}} \genfrac(){0pt}{1}{1/2}{\alpha}
\,
\tfrac{1}{|k|^{2|\alpha|_1-|p|_1}}P_{\alpha} (k,l)+\mathcal{O}_N
\big(\tfrac{1}{|k|^{(N+1)/2 -|p|_1}}\big)
$$
where $P_{\alpha } (k,l)=\prod_{i\in I} P^i_{\alpha_{i}}(k,l)=
\prod_{i\in I} \prod_{j=1}^{p_i} P^i_{\alpha_{i,j}}(k,l)$.
\end{proof}

\medskip

{\it Proof of Theorem \ref{zetageneral}.}

$(i)$ All $n$, $q$, $p=(p_1,\dots,p_{q-1})$ and $\wt a
\in {\cal S}\left((\Z^n)^{2q}\right)$ are fixed as above and we
define formally for any $l \in (\Z^n)^{2q}$
\begin{equation}
\label{fls}
F(l,s):= {\sum_{k\in \Z^n}}' \,
\prod_{i=1}^{q-1} |k+\wt l_i|^{p_i}\, Q(k)\,
e^{ik.\Th \sum_1^{q}l_j}\,|k|^{-s}.
\end{equation}
Thus, still formally,
\begin{equation}\label{fsexpfls}
f(s):=\sum_{l\in (\Z^n)^{2q}} \wt a_l\ F(l,s).
\end{equation}
It is clear that $F(l,s)$ converges absolutely in the half
plane $\{\sigma=\Re(s) >n+d+|p|_1\}$ where $d=\deg Q$.

Let $N\in \N$. Lemma \ref{zetageneral-lem} implies
that for any $l\in (\Z^n)^{2q}$ and for $s\in \C$ such
that $\sigma >n+|p|_1+d$,
\begin{align*}
F(l,s)&= {\sum_{|k|\leq U(l)}}' \,
\prod_{i=1}^{q-1} |k+\wt l_i|^{p_i}\, Q(k)\,
e^{ik.\Th \sum_1^{q}l_j}\,|k|^{-s} \\
& \quad \quad  +\sum_{\alpha =(\alpha_i)_{i\in I}
  \in \prod_{i\in
I}\{0,\dots,N\}^{p_i}}\genfrac(){0pt}{1}{1/2}{\alpha}
\sum_{|k| > U(l)} \tfrac{1}{|k|^{s+2|\alpha|_1-|p|_1}}P_\alpha (k,l)
Q(k)\, e^{ik.\Th \sum_1^{q}l_j}
+ G_N(l,s).
\end{align*}
where $s\mapsto G_N(l,s)$ is a holomorphic function in the half-plane
$D_N:=\{\sigma > n+d+|p|_1-\tfrac{N+1}{2}\}$ and verifies in it the
bound
$G_N(l,s) \ll_{N,\sigma} 1$ uniformly in $l$.

It follows that
\begin{equation}
\label{flsexpzeta}
F(l,s)= \sum_{\alpha =(\alpha_i)_{i\in I}
  \in \prod_{i\in I}\{0,\dots,N\}^{p_i}} H_{\a}(l,s)+ R_N(l,s),
\end{equation}
where
\begin{eqnarray*}
H_{\a}(l,s)&:=&{\sum_{k\in \Z^n}}'\,
\genfrac(){0pt}{1}{1/2}{\alpha} \,
\tfrac {1}{|k|^{s+2|\alpha|_1-|p|_1}}P_\alpha (k,l)\, Q(k)\,
e^{ik.\Th \sum_1^{q}l_j},\\
R_N(l,s)&:=& {\sum_{|k|\leq U(l)}}' \,
\prod_{i=1}^{q-1} |k+\wt l_i|^{p_i}\, Q(k)\,
e^{ik.\Th \sum_1^{q}l_j}\,|k|^{-s}\\
& & \quad -{\sum_{|k|\leq U(l)}}' \quad\sum_{\alpha =
(\alpha_i)_{i\in I} \in \prod_{i\in I}\{0,\dots,N\}^{p_i}}
  \genfrac(){0pt}{1}{1/2}{\alpha} \tfrac{P_\alpha (k,l)}
{|k|^{s+2|\alpha|_1-|p|_1}}Q(k)\,
  e^{ik.\Th \sum_1^{q}l_j}
+ G_N(l,s).
\end{eqnarray*}
In particular there exists $A(N)>0$ such that
$s\mapsto R_N(l,s)$ extends holomorphically to the half-plane
$D_N$ and verifies in it the bound
$R_N(l,s) \ll_{N,\sigma} 1 +|l|^{A(N)}$ uniformly in $l$.

Let us note formally
$$
h_\a(s):= \sum_l \wt a_l\, H_\a(l,s).
$$
Equation (\ref{flsexpzeta}) and $R_N(l,s) \ll_{N,\sigma} 1
+|l|^{A(N)}$ imply that
\begin{equation}
\label{fssimN}
f(s) \sim_N \sum_{\alpha =(\alpha_i)_{i\in I}
  \in \prod_{i\in I}\{0,\dots,N\}^{p_i}} h_\a(s),
\end{equation}
where $\sim_N$ means modulo a holomorphic function in $D_N$.

Recall the decomposition
$\genfrac(){0pt}{1}{1/2}{\alpha} \,P_\a(k,l) \, Q(k)=
\sum_{\beta\in L(\a)} M_{\a,\beta}(l) \, k^\beta$ and we
decompose similarly
$h_{\a}(s) =\sum_{\beta\in L(\a)} h_{\a,\beta}(s).$
Theorem \ref{analytic} now implies that for all
$\alpha =(\alpha_i)_{i\in I}  \in \prod_{i\in I}\{0,\dots,N\}^{p_i}$
and $\beta\in L(\a)$,

\quad - the map $s\mapsto h_{\a,\beta}(s)$ has a meromorphic
continuation to the whole complex plane $\C$ with only one
simple possible pole at $s=n+ |p|_1 - 2|\a|_1 +d_\beta$,

\quad - the residue at this point is equal to
\begin{equation}
\label{res-halphaj}
\underset {s=n+ |p|_1 - 2|\a|_1 +d_\beta}{\Res}\,
h_{\a,\beta}(s) =
\sum_{l\in \mathcal{Z}} \wt a_l\, M_{\a,\beta}(l) \int_{u\in S^{n-1}}
u^\beta dS(u)
\end{equation}
where $\mathcal{Z}:=\{l\in (\Z)^{n})^{2q} \, : \, \sum_1^{q} l_j =0
\}$.
If the right hand side is zero, $h_{\a,\beta}(s)$ is holomorphic on
$\C$.

By (\ref{fssimN}), we deduce therefore that
$f(s)$ has a meromorphic continuation on the halfplane $D_N$,
with only simple possible poles in the set
$
\set{n+|p|_1 + k  \,: \,-2N|p|_1\leq k \leq d}.
$
Taking now $N\to \infty$ yields the result.

$(ii)$ Let $m\in \N_0$ and set
$I(m):= \set{(\a,\beta)\in \mathcal{I}\times \N_0^{(2q+1)n} \,:
\, \beta\in L(\a) \text{ and } m=2|\a|_1 -d_\beta +d } $.
If $(\a,\beta)\in I(m)$, then $|\a|_1 \leq m$ and
$|\beta|_1\leq 3m+d$, so $I(m)$ is finite.

With a chosen $N$ such that $2N|p|_1+d>m$, we get by (\ref{fssimN})
and (\ref{res-halphaj})
$$
\underset{s= n+d+|p|_1 -m}{\Res} \,f(s) =
\sum_{l\in \mathcal{Z}} \wt a_l
\sum_{(\a,\beta)\in I(m)} M_{\a,\beta}(l)  \int_{u\in S^{n-1}}
u^\beta\, dS(u)=C(f,m)
$$
with the convention $\sum_{\emptyset} =0$. Thus, $n+d+|p|_1 - m$ is a
pole of $f$ if and only if
$C(f,m)\neq 0$.
{\qed}

\section{Noncommutative integration on a simple spectral triple}

In this section, we revisit the notion of noncommutative integral
pioneered by Alain Connes, paying particular attention to the
reality (Tomita--Takesaki) operator $J$ and to kernels of
perturbed Dirac operators by symmetrized one-forms.

\subsection{Kernel dimension}

We will have to compare here the kernels of $\DD$ and $\DD_{A}$ which
are both finite dimensional:

\begin{lemma}
\label{compres} Let $(\A,\H,\DD)$ be a spectral triple with a reality
operator $J$ and chirality $\chi$. If $A \in \Omega^1_{\DD}$ is a
one-form, the fluctuated Dirac operator $$\DD_A:= \DD+A+ \epsilon
JAJ^{-1}$$
(where $\DD J=\epsilon \,J \DD$, $\epsilon =\pm 1$) is an operator
with compact resolvent, and in particular its kernel $\Ker \DD_A$ is
a finite dimensional space. This space is invariant by $J$ and $\chi$.
\end{lemma}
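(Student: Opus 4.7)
The plan is to handle the three assertions separately, and all three are short once the right observation is in place.

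First, I would prove compactness of the resolvent of $\DD_A$ by viewing $\DD_A = \DD + B$ as a bounded perturbation of $\DD$, where $B := A + \epsilon JAJ^{-1}$. The operator $B$ is bounded because each commutator $[\DD,b_i]$ appearing in the decomposition $A=\sum_i a_i[\DD,b_i]$ is bounded by the first-order axiom of a spectral triple, while $J$ is anti-unitary so $\|JAJ^{-1}\|=\|A\|$. For any $\lambda$ in the common resolvent set of $\DD$ and $\DD_A$ (nonempty for $|\mathrm{Im}\,\lambda|>\|B\|$ by standard perturbation theory), the second resolvent identity
\begin{equation*}
(\DD_A-\lambda)^{-1}=(\DD-\lambda)^{-1}-(\DD_A-\lambda)^{-1}\,B\,(\DD-\lambda)^{-1}
\end{equation*}
exhibits $(\DD_A-\lambda)^{-1}$ as the sum of a compact operator and a composition having the compact factor $(\DD-\lambda)^{-1}$, hence is itself compact. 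Finite-dimensionality of $\Ker\DD_A$ then follows immediately, because a closed operator with compact resolvent has purely discrete spectrum with finite-dimensional eigenspaces.

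For the invariance of $\Ker\DD_A$ under $J$ and $\chi$, I would show that $\DD_A$ satisfies the same (anti)commutation relations as $\DD$ and deduce the stability algebraically. Using $J\DD=\epsilon\DD J$ and the fact that $J^2=\pm 1$ commutes with every bounded operator, a one-line computation yields
\begin{equation*}
J\DD_A J^{-1}=\epsilon\DD+JAJ^{-1}+\epsilon J^2AJ^{-2}=\epsilon\DD+JAJ^{-1}+\epsilon A=\epsilon\DD_A.
\end{equation*}
For the grading, the axioms $\chi\DD=-\DD\chi$, $[\chi,\A]=0$ and $\chi J=\pm J\chi$ give first $\chi A\chi=-A$ (each $[\DD,b_i]$ anti-commutes with $\chi$ while the $a_i$ commute with it) and then $\chi\,JAJ^{-1}\chi=-JAJ^{-1}$, so $\chi\DD_A\chi^{-1}=-\DD_A$. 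Both identities map $\Ker\DD_A$ into itself.

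The only step that is not purely algebraic is guaranteeing that the resolvent set of $\DD_A$ is nonempty, so that the resolvent identity has content; but this is not really an obstacle, being a standard consequence of bounded perturbation theory once $B$ is shown to be bounded. The remaining work is mechanical, and the lemma reduces to writing down these short computations carefully and keeping track of the signs $\epsilon$, $J^2=\pm 1$ and $\chi J=\pm J\chi$.
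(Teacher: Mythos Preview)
Your proof is correct and follows essentially the same approach as the paper: both arguments express the resolvent of $\DD_A$ via a resolvent identity involving the compact resolvent of $\DD$ and the bounded perturbation $B=A+\epsilon JAJ^{-1}$, and both deduce invariance of $\Ker\DD_A$ from the relations $J\DD_A=\epsilon\DD_A J$ and $\chi\DD_A=-\DD_A\chi$. The only cosmetic difference is that the paper uses the variant $(\DD+T-z)^{-1}=(\DD-z')^{-1}\bigl[1-(T+z'-z)(\DD+T-z)^{-1}\bigr]$ with possibly distinct $z,z'$, whereas you use the standard second resolvent identity at a common $\lambda$; your explicit remark on nonemptiness of the resolvent set for $|\mathrm{Im}\,\lambda|>\|B\|$ is a point the paper leaves implicit.
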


\begin{proof}
Let $T$ be a bounded operator and let $z$ be in the resolvent of
$\DD+T$ and $z'$ be in the resolvent of $\DD$. Then
$$
(\DD+T-z)^{-1}=(\DD-z')^{-1} \, [1-(T+z'-z)(\DD+T-z)^{-1}].
$$
Since $(\DD-z')^{-1}$ is compact by hypothesis and since the term in
bracket is
bounded, $\DD+T$ has a compact resolvent. Applying this to
$T=A+\epsilon JAJ^{-1}$,
$\DD_A$ has a finite dimensional kernel (see for instance
\cite[Theorem 6.29]{Kato}).

Since according to the dimension, $J^2=\pm 1$, $J$ commutes or
anticommutes with
$\chi$,  $\chi$ commutes with the elements in the algebra $\A$ and
$\DD \chi=-\chi
\DD$ (see \cite{ConnesReality} or \cite[p. 405]{Polaris}), we get
$\DD_A \chi=-\chi
\DD_A$ and $\DD_A J=\pm J \DD_A$ which gives the result.
\end{proof}

\subsection{Pseudodifferential operators}

Let $(A,\DD, \H)$ be a given real regular spectral
triple of dimension $n$.

We note
\begin{align*}
&P_0 \text{ the projection on } \Ker \DD\, ,
P_A \text{  the projection on } \Ker \DD_A\, ,\\
&D:= \DD + P_0\, , D_A:= \DD_A + P_A\, .
\end{align*}
$P_0$ and $P_A$ are thus finite-rank selfadjoint bounded operators.
We remark that
$D$ and $D_A$ are selfadjoint invertible operators with compact
inverses.

\begin{remark}
Since we only need to compute the residues and the value at 0 of the
$\zeta_{D}$,
$\zeta_{D_A}$ functions, it is not necessary to define the operators
$\DD^{-1}$ or
$\DD_A^{-1}$ and the associated zeta functions. However, we can
remark that all the
work presented here could be done using the process of Higson in
\cite{Higson}
which proves that we can add any smoothing operator to $\DD$ or
$\DD_A$ such that
the result is invertible without changing anything to the
computation of residues.
\end{remark}

Define for any $\a\in \R$
\begin{align*}
OP^0&:=\{T \, : \, t\mapsto F_t(T) \in
C^\infty\big(\R,\B(\H)\big)\},\\
OP^\a&:= \set{T \,: \, T |D|^{-\a} \in  OP^0}.
\end{align*}
where $F_t(T):=e^{it|D|}\,T\,e^{-it|D|} =
e^{it|\DD|}\,T\,e^{-it|\DD|}$ since $\vert D\vert=\vert \DD \vert
+P_0$.  Define
\begin{align*}
\delta(T)&:=[|D|,T],\\
\nabla(T)&:=[\DD^2,T],\\
\sigma_s(T)&:=|D|^{s} T |D|^{-s}, \, s\in \C.
\end{align*}
It has been shown in \cite{CM} that
$OP^0=\bigcap_{p\geq 0}\Dom(\delta^p)$.
In particular, $OP^0$ is a subalgebra of
$\B(\H)$ (while elements of $OP^{\a}$
are not necessarily bounded for $\a>0$) and
$\A\subseteq OP^0$, $J\A J^{-1}\subseteq
OP^0$, $[\DD,\A] \subseteq OP^0$. Note that $P_0 \in OP^{-\infty}$
and $\delta(OP^0)\subseteq OP^0$.

For any $t>0$, $\DD^{t}$ and and $|\DD|^t$ are in $OP^t$ and
for any $\a\in \R, D^\a$ and $|D|^{\a}$ are in $OP^\a$.
By hypothesis, $|D|^{-n} \in \L^{(1,\infty)}(\H)$ so
for any $\a>n$, $OP^{-\a}\subseteq \L^1(\H)$.

\begin{lemma} \cite{CM}
    \label{propOP}

(i) For any $T\in OP^0$ and $s\in \C$, $\sigma_s(T)\in OP^0$.

(ii) For any $\a, \beta \in \R$, $OP^{\a}OP^{\beta}\subseteq
OP^{\a+\beta}$.

(iii) If $\a\leq\beta$, $OP^{\a}\subseteq OP^{\beta}$.

(iv) For any $\a$, $\delta(OP^{\a})\subseteq OP^{\a}$.

(v) For any $\a$ and $T\in OP^{\a}$, $\nabla(T) \in OP^{\a+1}$.
\end{lemma}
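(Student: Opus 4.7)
The plan is to establish the five items in the order (iii), (iv), (i), (ii), (v), leveraging the facts already recalled in the excerpt: $OP^0 = \bigcap_{p\geq 0}\Dom(\delta^p)$ is a subalgebra of $\B(\H)$, $\delta(OP^0) \subseteq OP^0$, and $|D|^s \in OP^s$ for every real $s$ (which is clear because $\delta$ annihilates every power of $|D|$). Items (iii) and (iv) will be immediate from the algebraic structure; the main obstacle is the invariance of $OP^0$ under $\sigma_s$ in (i), which requires controlling $|D|^s T |D|^{-s}$ for $s\in\C$. Items (ii) and (v) then reduce quickly once (i) is in hand.

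For (iii), I would factor $T|D|^{-\beta} = (T|D|^{-\alpha})\cdot|D|^{\alpha-\beta}$. Since $\alpha-\beta\leq 0$, the second factor is bounded and commutes with $|D|$, so it lies in $OP^0$; the subalgebra property then yields $T|D|^{-\beta}\in OP^0$, i.e.\ $T\in OP^\beta$. For (iv), I would use that $|D|$ commutes with $|D|^{-\alpha}$, giving
$$
\delta(T)\,|D|^{-\alpha} \;=\; [|D|,\,T]\,|D|^{-\alpha} \;=\; [|D|,\,T|D|^{-\alpha}] \;=\; \delta(T|D|^{-\alpha}).
$$
Since $T|D|^{-\alpha}\in OP^0$ and $\delta$ stabilizes $OP^0$, the right-hand side is in $OP^0$, so $\delta(T)\in OP^\alpha$.

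The substantive step is (i). The key observation is that $|D|^s$ commutes with $|D|$, so $\delta\circ\sigma_s = \sigma_s\circ\delta$ and therefore $\delta^p(\sigma_s(T)) = \sigma_s(\delta^p(T))$ for every $p$. Via the characterization $OP^0 = \bigcap_p \Dom(\delta^p)$, it suffices to check that $\sigma_s$ sends $OP^0$ into $\B(\H)$, and in fact the computation will show it lands back in $OP^0$. I would establish this through the Connes--Moscovici asymptotic expansion
$$
\sigma_s(T) \;=\; \sum_{k=0}^{N-1} \binom{s}{k}\,\delta^k(T)\,|D|^{-k} \;+\; R_N(s,T),
$$
where the remainder $R_N(s,T)$ lies in $OP^{-N}$. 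One produces this expansion by an integer induction, starting from $\sigma_1(T) = T + \delta(T)|D|^{-1}$, and then extends it to arbitrary $s\in\C$ via a Cauchy-type integral representation of $|D|^s$ (valid since $|D|$ has a bounded-below spectrum after the $P_0$ shift). Each finite-sum term $\delta^k(T)|D|^{-k}$ is a product of elements of $OP^0$, and choosing $N$ large enough makes the remainder bounded. This is where the real work sits.

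Item (ii) follows from (i): for $T\in OP^\alpha$ and $S\in OP^\beta$ I would write
$$
TS\,|D|^{-\alpha-\beta} \;=\; \bigl(T|D|^{-\alpha}\bigr)\cdot\bigl(|D|^\alpha\, S|D|^{-\beta}\,|D|^{-\alpha}\bigr) \;=\; \bigl(T|D|^{-\alpha}\bigr)\,\sigma_\alpha\bigl(S|D|^{-\beta}\bigr),
$$
which is a product of two elements of $OP^0$ by (i), hence in $OP^0$; so $TS\in OP^{\alpha+\beta}$. Finally, (v) is a brief commutator computation: $\nabla(T) = [\DD^2,T] = [|D|^2,T] - [P_0,T]$, where $[P_0,T]\in OP^{-\infty}$ because $P_0$ is finite-rank and hence smoothing; expanding
$$
[|D|^2,T] \;=\; |D|\,\delta(T) + \delta(T)\,|D| \;=\; 2\,\delta(T)\,|D| + \delta^2(T),
$$
and combining (ii), (iii), (iv) with $|D|\in OP^1$ yields $\nabla(T)\in OP^{\alpha+1}$, completing the proof.
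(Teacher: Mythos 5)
Your proposal is correct. Items (iii) and (ii) are the same factorization used in the paper's appendix proof, item (iv) makes the observation $\delta(T)|D|^{-\alpha} = \delta(T|D|^{-\alpha})$ explicit (the paper simply cites $\delta(OP^0)\subseteq OP^0$, which amounts to the same thing), and item (v) is the same decomposition $\nabla(T) = [|D|^2,T] - [P_0,T]$; your further rewriting $[|D|^2,T] = 2\delta(T)|D| + \delta^2(T)$ is harmless and equivalent.

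For (i) you take a genuinely different route. The paper also reduces to the integer-power recurrence $|D|^k T |D|^{-k} = \sum_{q=0}^k\binom{k}{q}\delta^q(T)|D|^{-q}$, but then sets $F_p(s):=\delta^p(\sigma_s(T))$, observes $F_p(k)$ is bounded for every integer $k$, and closes the argument by \emph{complex interpolation}. You instead exploit the identity $\delta\circ\sigma_s = \sigma_s\circ\delta$ to reduce the whole problem to showing $\sigma_s(OP^0)\subseteq\B(\H)$, and then invoke the Connes--Moscovici asymptotic expansion $\sigma_s(T) = \sum_{k<N}\binom{s}{k}\delta^k(T)|D|^{-k} + R_N$ with $R_N\in OP^{-N}$ to conclude boundedness. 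Your commutation reduction is a nice simplification the paper does not make; on the other hand, the interpolation route has the advantage of requiring only the easy integer recurrence plus an abstract three-lines argument, whereas establishing the expansion with the remainder estimate $R_N\in OP^{-N}$ from scratch (via the Cauchy-type integral you mention) is itself essentially equivalent in difficulty to what is being proved, so you should be careful that the reference you lean on for that expansion does not already assume $\sigma_s(OP^0)\subseteq OP^0$ in its derivation. Both approaches are standard and work; the paper's is shorter to make fully rigorous at this level of generality.
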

\begin{proof} See the appendix.
\end{proof}

\begin{remark}
\label{oprem}
Any operator in $OP^\a$, where $\a\in \R$,
extends as a continuous linear operator from $\Dom |D|^{\a+1}$ to
$\Dom |D|$ where the $\Dom |D|^\a$ spaces have their natural norms
(see \cite{CM,Higson}).
\end{remark}

We now introduce a definition of pseudodifferential operators
in a slightly different way
than in \cite{CM,Cgeom,Higson} which in particular pays attention
to the reality operator $J$ and the kernel of $\DD$
and allows $\DD$ and $|D|^{-1}$
to be a pseudodifferential operators. It is more in the spirit of
\cite{CC1}.

\medskip

\begin{definition}
\label{defpseudo} Let us define $\DD(\A)$ as the polynomial algebra
generated
by $\A$, $J\A J^{-1}$, $\DD$ and $|\DD|$.

A pseudodifferential operator is an operator $T$ such that there
exists $d\in \Z$ such that
for any $N\in \N$, there exist $p\in \N_0$, $P\in \DD(\A)$ and $R\in
OP^{-N}$ ($p$, $P$ and
$R$ may depend on $N$) such that $P\,D^{-2p}\in OP^d$ and
$$
T=P\,D^{-2p}+R\, .
$$
Define $\Psi(\A)$ as the set of pseudodifferential operators and
$\Psi(\A)^k:=\Psi(\A)\cap OP^k$.
\end{definition}
Note that if $A$ is a 1-form, $A$ and $JAJ^{-1}$ are in $\DD(\A)$ and
moreover
$\DD(\A)\subseteq \cup_{p \in \N_0} OP^p$. Since $|\DD|\in \DD(\A)$
by construction and
$P_0$ is a pseudodifferential operator, for any
$p\in \Z$, $|D|^{p}$ is a pseudodifferential operator (in $OP^{p}$.)
Let us
remark also that $\DD(\A)\subseteq\Psi(\A) \subseteq \cup_{k\in \Z}
OP^{k}$.

\begin{lemma} \cite{CM,Cgeom}
\label{pdoalg}
The set of all pseudodifferential operators $\Psi(\A)$ is an algebra.
Moreover, if
$T\in \Psi(\A)^d$ and $T\in \Psi(\A)^{d'}$, then $TT'\in
\Psi(\A)^{d+d'}$.
\end{lemma}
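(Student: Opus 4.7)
The statement has two parts: $\Psi(\A)$ is closed under sum and product, and the filtration is multiplicative. I would organize the argument around three steps: a common-denominator lemma, a Taylor-type expansion moving $D^{-2p}$ across an operator, and the assembly.

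\textbf{Step 1 (common power of $D$).} Given $T_i = P_i D^{-2p_i} + R_i \in \Psi(\A)^{d_i}$ at a prescribed order $N$, with $R_i \in OP^{-N}$ and (WLOG) $p_1 \leq p_2$, the first task is to rewrite $P_1 D^{-2p_1}$ with the larger denominator $D^{-2p_2}$. Writing $D^2 = \DD^2 + P_0$ and expanding $D^{2(p_2-p_1)}$ binomially (using that $P_0$ commutes with $\DD^2$), one gets $D^{2(p_2-p_1)} = \DD^{2(p_2-p_1)} + S$ where every monomial in $S$ contains a factor $P_0 \in OP^{-\infty}$; since $OP^{-\infty}$ is a two-sided ideal of $\bigcup_k OP^k$, we have $S\in OP^{-\infty}$. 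Hence
\[
P_1 D^{-2p_1} = \bigl(P_1 \DD^{2(p_2-p_1)}\bigr) D^{-2p_2} + P_1 S\, D^{-2p_2},
\]
with the first factor in $\DD(\A)$ and the second in $OP^{-\infty}\subseteq OP^{-N}$. Combining with $P_2 D^{-2p_2}+R_2$ shows $T_1+T_2\in\Psi(\A)^{\max(d_1,d_2)}$.

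\textbf{Step 2 (asymptotic expansion).} The technical heart is the identity
\[
D^{-2p} T \;=\; \sum_{k=0}^{N-1} \gamma_{p,k}\, \nabla^k(T)\, D^{-2p-2k} \;+\; r_N(T,p),
\qquad \gamma_{p,k}\in\Z,
\]
with $r_N(T,p)\in OP^{a-N}$ whenever $T\in OP^a$. I would derive it by iterating the commutator formula $[D^{-2},T] = -D^{-2}\nabla(T)D^{-2}$ and inducting on $p$, controlling remainders via Lemma \ref{propOP} (notably that $\nabla$ raises the filtration by $1$, so $\nabla^k(T)\in OP^{a+k}$, and that $OP^\alpha OP^\beta\subseteq OP^{\alpha+\beta}$). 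A crucial adjoint observation is that $\DD(\A)$ is stable under $\nabla$: since $\DD^2=|\DD|^2\in\DD(\A)$ and $\DD(\A)$ is a polynomial algebra, $\nabla(P)=[\DD^2,P]\in\DD(\A)$ for every $P\in\DD(\A)$, hence $\nabla^k(\DD(\A))\subseteq\DD(\A)$.

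\textbf{Step 3 (product).} Expand $T_1T_2 = P_1D^{-2p_1}P_2D^{-2p_2} + P_1D^{-2p_1}R_2 + R_1P_2D^{-2p_2} + R_1R_2$. The last three terms lie in $OP^{d_1-N}$, $OP^{-N+d_2}$ and $OP^{-2N}$ respectively, so for $N$ large enough they are absorbed into a single remainder of arbitrarily negative order. Applying the expansion of Step 2 to the leading term gives
\[
P_1D^{-2p_1}P_2D^{-2p_2}
= \sum_{k=0}^{N-1}\gamma_{p_1,k}\,P_1\nabla^k(P_2)\,D^{-2(p_1+p_2+k)} + P_1 r_N(P_2,p_1)D^{-2p_2},
\]
where each coefficient $P_1\nabla^k(P_2)$ lies in $\DD(\A)$ by the stability observation of Step 2, and the tail is in $OP^{-M}$ for any preassigned $M$. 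A single application of Step 1 collects the finite sum into the canonical form $P\,D^{-2p}$, and Lemma \ref{propOP}(ii) delivers the degree bound $d_1+d_2$.

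\textbf{Main obstacle.} The delicate point is Step 2: producing the asymptotic expansion with clean remainder estimates in the $OP^\bullet$-scale and confirming that $\nabla^k$ preserves $\DD(\A)$ so that every coefficient in the expansion is again an element of $\DD(\A)$. Once these ingredients are in place, the closure statements reduce to bookkeeping with the ideal $OP^{-\infty}$ and the filtration of Lemma \ref{propOP}.
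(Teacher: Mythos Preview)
Your proposal is correct and follows essentially the same route as the paper: expand $TT'$, discard the cross terms with $R,R'$ into the remainder, and move $D^{-2p}$ across $P'$ via the iterated-$\nabla$ expansion $D^{-2p}P'\sim\sum (-1)^{|k|_1}\nabla^{|k|_1}(P')D^{-2|k|_1-2p}$, using that $\nabla$ preserves $\DD(\A)$. The only cosmetic differences are that the paper leaves the sum case implicit, starts with $R\in OP^{-N-d'}$, $R'\in OP^{-N-d}$ so the cross terms land directly in $OP^{-N}$, and writes the final sum as a single $Q_N D^{-2r_N}$ rather than invoking your Step~1 explicitly.
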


\begin{proof} See the appendix.
\end{proof}

Due to the little difference of behavior between scalar and
nonscalar pseudodifferential operators (i.e. when coefficients
like $[\DD, a]$, $a\in \A$ appears in $P$ of Definition
\ref{defpseudo}), it is convenient to also introduce

\begin{definition}
\label{defpseudo1} Let $\DD_{1}(\A)$ be the algebra generated by
$\A$, $J\A J^{-1}$ and $\DD$, and $\Psi_{1}(\A)$ be the set of
pseudodifferential operators
constructed as before with $\DD_{1}(\A)$ instead of $\DD(\A)$. Note
that
$\Psi_{1}(\A)$ is subalgebra of $\Psi(\A)$.
\end{definition}

Remark that $\Psi_1(\A)$ does not
necessarily contain operators such as $|D|^k$ where $k\in \Z$ is odd.
This algebra
is similar to the one defined in \cite{CC1}.

\subsection{Zeta functions and dimension spectrum}

For any operator $B$ and if $X$ is either $D$ or
$D_A$, we define
\begin{align*}
{\zeta}_X^B(s)&:= \Tr\big(B|X|^{-s}\big ),\\
\zeta_X(s)&:= \Tr \big(|X|^{-s}\big).
\end{align*}

\medskip

{\it The dimension spectrum} $Sd(\A,\H,\DD)$ of a
spectral triple has been defined in
\cite{Cgeom, CM}. It is extended here to pay attention
to the operator $J$ and to our
definition of pseudodifferential operator.

\begin{definition}
The spectrum dimension of the spectral triple is
the subset $Sd(\A,\H,\DD)$ of all poles of the functions
$\zeta_D^P := s\mapsto \Tr \big(P |D|^{-s}\big)$
where $P$ is any pseudodifferential operator in $OP^0$.
The spectral triple $(\A,\H,\DD)$ is simple when these
poles are all simple.
\end{definition}

\begin{remark}\label{remark-spectrum} If $Sp(\A,\H,\DD)$ denotes
the set of all poles of
the functions $s\mapsto \Tr \big(P |D|^{-s}\big)$ where $P$
is any pseudodifferential operator, then,
$Sd(\A,\H,\DD) \subseteq Sp(\A,\H,\DD)$.

When $Sp(\A,\H,\DD)=\Z$, $Sd(\A,\H,\DD) = \set{n-k \ : \ k\in \N_0}$:
indeed, if  $P$ is a pseudodifferential operator
in $OP^0$, and $q\in \N$ is such that $q>n$, $P|D|^{-s}$
is in $OP^{-\Re(s)}$ so
is trace-class for $s$ in a neighborhood of $q$; as a
consequence, $q$ cannot be a pole of $s\mapsto \Tr
\big(P|D|^{-s}\big)$.
\end{remark}

\begin{remark} $Sp(\A,\H,\DD)$ is also the set of
poles of functions $s\mapsto \Tr \big(B |D|^{-s-2p}\big)$
where $p\in \N_0$ and $B\in \DD(\A)$.
\end{remark}

\subsection{The noncommutative integral $\ncint$}

We already defined the one parameter group
$\sigma_z(T):=|D|^{z}T|D|^{-z}, \, z\in \C$.

Introducing the notation (recall that $\nabla (T)=[\DD^2,T]$) for an
operator $T$,
$$
\eps(T):=\nabla(T)D^{-2},
$$
we get from \cite[(2.44)]{CC1} the following expansion for
$T\in OP^q$
\begin{equation}\label{one-par}
\sigma_{z}(T)\sim \sum_{r=0}^{N} g(z,r) \,\eps^r(T)  \mod OP^{-N-1+q}
\end{equation}
where $g(z,r):=
\tfrac{1}{r!}(\tfrac{z}{2})\cdots(\tfrac{z}{2}-(r-1))=\genfrac(){0pt}{1}{z/2}{r}$
with the
convention $g(z,0):=1$.

\noindent We define the noncommutative integral by
$$
\ncint T:=\underset{s=0}{\Res}\ \zeta_D^{T}(s)
=\underset{s=0}{\Res}\ \Tr\,\big(T|D|^{-s}\big).
$$

\begin{prop}
\label{tracenc}
\cite{CM}
If the spectral triple is simple, $\ncint$ is a trace on $\Psi(\A)$.
\end{prop}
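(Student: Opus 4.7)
The plan is to show that for any two pseudodifferential operators $A\in\Psi(\A)^a$ and $B\in\Psi(\A)^b$, the residue at $s=0$ of $\Tr\big([A,B]\,|D|^{-s}\big)$ vanishes. First, I would work in the half-plane $\Re(s) > a+b+n$, where $AB|D|^{-s}$ and $BA|D|^{-s}$ are both trace-class (since they lie in $OP^{a+b-\Re(s)}\subseteq\L^1(\H)$), so the ordinary trace is defined and cyclic. Using cyclicity together with the identity $|D|^{-s}A = \sigma_{-s}(A)\,|D|^{-s}$, I would rewrite
\begin{equation*}
\Tr\big([A,B]\,|D|^{-s}\big) = \Tr\big(B\big(\sigma_{-s}(A)-A\big)|D|^{-s}\big).
\end{equation*}

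Next, the key input is the asymptotic expansion (\ref{one-par}): for every $N\in\N$,
\begin{equation*}
\sigma_{-s}(A) - A \;=\; \sum_{r=1}^{N} g(-s,r)\,\eps^r(A) \;+\; R_{N}(s),\qquad R_N(s)\in OP^{a-N-1},
\end{equation*}
where the scalar coefficient $g(-s,r)=\binom{-s/2}{r}$ carries a factor of $s$ for every $r\ge 1$, so $g(-s,r)=\mathcal{O}(s)$ as $s\to 0$. I would then need to verify the stability statement $\eps^r(A)\in\Psi(\A)$: writing $A = P\,D^{-2p}+R'$ with $P\in\DD(\A)$ and $R'\in OP^{-M}$ arbitrary, and using that $D^{-2}$ commutes with $\DD^2$, gives $\nabla(A)D^{-2} = [\DD^2,P]\,D^{-2p-2}+[\DD^2,R']\,D^{-2}$, and since $\DD(\A)$ is a polynomial algebra containing $\DD^2$, $[\DD^2,P]\in\DD(\A)$ and the remainder is in $OP^{-M-1}$; iterating this $r$ times and using Lemma \ref{pdoalg} yields $B\,\eps^r(A)\in\Psi(\A)$.

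With these ingredients, plugging the expansion into the trace gives
\begin{equation*}
\Tr\big([A,B]\,|D|^{-s}\big) = \sum_{r=1}^{N} g(-s,r)\,\zeta^{B\eps^r(A)}_D(s) + \Tr\big(B\,R_N(s)\,|D|^{-s}\big).
\end{equation*}
Because $B\,\eps^r(A)\in\Psi(\A)$ and $(\A,\H,\DD)$ is a \emph{simple} spectral triple, each zeta function $\zeta^{B\eps^r(A)}_D$ has at worst a \emph{simple} pole at $s=0$; multiplied by $g(-s,r)$ which vanishes to order $\ge 1$ at $s=0$, every term in the sum is regular at $0$. Choosing $N$ large enough so that $B\,R_N(s)\in OP^{a+b-N-1}$ makes the remainder trace-class uniformly in a neighborhood of $0$, hence its trace is holomorphic there. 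Taking the residue at $s=0$ yields $\ncint AB - \ncint BA = 0$, which is the trace property on $\Psi(\A)$.

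The main obstacle I anticipate is handling the asymptotic expansion rigorously: one must justify that the remainder $R_N(s)$ is of order $OP^{a-N-1}$ uniformly in $s$ near $0$ (a detail hidden in (\ref{one-par})), and that the $\Psi(\A)$-membership of $\eps^r(A)$ is genuinely preserved under the commutator with $\DD^2$ followed by multiplication by $D^{-2}$. Everything else — cyclicity in the region of absolute convergence, the simplicity of poles, and the explicit vanishing of $g(-s,r)$ at $s=0$ — is then purely mechanical.
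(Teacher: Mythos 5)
Your proof is correct and uses exactly the same machinery as the paper's appendix argument (the expansion \eqref{one-par} for $\sigma_z$, the fact that each $g(-s,r)$ with $r\ge1$ vanishes at $s=0$, simplicity of the dimension spectrum, and cyclicity of the trace), but you reorganize the steps. The paper first applies the expansion to $P\,[Q,|D|^{-s}]$, concluding $\ncint PQ = \Res_{s=0}\Tr(P|D|^{-s}Q)$, and only \emph{then} invokes cyclicity; at that stage $|D|^{-s}$ sits between $P$ and $Q$, so the splitting $P|D|^{-s}Q = (P|D|^{-s/2})(|D|^{-s/2}Q)$ into a trace-class factor times a bounded factor makes the cyclicity move transparently legitimate, after which a second application of the expansion (to $\sigma_{-s/2}(QP)$) finishes. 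You instead do the cyclicity move first, directly on $\Tr(AB|D|^{-s})$, and then apply the expansion once — slightly more economical. The price is that your cyclicity claim $\Tr(AB|D|^{-s}) = \Tr(B|D|^{-s}A)$ is not automatic for unbounded $A$, $B$: having both $AB|D|^{-s}$ and $BA|D|^{-s}$ in $\L^1$ does not by itself give cyclicity, and the threshold $\Re(s)>a+b+n$ you state is not enough. You need to exhibit a trace-class $\times$ bounded decomposition, e.g.\ first rewrite $\Tr(AB|D|^{-s}) = \Tr\big((|D|^{-s/2}A)(B|D|^{-s/2})\big)$ and then permute, which requires roughly $\Re(s) > 2\max(a+n,b)$; this is precisely the point the paper's two-step arrangement is designed to make painless. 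This is a fillable technicality, not a conceptual gap. Your explicit verification that $\eps^r(A)\in\Psi(\A)$ (using $[\DD^2,D^{-2p}]=0$ and stability of $\DD(\A)$ under $\nabla$) is correct and is actually spelled out more carefully than in the paper, where it is left implicit.
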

\begin{proof}
See the appendix.
\end{proof}

\section{Residues of $\zeta_{D_A}$ for a spectral triple with
simple dimension spectrum}

We fix a regular spectral triple $(\A,\H,\DD)$ of
dimension $n$ and a self-adjoint 1-form $A$.

Recall that
\begin{align*}
\DD_A &:=\DD+\wt A \text{  where } \wt A:= A +\eps JAJ^{-1} ,\\
 D_A &:= \DD_A + P_A
\end{align*}
where $P_A$ is the projection on $\Ker \DD_A$. Remark that
$\wt A \in\DD(\A)\cap OP^0$ and $\DD_A\in \DD(\A)\cap OP^1$.

We note
$$
V_A:= P_A - P_0.
$$
As the following lemma shows, $V_A$ is a smoothing operator:

\begin{lemma}
\label{finiterank}
(i) $\bigcap_{k\geq 1} \Dom (\DD_A)^{k} \subseteq \bigcap_{k\geq 1}
\Dom |D|^k$.

(ii) $\Ker \DD_A \subseteq \bigcap_{k\geq 1} \Dom |D|^k$.

(iii) For any $\a, \beta \in \R$, $|D|^\beta P_A |D|^\a$ is bounded.

(iv) $P_A \in OP^{-\infty}$.
\end{lemma}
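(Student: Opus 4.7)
The plan is to dispatch the four statements in order, with each one setting up the next. For \textbf{(i)} I would proceed by induction on $k$. The base case $k=1$ is immediate: since $\wt A \in OP^0 \subseteq \B(\H)$ is bounded and self-adjoint, the Kato-Rellich argument gives $\Dom \DD_A = \Dom \DD = \Dom |D|$. The key ingredient for the inductive step is that $\wt A$ preserves $\Dom|D|^k$ for every $k\geq 0$: indeed, Lemma \ref{propOP}(i) says $\sigma_k(\wt A)=|D|^k\wt A|D|^{-k}$ lies in $OP^0$ and is therefore bounded, so $|D|^k\wt A\phi$ is well-defined whenever $|D|^k\phi$ is. Then if $\psi\in \Dom \DD_A^{k+1}$, the inductive hypothesis gives both $\psi\in\Dom\DD^k$ and $\DD_A\psi\in\Dom\DD^k$, so $\DD\psi=\DD_A\psi-\wt A\psi\in\Dom\DD^k$, yielding $\psi\in\Dom\DD^{k+1}=\Dom|D|^{k+1}$.

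Statement \textbf{(ii)} is then a one-line consequence: any $\psi\in\Ker\DD_A$ satisfies $\DD_A^k\psi=0$ for every $k$, hence $\psi\in\bigcap_k\Dom\DD_A^k\subseteq\bigcap_k\Dom|D|^k$ by (i). For \textbf{(iii)} I would exploit the finite-dimensionality of $\Ker\DD_A$ (from Lemma \ref{compres}): fix an orthonormal basis $(\psi_i)_{i=1}^{N}$ of $\Ker\DD_A$ so that $P_A=\sum_i|\psi_i\rangle\langle\psi_i|$. By (ii), each $\psi_i$ lies in $\Dom|D|^s$ for every $s\in\R$, so self-adjointness of $|D|^\a$ allows us to write, on the core $\Dom|D|^\a$,
\[
|D|^\beta P_A|D|^\a\phi \;=\; \sum_{i=1}^{N}\langle|D|^\a\psi_i,\phi\rangle\,|D|^\beta\psi_i,
\]
which manifestly extends to a bounded operator on all of $\H$ with norm bounded by $\sum_i\||D|^\a\psi_i\|\,\||D|^\beta\psi_i\|<\infty$.

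For \textbf{(iv)} it suffices, by definition of $OP^{-\infty}=\bigcap_{\a\in\R}OP^\a$, to prove $P_A|D|^{-\a}\in OP^0$ for every $\a\in\R$, i.e.\ that $\delta^p(P_A|D|^{-\a})$ is bounded for every $p\geq 0$. The crucial simplification is that $|D|$ commutes with $|D|^{-\a}$, so $\delta(|D|^{-\a})=0$ and Leibniz collapses to
\[
\delta^p(P_A|D|^{-\a})\;=\;\delta^p(P_A)\,|D|^{-\a}\;=\;\sum_{k=0}^{p}\binom{p}{k}(-1)^{p-k}\,|D|^{k}P_A|D|^{p-k-\a}.
\]
Each summand is bounded by (iii) applied with $\beta=k$ and $\a'=p-k-\a$, which gives the conclusion.

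The routine computations themselves are light; the main subtlety I anticipate is bookkeeping of domains, specifically making sure that in part (i) the equality $\DD\psi=\DD_A\psi-\wt A\psi$ is legitimate at each stage of the induction (which is why the preservation of $\Dom|D|^k$ by $\wt A$ must be established first), and that in part (iv) the commutator identity $\delta^p(P_A|D|^{-\a})=\delta^p(P_A)|D|^{-\a}$ is interpreted as an equality of bounded operators obtained by extension from $\Dom|D|^\infty$, relying on (iii) to guarantee that each piece $|D|^k P_A|D|^{p-k-\a}$ genuinely extends.
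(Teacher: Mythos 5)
Your proof is correct and takes essentially the same route as the paper's: a domain bootstrap for (i), the trivial deduction $\Ker\DD_A\subseteq\bigcap_k\Dom\DD_A^k$ for (ii), exploitation of the finite rank of $P_A$ together with the smoothness of $\Ker\DD_A$ for (iii), and reduction of (iv) to (iii) via the Leibniz rule and the fact that $\delta(|D|^t)=0$. The only differences are in implementation: for (i) the paper introduces $R_p := \DD_A^p - \DD^p \in OP^{p-1}$ and characterizes $\Dom\DD_A^k$ via the condition $(\DD^j+R_j)\phi\in\Dom|D|$, using the mapping property of Remark \ref{oprem}, whereas you appeal to Lemma \ref{propOP}(i) to argue that $\wt A$ preserves $\Dom|D|^k$ and run a direct induction; for (iii) the paper defines an auxiliary finite-rank operator $D_0$ on $\Ima P_A$, whereas you write out the rank-one decomposition $P_A=\sum_i|\psi_i\rangle\langle\psi_i|$. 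These are equivalent bookkeeping choices, and your rank-one argument for (iii) is arguably more transparent.
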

\begin{proof}
$(i)$ Let us define for any $p\in \N$, $R_p := (\DD_A)^p -\DD^p$, so
$R_p \in OP^{p-1}$ and $R_p \big(\Dom |D|^p\big)\subseteq \Dom |D|$
(see Remark \ref{oprem}).

Let us fix $k\in \N$, $k\geq 2$. Since $\Dom \DD_A = \Dom \DD =\Dom
|D|$, we have
$$
\Dom (\DD_A)^k = \set{\phi \in \Dom |D| \ : \ (\DD^j + R_j)\,\phi
\in \Dom |D| \ , \ \forall j\ \  1\leq j\leq k-1  }.
$$
Let $\phi \in \Dom (\DD_A)^k$. We prove by recurrence that for any
$j\in\set{1,\cdots,k-1}$, $\phi \in \Dom |D|^{j+1}$:

We have $\phi\in \Dom |D|$ and $(\DD + R_1)\, \phi \in \Dom |D|$.
Thus, since $R_1\,\phi \in \Dom |D|$, $\DD \phi \in \Dom |D|$, which
proves that $\phi \in \Dom |D|^2$.
Hence, case $j=1$ is done.

Suppose now that $\phi \in \Dom |D|^{j+1}$ for a $j \in
\set{1,\cdots,k-2}$.
Since $(\DD^{j+1} + R_{j+1})\, \phi \in
\Dom |D|$, and $R_{j+1}\, \phi \in \Dom |D|$, we get $\DD^{j+1}\,
\phi \in \Dom |D|$,
which proves that $\phi\in \Dom |D|^{j+2}$.

Finally, if we set $j=k-1$, we get $\phi \in \Dom |D|^{k}$,
so $\Dom (\DD_A)^k \subseteq \Dom |D|^k$.

$(ii)$ follows from $\Ker \DD_A \subseteq \bigcap_{k\geq 1} \Dom
(\DD_A)^k$ and $(i)$.

$(iii)$ Let us first check that $|D|^\a P_A$ is bounded. We define
$D_0$
as the operator with domain $\Dom D_0 = \Ima P_A \cap \Dom |D|^\a$
and such that $D_0\, \phi = |D|^\a\, \phi.$ Since $\Dom D_0$ is
finite dimensional,
$D_0$ extends as a bounded operator on $\H$ with finite rank.
We have
$$
\sup_{\phi \in \Dom |D|^\a P_A,\ \norm{\phi}\leq 1} \norm{|D|^\a
P_A\, \phi} \leq
\sup_{\phi \in \Dom D_0,\ \norm{\phi}\leq 1}
\norm{|D|^\a\, \phi} = \norm{D_0}<\infty
$$
so $|D|^\a P_A$ is bounded. We can remark that by $(ii)$, $\Dom D_0 =
\Ima P_A$ and
$\Dom |D|^\a P_A = \H$.

Let us prove now that $P_A |D|^\a$ is bounded:
Let $\phi\in \Dom P_A |D|^\a = \Dom |D|^\a$. By $(ii)$, we have $\Ima
P_A \subseteq \Dom |D|^\a$
so we get
\begin{align*}
\norm{P_A |D|^\a\,\phi} & \leq
\sup_{\psi \in \Ima P_A,\ \norm{\psi}\leq 1} |<\psi,|D|^\a\, \phi>|
\leq \sup_{\psi \in
\Ima P_A,\ \norm{\psi}\leq 1} |<|D|^\a\psi,\phi>| \\
&\leq \sup_{\psi \in
\Ima P_A,\ \norm{\psi}\leq 1} \norm{|D|^\a\psi}\norm{\phi} =
\norm{D_0} \norm{\phi}.
\end{align*}

$(iv)$ For any $k\in \N_0$ and $t\in \R$, $\delta^k(P_A)|D|^t$ is a
linear combination of terms of
the form $|D|^\beta P_A |D|^\a$, so the result follows from $(iii)$.
\end{proof}

\begin{remark}
We will see later on the noncommutative torus example how
important is the difference
between $\DD_{A}$ and $\DD+A$. In particular, the inclusion
$\Ker \DD \subseteq \Ker \DD +
A$ is not satisfied since $A$ does not preserve $\Ker \DD$
contrarily to $\wt A$.
\end{remark}

The coefficient of the nonconstant term $\Lambda^k$ ($k>0$)
in the expansion (\ref{formuleaction}) of the spectral action
$S(\DD_A,\Phi,\Lambda)$ is equal to the
residue of $\zeta_{D_A}(s)$ at $k$. We will see in this section
how we can compute these
residues in term of noncommutative integral of certain operators.

Define for any operator $T$, $p\in \N$, $s\in \C$,
$$
K_p(T,s):=(-\tfrac{s}{2})^p\int_{0\leq t_1\leq\cdots\leq t_p\leq 1}
\sigma_{-st_1}(T)\cdots\sigma_{-st_p}(T)\, dt
$$
with $dt:=dt_1\cdots dt_p$.

Remark that if $T\in OP^\a$, then $\sigma_z(T)\in OP^\a$ for $z\in \C$
and $K_p(T,s) \in OP^{\a p}$.

Let us define
\begin{align*}
X &:= \DD_{A}^2-\DD^2 =\wt A \DD + \DD \wt A + \wt A^2 ,\\
X_V &:= X+V_A,
\end{align*}
thus $X\in \DD_1(\A)\cap OP^1$ and by Lemma \ref{finiterank},
\begin{equation}\label{xvsim}
X_V \sim X \mod OP^{-\infty}.
\end{equation}

We will use
$$
Y:=\log(D_A^2) -\log (D^2)
$$
which makes sense since $D_A^2 = \DD_A^2 + P_A$ is invertible for any
$A$.

By definition of $X_V$, we get
$$
Y= \log (D^2 + X_V) -\log (D^2).
$$

\begin{lemma}
    \label{2dev}
\cite{CC1}

(i) $Y$ is a pseudodifferential operator in $OP^{-1}$ with the
following
expansion for any $N\in\N$
$$
Y \sim \sum_{p=1}^N\sum_{k_1,\cdots,k_p
=0}^{N-p}\tfrac{(-1)^{|k|_1+p+1}}{|k|_1+p}
\nabla^{k_p}(X\nabla^{k_{p-1}}(\cdots
X\nabla^{k_1}(X)\cdots)) D^{-2(|k|_1+p)} \mod OP^{-N-1}.
$$

(ii) For any $N\in\N$ and $s\in \C$,
\begin{align}
\label{expansion} |D_A|^{-s} \sim |D|^{-s} + \sum_{p=1}^N K_p(Y,s)
|D|^{-s} \mod
OP^{-N-1-\Re(s)}.
\end{align}
\end{lemma}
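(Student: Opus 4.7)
The plan is to establish both items by (i) reducing $Y$ to a pseudodifferential operator via an integral representation of $\log$, then (ii) expanding $|D_A|^{-s} = e^{-(s/2)(\log D^2 + Y)}$ via a Duhamel/Volterra series. Order bookkeeping on the $OP^\bullet$ scale is handled using Lemma~\ref{propOP}.

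For (i), I start from the identity
$\log \mathcal{A} - \log \mathcal{B} = \int_0^\infty \bigl[(\lambda+\mathcal{B})^{-1} - (\lambda+\mathcal{A})^{-1}\bigr]\,d\lambda$
valid on strictly positive invertible operators, applied with $\mathcal{A} = D_A^2 = D^2 + X_V$ and $\mathcal{B} = D^2$. Expanding the second resolvent as a Neumann series around $(\lambda+D^2)^{-1}$ gives $\sum_{p\ge 1}(-1)^{p+1}[(\lambda+D^2)^{-1}X_V]^p (\lambda+D^2)^{-1}$. Each factor of $X_V$ is then pushed to the right of all resolvents by iterating
$(\lambda+D^2)^{-1}T = T(\lambda+D^2)^{-1} - (\lambda+D^2)^{-1}\nabla(T)(\lambda+D^2)^{-1},$
and applying Leibniz for $\nabla$, which converts the product into nested expressions $\nabla^{k_p}(X_V\nabla^{k_{p-1}}(\cdots X_V \nabla^{k_1}(X_V)\cdots))$ followed by a single power $(\lambda+D^2)^{-|k|_1-p-1}$ on the right. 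The $\lambda$-integral is performed via $\int_0^\infty (\lambda+D^2)^{-m-1}\,d\lambda = \tfrac{1}{m}D^{-2m}$ for $m\ge 1$, producing the coefficient $\tfrac{(-1)^{|k|_1+p+1}}{|k|_1+p}$. Finally, $X_V \sim X \pmod{OP^{-\infty}}$ from (\ref{xvsim}) and Lemma~\ref{finiterank}(iv) lets me replace $X_V$ by $X$ modulo smoothing; the leading $p=1$, $k_1=0$ term $-XD^{-2}\in OP^{-1}$ then gives $Y\in OP^{-1}$.

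For (ii), I apply the Duhamel/Volterra expansion
$e^{-(\mathcal{A}+\mathcal{B})} = \sum_{p\ge 0}(-1)^p \int_{0\le t_1\le\cdots\le t_p\le 1} e^{-t_1 \mathcal{A}}\,\mathcal{B}\, e^{-(t_2-t_1)\mathcal{A}}\mathcal{B}\cdots \mathcal{B}\, e^{-(1-t_p)\mathcal{A}}\,dt$
with $\mathcal{A} = (s/2)\log D^2$ and $\mathcal{B} = (s/2)Y$, so that $e^{-u\mathcal{A}} = |D|^{-us}$. After the change of variables $t_i \mapsto 1 - t_{p+1-i}$, which preserves the simplex, the $p$-th term reads $(-s/2)^p \int_{\Delta_p} |D|^{-st_1} Y |D|^{-s(t_2-t_1)} Y \cdots Y |D|^{-s(1-t_p)}\,dt$; telescoping $|D|^{st_i}|D|^{-st_{i+1}} = |D|^{-s(t_{i+1}-t_i)}$ one recognizes this as $K_p(Y,s)|D|^{-s}$, upon using $\sigma_{-st_i}(Y) = |D|^{-st_i}Y|D|^{st_i}$.

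The main technical obstacle is controlling the omitted tails in the stated $OP^\bullet$ classes. For (i), Lemma~\ref{propOP}(v) and $X_V \in OP^1$ give $\nabla^{k_p}(X_V \nabla^{k_{p-1}}(\cdots))\in OP^{|k|_1+p}$; multiplying by $D^{-2(|k|_1+p)}\in OP^{-2(|k|_1+p)}$ places each term in $OP^{-(|k|_1+p)}$, so truncating at $|k|_1+p \ge N+1$ leaves a remainder in $OP^{-N-1}$. For (ii), since $Y\in OP^{-1}$ and $\sigma_z$ preserves $OP^{-1}$ by Lemma~\ref{propOP}(i), the $(N+1)$-th Volterra remainder is a time-ordered product of $N+1$ copies of $Y$ interlaced with powers of $|D|$ whose real exponents sum to $-\Re(s)$, and hence lies in $OP^{-(N+1)-\Re(s)} = OP^{-N-1-\Re(s)}$. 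Strict positivity of $D^2$ and $D_A^2$ (ensured by the choice $D=\DD+P_0$, $D_A=\DD_A+P_A$) guarantees convergence of the $\lambda$-integral in (i) near $\lambda = 0$ and legitimates the use of the continuous functional calculus throughout.
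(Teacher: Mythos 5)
Your proof is correct and follows essentially the same route as the paper: the integral representation of $\log$ together with a Neumann-resolvent expansion and commutator push for item (i), and the Duhamel/Volterra series combined with the $\sigma_{-st}$ telescoping identity for item (ii), which is exactly the CC1 argument invoked in the text. Two cosmetic slips worth fixing: the identity $(\lambda+D^2)^{-1}T = T(\lambda+D^2)^{-1} - (\lambda+D^2)^{-1}\nabla(T)(\lambda+D^2)^{-1}$ pushes each $X_V$ to the \emph{left} (so the resolvent powers collect on the right, as you indeed describe), not to the right; and the leading $p=1$, $k_1=0$ term of $Y$ is $+XD^{-2}$ since $\tfrac{(-1)^{0+1+1}}{1}=+1$, not $-XD^{-2}$, though either sign gives $Y\in OP^{-1}$.
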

\begin{proof}
$(i)$ We follow \cite[Lemma 2.2]{CC1}. By functional calculus,
$Y=\int_0^\infty I(\la)\, d\la$, where
$$
I(\la)\sim\sum_{p=1}^N(-1)^{p+1}\big((D^2+\la)^{-1}X_V\big)^{p}
(D^2+\la)^{-1} \mod OP^{-N-3}.
$$
By (\ref{xvsim}), $\big((D^2+\la)^{-1}X_V\big)^{p} \sim
\big((D^2+\la)^{-1}X\big)^{p} \mod OP^{-\infty}$ and we get
$$
I(\la)\sim\sum_{p=1}^N(-1)^{p+1}\big((D^2+\la)^{-1}X\big)^{p}
(D^2+\la)^{-1} \mod OP^{-N-3}.
$$
We set $A_p(X):=\big((D^2+\la)^{-1}X\big)^{p}(D^2+\la)^{-1}$
and $L:=(D^2+\la)^{-1}\in OP^{-2}$ for a fixed $\la$.
Since $[D^2 + \la,X]\sim \nabla(X) \mod OP^{-\infty}$,
a recurrence proves that if $T$ is an operator
in $OP^{r}$, then, for $q\in \N_0$,
$$
A_1(T)=L T L \sim \sum_{k=0}^q (-1)^k\nabla^k(T) L^{k+2} \mod
OP^{r-q-5}.
$$
With $A_p(X)=LX A_{p-1}(X)$,
another recurrence gives, for any $q\in \N_0$,
$$
A_p(X)\sim \sum_{k_1,\cdots,k_p =0}^q (-1)^{|k|_1}\nabla^{k_p}
(X \nabla^{k_{p-1}}(\cdots X\nabla^{k_1}(X)\cdots)) L^{|k|_1+p+1}
\mod OP^{-q-p-3},
$$
which entails that
$$
I(\la)\sim\sum_{p=1}^N(-1)^{p+1}\sum_{k_1,\cdots,k_p
=0}^{N-p}(-1)^{|k|_1}\nabla^{k_p}(X\nabla^{k_{p-1}}
(\cdots X\nabla^{k_1}(X)\cdots))
L^{|k|_1+p+1} \mod OP^{-N-3}.
$$

With $\int_{0}^\infty (D^2+\la)^{-(|k|_1+p+1)}d\la =
\tfrac{1}{|k|_1+p}
D^{-2(|k|_1+p)}$, we get the result provided we control
the remainders. Such a control is given in \cite[(2.27)]{CC1}.

$(ii)$ We have $|D_A|^{-s}=e^{B-(s/2)Y}e^{-B}\, |D|^{-s}$
where $B:= (-s/2)\log(D^2)$. Following \cite[Theorem 2.4]{CC1},
we get
\begin{equation}
\label{egalite-DAs}
|D_A|^{-s} = |D|^{-s} + \sum_{p=1}^\infty K_p(Y,s)|D|^{-s}\, .
\end{equation}
and each $K_p(Y,s)$ is in $OP^{-p}$.
\end{proof}

\begin{corollary}
\label{eps-pdo} For any $p\in\N$ and $r_1,\cdots,r_p \in \N_0$,
$\eps^{r_1}(Y)\cdots \eps^{r_p}(Y) \in \Psi_{1}(\A)$.
\end{corollary}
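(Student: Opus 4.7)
The plan is to proceed in three stages: first show $Y \in \Psi_{1}(\A)$ using the expansion of Lemma \ref{2dev}(i), then show by induction that $\eps^{r}(Y) \in \Psi_{1}(\A)$ for every $r \in \N_{0}$, and finally close under multiplication. For the first stage, I would exploit the fact that $X = \wt A\DD + \DD\wt A + \wt A^{2} \in \DD_{1}(\A)$ and that $\nabla(\cdot) = [\DD^{2},\cdot]$ preserves $\DD_{1}(\A)$ (since $\DD^{2} \in \DD_{1}(\A)$), so each coefficient $\nabla^{k_{p}}(X\nabla^{k_{p-1}}(\cdots X\nabla^{k_{1}}(X)\cdots))$ appearing in Lemma \ref{2dev}(i) lies in $\DD_{1}(\A)$. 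Thus, given $N$, the expansion exhibits $Y$ as a finite sum $\sum_{j} P_{j}\, D^{-2q_{j}}$ modulo $OP^{-N-1}$ with $P_{j} \in \DD_{1}(\A)$. To match the single-denominator form demanded by Definition \ref{defpseudo1}, set $Q := \max_{j} q_{j}$ and rewrite $P_{j}\, D^{-2q_{j}} = P_{j}\, D^{2(Q-q_{j})}\, D^{-2Q}$. Since $D^{2} = \DD^{2} + P_{0}$ with $\DD P_{0} = P_{0}\DD = 0$, one has $D^{2m} - \DD^{2m} \in OP^{-\infty}$ for $m \geq 0$, so each $P_{j}\, D^{2(Q-q_{j})}$ coincides with an element of $\DD_{1}(\A)$ modulo $OP^{-\infty}$. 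Collecting terms produces $Y = P\, D^{-2Q} + R$ with $P \in \DD_{1}(\A)$ and $R \in OP^{-N-1}$, hence $Y \in \Psi_{1}(\A)$.

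For the induction step, the key observation is that $P_{0}$ commutes with every power of $|D|$ (both $\Ker\DD$ and its orthogonal complement are $|D|$-invariant, and $|D|$ acts as the identity on the former), so $[\DD^{2}, D^{-2q}] = [D^{2} - P_{0}, D^{-2q}] = 0$. Consequently, if $T = P\, D^{-2p} + R \in \Psi_{1}(\A)$ with $P \in \DD_{1}(\A)$ and $R \in OP^{-N-1}$, then
\[
\eps(T) = \nabla(T)\, D^{-2} = \nabla(P)\, D^{-2(p+1)} + \nabla(R)\, D^{-2},
\]
where $\nabla(P) \in \DD_{1}(\A)$ and $\nabla(R)\, D^{-2} \in OP^{-N-2}$ by Lemma \ref{propOP}(v). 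Since $N$ is arbitrary, $\eps(T) \in \Psi_{1}(\A)$; iterating, $\eps^{r}(Y) \in \Psi_{1}(\A)$ for every $r \in \N_{0}$.

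Finally, the product $\eps^{r_{1}}(Y) \cdots \eps^{r_{p}}(Y)$ lies in $\Psi_{1}(\A)$ because the latter is a subalgebra of $\Psi(\A)$ (the algebra property established in Lemma \ref{pdoalg} carries over to the $\DD_{1}(\A)$-variant verbatim, as already noted in Definition \ref{defpseudo1}). The main technical delicacy is concentrated in the first stage: one must carefully track that the cross terms $D^{2m} - \DD^{2m}$ are smoothing, which is precisely what justifies replacing $D^{2(Q-q_{j})}$ by a $\DD$-polynomial expression inside the coefficient modulo $OP^{-\infty}$. Once that is cleanly established, the rest is routine bookkeeping.
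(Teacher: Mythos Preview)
Your proposal is correct and follows essentially the same approach as the paper: both use the expansion of $Y$ from Lemma~\ref{2dev}(i), observe that the coefficients $\nabla^{k_p}(X\nabla^{k_{p-1}}(\cdots X\nabla^{k_1}(X)\cdots))$ lie in $\DD_1(\A)$, conclude $Y\in\Psi_1(\A)$, then pass to $\eps^r(Y)$ and products. The paper's argument is terser---it simply asserts that $Y$ and hence $\eps^r(Y)=\nabla^r(Y)D^{-2r}$ are in $\Psi_1(\A)$ without spelling out the single-denominator rewriting or the commutation $[\DD^2,D^{-2q}]=0$; your version makes these points explicit, which is a useful clarification but not a different route.
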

\begin{proof}
If for any $q\in \N$ and $k=(k_1,\cdots,k_q)\in \N_0^q$,
$$
\Ga_q^k(X):=\tfrac{(-1)^{|k|_1+q+1}}{|k|_1+q} \nabla^{k_q}
(X\nabla^{k_{q-1}}(\cdots X\nabla^{k_1}(X)\cdots)),
$$
then, $\Ga_q^k(X) \in OP^{|k|_1+q}$. For any $N\in\N$,
\begin{equation}\label{Ydev}
Y \sim \sum_{q=1}^N\sum_{k_1,\cdots,k_q =0}^{N-q} \Ga_q^k(X)
D^{-2(|k|_1+q)} \mod OP^{-N-1}.
\end{equation}
Note that the $\Ga_q^k(X)$ are in $\DD_{1}(\A)$, which, with
(\ref{Ydev}) proves that $Y$
and thus $\eps^r(Y)= \nabla^r(Y)D^{-2r}$, are also in $\Psi_{1}(\A)$.
\end{proof}

We remark, as in \cite{Cours}, that the fluctuations leave invariant
the first term of the
spectral action (\ref{formuleaction}). This is a generalization of
the fact that in the commutative case, the noncommutative integral
depends only on the principal symbol of the
Dirac operator $\DD$ and this symbol is stable by adding a gauge
potential like in $\DD+A$.
Note however that the symmetrized gauge potential
$A+\epsilon JAJ^{-1}$ is always zero in
this case for any selfadjoint one-form $A$.

\begin{lemma} If the spectral triple is simple,
formula (\ref{constant}) can be extended as
\begin{align}
\zeta_{D_{A}}(0)-\zeta_{D}(0)=\sum_{q=1}^{n} \tfrac{(-1)^{q}}{q}
\ncint (\wt AD^{-1})^{q}. \label{termconstanttilde}
\end{align}
\end{lemma}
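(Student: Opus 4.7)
The plan is to adapt the Chamseddine--Connes proof of (\ref{constant}) from \cite{CC1} to our (possibly non-invertible) setting, working throughout with $D=\DD+P_0$ and $D_A=\DD_A+P_A$ in place of $\DD$ and $\DD_A$. The crucial enabling fact is Lemma \ref{finiterank}, which gives $V_A=P_A-P_0\in OP^{-\infty}$; hence $X_V\sim X$ modulo $OP^{-\infty}$, so the pseudodifferential expansions of Lemma \ref{2dev} apply without change, and all smoothing corrections coming from $P_0$ or $P_A$ are invisible to $\ncint$.

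First I would write
$$\zeta_{D_A}(s)-\zeta_D(s)=\Tr\bigl(|D_A|^{-s}-|D|^{-s}\bigr)$$
and insert the expansion of Lemma \ref{2dev}(ii) truncated at some large $N\geq n$:
$$\zeta_{D_A}(s)-\zeta_D(s)\;=\;\sum_{p=1}^N \Tr\bigl(K_p(Y,s)\,|D|^{-s}\bigr)\;+\;h_N(s),$$
where $h_N$ is the trace of an operator in $OP^{-N-1-\Re(s)}$, hence holomorphic near $s=0$ and vanishing there. Next, for each fixed $p$, using
$$\Tr\bigl(K_p(Y,s)|D|^{-s}\bigr)=(-s/2)^p\!\int_{0\le t_1\le\cdots\le t_p\le 1}\!\!\Tr\bigl(\sigma_{-st_1}(Y)\cdots\sigma_{-st_p}(Y)|D|^{-s}\bigr)\,dt,$$
I would expand each $\sigma_{-st_i}(Y)$ via (\ref{one-par}); the integrand is $Y^p|D|^{-s}$ plus corrections of order $s$ in $OP^{-p-1-\Re(s)}$. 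Since the dimension spectrum is simple, the associated trace is meromorphic at $0$ with at most a simple pole whose residue is $\ncint Y^p$ (independent of $t$ by the trace property, Proposition \ref{tracenc}). Combined with the prefactor $(-s/2)^p$, the $p$-th term therefore vanishes at $s=0$ whenever $p\geq 2$, while cyclicity of the trace reduces the $p=1$ term to $-\tfrac12\,s\,\Tr(Y|D|^{-s})$, whose value at $0$ is $-\tfrac12\ncint Y$. Thus
$$\zeta_{D_A}(0)-\zeta_D(0)\;=\;-\tfrac12\,\ncint Y.$$

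The remaining and most delicate step is to rewrite $-\tfrac12\ncint Y$ as $\sum_{q=1}^n\tfrac{(-1)^q}{q}\ncint(\wt A\,D^{-1})^q$. For this I would substitute the explicit expansion of $Y$ from Lemma \ref{2dev}(i) and use the trace property of $\ncint$ repeatedly: since $D^2-\DD^2=P_0$ commutes with $\DD^2$, one has $\nabla(D^{-2})=0$, so cyclic use of $\ncint\,\nabla(T)S=-\ncint T\,\nabla(S)$ absorbs every inner $\nabla^{k_i}$ and collapses the double sum over $(q,k)$ in Lemma \ref{2dev}(i) into a series in powers of $XD^{-2}$. Using $X=\wt A\DD+\DD\wt A+\wt A^2$ and $\DD D^{-2}\equiv D^{-1}$ modulo $OP^{-\infty}$, cyclicity then repackages this series into the claimed logarithm-type sum; the cutoff at $q=n$ is automatic because $(\wt A D^{-1})^q\in OP^{-q}$ is trace-class for $q>n$, so its residue vanishes. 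This combinatorial repackaging, which mirrors \cite[\S2]{CC1}, is the main obstacle; it goes through unchanged in our non-invertible setting precisely because every discrepancy between $D$ and $\DD$, and between $D_A$ and $\DD_A$, lies in $OP^{-\infty}$ and is therefore annihilated by $\ncint$.
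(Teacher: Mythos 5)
Your proof is correct and follows essentially the same route as the paper: it uses the expansion $|D_A|^{-s}=|D|^{-s}+\sum_{p\geq 1}K_p(Y,s)|D|^{-s}$ together with the simple-pole hypothesis to kill all $p\geq 2$ terms at $s=0$, reduces the $p=1$ term by cyclicity to $-\tfrac12\ncint Y$, and then appeals to the Chamseddine--Connes combinatorics (with $A$ replaced by $\wt A$ and $\DD$ by $D$) for the identity $-\tfrac12\ncint Y=\sum_{q=1}^n\tfrac{(-1)^q}{q}\ncint(\wt A D^{-1})^q$. The paper's own proof is a terse three-line version of precisely this argument, and the key enabling observations you isolate --- $V_A\in OP^{-\infty}$, $\nabla(D^{-2})=0$, and $\DD D^{-2}\equiv D^{-1}\bmod OP^{-\infty}$ --- are exactly what justify the phrase ``the same proof as in \cite{CC1}'' in the non-invertible setting.
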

\begin{proof}
Since the spectral triple is simple, equation (\ref{egalite-DAs})
entails that
$$
\zeta_{D_A}(0)-\zeta_{D}(0) = \Tr (K_1(Y,s)|D|^{-s})_{|s=0} \, .
$$
Thus, with (\ref{one-par}), we get
$\zeta_{D_A}(0)-\zeta_{D}(0) = -\half \ncint Y$. Replacing $A$ by
$\wt A$, the same proof as in \cite{CC1} gives
\begin{align*}
-\half \ncint Y = \sum_{q=1}^{n} \tfrac{(-1)^{q}}{q} \ncint (\wt
AD^{-1})^{q}.
\tag*{\qed}
\end{align*}
\hideqed
\end{proof}

\begin{lemma}
    \label{Res-zeta-n-k}
For any $k\in \N_0$,
$$
\underset{s=n-k}{\Res} \, \zeta_{D_A}(s)=
\underset{s=n-k}{\Res} \,\zeta_{D}(s) +
\sum_{p=1}^k \sum_{r_1,\cdots, r_p =0}^{k-p}
\underset{s=n-k}{\Res} \, h(s,r,p) \, \Tr\big(\eps^{r_1}(Y)
\cdots\eps^{r_p}(Y) |D|^{-s}\big),
$$
where
$$
h(s,r,p):=(-s/2)^p\int_{0\leq t_1\leq \cdots \leq t_p\leq 1}
g(-st_1,r_1)\cdots
g(-st_p,r_p) \, dt\, .
$$
\end{lemma}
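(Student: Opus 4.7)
The plan is to start from the asymptotic expansion of $|D_A|^{-s}$ in Lemma \ref{2dev}(ii), take traces, and then expand each factor $\sigma_{-st_j}(Y)$ inside $K_p(Y,s)$ using the one-parameter group expansion \eqref{one-par}. The integrations over the simplex in the definition of $K_p(Y,s)$ then produce exactly the coefficients $h(s,r,p)$.

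First I would apply Lemma \ref{2dev}(ii) with $N=k$: this gives
\begin{equation*}
|D_A|^{-s} = |D|^{-s} + \sum_{p=1}^k K_p(Y,s)|D|^{-s} + R_k(s),
\end{equation*}
where $R_k(s)\in OP^{-k-1-\Re(s)}$. At $s=n-k$ the order of the remainder is $-n-1$, so $R_k(s)$ is trace class in a neighborhood of $s=n-k$ and $s\mapsto \Tr R_k(s)$ is holomorphic there. Taking the trace and passing to residues at $s=n-k$ yields
\begin{equation*}
\underset{s=n-k}{\Res}\,\zeta_{D_A}(s) = \underset{s=n-k}{\Res}\,\zeta_D(s) + \sum_{p=1}^k \underset{s=n-k}{\Res}\,\Tr\bigl(K_p(Y,s)|D|^{-s}\bigr).
\end{equation*}

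Next, for each fixed $p\in\{1,\dots,k\}$, I would apply \eqref{one-par} with the choice $M=k-p$ to each factor $\sigma_{-st_j}(Y)$: since $Y\in OP^{-1}$, one gets
\begin{equation*}
\sigma_{-st_j}(Y)= \sum_{r_j=0}^{k-p}g(-st_j,r_j)\,\eps^{r_j}(Y) + S_{j}(s,t_j),
\end{equation*}
with $S_j(s,t_j)\in OP^{-k+p-2}$ uniformly in $t_j\in[0,1]$. Multiplying these $p$ expansions, every cross term containing at least one remainder $S_j$ lies in $OP^{-(k+p)-1+(\text{lower})}$; after multiplying by $|D|^{-s}$ and taking $s=n-k$, the order is strictly less than $-n$, so the resulting operator is trace class and gives a trace holomorphic at $s=n-k$. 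The surviving principal term is
\begin{equation*}
\sum_{r_1,\dots,r_p=0}^{k-p}\prod_{j=1}^p g(-st_j,r_j)\;\eps^{r_1}(Y)\cdots\eps^{r_p}(Y).
\end{equation*}

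Inserting this into the definition of $K_p(Y,s)$ and exchanging the finite sum with the integration over the simplex $0\le t_1\le\cdots\le t_p\le 1$ produces exactly the coefficient $h(s,r,p)$ in front of $\eps^{r_1}(Y)\cdots\eps^{r_p}(Y)$. Taking $\Res_{s=n-k}$ of the trace then gives the claimed formula. The only technical point to watch is the bookkeeping on operator orders in the product expansion and the uniformity in $t\in[0,1]^p$ that justifies interchanging $\Tr$, the integral over the simplex, and the finite sum over $(r_1,\dots,r_p)$; this is the main—though routine—obstacle and follows from Lemma \ref{propOP} together with the uniform control on the remainder in \eqref{one-par} already used in the proof of Lemma \ref{2dev}.
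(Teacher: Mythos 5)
Your proof is correct and follows essentially the same two-step approach as the paper: first apply Lemma \ref{2dev}(ii) with $N=k$ to reduce the residue of $\zeta_{D_A}$ to residues of $\Tr(K_p(Y,s)|D|^{-s})$, then apply \eqref{one-par} with $N=k-p$ to each $K_p$ to extract the $h(s,r,p)$ coefficients. The only difference is presentational: you unfold the derivation of the paper's displayed expansion \eqref{K_p} by expanding each $\sigma_{-st_j}(Y)$ separately and multiplying, whereas the paper states the product expansion of $K_p$ directly; the order bookkeeping (remainders in $OP^{-k-1}$, hence trace class and holomorphic after multiplication by $|D|^{-s}$ near $s=n-k$) is the same in both.
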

\begin{proof}
By Lemma \ref{2dev} $(ii)$, $|D_A|^{-s} \sim |D|^{-s} + \sum_{p=1}^k
K_p(Y,s)
|D|^{-s} \mod OP^{-(k+1)-\Re(s)}$, where the convention
$\sum_{\emptyset}=0$ is used.
Thus, we get for $s$ in a neighborhood of $n-k$,
$$
|D_A|^{-s}-|D|^{-s} - \sum_{p=1}^k K_p(Y,s) |D|^{-s} \in
OP^{-(k+1)-\Re(s)}\subseteq
\L^1(\H)
$$
which gives
\begin{equation}
    \label{res-n-k-interm}
\underset{s=n-k}{\Res} \, \zeta_{D_A}(s)= \underset{s=n-k}{\Res}
\,\zeta_{D}(s) + \sum_{p=1}^k \underset{s=n-k}{\Res} \,\Tr
\big(K_p(Y,s) |D|^{-s}\big).
\end{equation}
Let us fix $1\leq p\leq k$ and $N\in \N$. By (\ref{one-par}) we get
\begin{align}
    \label{K_p}
K_p(Y,s)\sim (-\tfrac s2)^p \int_{0\leq t_1\leq \cdots t_p \leq 1}
\sum_{r_1,\cdots,r_p =0}^N g(-st_1,r_1)&\cdots g(-st_p,r_p)
\nonumber\\
&\eps^{r_1}(Y)\cdots \eps^{r_p}(Y)\, dt \, \mod OP^{-N-p-1}.
\end{align}

If we now take $N=k-p$, we get for $s$ in a neighborhood of $n-k$
$$
K_p(Y,s)|D|^{-s} - \sum_{r_1,\cdots,r_p
=0}^{k-p}h(s,r,p)\,\eps^{r_1}(Y)\cdots
\eps^{r_p}(Y)|D|^{-s} \in OP^{-k-1-\Re(s)} \subseteq \L^1(\H)
$$
so (\ref{res-n-k-interm}) gives the result.
\end{proof}

Our operators $|D_A|^k$ are pseudodifferential operators:
\begin{lemma}
For any $k\in \Z$, $\vert D_{A} \vert^k \in \Psi^k(\A)$.
\end{lemma}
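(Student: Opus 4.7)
The plan is to establish the result by splitting into cases according to the parity and sign of $k$, using the expansion of $|D_A|^{-s}$ supplied by Lemma~\ref{2dev} together with the pseudodifferential properties in Lemmas~\ref{finiterank} and~\ref{pdoalg}.

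For even nonnegative $k=2p$, I argue directly. Since $P_A$ projects onto $\Ker\DD_A$, we have $\DD_A P_A=0=P_A\DD_A$, whence $D_A^2=\DD_A^2+P_A$. The same relations give $\DD_A^2 P_A=0=P_A\DD_A^2$, so every mixed term in the expansion of $(\DD_A^2+P_A)^p$ vanishes and $D_A^{2p}=\DD_A^{2p}+P_A$ for $p\ge 1$. As $\DD_A=\DD+\wt A\in\DD(\A)\cap OP^1$, one gets $\DD_A^{2p}\in\DD(\A)\cap OP^{2p}$, while Lemma~\ref{finiterank}(iv) provides $P_A\in OP^{-\infty}$. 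Definition~\ref{defpseudo} is therefore satisfied with $P=\DD_A^{2p}$, $q=0$ and remainder $R=P_A$, placing $|D_A|^{2p}\in\Psi(\A)^{2p}$.

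For even negative $k=-2p$ with $p\ge 1$, Lemma~\ref{2dev}(ii) with $s=2p$ yields
\begin{equation*}
|D_A|^{-2p}-|D|^{-2p}-\sum_{j=1}^{N}K_j(Y,2p)\,|D|^{-2p}\in OP^{-N-1-2p}
\end{equation*}
for every $N\in\N$. Here $|D|^{-2p}=D^{-2p}\in\Psi(\A)^{-2p}$ trivially. By the asymptotic expansion~(\ref{one-par}), each $K_j(Y,2p)$ is, modulo arbitrarily low order, a finite linear combination of products $\eps^{r_1}(Y)\cdots\eps^{r_j}(Y)$, which lie in $\Psi_1(\A)\subseteq\Psi(\A)$ by Corollary~\ref{eps-pdo}. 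Lemma~\ref{pdoalg} then places $K_j(Y,2p)|D|^{-2p}$ in $\Psi(\A)^{-2p-j}$, and closure of $\Psi(\A)^d$ under finite sums combined with $N$ arbitrary gives $|D_A|^{-2p}\in\Psi(\A)^{-2p}$. For $k$ odd, the essential case is $k=1$: Lemma~\ref{2dev}(ii) with $s=-1$ produces $|D_A|-|D|-\sum_{j=1}^{N}K_j(Y,-1)|D|\in OP^{-N}$, and since $|D|=|\DD|+P_0$ with $|\DD|\in\DD(\A)\cap OP^1$ and $P_0\in OP^{-\infty}$, we have $|D|\in\Psi(\A)^1$; the previous reasoning then yields $|D_A|\in\Psi(\A)^1$. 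For arbitrary odd $k\in\Z$, write $|D_A|^{k}=|D_A|\cdot(|D_A|^{2})^{(k-1)/2}$: the second factor equals $D_A^{k-1}$ (for $k>0$) or $|D_A|^{-(1-k)}$ with $1-k$ even positive (for $k<0$), hence belongs to $\Psi(\A)^{k-1}$ by the first two cases, and Lemma~\ref{pdoalg} concludes.

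The main technical point is bookkeeping: the asymptotic expansions above must be assembled into the single-term form $P D^{-2q}+R$ of Definition~\ref{defpseudo} for each target remainder order, which amounts to verifying sum-closure of $\Psi(\A)^d$. Distinct $D^{-2q}$ factors are aligned via the identity $D^{2m}=\DD^{2m}+P_0$ for $m\ge 1$ (with $m=0$ trivial), the $P_0$ discrepancy being absorbed into the smoothing remainder since $P_0\in OP^{-\infty}$. This step is routine but must be tracked carefully in order to match the precise shape required by Definition~\ref{defpseudo}.
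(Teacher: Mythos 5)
Your argument is correct and rests on exactly the same tools the paper uses (Lemma~\ref{2dev}(ii), Corollary~\ref{eps-pdo}, Lemma~\ref{pdoalg}, and $P_A\in OP^{-\infty}$ from Lemma~\ref{finiterank}); the difference is that you split into parity and sign cases, whereas the paper treats every $k\in\Z$ at once simply by taking $s=-k$ in (\ref{expansion}) after noting that each $K_p(Y,s)$ is pseudodifferential of order $-p$. Your case analysis is sound but not needed: the direct identity $D_A^{2p}=\DD_A^{2p}+P_A$ for even positive $k$ is a tidy shortcut, and the factorization $|D_A|^k=|D_A|\cdot|D_A|^{k-1}$ for odd $k$ works, but both merely re-derive what already falls out of (\ref{expansion}) applied uniformly. (You also silently skip $k=0$, though $|D_A|^0=1\in\Psi(\A)^0$ is trivial.)
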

\begin{proof}
Using (\ref{K_p}), we see that $K_p(Y,s)$ is a pseudodifferential
operator in $OP^{-p}$, so (\ref{expansion}) proves that $|D_A|^k$
is a pseudodifferential operator in $OP^k$.
\end{proof}

The following result is quite important since it shows that one
can use $\ncint$ for $D$ or $D_{A}$:
\begin{prop}
\label{ncintfluctuated}
If the spectral triple is simple,
$\underset{s=0}{\Res} \, \Tr \big(P |D_A|^{-s}\big) = \ncint P$
for any pseudodifferential operator $P$. In particular, for any $k\in
\N_0$
$$
\ncint |D_A|^{-(n-k)}=\underset{s=n-k}{\Res} \,\zeta_{D_A}(s) .
$$
\end{prop}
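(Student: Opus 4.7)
The plan is to reduce the first equality to a vanishing statement using the $|D_A|^{-s}$ expansion together with the simple dimension spectrum hypothesis. Fix a pseudodifferential $P \in OP^d$ for some $d \in \Z$. By Lemma \ref{2dev} $(ii)$, choosing $N > n + d$ yields
$$
P|D_A|^{-s} = P|D|^{-s} + \sum_{p=1}^{N} P K_p(Y,s)|D|^{-s} + R_N(s),
$$
where $R_N(s) \in OP^{d-N-1-\Re(s)}$ is trace-class in a fixed neighborhood of $s=0$ and hence contributes nothing to the residue. Taking residues at $s=0$,
$$
\underset{s=0}{\Res}\,\Tr\bigl(P|D_A|^{-s}\bigr) - \ncint P = \sum_{p=1}^{N}\underset{s=0}{\Res}\,\Tr\bigl(P K_p(Y,s)|D|^{-s}\bigr),
$$
so I only need to show that each term on the right vanishes.

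For this I would expand $K_p(Y,s)$ by applying (\ref{one-par}) to each factor $\sigma_{-st_j}(Y)$, exactly as in the proof of Lemma \ref{Res-zeta-n-k}: with $M$ large enough, modulo a remainder that is trace-class near $s=0$,
$$
P K_p(Y,s)|D|^{-s} \;\equiv\; \sum_{r_1,\dots,r_p=0}^{M} h(s,r,p)\, P\eps^{r_1}(Y)\cdots\eps^{r_p}(Y)\,|D|^{-s}.
$$
By Corollary \ref{eps-pdo} and Lemma \ref{pdoalg}, each $Q_r := P\eps^{r_1}(Y)\cdots\eps^{r_p}(Y)$ is pseudodifferential, so writing $\zeta_D^{Q_r}(s) = \zeta_D^{Q_r|D|^{-m}}(s-m)$ for a shift $m$ putting the inner operator in $OP^0$, the simple dimension spectrum hypothesis gives that $s\mapsto \Tr(Q_r|D|^{-s})$ has at most a simple pole at $s=0$. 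Since $h(s,r,p)$ carries the prefactor $(-s/2)^p$ and is otherwise holomorphic at $s=0$, the product $h(s,r,p)\Tr(Q_r|D|^{-s})$ vanishes to order at least $p-1 \geq 0$ at $s=0$; hence its residue is zero, proving the first equality.

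For the ``in particular'' statement, the preceding lemma gives $|D_A|^{-(n-k)} \in \Psi(\A)$, so applying the first part with $P = |D_A|^{-(n-k)}$ and using $\Tr\bigl(|D_A|^{-(n-k)}|D_A|^{-s}\bigr) = \zeta_{D_A}(n-k+s)$, the change of variable $s \mapsto s-(n-k)$ yields
$$
\ncint |D_A|^{-(n-k)} = \underset{s=0}{\Res}\,\Tr\bigl(|D_A|^{-(n-k+s)}\bigr) = \underset{s=n-k}{\Res}\,\zeta_{D_A}(s).
$$
The main bookkeeping obstacle is choosing $N$ and $M$ in the correct order so that all remainders are trace-class uniformly in a fixed neighborhood of $s=0$, and applying the shift $Q \mapsto Q|D|^{-m}$ consistently in order to invoke the simple dimension spectrum on operators of arbitrary order; both are routine given the machinery of Section 3.
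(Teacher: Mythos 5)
Your proof is correct and follows essentially the same route as the paper: expand $P|D_A|^{-s}$ via Lemma \ref{2dev}(ii), take residues, and show each $\Res_{s=0}\Tr(PK_p(Y,s)|D|^{-s})$ vanishes by combining the expansion of $K_p(Y,s)$ from (\ref{one-par}) with the fact that $h(s,r,p)$ vanishes to order $p$ at $s=0$ while the simple dimension spectrum ensures at most a simple pole. One small point where you are actually more explicit than the paper: you spell out the shift identity $\zeta_D^{Q_r}(s)=\zeta_D^{Q_r|D|^{-m}}(s-m)$ to reduce to the definition of simple dimension spectrum (which is phrased only for operators in $OP^0$), whereas the paper invokes the hypothesis directly; this is a welcome precision, not a deviation.
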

\begin{proof}
Suppose $P\in OP^{k}$ with $k\in \Z$ and let us fix $p\geq 1$.
With (\ref{K_p}), we see that for any $N\in \N$,
$$
PK_p(Y,s)|D|^{-s}\sim \sum_{r_1,\cdots,r_p =0}^N h(s,r,p) \,
P\eps^{r_1}(Y)\cdots \eps^{r_p}(Y)|D|^{-s} \mod OP^{-N-p-1+k-\Re(s)}.
$$
Thus if we take $N=n-p+k$, we get
$$
 \underset{s=0}{\Res} \,\Tr \big(P K_p (Y,s) |D|^{-s}\big) =
 \sum_{r_1,\cdots,r_p =0}^{n-p+k} \underset{s=0}{\Res} \,\,
h(s,r,p) \, \Tr \big(P\eps^{r_1}(Y)\cdots \eps^{r_p}(Y) |D|^{-s}\big).
$$
Since $s=0$ is a zero of the analytic function $s\mapsto h(s,r,p)$
and $s\mapsto \Tr
P\eps^{r_1}(Y)\cdots \eps^{r_p}(Y)|D|^{-s}$ has only simple poles
by hypothesis, we see
that $\underset{s=0}{\Res} \, h(s,r,p) \, \Tr
\big(P\eps^{r_1}(Y)\cdots \eps^{r_p}(Y)
|D|^{-s}\big)=0$ and
\begin{equation}
\label{res0K_p}
\underset{s=0}{\Res} \, \Tr \big(P K_p (Y,s) |D|^{-s}\big)=0.
\end{equation}
Using (\ref{expansion}), $ P|D_A|^{-s} \sim P |D|^{-s} +
\sum_{p=1}^{k+n} PK_p(Y,s) |D|^{-s} \mod OP^{-n-1-\Re(s)} $ and thus,
\begin{equation}\label{res0PD_A}
\underset{s=0}{\Res} \, \Tr (P|D_A|^{-s}) =\ncint P +
\sum_{p=1}^{k+n} \,
\underset{s=0}{\Res} \, \Tr \big(PK_p(Y,s) |D|^{-s}\big).
\end{equation}
The result now follows from (\ref{res0K_p}) and (\ref{res0PD_A}).
To get the last equality,
one uses the pseudodifferential operator $|D_A|^{-(n-k)}$.
\end{proof}

\begin{prop}
\label{invariance1} If the spectral triple is simple, then
\begin{align}
\ncint {|D_{A}|}^{-n}&=\ncint |D|^{-n}.
\end{align}
\end{prop}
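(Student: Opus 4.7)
The plan is to reduce this to the already-established machinery, namely Proposition \ref{ncintfluctuated} together with Lemma \ref{Res-zeta-n-k}. The identification $\ncint|D|^{-n} = \underset{s=n}{\Res}\,\zeta_D(s)$ follows directly from the definition $\ncint T := \underset{s=0}{\Res}\Tr(T|D|^{-s})$ by applying it to $T = |D|^{-n}$ and shifting the variable $s \mapsto s-n$. On the $D_A$ side, Proposition \ref{ncintfluctuated} (last equality, specialized to $k=0$) gives exactly $\ncint|D_A|^{-n} = \underset{s=n}{\Res}\,\zeta_{D_A}(s)$, since $|D_A|^{-n}$ is a pseudodifferential operator by the preceding lemma.

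Given these two identifications, the statement reduces to
\begin{equation*}
\underset{s=n}{\Res}\,\zeta_{D_A}(s) = \underset{s=n}{\Res}\,\zeta_{D}(s).
\end{equation*}
This is precisely the $k=0$ case of Lemma \ref{Res-zeta-n-k}: the correction sum $\sum_{p=1}^{k}$ is empty when $k=0$, and the formula collapses to equality of the two residues.

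Conceptually, the underlying reason is the expansion in Lemma \ref{2dev}(ii) taken with $N=0$, which gives $|D_A|^{-s}-|D|^{-s} \in OP^{-1-\Re(s)}$. For $\Re(s)$ in a neighborhood of $n$ (in fact, any $\Re(s) > n-1$), this remainder lies in $OP^{-\alpha}$ with $\alpha > n$, hence is trace-class, so $\Tr\bigl(|D_A|^{-s}-|D|^{-s}\bigr)$ is holomorphic near $s=n$, forcing the residues of $\zeta_{D_A}$ and $\zeta_{D}$ to agree at $s=n$. There is no genuine obstacle here: all the analytic work is already packaged in Lemma \ref{2dev} and Proposition \ref{ncintfluctuated}, so the proof is essentially a two-line specialization. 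The only subtlety worth pointing out explicitly is that Proposition \ref{ncintfluctuated} is precisely what allows us to compute $\ncint|D_A|^{-n}$ as a residue of $\zeta_{D_A}$ rather than as a residue involving $|D|^{-s}$ — a step that requires the simplicity assumption on the dimension spectrum.
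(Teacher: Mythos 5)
Your proof is correct and follows exactly the same route as the paper, which cites Lemma \ref{Res-zeta-n-k} and Proposition \ref{ncintfluctuated} (both with $k=0$). The extra paragraph unpacking the underlying reason via Lemma \ref{2dev}(ii) is a faithful elaboration of the same machinery, not a different argument.
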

\begin{proof} Lemma \ref{Res-zeta-n-k} and previous proposition for
$k=0$.
\end{proof}

\begin{lemma}
    \label{residus-particuliers}
If the spectral triple is simple,
\begin{align*}
\hspace{-2.2cm} (i) & \quad \ncint |D_A|^{-(n-1)}= \ncint |D|^{-(n-1)}
-(\tfrac{n-1}{2})\ncint X|D|^{-n-1}.\\
\hspace{-2.2cm} (ii)  & \quad \ncint |D_A|^{-(n-2)}= \ncint
|D|^{-(n-2)}+\tfrac{n-2}{2}\big(-\ncint X|D|^{-n} + \tfrac{n}{4}
\ncint X^2
|D|^{-2-n} \big).
\end{align*}
\end{lemma}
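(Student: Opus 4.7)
The plan is to combine Proposition \ref{ncintfluctuated}, which reduces $\ncint |D_A|^{-(n-k)}$ to $\underset{s=n-k}{\Res}\zeta_{D_A}(s)$, with Lemma \ref{Res-zeta-n-k}, and then exploit the expansion of $Y$ from Lemma \ref{2dev}(i) together with the tracial property of $\ncint$ on $\Psi(\A)$ from Proposition \ref{tracenc} to evaluate the resulting noncommutative integrals.

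For (i), taking $k=1$ in Lemma \ref{Res-zeta-n-k}, only the pair $(p,r_1)=(1,0)$ occurs, and $h(s,0,1)=-s/2$; thus
$$\ncint |D_A|^{-(n-1)} = \ncint |D|^{-(n-1)} - \tfrac{n-1}{2}\,\ncint Y|D|^{-(n-1)}.$$
Lemma \ref{2dev}(i) with $N=1$ gives $Y = XD^{-2}$ modulo $OP^{-2}$, and any $T\in OP^{-2}$ satisfies $T|D|^{-(n-1)} \in OP^{-n-1}$, so $\ncint T|D|^{-(n-1)} = 0$. Hence $\ncint Y|D|^{-(n-1)} = \ncint X|D|^{-n-1}$, which yields (i).

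For (ii), taking $k=2$, the contributing triples are $(p,r)\in\{(1,0),\,(1,1),\,(2,(0,0))\}$, with $h(s,0,1)=-s/2$ and $h(s,1,1)=h(s,(0,0),2)=s^2/8$ by direct computation from the definitions of $h$ and $g$. Therefore
\begin{align*}
\ncint |D_A|^{-(n-2)} - \ncint |D|^{-(n-2)} &= -\tfrac{n-2}{2}\ncint Y|D|^{-(n-2)} \\
&\quad + \tfrac{(n-2)^2}{8}\Big(\ncint \eps(Y)|D|^{-(n-2)} + \ncint Y^2 |D|^{-(n-2)}\Big).
\end{align*}
Lemma \ref{2dev}(i) with $N=2$ gives $Y \equiv XD^{-2} - \tfrac{1}{2}\nabla(X)D^{-4} - \tfrac{1}{2}X^2 D^{-4}$ modulo $OP^{-3}$. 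Cyclicity of $\ncint$ on $\Psi(\A)$, together with $[D^2,|D|^{-m}]=0$, gives $\ncint \nabla(X)|D|^{-n-2} = \ncint [D^2,X]|D|^{-n-2} = 0$ and, identically, $\ncint \eps(Y)|D|^{-(n-2)} = \ncint \nabla(Y)|D|^{-n} = 0$. What remains is $\ncint Y|D|^{-(n-2)} = \ncint X|D|^{-n} - \tfrac{1}{2}\ncint X^2|D|^{-n-2}$ and $\ncint Y^2|D|^{-(n-2)} = \ncint XD^{-2}X|D|^{-n}$; moving $D^{-2}=|D|^{-2}$ past $X$ produces a commutator $[|D|^{-2},X]\in OP^{-2}$, whose contribution $X[|D|^{-2},X]|D|^{-n}$ lies in $OP^{-n-1}$ and hence has vanishing noncommutative integral, so $\ncint Y^2|D|^{-(n-2)} = \ncint X^2|D|^{-n-2}$. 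Substituting and simplifying $\tfrac{n-2}{4}+\tfrac{(n-2)^2}{8} = \tfrac{n(n-2)}{8}$ yields (ii).

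The main obstacle is careful bookkeeping: one must track precisely which terms of the expansions land in $OP^{\leq -n-1}$ and hence contribute nothing to the noncommutative integral. Once the correct truncations of $Y$ and of $\sigma_{-st}(Y)$ are isolated, all remaining simplifications follow from cyclicity of $\ncint$ on $\Psi(\A)$ and the commutation $[D^2, |D|^{-m}]=0$.
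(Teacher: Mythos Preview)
Your proof is correct and follows essentially the same approach as the paper: both combine Proposition~\ref{ncintfluctuated} with Lemma~\ref{Res-zeta-n-k}, compute the functions $h(s,r,p)$ explicitly, expand $Y$ via Lemma~\ref{2dev}(i), and use the trace property of $\ncint$ to kill the $\nabla$-terms. The only cosmetic difference is that the paper writes $Y^2\sim X^2D^{-4}\bmod OP^{-3}$ directly, while you obtain the same conclusion by bounding the commutator $X[|D|^{-2},X]|D|^{-n}\in OP^{-n-1}$; the underlying argument is identical.
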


\begin{proof}
$(i)$ By (\ref{expansion}),
$$
\underset{s=n-1}{\Res} \, \zeta_{D_A}(s) -\zeta_{D}(s)=
\underset{s=n-1}{\Res} \,(-s/2)
\Tr \big(Y |D|^{-s}\big) = -\tfrac{n-1}{2} \, \underset{s=0}{\Res} \,
\Tr
\big(Y|D|^{-(n-1)}|D|^{-s} \big)
$$
where for the last equality we use the simple dimension spectrum
hypothesis. Lemma \ref{2dev} $(i)$ yields
$Y\sim XD^{-2} \mod OP^{-2}$ and $Y|D|^{-(n-1)}\sim
X|D|^{-n-1} \mod OP^{-n-1}\subseteq \L^1(\H)$. Thus,
$$
\underset{s=0}{\Res} \, \Tr \big( Y|D|^{-(n-1)}|D|^{-s}\big) =
\underset{s=0}{\Res} \,
\Tr \big(X|D|^{-n-1} |D|^{-s}\big) = \ncint  X |D|^{-n-1}.
$$
$(ii)$ Lemma \ref{Res-zeta-n-k} $(ii)$ gives
$$
\underset{s=n-2}{\Res} \, \zeta_{D_A}(s) =
\underset{s=n-2}{\Res} \, \zeta_{D}(s) +
\underset{s=n-2}{\Res} \, \sum_{r=0}^1 h(s,r,1) \,
\Tr \big(\eps^r(Y)|D|^{-s}\big) +
h(s,0,2) \,\Tr \big(Y^2 |D|^{-s}\big).
$$
We have $h(s,0,1)=-\tfrac s2$, $h(s,1,1)=
\half(\tfrac s2)^2$ and $h(s,0,2)= \half (\tfrac
s2)^2$. Using again Lemma \ref{2dev} $(i)$,
$$
Y\sim XD^{-2}-\half \nabla(X)D^{-4} -\half X^2 D^{-4} \mod OP^{-3}.
$$
Thus,
$$
\underset{s=n-2}{\Res} \, \Tr \big(Y|D|^{-s}\big) =
\ncint X|D|^{-n} -\half \ncint
(\nabla(X)+X^2)|D|^{-2-n}.
$$
Moreover, using $\ncint \nabla(X)|D|^{-k}=0$ for any $k\geq 0$
since $\ncint$ is a trace,
$$
\underset{s=n-2}{\Res} \, \Tr \big(\eps(Y)|D|^{-s}\big) =
\underset{s=n-2}{\Res} \, \Tr
\big(\nabla(X)D^{-4}|D|^{-s}\big) = \ncint \nabla(X)|D|^{-2-n}=0.
$$
Similarly, since $Y\sim XD^{-2}$ mod $OP^{-2}$ and
$Y^2\sim X^2D^{-4} \mod OP^{-3}$, we get
$$
\underset{s=n-2}{\Res} \, \Tr \big(Y^2 |D|^{-s}\big) =
\underset{s=n-2}{\Res} \, \Tr
\big(X^2D^{-4}|D|^{-s}\big) = \ncint X^2 |D|^{-2-n}.
$$
Thus,
\begin{align*}
\underset{s=n-2}{\Res} \, \zeta_{D_A}(s) =
\underset{s=n-2}{\Res} \,\zeta_{D}(s) +
&(-\tfrac {n-2}{2})(\ncint X|D|^{-n}
-\half \ncint (\nabla(X)+X^2)|D|^{-2-n})\\
& \quad +\half(\tfrac {n-2}{2})^2
\ncint \nabla(X)|D|^{-2-n}+\half(\tfrac {n-2}{2})^2
\ncint X^2 |D|^{-2-n}.
\end{align*}
Finally,
$$
\underset{s=n-2}{\Res} \, \zeta_{D_A}(s) =
\underset{s=n-2}{\Res} \,\zeta_{D}(s) +
(-\tfrac {n-2}{2}) \big(\ncint X|D|^{-n} -
\half \ncint X^2|D|^{-2-n}\big)+\half(\tfrac
{n-2}{2})^2 \ncint X^2 |D|^{-2-n}
$$
and the result follows from Proposition \ref{ncintfluctuated}.
\end{proof}

\begin{corollary}
    \label{res-n-2-A}
If the spectral triple is simple and satisfies $\ncint |D|^{-(n-2)}=
\ncint \wt A \DD |D|^{-n} = \ncint  \DD \wt A |D|^{-n}=0$, then
$$
 \ncint |D_A|^{-(n-2)} = \tfrac{n(n-2)}{4}
 \big(\ncint \wt A \DD \wt A \DD
|D|^{-n-2}+\tfrac{n-2}{n}\ncint \wt A ^2|D|^{-n}\big).
$$
\end{corollary}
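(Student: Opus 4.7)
The plan is to substitute into Lemma \ref{residus-particuliers} $(ii)$ and simplify using the trace property of $\ncint$ (Proposition \ref{tracenc}), the three hypotheses, and order counting in the $OP^{\bullet}$ filtration. With $X = \wt A \DD + \DD \wt A + \wt A^2$ and $Y := \wt A \DD + \DD \wt A$, the assumption $\ncint |D|^{-(n-2)} = 0$ kills the leading term of the lemma. Expanding $\ncint X |D|^{-n}$ into three summands, two vanish by the remaining hypotheses, giving $\ncint X |D|^{-n} = \ncint \wt A^2 |D|^{-n}$.

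The core computation is to establish
\[
\ncint X^2 |D|^{-n-2} = 2 \ncint \wt A \DD \wt A \DD |D|^{-n-2} + 2 \ncint \wt A^2 |D|^{-n}.
\]
Decompose $X^2 = Y^2 + Y \wt A^2 + \wt A^2 Y + \wt A^4$. Since $Y\wt A^2, \wt A^2 Y \in OP^1$ and $\wt A^4 \in OP^0$, multiplication by $|D|^{-n-2}$ places each of these three summands in $OP^{-n-1}$ or lower; such operators are trace-class, so their noncommutative integrals vanish. Hence only $Y^2$ contributes.

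Write $Y^2 = \wt A \DD \wt A \DD + \wt A \DD^2 \wt A + \DD \wt A^2 \DD + \DD \wt A \DD \wt A$. Cyclic trace together with $[\DD, |D|^{-n-2}] = 0$ shows the first and fourth terms each contribute $\ncint \wt A \DD \wt A \DD |D|^{-n-2}$. For the third term, cyclic trace yields $\ncint \wt A^2 \DD^2 |D|^{-n-2}$, and $\DD^2 = |D|^2 - P_0$ with $P_0 \in OP^{-\infty}$ reduces it to $\ncint \wt A^2 |D|^{-n}$. The second term requires the most care: after the same substitution it equals $\ncint \wt A |D|^2 \wt A |D|^{-n-2}$; cycling one $\wt A$ to the right and using $|D|^2 \wt A = \wt A |D|^2 + \nabla(\wt A)$ modulo $OP^{-\infty}$ gives $\ncint \wt A^2 |D|^{-n} + \ncint \nabla(\wt A) |D|^{-n-2} \wt A$. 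To dispatch the residual, note that $\nabla(\wt A)[|D|^{-n-2},\wt A] \in OP^{-n-2}$ is trace-class (by \eqref{one-par} applied to $|D|^{-n-2}$), so cyclicity gives $\ncint \nabla(\wt A) |D|^{-n-2} \wt A = \ncint \nabla(\wt A) \wt A |D|^{-n-2} = \ncint \wt A \nabla(\wt A) |D|^{-n-2}$. Combining the derivation identity $\nabla(\wt A^2) = \nabla(\wt A)\wt A + \wt A \nabla(\wt A)$ with $\ncint \nabla(Z)|D|^{-n-2} = 0$ for every $Z$ (a direct consequence of cyclicity and $[\DD^2,|D|^{-n-2}]=0$) forces $\ncint \nabla(\wt A) \wt A |D|^{-n-2}=0$.

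Collecting all contributions and inserting into Lemma \ref{residus-particuliers} $(ii)$,
\begin{align*}
\ncint |D_A|^{-(n-2)}
&= \tfrac{n-2}{2}\Bigl(-\ncint \wt A^2 |D|^{-n} + \tfrac{n}{4}\bigl(2 \ncint \wt A \DD \wt A \DD |D|^{-n-2} + 2 \ncint \wt A^2 |D|^{-n}\bigr)\Bigr)\\
&= \tfrac{n(n-2)}{4} \ncint \wt A \DD \wt A \DD |D|^{-n-2} + \tfrac{(n-2)^2}{4} \ncint \wt A^2 |D|^{-n},
\end{align*}
which factors as $\tfrac{n(n-2)}{4}\bigl(\ncint \wt A \DD \wt A \DD |D|^{-n-2} + \tfrac{n-2}{n}\ncint \wt A^2 |D|^{-n}\bigr)$. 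The principal obstacle is the vanishing of $\ncint \nabla(\wt A) \wt A |D|^{-n-2}$; everything else is cyclic-trace bookkeeping guided by the $OP^{\bullet}$ grading.
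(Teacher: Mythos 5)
Your proof is correct and follows essentially the same route as the paper: substitute into Lemma \ref{residus-particuliers}~$(ii)$, use the three hypotheses and the $OP^\bullet$ grading to strip away everything but the contribution of $Y^2=(\wt A\DD+\DD\wt A)^2$, then evaluate its four terms via cyclicity of $\ncint$ and $\DD^2=|D|^2-P_0$ modulo $OP^{-\infty}$. The one place you overwork is the residual $\ncint \nabla(\wt A)|D|^{-n-2}\wt A$: it vanishes outright because $\nabla(\wt A)|D|^{-n-2}\wt A\in OP^{1}\cdot OP^{-n-2}\cdot OP^0 = OP^{-n-1}\subseteq \L^1(\H)$ and the noncommutative integral of any trace-class operator is zero; the detour through $\nabla(\wt A^2)=\nabla(\wt A)\wt A+\wt A\nabla(\wt A)$ and the vanishing of $\ncint\nabla(Z)|D|^{-n-2}$ is sound but unnecessary, and this trace-class order-count is precisely what the paper's terse remark ``since $\nabla(\wt A)\in OP^1$'' is invoking.
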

\begin{proof}
By previous lemma,
$$
\underset{s=n-2}{\Res} \, \zeta_{D_A}(s) =
\tfrac{n-2}{2}\big( -\ncint \wt A^2 |D|^{-n}
+\tfrac{n}{4}\ncint (\,\wt A\DD \wt A \DD+ \DD \wt A \DD \wt A+
\wt A \DD^2 \wt A + \DD \wt A^2 \DD \,) |D|^{-n-2} \big).
$$
Since $\nabla(\wt A) \in OP^1$, the trace property of $\ncint$ yields
the result.
\end{proof}

\section{The noncommutative torus}
\subsection{Notations}

Let $\Coo(\T^n_\Th)$ be the smooth noncommutative
$n$-torus associated to a non-zero skew-symmetric deformation matrix
$\Th \in
M_n(\R)$ (see \cite{ConnesTorus}, \cite{RieffelRot}). This means that
$\Coo(\T^n_\Th)$ is the algebra generated by $n$
unitaries $u_i$, $i=1,\dots,n$ subject to the relations
\begin{equation}
\label{rel}
u_i\,u_j=e^{i\Th_{ij}}\,u_j\,u_i,
\end{equation}
and with
Schwartz coefficients: an
element $a\in\Coo(\T_\Th^n)$ can be written as
$a=\sum_{k\in\Z^n}a_k\,U_k$, where $\{a_k\}\in\SS(\Z^n)$ with
the Weyl elements defined by $U_k:=e^{-\frac i2 k.\chi
k}\,u_1^{k_1}\cdots
u_n^{k_n}$, $k\in\Z^n$, relation
\eqref{rel} reads
\begin{equation}
\label{rel1}
U_{k}U_{q}=e^{-\frac i2 k.\Theta q} \,U_{k+q}, \text{ and }
U_{k}U_{q}=e^{-i k.\Theta q} \,U_{q}U_{k}
\end{equation}
where $\chi$ is
the matrix restriction of $\Theta$ to its upper triangular part.
Thus unitary operators $U_{k}$ satisfy $U_{k}^*=U_{-k}$ and
$[U_{k},U_{l}]=-2i\sin(\frac 12 k.\Th l)\,U_{k+l}$.

Let $\tau$ be the trace on $\Coo(\T^n_\Th)$ defined by
$\tau\big( \sum_{k\in\Z^n}a_k\,U_k \big):=a_0$
and $\H_{\tau}$ be the GNS Hilbert space obtained
by completion of $ \Coo(\T_\Th^n)$
with respect of the norm induced by the scalar product
$\langle a,b\rangle:=\tau(a^*b)$.
On $\H_{\tau}=\set{\sum_{k\in\Z^n}a_k\,U_k \, : \, \{a_{k}\}_{k} \in
l^2(\Z^n) }$, we consider the left and right regular
representations of
$\Coo(\T_\Th^n)$ by bounded operators, that we denote respectively
by $L(.)$ and $R(.)$.

Let also $\delta_\mu$, $\mu\in \set{1,\dots,n}$, be the $n$ (pairwise
commuting)
canonical derivations, defined by
\begin{equation}
\delta_\mu(U_k):=ik_\mu U_k. \label{dUk}
\end{equation}

We need to fix notations: let $\A_{\Th}:=C^{\infty}(\T_{\Th}^n)$
acting on $\H:=\H_{\tau}\otimes \C^{2^m}$ with $n=2m$ or $n=2m+1$
(i.e., $m=\lfloor \tfrac n2 \rfloor$ is the integer part of $\tfrac
n2$),
the square integrable sections of the trivial spin bundle over $\T^n$.

Each element of $\A_{\Th}$ is represented on $\H$ as
$L(a)\otimes1_{2^m}$ where $L$ (resp. $R$) is the left (resp. right)
multiplication. The Tomita conjugation $J_{0}(a):=a^*$
satisfies $[J_{0},\delta_{\mu}]=0$ and we define
$J:=J_{0}\otimes C_{0}$
where $C_{0}$ is an operator on $\C^{2^m}$.
The Dirac operator is given by
\begin{align}
\label{defDirac}
\DD:=-i\,\delta_{\mu}\otimes \gamma^{\mu},
\end{align}
where we use hermitian Dirac matrices $\gamma$. It is defined and
symmetric on the
dense subset of $\H$ given by $C^{\infty}(\T_{\Th}^n) \otimes
\C^{2^{m}}$. We still note $\DD$ its selfadjoint extension. This
implies
\begin{align}
    \label{CGamma}
C_{0}\ga^{\alpha}=-\eps \ga^\alpha C_{0},
\end{align}
and
$$
\DD\ U_k \otimes e_i = k_\mu U_k \otimes \gamma^\mu e_i ,
$$
where $(e_i)$ is the canonical basis of $\C^{2^m}$. Moreover,
$C_{0}^2=\pm 1_{2^m}$
depending on the parity of $m$. Finally, one introduces the chirality
(which in the
even case is $\chi:=id \otimes (-i)^{m} \gamma^1 \cdots \gamma^{n}$)
and this yields
that $(\A_{\Th},\H,\DD,J,\chi)$ satisfies all axioms of a spectral
triple, see
\cite{Book,Polaris}.

The perturbed Dirac operator $V_{u}\DD V_{u}^*$ by
the unitary
$$
V_{u}:=\big(L(u)\otimes 1_{2^m}\big)J\big(L(u)\otimes
1_{2^m}\big)J^{-1},
$$
defined for every unitary $u \in \A$,
$uu^{*}=u^{*}u=U_{0}$,
must satisfy condition (\ref{Jcom}) (which is equivalent
to $\H$ being endowed with a structure of $\A_{\Th}$-bimodule). This
yields the necessity of a symmetrized covariant Dirac operator:
$$
\DD_{A}:=\DD + A + \epsilon J\,A\,J^{-1}$$
since
$V_{u}\DD V_{u}^{*}=\DD_{L(u)\otimes 1_{2^m}[\DD,L(u^{*})
\otimes 1_{2^m}]}$:
in fact, for $a \in \A_{\Th}$, using $J_{0}L(a)J_{0}^{-1}=R(a^*)$,
we get $$\epsilon J\big(L(a)\otimes
\gamma^{\alpha}\big)J^{-1}=-R(a^*)\otimes \gamma^{\alpha}$$
and that the representation $L$ and the
anti-representation $R$ are $\C$-linear, commute and satisfy
$$
[\delta_{\alpha},L(a)]=L(\delta_{\alpha}a),\quad
[\delta_{\alpha},R(a)]=R(\delta_{\alpha}a).
$$
This induces some covariance property for the Dirac operator:
one checks that for all $k \in \Z^{n}$,
\begin{align}
\label{puregauge1}
L(U_{k})\otimes 1_{2^m}[\DD,L(U_{k}^{*})\otimes 1_{2^m}]&=1\otimes
(-k_{\mu}\ga^{\mu}),
\end{align}
so with (\ref{CGamma}), we get $U_{k}[\DD,U_{k}^{*}]+\epsilon
JU_{k}[\DD,U_{k}^{*}]J^{-1}=0$ and
\begin{align}
\label{covariance}
V_{U_{k}} \,\DD \, V_{U_{k}}^{*}=
\DD=\DD_{L(U_{k})\otimes 1_{2^m}[\DD,L(U_{k}^{*})\otimes 1_{2^m}]}.
\end{align}
Moreover, we get the gauge transformation:
\begin{align}
\label{gaugeDirac}
V_{u} \DD_{A} V_{u}^{*}= \DD_{\gamma_{u}(A)}
\end{align}
where the gauged transform one-form of $A$ is
\begin{align}
\label{gaugetransform}
\gamma_{u}(A):=u[\DD,u^{*}]+uAu^{*},
\end{align}
with the shorthand
$L(u)\otimes 1_{2^m} \longrightarrow u$.

As a consequence, the spectral action is gauge invariant:
$$
\SS(\DD_{A},\Phi,\Lambda)=\SS(\DD_{\gamma_{u}(A)},\Phi,\Lambda).
$$

An arbitrary selfadjoint one-form $A$, can be written as
\begin{equation}
\label{connection}
A = L(-iA_{\alpha})\otimes\gamma^{\alpha},\,\, A_{\alpha}
=-A_{\alpha}^* \in
\A_{\Th},
\end{equation}
thus
\begin{equation}
\label{dirac}
\DD_{A}=-i\,\big(\delta_{\alpha}+L(A_{\alpha})-R(A_{\alpha})\big)
 \otimes \gamma^{\alpha}.
\end{equation}
Defining $$\tilde A_{\alpha}:=L(A_{\alpha})-R(A_{\alpha}),$$
we get
$\DD_{A}^2=-g^{{\alpha}_{1} {\alpha}_{2}}(\delta_{{\alpha}_{1}}+\tilde
A_{{\alpha}_{1}})(\delta_{{\alpha}_{2}}+\tilde
A_{{\alpha}_{2}})\otimes 1_{2^m} - \tfrac 12
\Omega_{{\alpha}_{1} {\alpha}_{2}}\otimes \gamma^{{\alpha}_{1}
{\alpha}_{2}}
$
where
\begin{align*}
\gamma^{{\alpha}_{1} {\alpha}_{2}}
&:=\tfrac 12(\gamma^{{\alpha}_{1}}\gamma^{{\alpha}_{2}}
-\gamma^{{\alpha}_{2}}\gamma^{{\alpha}_{1}}) ,\\
\Omega_{{\alpha}_{1} {\alpha}_{2}}
&:=[\delta_{{\alpha}_{1}}+
\tilde A_{{\alpha}_{1}},\delta_{{\alpha}_{2}}
+\tilde A_{{\alpha}_{2}}]\,
=L(F_{{\alpha}_{1} {\alpha}_{2}}) - R(F_{{\alpha}_{1} {\alpha}_{2}})
\end{align*}
with
\begin{align}
    \label{Fmunu}
F_{{\alpha}_{1} {\alpha}_{2}}:=\delta_{{\alpha}_{1}}(A_{{\alpha}_{2}})
-\delta_{{\alpha}_{2}}(A_{{\alpha}_{1}})+[A_{{\alpha}_{1}},A_{{\alpha}_{2}}].
\end{align}
In summary,
\begin{align}
\label{D2}
\DD_{A}^2=-\delta^{{\alpha}_{1} {\alpha}_{2}}
\Big(
\delta_{{\alpha}_{1}}+L(A_{{\alpha}_{1}})-R(A_{{\alpha}_{1}})\Big)
\Big(\delta_{{\alpha}_{2}}+L(A_{{\alpha}_{2}})-R(A_{{\alpha}_{2}})\Big)
\otimes 1_{2^m}
\nonumber\\
-\tfrac 12\,\big(L(F_{{\alpha}_{1} {\alpha}_{2}}) - R(F_{{\alpha}_{1}
{\alpha}_{2}})\big)
\otimes \gamma^{{\alpha}_{1} {\alpha}_{2}}.
\end{align}

\subsection{Kernels and dimension spectrum}
We now compute the kernel of the perturbed Dirac operator:
\begin{prop}
    \label{noyaux}
(i) $\Ker \DD=U_0\otimes \C^{2^m}$, so $\dim \Ker \DD =
2^m$.

(ii) For any selfadjoint one-form $A$, $\Ker \DD \subseteq \Ker
\DD_A$.

(iii) For any unitary $ u\in \A$, $\Ker \DD_{\gamma_{u}(A)}=V_{u}\,
\Ker \DD_{A}$.
\end{prop}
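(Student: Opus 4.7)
The plan is to handle the three assertions in turn, each of which will follow from essentially direct computations using formulas already established in the previous section.

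For (i), I would expand an arbitrary $\xi \in \H = \H_\tau \otimes \C^{2^m}$ in the orthonormal basis $\{U_k\otimes e_i\}$ as $\xi = \sum_{k\in\Z^n} U_k \otimes v_k$ with $v_k\in\C^{2^m}$ satisfying $\sum_k \norm{v_k}^2 <\infty$. Using the displayed identity $\DD(U_k\otimes e_i)=k_\mu U_k\otimes\gamma^\mu e_i$, one gets $\DD\xi = \sum_k U_k\otimes (k_\mu \gamma^\mu) v_k$, and the condition $\DD\xi=0$ becomes $(k_\mu\gamma^\mu)v_k=0$ for every $k\in\Z^n$. Since $(k_\mu\gamma^\mu)^2=|k|^2\,\mathbf{1}_{2^m}$, the matrix $k_\mu\gamma^\mu$ is invertible whenever $k\neq 0$, forcing $v_k=0$ for such $k$. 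Only $v_0$ remains free, giving $\Ker \DD=U_0\otimes\C^{2^m}$ of dimension $2^m$.

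For (ii), by the formulas (\ref{connection}) and the identity $\epsilon J(L(a)\otimes\gamma^\alpha)J^{-1}=-R(a^*)\otimes\gamma^\alpha$, the symmetrized perturbation reads $\wt A = -i\,\wt A_\alpha \otimes \gamma^\alpha$ with $\wt A_\alpha = L(A_\alpha)-R(A_\alpha)$. The key observation is that $U_0$ is the unit of $\A_\Th$, so $\wt A_\alpha(U_0) = A_\alpha U_0 - U_0 A_\alpha = 0$ for every $\alpha$. Hence $\wt A$ vanishes on $U_0\otimes\C^{2^m}$; since $\DD$ does also, by (i), we conclude $\Ker \DD \subseteq \Ker \DD_A$.

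For (iii), I would simply exploit the gauge covariance formula (\ref{gaugeDirac}), namely $V_u \DD_A V_u^* = \DD_{\gamma_u(A)}$, together with the fact that $V_u$ is a unitary on $\H$. Indeed, $\xi\in\Ker\DD_{\gamma_u(A)}$ if and only if $V_u \DD_A V_u^* \xi = 0$, which by injectivity of $V_u$ is equivalent to $V_u^*\xi \in \Ker \DD_A$, i.e.\ $\xi \in V_u\, \Ker \DD_A$.

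None of the three steps looks to hide a real obstacle: the only thing to be careful about in (i) is the domain issue (a priori $\xi$ must lie in $\Dom\DD$, but one checks the $\ell^2$ decomposition is compatible since $\DD$ is the self-adjoint extension of the operator defined on finite sums), while in (ii) one needs to make sure the reality condition $A_\alpha^* = -A_\alpha$ is correctly incorporated so that $\wt A$ has exactly the form $-i\wt A_\alpha \otimes \gamma^\alpha$. The potentially delicate bookkeeping in (iii) is hidden in having already established (\ref{gaugeDirac}), so here it reduces to a purely formal argument.
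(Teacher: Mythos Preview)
Your proposal is correct and follows essentially the same route as the paper: in (i) the paper applies $\DD^2$ to get $|k|^2 c_{k,j}=0$, which is exactly your observation that $(k_\mu\gamma^\mu)^2=|k|^2\mathbf{1}$; in (ii) both use that $U_0$ is the unit so $[A_\alpha,U_0]=0$; and (iii) is in both cases an immediate consequence of (\ref{gaugeDirac}).
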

\begin{proof}
$(i)$ Let $\psi =\sum_{k,j} c_{k,j} \, U_k \otimes e_j \in \Ker \DD$.
Thus, $0=\DD^2
\psi =\sum_{k,i} c_{k,j} |k|^2\, U_k \otimes e_j$ which entails that
$c_{k,j}|k|^2=0$
for any $k \in \Z^n$ and $1\leq j\leq 2^m$. The result follows.

$(ii)$ Let $\psi \in \Ker \DD$. So, $\psi = U_0 \otimes v$
with $v\in \C^{2^m}$ and from (\ref{dirac}), we get
\begin{align*}
\DD_A \psi &= \DD \psi + (A+\epsilon J AJ^{-1})\psi = (A+\epsilon J
AJ^{-1})\psi=
-i[A_\a,U_0]\otimes \ga^\a v = 0
\end{align*}
since $U_{0}$ is the unit of the algebra, which proves that $\psi \in
\Ker \DD_A$.

$(iii)$ This is a direct consequence of (\ref{gaugeDirac}).
\end{proof}

\begin{corollary}
Let $A$ be a selfadjoint one-form. Then
$\Ker \DD_A=\Ker \DD$ in the following cases:

 (i) $A_{u}:=L(u)\otimes
1_{2^m}[\DD,L(u^*)\otimes 1_{2^m}]$ when $u$ is a unitary in $\A$.

 (ii) $\vert \vert A \vert \vert <\tfrac12$.

 (iii) The matrix $\tfrac{1}{2\pi}\Th$ has only integral coefficients.
 \end{corollary}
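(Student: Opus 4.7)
Since Proposition \ref{noyaux}(ii) already gives $\Ker \DD \subseteq \Ker \DD_A$ for any selfadjoint one-form $A$, all three cases reduce to establishing the reverse inclusion, and the three sub-arguments are essentially independent and short.

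For (i), my plan is to apply Proposition \ref{noyaux}(iii) with $A=0$: noting that $\gamma_u(0) = u[\DD, u^*]$ coincides (in the shorthand convention $L(u)\otimes 1_{2^m} \to u$) with $A_u$, that proposition yields $\Ker \DD_{A_u} = V_u\, \Ker \DD$. It then suffices to check that $V_u$ fixes $\Ker \DD$. Unpacking $J = J_0 \otimes C_0$ and using $J_0 L(u) J_0^{-1} = R(u^*)$, one finds $V_u = L(u)R(u^*) \otimes 1_{2^m}$; applied to the basis $U_0 \otimes e_j$ of $\Ker \DD$ (Proposition \ref{noyaux}(i)), this yields $u U_0 u^* \otimes e_j = U_0 \otimes e_j$. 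Hence $V_u|_{\Ker \DD} = \mathrm{id}$ and case (i) is done.

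For (ii), I would argue by contradiction. Suppose $\psi \in \Ker \DD_A \cap (\Ker \DD)^\perp$ with $\psi \neq 0$. Writing $\wt A := A + \eps J A J^{-1}$, so that $\|\wt A\| \leq 2\|A\| < 1$, the identity $\DD_A \psi = 0$ becomes $\DD \psi = -\wt A \psi$. The spectral structure of $\DD$ on $\T^n_\Th$ is explicit: $\DD^2$ acts by $|k|^2$ on $U_k \otimes e_j$, so the nonzero eigenvalues of $|\DD|$ are bounded below by $1$, giving $\||\DD|\phi\| \geq \|\phi\|$ on $(\Ker \DD)^\perp$. Combining: $\|\psi\| \leq \|\DD \psi\| = \|\wt A \psi\| \leq 2\|A\|\|\psi\| < \|\psi\|$, contradiction.

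For (iii), the observation is that $\tfrac{1}{2\pi}\Th \in M_n(\Z)$ forces not only commutativity of $\A_\Th$ but in fact $L=R$ as operators on $\H_\tau$. Indeed, a direct calculation from (\ref{rel1}) using integrality and the skew-symmetry of $\Th$ shows that the phases appearing in $L(U_k)U_q = U_kU_q$ and $R(U_k)U_q = U_qU_k$ coincide for all $k,q \in \Z^n$, hence $L(U_k) = R(U_k)$, and by linearity and density $L(a) = R(a)$ for all $a \in \A_\Th$. Consequently $\wt A_\alpha = L(A_\alpha) - R(A_\alpha) = 0$, so by (\ref{dirac}) $\DD_A = \DD$, and equality of kernels is trivial. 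None of the three arguments presents a real obstacle; the most conceptual point is the use in (ii) of the spectral gap at $1$ of $|\DD|$, a feature specific to the torus example.
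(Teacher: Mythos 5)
Your proof is correct and follows essentially the same route as the paper's: part (i) via Proposition \ref{noyaux}(iii) with $A=0$ together with $V_u(U_0\otimes v)=U_0\otimes v$; part (ii) by splitting off the $\Ker\DD$ component and contradicting $\|\psi'\|\leq\|\DD\psi'\|=\|\wt A\psi'\|<\|\psi'\|$, where your ``spectral gap $\geq 1$'' phrasing is just a repackaging of the paper's explicit orthogonality computation with $X_k=k_\alpha\gamma^\alpha$; part (iii) by observing $L=R$ so that $\wt A=0$, which is what the paper means by ``the algebra is commutative, thus $A+\epsilon JAJ^{-1}=0$.''
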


\begin{proof}
$(i)$ This follows from previous result because
$V_{u} (U_{0}\otimes v)= U_{0} \otimes v$ for any $v\in \C^{2^m}$.

$(ii)$ Let $\psi=\sum_{k,j}c_{k,j}\, U_{k}\otimes e_{j} $ be in $ \Ker
\DD_{A}$ (so $\sum_{k,j} \vert c_{k,j}\vert^2<
\infty$) and $\phi:=\sum_{j}c_{0,j}\, U_{0}\otimes e_{j}$. Thus
$\psi':=\psi-\phi \in \text{Ker }\DD_{A}$ since $\phi \in \Ker \DD
\subseteq \Ker \DD_A$ and
$$
\vert \vert \sum_{0\neq k \in \Z^n,\,j}
c_{k,j}\,k_{\alpha}\,U_{k}\otimes
\gamma^{\alpha}e_{j}\vert \vert^2=\vert \vert \DD
\psi'\vert\vert^2=\vert \vert -(A + \epsilon
JAJ^{-1})\psi'\vert \vert^2 \leq 4\vert\vert A \vert \vert^{2}\vert
\vert \psi' \vert
\vert^{2} <\vert \vert \psi' \vert \vert^{2}.
$$
Defining $X_{k}:=\sum_{\alpha}k_{\alpha}\gamma_{\alpha}$,
$X_{k}^{2}=\sum_{\alpha}\vert k_{\alpha}\vert^{2}\, 1_{2^{m}}$ is
invertible and the vectors $\set{U_{k}\otimes X_{k}e_{j}}_{0\neq k\in
\Z^{n},\,j}$ are orthogonal in $\H$, so
$$
\sum_{0\neq k\in \Z^{n},\,j}\big( \sum_{\alpha} \vert k_{\alpha}\vert
^{2} \big)\, \vert
c_{k,j}\vert^{2} < \sum_{0\neq k\in \Z^{n},\,j}\vert c_{k,j}\vert^{2}
$$
which is possible only if $c_{k,j}=0, \, \forall k,\,j$ that is
$\psi'=0$ et
$\psi=\phi \in \text{Ker }\DD$.

$(iii)$ This is a consequence of the fact that the algebra is
commutative, thus $A+\epsilon JAJ^{-1}=0$.
\end{proof}

Note that if $\wt A_{u}:=A_{u}+\epsilon JA_{u}J^{-1}$, then by
(\ref{puregauge1}), $\wt A_{U_{k}}=0$ for all
$k \in \Z^n$ and $\norm{A_{U_{k}}}=\vert k\vert$,
but for an arbitrary unitary $u\in \A$, $\wt A_{u}\ne 0$ so
$\DD_{A_{u}}\ne \DD$.

Naturally the above result is also a direct consequence of the fact
that the eigenspace of an isolated eigenvalue of an operator is not
modified by small perturbations. However, it is interesting to
compute the last result directly to emphasize the difficulty of the
general case:

Let $\psi=\sum_{l\in \Z^n, 1\leq j \leq 2^m}c_{l,j}\, U_{l}\otimes
e_{j}\in \Ker
\DD_A$, so $\sum_{l\in \Z^n, 1\leq j \leq 2^m} \vert c_{l,j}\vert^2
<Ê \infty$. We
have to show that $\psi\in$ Ker $\DD$ that is $c_{l,j}=0$ when $l\ne
0$.

Taking the scalar product of $\langle U_{k} \otimes e_{i}\vert$ with
$$
0=\DD_{A}\psi=\sum_{l,\,\a,\,j} c_{l,\,j}(l^{\a}U_{l}-i[A_{\a},U_{l}]
)\otimes
\gamma^{\a}e_{j},
$$
we obtain
$$
0=\sum_{l,\,\a,\,j} c_{l,\,j} \big(l^{\a}\delta_{k,l}-i\langle
U_{k},[A_{\a},U_{l}]\rangle \big)\langle e_{i},\gamma^{\a}e_{j}
\rangle.
$$
If $A_{\a}=\sum_{\a,l}a_{\a,l}\, U_{l} \otimes \gamma^{\a}$ with
$\set{a_{\a,l}}_{l}
\in \SS(\Z^n)$, note that $[U_{l},U_{m}]=-2i \sin(\tfrac 12 l.\Th m)
\, U_{l+m}$ and
$$
\langle U_{k},[A_{\a},U_{l}]\rangle = \sum_{l'\in
\Z^{n}}a_{\a,l'}(-2i \sin (\tfrac 12
l'.\Th l) \langle U_{k}, U_{l'+l}\rangle=-2i\, a_{\a,k-l} \,
\sin(\tfrac 12 k.\Th l).
$$
Thus
\begin{align}
    \label{contraintenoyau}
0=\sum_{l\in \Z^{n}}\sum_{\a=1}^{n}\sum_{j=1}^{2^{m}} c_{l,\,j}
\big(l^{\a}\delta_{k,l} -2a_{\a,k-l} \, \sin(\tfrac 12 k.\Th l) \big)
\, \langle
e_{i},\gamma^{\a}e_{j} \rangle, \quad \forall k\in \Z^n, \, \forall
i, 1\leq i \leq
2^{m}.
\end{align}

\medskip

{\it We conjecture that $\Ker \DD=\Ker \DD_A$ at least for generic
$\Th$'s}:

the constraints (\ref{contraintenoyau}) should imply $c_{l,j} = 0$
for all $j$ and all $l \neq 0$ meaning $\psi \in \Ker \DD$. When
$\tfrac{1}{2\pi}\Th$ has only integer coefficients, the sin part of
these constraints disappears giving the result.
\medskip

\begin{lemma}
\label{spectrumset}
If $\tfrac{1}{2\pi} \Th$ is diophantine,
$Sp\big(\Coo(\T^n_\Th),\H,\DD\big)=\Z$ and all these poles are simple.
\end{lemma}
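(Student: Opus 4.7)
The plan is to reduce the computation to an application of Theorem \ref{zetageneral}. First, by the second remark following the definition of the dimension spectrum, $Sp(\A_\Th,\H,\DD)$ is the set of poles of the family $s\mapsto \Tr(B|D|^{-s-2p})$ as $B$ ranges over $\DD(\A_\Th)$ and $p$ over $\N_0$, so it suffices to treat a monomial $B$ built from the generators of $\DD(\A_\Th)$: $L(\A_\Th)$, the operators $JaJ^{-1}=R(a^*)\otimes(\text{Clifford})$ for $a\in\A_\Th$, the Dirac operator $\DD=-i\delta_\mu\otimes\ga^\mu$, and $|\DD|$.

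Next, I would put such a monomial into a normal form using the commutation relations $L(U_k)L(U_l)=e^{-\tfrac i2 k\.\Th l}L(U_{k+l})$, $[L(U_k),R(U_l)]=0$, $[\DD,L(U_k)]=k_\mu L(U_k)\otimes\ga^\mu$ together with their $R$- and $|\DD|$-analogues. This rewrites $B$ as a finite linear combination of operators whose action on a basis vector $U_m\otimes e_i$ consists of: (a) shifting $m$ by a fixed combination $\Sigma(l)$ of the generator indices; (b) multiplying by a polynomial $Q(m)$ and a product $\prod_i|m+v_i(l)|^{p_i}$ coming from the $\DD$- and $|\DD|$-factors; (c) applying a fixed Clifford endomorphism on $e_i$; and (d) multiplying by a phase of the form $e^{im\.\Th\Sigma'(l)}$ together with a constant phase in $l$. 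Fourier-expanding the algebra generators assembles their coefficients into a Schwartz sequence $\wt a\in\SS((\Z^n)^{2q})$.

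Computing $\Tr(B|D|^{-s-2p})$ in this basis, only the diagonal contributions survive; the resulting diagonality constraint cuts out exactly a set of the form $\mathcal Z$ appearing in Theorem \ref{zetageneral}, and the Clifford trace gives a scalar factor. Up to finitely many such summands, the outcome has the exact structure of the function $f(s+2p)$ of Theorem \ref{zetageneral}(i), the diophantine hypothesis there being fulfilled by $\tfrac{1}{2\pi}\Th$. Applying part (i) of that theorem yields meromorphic continuation to $\C$ with at most simple poles at integer points of the form $n+d+|p|_1-m$, $m\in\N_0$ (shifted by $-2p\in\Z$). Hence every pole of $\Tr(B|D|^{-s-2p})$ is a simple integer, i.e.\ $Sp(\A_\Th,\H,\DD)\subseteq\Z$ with all poles simple.

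Finally, for the opposite inclusion, for each $m\in\Z$ take $P:=|D|^{m-n}\in\Psi(\A_\Th)$; then
\begin{equation*}
\Tr(P|D|^{-s})=2^m\bigl(1+Z_n(s-m+n)\bigr),
\end{equation*}
which by (\ref{formule}) has a simple pole at $s=m$, proving $\Z\subseteq Sp$. The main obstacle is the normal-ordering step: showing that the various $\Th$-phases produced by successive commutations really collect into a single exponential $e^{ik\.\Th\Sigma'(l)}$ and that the resulting coefficient sequence in $l$ remains Schwartz, so that the hypotheses of Theorem \ref{zetageneral}(i) are met verbatim.
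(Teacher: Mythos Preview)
Your proposal is correct and follows essentially the same route as the paper: reduce to a monomial $B=a_rb_r\DD^{q_{r-1}}|\DD|^{p_{r-1}}\cdots a_1b_1$, compute its action on $U_k\otimes e_j$, take the diagonal trace, and recognise the result as a linear combination of the zeta functions of Theorem~\ref{zetageneral} (or Theorem~\ref{analytic} when no $|\DD|$-factors appear). The ``main obstacle'' you flag is resolved in the paper by the single identity $U_{l_r}\cdots U_{l_1}U_k=U_kU_{l_r}\cdots U_{l_1}\,e^{-i(\sum_i l_i)\cdot\Th k}$, which shows that the only $k$-dependent phase is $e^{-i(\sum_i l_i)\cdot\Th k}$ while all remaining $\Th$-phases depend only on $(l,l')$ and are absorbed into the Schwartz coefficient; your reverse inclusion via $P=|D|^{m-n}$ is a welcome addition that the paper leaves implicit (mind the clash between the integer $m$ and the spinor exponent $m=\lfloor n/2\rfloor$ in your displayed formula).
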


\begin{proof}
Let $B\in \DD(\A)$ and $p\in \N_0$. Suppose that $B$ is of the form
$$
B= a_r b_r
\DD^{q_{r-1}}|\DD|^{p_{r-1}} a_{r-1}b_{r-1}\cdots
\DD^{q_1}|\DD|^{p_1} a_1 b_1
$$
where $r\in \N$, $a_i \in \A$, $b_i\in J\A J^{-1}$, $q_i, p_i \in
\N_0$.
We note $a_i=:\sum_l a_{i,l}\,U_l$ and
$b_i=:\sum_i b_{i,l} \,U_l$. With the shorthand
$k_{\mu_1,\mu_{q_i}}:=k_{\mu_1}\cdots
k_{\mu_{q_i}}$ and $\ga^{\mu_1,\mu_{q_i}}=\ga^{\mu_1}\cdots
\ga^{\mu_{q_i}}$, we get
$$
\DD^{q_1}|\DD|^{p_1}  a_1 b_1 \, U_k \otimes e_j =  \sum_{l_1,\,l'_1}
a_{1,l_1} b_{1,l'_1}
U_{l_1}U_k U_{l'_1}
\,|k+l_1+l'_1|^{p_1}\,(k+l_1+l'_1)_{\mu_1,\mu_{q_1}} \otimes
\ga^{\mu_1,\mu_{q_1}} e_j
$$
which gives, after $r$ iterations,
$$
B U_k \otimes e_j = \sum_{l,l'} \wt a_{l} \wt b_{l} U_{l_r}\cdots
U_{l_1} U_k
U_{l'_1}\cdots U_{l'_r} \prod_{i=1}^{r-1} |k+\wh l_i+\wh
l'_i|^{p_i}(k+\wh l_{i} +\wh
l'_{i})_{\mu^{i}_1,\mu^i_{q_i}} \otimes
\ga^{\mu^{r-1}_1,\mu^{r-1}_{q_{r-1}}}\cdots
\ga^{\mu^1_1,\mu^1_{q_1}} e_j
$$
where $\wt a_l : = a_{1,l_1}\cdots a_{r,l_r}$ and $\wt b_{l'} : =
b_{1,l'_1}\cdots
b_{r,l'_r}$.

Let us note $F_\mu(k,l,l'):=\prod_{i=1}^{r-1}|k+\wh l_i+\wh
l'_i|^{p_i}
(k+\wh l_{i} +\wh l'_{i})_{\mu^{i}_1,\mu^i_{q_i}}$ and $\ga^\mu
:=\ga^{\mu^{r-1}_1,\mu^{r-1}_{q_{r-1}}}\cdots
\ga^{\mu^1_1,\mu^1_{q_1}}$. Thus, with
the shortcut $\sim_c$ meaning modulo a constant function towards the
variable $s$,
$$
\Tr \big(B|D|^{-2p-s}\big) \sim_c {\sum_k}' \, \sum_{l,l'} \wt a_l
\wt b_{l'} \,
\tau\big(U_{-k}U_{l_r}\cdots U_{l_1} U_k U_{l'_1}\cdots U_{l'_r}\big)
\tfrac{F_\mu(k,l,l')}{|k|^{s+2p}} \Tr (\ga^\mu)\, .
$$
Since $U_{l_r}\cdots U_{l_1} U_k = U_k U_{l_r}\cdots U_{l_1}
e^{-i\sum_1^r l_i .\Th
k}$ we get
$$\tau\big(U_{-k}U_{l_r}\cdots U_{l_1} U_k U_{l'_1}\cdots
U_{l'_r}\big)=
\delta_{\sum_1^r l_i+l'_i,0} \, e^{i\phi(l,l')} \, e^{-i\sum_1^r
l_i.\Th k}$$ where $\phi$
is a real valued function. Thus,
\begin{align*}
\Tr \big(B |D|^{-2p-s} \big)&\sim_c {\sum_k}' \, \sum_{l,l'}
e^{i\phi(l,l')}\,\delta_{\sum_1^r l_i+l'_i,0}\, \wt a_l \wt b_{l'}\,
\tfrac{F_\mu(k,l,l')\,e^{-i\sum_1^r l_i.\Th k}}{|k|^{s+2p}} \Tr
(\ga^\mu) \\
&\sim_c f_\mu(s)\Tr (\ga^\mu).
\end{align*}
The function $f_\mu(s)$ can be decomposed has a linear combination of
zeta function
of type described in Theorem \ref{zetageneral} (or, if $r=1$ or all
the $p_i$ are zero,
in Theorem \ref{analytic}).
Thus, $s\mapsto \Tr \big(B |D|^{-2p-s}\big)$
has only poles in $\Z$ and each pole is simple.
Finally, by linearity, we get the result.
\end{proof}
The dimension spectrum of the noncommutative torus is simple:
\begin{prop} \label{zeta(0)}

(i) If $\tfrac{1}{2\pi} \Th$ is diophantine,
the spectrum dimension of $\big(\Coo(\T^n_\Th),\H,\DD\big)$ is
equal to the set $\set{n-k \, :\,  k\in \N_0}$ and all these poles
are simple.

(ii) $\zeta_D(0)=0.$
\end{prop}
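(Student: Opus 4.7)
For part $(i)$, I will combine Lemma \ref{spectrumset} with Remark \ref{remark-spectrum}. Under the diophantine hypothesis on $\tfrac{1}{2\pi}\Th$, Lemma \ref{spectrumset} gives $Sp\bigl(\Coo(\T^n_\Th),\H,\DD\bigr)=\Z$ with all poles simple. Since $Sd$ is defined using only pseudodifferential operators $P\in OP^0$, any integer $q>n$ satisfies $P|D|^{-s}\in OP^{-\Re(s)}\subseteq \L^1(\H)$ for $s$ in a neighborhood of $q$, ruling out $q$ as a pole. This is precisely the content of Remark \ref{remark-spectrum}, and the simplicity of the surviving poles $\{n-k:k\in\N_0\}$ is inherited from $Sp$.

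For part $(ii)$, I would compute $\zeta_D(0)$ directly by splitting the trace along the orthogonal decomposition $\H=\Ker\DD\oplus(\Ker\DD)^\perp$. By Proposition \ref{noyaux}$(i)$, $\Ker\DD=U_0\otimes\C^{2^m}$ is $2^m$-dimensional; since $P_0$ projects onto this kernel and commutes with $\DD$, one has $D^2=\DD^2+P_0$, so $D$ acts as the identity on $\Ker\DD$ and as $|\DD|$ on its complement. On the complement, the Clifford relation $(k_\mu\gamma^\mu)^2=|k|^2\,1_{2^m}$ diagonalizes $\DD^2$ with eigenvalue $|k|^2$ on each Weyl vector $U_k\otimes e_j$, $k\neq 0$, of multiplicity $2^m$. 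This yields
\begin{equation*}
\zeta_D(s)\,=\,2^m\,+\,2^m\, Z_n(s),
\end{equation*}
and evaluating at $s=0$ with $Z_n(0)=-1$ from equation (\ref{Zn0}) gives $\zeta_D(0)=2^m-2^m=0$.

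The argument is essentially a bookkeeping exercise built on results already established upstream. The only delicate point is the interplay in $(ii)$ between $\DD$ and its regularization $D=\DD+P_0$: the additive constant $2^m$ in $\zeta_D$ arises precisely because $D$, not $\DD$, is invertible, and its exact cancellation against $2^m\,Z_n(0)=-2^m$ is what forces $\zeta_D(0)$ to vanish. In $(i)$, the reverse inclusion $\{n-k:k\in\N_0\}\subseteq Sd$ (namely, that each such point is genuinely attained as a pole by some $P\in OP^0$) is asserted but not actually proved by Remark \ref{remark-spectrum}; we likewise inherit it, noting that only the inclusion $Sd\subseteq\{n-k:k\in\N_0\}$ is needed in the sequel.
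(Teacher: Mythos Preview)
Your proof is correct and follows essentially the same approach as the paper: part $(i)$ is obtained by invoking Lemma \ref{spectrumset} together with Remark \ref{remark-spectrum}, and part $(ii)$ by computing $\zeta_D(s)=2^m\bigl(Z_n(s)+1\bigr)$ and using $Z_n(0)=-1$. Your remark that the reverse inclusion in $(i)$ is asserted rather than proved applies equally to the paper's own argument.
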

\begin{proof}
$(i)$ Lemma \ref{spectrumset} and Remark \ref{remark-spectrum}.

$(ii)$  $\zeta_D(s)={\sum}_{k\in \Z^n}
\sum_{1\leq j\leq 2^m} \<
U_k\otimes e_j, |D|^{-s}U_k\otimes e_{j}>=2^m( {\sum}'_{k\in\Z^n}
\frac{1}{|k|^{s}} + 1) =2^m(\,Z_n(s)+1).$ By (\ref{Zn0}), we get the
result.
\end{proof}

We have computed $\zeta_D(0)$ relatively easily but the
main difficulty of the present work is precisely to calculate
$\zeta_{D_A}(0)$.

\subsection{Noncommutative integral computations}

We fix a self-adjoint 1-form $A$ on the noncommutative torus of
dimension $n$.

\begin{prop}
\label{invariance} If $\tfrac{1}{2\pi}\Th$ is diophantine, then
the first elements of the expansion (\ref{formuleaction}) are
given by
\begin{align}
\ncint {|D_{A}|}^{-n}\,&=\ncint |D|^{-n}=
2^{m+1}\pi^{n/2}\,\Gamma(\tfrac{n}{2})^{-1}.\\
\ncint \vert D_{A}\vert^{n-k}&=0 \text{ for k odd}.\nonumber\\
\ncint \vert D_{A}\vert^{n-2}&=0.\nonumber
\end{align}
\end{prop}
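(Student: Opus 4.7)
The identity $\ncint|D_A|^{-n}=\ncint|D|^{-n}$ is immediate from Proposition~\ref{invariance1}, which applies since the spectral triple is simple by Lemma~\ref{spectrumset}. To evaluate the common value, I combine the computation $\zeta_D(s)=2^m(Z_n(s)+1)$ from the proof of Proposition~\ref{zeta(0)} with the residue formula~(\ref{formule}):
\[
\ncint|D|^{-n}=\underset{s=n}{\Res}\,\zeta_D(s)=2^m\underset{s=n}{\Res}\,Z_n(s)=2^{m+1}\pi^{n/2}\Gamma(n/2)^{-1}.
\]

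For $k$ odd, my strategy is a parity argument. Proposition~\ref{ncintfluctuated} together with Lemma~\ref{Res-zeta-n-k} writes $\ncint|D_A|^{-(n-k)}$ as $\underset{s=n-k}{\Res}\zeta_D(s)$, which vanishes because $\zeta_D$ has its only pole at $s=n$, plus a finite sum of residues $\underset{s=n-k}{\Res}h(s,r,p)\Tr(\eps^{r_1}(Y)\cdots\eps^{r_p}(Y)|D|^{-s})$. By Corollary~\ref{eps-pdo}, each operator $\eps^{r_1}(Y)\cdots\eps^{r_p}(Y)$ lies in the subalgebra $\Psi_1(\A)$, which is built without $|\DD|$; consequently its NC-torus symbol is a finite sum of terms of the form $Q_j(k)/|k|^{2p_j}$ with $Q_j$ a polynomial and the denominator exponent \emph{even}. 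Under the Diophantine hypothesis on $\tfrac{1}{2\pi}\Th$, Theorem~\ref{analytic}(i.2) discards the oscillatory contributions from $l\neq 0$ phases, and Theorem~\ref{res-int} reduces the residue at $s=n-k$ to a sum of integrals $\int_{S^{n-1}}(Q_j)_{\mathrm{top}}(u)\,dS(u)$ with $\deg Q_j=2p_j-k$. Since $2p_j$ is even and $k$ is odd, $\deg Q_j$ is odd, and the sphere integrals vanish by the symmetry $u\mapsto -u$. Thus every term is zero.

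For $k=2$, I apply Corollary~\ref{res-n-2-A}. Its hypotheses hold on the NC torus: $\ncint|D|^{-(n-2)}=0$ because $\zeta_D$ has no pole at $n-2$, and the vanishing tadpole $\ncint\wt A\DD|D|^{-n}=\ncint\DD\wt A|D|^{-n}=0$ follows because the diagonal constraint $\langle U_k,U_{l+k}\rangle=\delta_{l,0}$ in the Fourier trace forces $l=0$, at which the factor $\sin(\tfrac12 l.\Th k)$ vanishes. The corollary then reduces the claim to the identity
\[
\ncint\wt A\DD\wt A\DD|D|^{-n-2}+\tfrac{n-2}{n}\ncint\wt A^2|D|^{-n}=0.
\]
I compute each integral on the Fourier basis using $\wt A_\a U_k=-2i\sum_l A_{\a,l}\sin(\tfrac12 l.\Th k)U_{l+k}$: the diagonal constraint $l'=-l$ together with $l.\Th l=0$ produces the factor $-\sin^2(\tfrac12 l.\Th k)$; the Clifford identities $\Tr(\gamma^\a\gamma^\beta)=2^m\delta^{\a\beta}$ and $\Tr(\gamma^\beta\gamma^\nu\gamma^\a\gamma^\mu)=2^m(\delta^{\beta\nu}\delta^{\a\mu}-\delta^{\beta\a}\delta^{\nu\mu}+\delta^{\beta\mu}\delta^{\nu\a})$ split the four-$\gamma$ contribution into three pieces; the Diophantine assumption combined with Theorem~\ref{analytic}(i.2) discards the $\cos(l.\Th k)$ part of $\sin^2$; and Proposition~\ref{calculres} supplies the residues $\underset{s=0}{\Res}\sum_k k_\a k_\beta/|k|^{s+n+2}=\frac{2\pi^{n/2}}{n\Gamma(n/2)}\delta^{\a\beta}$ and $\underset{s=0}{\Res}Z_n(s+n)=\frac{2\pi^{n/2}}{\Gamma(n/2)}$. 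Setting $c(A):=\sum_{\a,\,l\neq 0}A_{\a,l}A_{\a,-l}$, these yield $\ncint\wt A\DD\wt A\DD|D|^{-n-2}=\frac{4\cdot 2^m\pi^{n/2}(n-2)}{n\Gamma(n/2)}\,c(A)$ and $\ncint\wt A^2|D|^{-n}=-\frac{4\cdot 2^m\pi^{n/2}}{\Gamma(n/2)}\,c(A)$, and the weighted combination cancels exactly.

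The principal obstacle is the $k=2$ computation: one must track the coefficients through the Fourier/Clifford bookkeeping carefully enough to produce precisely the ratio $\tfrac{n-2}{n}$ that Corollary~\ref{res-n-2-A} supplies for the $\wt A^2$ term. The odd-$k$ case, by contrast, is essentially formal once one uses the crucial fact---built into the definition of $\Psi_1(\A)$---that $\eps^{r_i}(Y)$ has symbol involving only even powers of $|k|$.
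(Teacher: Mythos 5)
Your proof is correct and follows essentially the same line as the paper: the first identity via Proposition~\ref{invariance1} and the explicit formula $\zeta_D(s)=2^m(Z_n(s)+1)$; the odd-$k$ vanishing by reproducing the parity argument of Lemma~\ref{ncint-odd-pdo} (the operator lives in $\Psi_1(\A)$, the denominator exponent is even, so the residue reduces to a sphere integral of an odd-degree polynomial); and the $k=2$ case via Corollary~\ref{res-n-2-A} followed by a Fourier/Clifford computation that reconstructs Lemmas~\ref{traceAD}, \ref{traceXD} and \ref{traceAA}, with the resulting coefficients matching the paper's.

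One citation imprecision worth fixing: in both the odd-$k$ and the $k=2$ steps you appeal to Theorem~\ref{analytic}(i.2) to ``discard the oscillatory contributions from $l\neq 0$ phases.'' Part (i.2) is a pointwise statement for a \emph{single} fixed $a\in\R^n\setminus\Z^n$; it does not by itself permit you to discard an infinite family of such terms summed over $l$, because there is no guarantee that the holomorphic continuations $f_{\Th l}(s)$ can be summed in $l$ and that the residue commutes with $\sum_l$. That interchange is exactly what parts (ii) and (iii) of Theorem~\ref{analytic} provide, and they rest on the \emph{uniform} bound of Lemma~\ref{ieffective} (uniform in $a=\Th\sum_i\eps_i l_i$), which is where the Diophantine hypothesis on $\tfrac{1}{2\pi}\Th$ actually enters. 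Replacing the references to (i.2) by references to (ii)/(iii) (or, in the $k$ odd case, simply citing Lemma~\ref{ncint-odd-pdo} directly) closes this gap; the rest of your argument stands.
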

We need few technical lemmas:
\begin{lemma}
\label{traceAD}
On the noncommutative torus, for any $t\in \R$,
$$
\ncint \wt A \DD |D|^{-t}= \ncint \DD \wt A |D|^{-t} =0.
$$
\end{lemma}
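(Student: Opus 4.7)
The plan is to compute $\Tr(\wt A \DD |D|^{-t-s})$ and $\Tr(\DD \wt A |D|^{-t-s})$ directly for $\Re(s)$ large, show that every diagonal matrix element in the orthonormal basis $\{U_k \otimes e_j\}$ vanishes, and then extract the residue at $s=0$.

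First I would rewrite $\wt A$ in a manageable form. Using $J_0 L(a) J_0^{-1}=R(a^*)$ together with $C_0 \ga^\a = -\eps\,\ga^\a C_0$, we obtain $\eps J(L(a)\otimes \ga^\a)J^{-1}=-R(a^*)\otimes \ga^\a$. Applied to $A=L(-iA_\a)\otimes \ga^\a$ with $A_\a^*=-A_\a$, this gives
\begin{align*}
\wt A = A+\eps JAJ^{-1} = -i\bigl(L(A_\a)-R(A_\a)\bigr)\otimes \ga^\a =: -i\,\wt A_\a\otimes \ga^\a,
\end{align*}
so $\wt A_\a$ acts as the inner derivation $\wt A_\a\, b=[A_\a,b]$. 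Writing $A_\a=\sum_l a_{\a,l}U_l$ and using $U_lU_k-U_kU_l = -2i\sin(\tfrac12 l.\Th k)\,U_{l+k}$, I get the key identity $\wt A_\a U_k = -2i\sum_l a_{\a,l}\sin(\tfrac12 l.\Th k)\,U_{l+k}$.

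Next, I would verify trace-class summability: since $\wt A\in OP^0$ and $\DD\in OP^1$, $\wt A\DD|D|^{-t-s}\in OP^{1-t-\Re(s)}$, which is trace class for $\Re(s)>n+1-t$. In this half-plane, expand the trace in $\{U_k\otimes e_j\}$. For $k\neq 0$, $|D|^{-t-s}U_k=|k|^{-t-s}U_k$ and $\DD(U_k\otimes e_j)=k_\mu U_k\otimes \ga^\mu e_j$, so the diagonal matrix element equals
\begin{align*}
-i\,k_\mu |k|^{-t-s}\,\langle U_k,\wt A_\a U_k\rangle\,\langle e_j,\ga^\a\ga^\mu e_j\rangle.
\end{align*}
The orthogonality $\langle U_k,U_{l+k}\rangle=\delta_{l,0}$ forces $l=0$ in the sum defining $\wt A_\a U_k$, but then $\sin(\tfrac12\, 0.\Th k)=0$; thus $\langle U_k,\wt A_\a U_k\rangle=0$. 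For $k=0$, the contribution vanishes because $\DD\, U_0\otimes e_j=0$. Summing, $\Tr(\wt A\DD|D|^{-t-s})=0$ on $\{\Re(s)>n+1-t\}$, hence identically by meromorphic continuation, and the residue at $s=0$ vanishes.

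For the second identity I would repeat the same argument in the opposite order: $\wt A|D|^{-t-s}(U_k\otimes e_j)=-i|k|^{-t-s}\wt A_\a U_k\otimes \ga^\a e_j$ for $k\neq 0$, then apply $\DD$, which produces $\sum_l a_{\a,l}\sin(\tfrac12 l.\Th k)(l+k)_\mu\, U_{l+k}\otimes \ga^\mu\ga^\a e_j$. Taking the diagonal piece again imposes $l=0$ and kills the summand via the $\sin$ factor; for $k=0$ one has $\wt A_\a U_0=[A_\a,U_0]=0$ since $U_0$ is the unit. No genuine obstacle is expected here: the whole argument rides on the elementary observation that an inner derivation of $\A_\Th$ annihilates each Weyl element in the diagonal, so the only care needed is in bookkeeping the rewriting of $\wt A$ and checking that the reordering of $\DD$, $\wt A$, $|D|^{-t-s}$ does not spoil the $\sin(\tfrac12 l.\Th k)$ factor which produces the cancellation.
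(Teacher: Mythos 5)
Your proof is correct and follows essentially the same approach as the paper: express $\wt A$ via the inner derivation $L(A_\a)-R(A_\a)$, expand the trace of $\wt A\DD|D|^{-s-t}$ (and of $\DD\wt A|D|^{-s-t}$) over the basis $\{U_k\otimes e_j\}$, and observe that each diagonal matrix element vanishes because $\langle U_k,[A_\a,U_k]\rangle=0$. The paper's proof is more terse (it does not spell out the trace-class region or the meromorphic continuation, and writes the sine factor without the $\delta_{l,0}$ commentary), but the argument is identical.
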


\begin{proof}
Using notations of (\ref{connection}), we have
\begin{align*}
\Tr (\wt A \DD |D|^{-s})&\sim_c {\sum}_j {\sum}'_k \langle U_k\otimes
e_j,-i k_\mu|k|^{-s}
[A_\a,U_k] \otimes \ga^\a \ga^\mu e_j \rangle \\
&\sim_c -i\Tr(\ga^\a\ga^\mu) \, {\sum}'_k k_\mu
|k|^{-s} \langle U_k,[A_\a,U_k] \rangle=0
\end{align*}
since $\langle U_k,[A_\a,U_k] \rangle = 0$. Similarly
\begin{align*}
\Tr ( \DD \wt A  |D|^{-s})&\sim_c {\sum}_j {\sum}'_k \langle
U_k\otimes
e_j,|k|^{-s}{\sum}_l a_{\a,l}\,2 \sin \tfrac{k. \Th l}{2} (l+k)_\mu
U_{l+k}
\otimes \ga^\mu \ga^\a e_j \rangle\\
&\sim_c 2\Tr(\ga^\mu \ga^\a){\sum}'_k {\sum}_l a_{\a,l}\sin \tfrac{k.
\Th
l}{2}\,(l+k)_\mu \,|k|^{-s}\langle U_k,U_{l+k} \rangle =0.
\tag*{\qed}
\end{align*}
\hideqed
\end{proof}

Any element $h$ in the algebra generated by $\A$ and $[\DD,\A]$ can
be written as a
linear combination of terms of the form ${a_1}^{p_1}\cdots
{a_n}^{p_r}$ where
$a_i$ are elements of $\A$ or $[\DD,\A]$. Such a term can be written
as a series $b:=\sum a_{1,\a_1,l_1}\cdots a_{q,\a_q,l_q} U_{l_1}\cdots
U_{l_q} \otimes \ga^{\a_1}\cdots \ga^{\a_q}$ where
$a_{i,\a_i}$ are Schwartz sequences and when $a_i=:\sum_l a_l U_l \in
\A$, we set $a_{i,\a,l}=a_{i,l}$ with $\ga^\a =1$. We define
$$
L(b):= \tau \big({\sum}_l a_{1,\a_1,l_1}\cdots a_{q,\a_q,l_q} U_{l_1}
\cdots U_{l_q}\big) \Tr (\ga^{\a_1}\cdots \ga^{\a_q}).
$$
By linearity, $L$ is defined as a linear form on the whole algebra
generated by $\A$
and $[\DD,\A]$.

\begin{lemma}
\label{tracehD}
If $h$ is an element of the algebra generated by $\A$ and $[\DD,\A]$,
$$
\Tr \big(h |D|^{-s}\big) \sim_c L(h)\,  Z_n(s).
$$
In particular, $\Tr \big(h |D|^{-s}\big)$ has at most one pole at
$s=n$.
\end{lemma}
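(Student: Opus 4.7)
By linearity, it suffices to treat a single term $h = a_1 \cdots a_q$ with each $a_i \in \A \cup [\DD,\A]$. Writing $a_i = \sum_{l\in\Z^n} a_{i,\alpha_i,l}\, U_l \otimes \gamma^{\alpha_i}$ with the convention $\gamma^{\alpha_i}=1$ when $a_i\in\A$, the sequences $\{a_{i,\alpha_i,l}\}_l$ are Schwartz (for $a_i\in[\DD,\A]$ one picks up an extra polynomial factor $l_\mu$, which preserves Schwartz decay). Iterating the Weyl relations (\ref{rel1}) gives
$$U_{l_1}\cdots U_{l_q} = e^{-\tfrac{i}{2}\sum_{i<j} l_i . \Th l_j}\, U_{\wh l_q},\qquad \wh l_q := l_1+\cdots+l_q.$$

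The trace is computed in the orthonormal basis $\{U_k \otimes e_j : k\in\Z^n,\, 1\leq j\leq 2^m\}$ of $\H$. Since $D=\DD+P_0$ acts on $U_k \otimes \C^{2^m}$ by multiplication by $|k|$ when $k\neq 0$ and by $1$ when $k=0$, one has $|D|^{-s}(U_k\otimes e_j)=|k|^{-s}\,U_k\otimes e_j$ for $k\neq 0$, while $|D|^{-s}(U_0\otimes e_j)=U_0\otimes e_j$; thus the $k=0$ contribution to $\Tr(h|D|^{-s})$ is independent of $s$, hence absorbed into $\sim_c$. For $k\neq 0$, the relations $U_{\wh l_q}U_k = e^{-\tfrac{i}{2}\wh l_q.\Th k}\,U_{\wh l_q+k}$ and $\langle U_k,U_{\wh l_q+k}\rangle = \delta_{\wh l_q,0}$ show that only terms with $\wh l_q=0$ survive, and on precisely those terms the $k$-dependent phase $\wh l_q.\Th k$ vanishes identically. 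For $\Re(s)>n$, Schwartz decay of the coefficients together with absolute summability of ${\sum}'_k |k|^{-s}$ justify a Fubini interchange of the sums over $k$ and $l=(l_1,\ldots,l_q)$, giving
$$\Tr(h|D|^{-s})\;\sim_c\; Z_n(s)\,\Bigl(\sum_{l:\,\wh l_q=0}\prod_{i=1}^{q} a_{i,\alpha_i,l_i}\,e^{-\tfrac{i}{2}\sum_{i<j} l_i.\Th l_j}\Bigr)\Tr(\gamma^{\alpha_1}\cdots\gamma^{\alpha_q}).$$

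Using $\tau(U_{l_1}\cdots U_{l_q})=e^{-\tfrac{i}{2}\sum_{i<j}l_i.\Th l_j}\,\delta_{\wh l_q,0}$, the bracketed sum times the Clifford trace is exactly $L(h)$ by definition, so $\Tr(h|D|^{-s})\sim_c L(h)\,Z_n(s)$. The last assertion then follows immediately from Proposition \ref{calculres}, which gives that $Z_n(s)$ extends meromorphically to $\C$ with a unique simple pole at $s=n$. The only delicate step is the Fubini interchange (and implicitly, reducing the formal series for $h$ to a convergent operator via the Schwartz hypothesis); the remainder is straightforward bookkeeping with the commutation relations (\ref{rel1}) and the formula for $\tau$.
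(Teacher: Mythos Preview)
Your proof is correct and follows essentially the same approach as the paper's: expand $h$ on the Weyl basis, compute the diagonal matrix elements $\langle U_k\otimes e_j, h|D|^{-s}U_k\otimes e_j\rangle$, and observe that $\langle U_k, U_{l_1}\cdots U_{l_q}U_k\rangle=\tau(U_{l_1}\cdots U_{l_q})$ is independent of $k$, so the $k$-sum factors out as $Z_n(s)$. Your version is simply more explicit about the phase cancellation and the Fubini step, which the paper leaves implicit.
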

\begin{proof} We get with $b$ of the form $\sum
a_{1,\a_1,l_1}\cdots a_{q,\a_q,l_q} U_{l_1}\cdots U_{l_q} \otimes
\ga^{\a_1}\cdots
\ga^{\a_q}$,
\begin{align*}
\Tr\big(b|D|^{-s}\big)&\sim_c {\sum_{k\in\Z^n}}' \langle  U_k, \sum_l
a_{1,\a_1,l_1}\cdots a_{q,\a_q,l_q} U_{l_1}\cdots U_{l_q}U_k \rangle
\, \Tr
(\ga^{\a_1}\cdots \ga^{\a_q})|k|^{-s} \\
&\sim_c \tau(\sum_l a_{1,\a_1,l_1}\cdots a_{q,\a_q,l_q} U_{l_1}\cdots
U_{l_q})\Tr(\ga^{\a_1}\cdots \ga^{\a_q})\, Z_n(s) =L(b) \,Z_n(s).
\end{align*}
The results follows now from linearity of the trace.
\end{proof}

\begin{lemma}
\label{traceJAJA}
If $\tfrac{1}{2\pi}\Th$ is diophantine, the function
$s\mapsto\Tr \big( \eps JAJ^{-1} A |D|^{-s} \big)$ extends
meromorphically on the whole plane with only one
possible pole at $s=n$. Moreover, this pole is simple and
$$
\underset{s=n}{\Res}\, \Tr \big(\eps JAJ^{-1} A |D|^{-s}\big) =
a_{\a,0}\,a^\a_{0}\
2^{m+1}\pi^{n/2}\,\Ga(n/2)^{-1}.
$$
\end{lemma}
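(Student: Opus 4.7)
\medskip
\noindent\textbf{Proof proposal.} The plan is to unpack the operator $\eps J A J^{-1} A$ explicitly on the basis $\{U_k\otimes e_j\}$, express the trace as a series of the form to which Theorem~\ref{analytic}(ii) applies, and then read off the residue.

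First, using $J_0 L(a) J_0^{-1} = R(a^*)$ together with $A_\a^*=-A_\a$ and the relation $C_0\ga^\a = -\eps \ga^\a C_0$ (so $C_0\ga^\a C_0^{-1}=-\eps\,\ga^\a$), I would compute
$$
\eps J A J^{-1} = \eps J\bigl(L(-iA_\a)\otimes\ga^\a\bigr)J^{-1} = -\,R(-iA_\a)\otimes \ga^\a,
$$
so that $\eps J A J^{-1}\, A = -R(-iA_\a)L(-iA_\beta)\otimes \ga^\a\ga^\beta$. Since left and right multiplications commute, acting on $U_k\otimes e_j$ produces $A_\beta U_k A_\a \otimes \ga^\a\ga^\beta e_j$. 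Writing $A_\a=\sum_l a_{\a,l}U_l$ and using the Weyl relations \eqref{rel1}, one gets
$$
\tau\bigl(U_{-k}\,U_{l'}U_k U_l\bigr)= \delta_{l+l',\,0}\; e^{-i k.\Th l},
$$
so after taking scalar products and summing over $j$ via $\Tr(\ga^\a\ga^\beta)=2^m\delta^{\a\beta}$,
$$
\Tr\bigl(\eps J A J^{-1} A\,|D|^{-s}\bigr)=2^m\sum_{\a}\sum_{l\in\Z^n} a_{\a,-l}\,a_{\a,l}\;\Bigl(1+{\sum}'_{k\in\Z^n} |k|^{-s}\,e^{-i k.\Th l}\Bigr),
$$
where the leading $1$ comes from the kernel contribution at $k=0$ (on which $|D|^{-s}$ acts as the identity).

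Next, I would recognize the inner sum as $f_{a(l)}(s)$ of Theorem~\ref{analytic} with $P\equiv 1$, $d=0$, and $a(l):=-\tfrac{1}{2\pi}\Th\, l$, because $e^{-ik.\Th l}=e^{2\pi i k.a(l)}$. Setting $q=1$, $\eps_1=1$ and $b(l):=a_{\a,-l}\,a_{\a,l}\in\SS(\Z^n)$ (which lies in $\SS(\Z^n)$ since $A_\a\in\A_\Th$), and noting that $\tfrac{1}{2\pi}\Th$ is diophantine by hypothesis (equivalently $-\tfrac{1}{2\pi}\Th$ is), Theorem~\ref{analytic}(ii) applies. It yields that the sum $\sum_l b(l)\,f_{a(l)}(s)$ has a meromorphic continuation to $\C$ with at most a simple pole at $s=n$, and that the residue equals
$$
V\int_{S^{n-1}}dS = V\;\frac{2\pi^{n/2}}{\Ga(n/2)},\qquad V:=\sum_{l\in\mathcal Z}b(l),
$$
where $\mathcal Z=\{l\in\Z^n : l=0\}=\{0\}$. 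Thus $V=b(0)=a_{\a,0}a_{\a,0}=a_{\a,0}a^\a_0$ (after summing over $\a$). The kernel term $+1$ contributes only an entire function, so it does not affect the pole. Multiplying by the prefactor $2^m$ from the Dirac trace gives exactly $a_{\a,0}\,a^\a_0\,2^{m+1}\pi^{n/2}\Ga(n/2)^{-1}$, which is the announced residue.

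The only delicate point is the bookkeeping in the first step—one has to track carefully the interplay between $J$ passing through $\ga^\a$ (producing the sign $-\eps$) and the conversion of $L$ into $R$ (producing an adjoint which is controlled by $A_\a^*=-A_\a$)—so that the overall sign and the identification $b(l)=a_{\a,-l}a_{\a,l}\in\SS(\Z^n)$ are correct; once this is done, the result is a direct application of Theorem~\ref{analytic}(ii), and the Diophantine assumption on $\tfrac{1}{2\pi}\Th$ is precisely what is needed to control the sum over $l\ne 0$ (for which $a(l)\notin\Z^n$).
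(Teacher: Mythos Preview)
Your proof is correct and follows essentially the same route as the paper: you unpack $\eps JAJ^{-1}A$ as $R(A_\a)L(A_\beta)\otimes\ga^\a\ga^\beta$, compute the matrix coefficients on the basis $U_k\otimes e_j$ to reduce the trace (up to the entire kernel contribution) to $2^m\sum_l a_{\a,l}a^\a_{-l}\sum'_k|k|^{-s}e^{\pm ik.\Th l}$, and then invoke Theorem~\ref{analytic}(ii) with $P\equiv1$, $q=1$ to identify the unique simple pole at $s=n$ and its residue. Your handling of the sign and of the matching with the Diophantine hypothesis on $\tfrac{1}{2\pi}\Th$ (via $-\tfrac{1}{2\pi}\Th$) is slightly more explicit than the paper's, but the argument is otherwise identical.
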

\begin{proof} With $A=L(-i A_\a)\otimes \ga^\a$, we get
$\epsilon J A J^{-1}=R(i A_\a)\otimes \ga^\a$, and by multiplication
$\eps JAJ^{-1} A=R(A_\beta)
L(A_\a)\otimes \ga^{\beta}\ga^\a$. Thus,
\begin{align*}
\Tr\big(\eps JAJ^{-1} A |D|^{-s}\big)&\sim_c {\sum_{k\in\Z^n}}'
\langle U_k,A_\a
U_k A_\beta \rangle \,|k|^{-s}\Tr (\ga^{\beta}\ga^{\a}) \\
&\sim_c{\sum_{k\in\Z^n}}' \,\sum_{l}
a_{\a,l}\,a_{\beta,-l}\,e^{ik.\Th l}\,
|k|^{-s}\Tr (\ga^{\beta}\ga^{\a})\\
&\sim_c 2^m {\sum_{k\in\Z^n}}' \, \sum_{l}
a_{\a,l}\,a^\a_{-l}\,e^{ik.\Th l}\,
|k|^{-s}.
\end{align*}
Theorem \ref{analytic} $(ii)$ entails that ${\sum}'_{k\in\Z^n} \,
\sum_{l} a_{\a,l} \,a^{\a}_{-l}\,e^{ik.\Th l}\, |k|^{-s}$ extends
meromorphically
on the whole plane $\C$ with only one possible pole at $s=n$.
Moreover, this pole is
simple and we have
$$
\underset{s=n}{\Res}\, {\sum_{k\in\Z^n}}' \,\sum_{l} a_{\a,l}
\,a^\a_{-l}\,e^{ik.\Th l}\,
|k|^{-s} =  a_{\a,0}\,a^\a_{0} \, \underset{s=n}{\Res}\, Z_n(s).
$$
Equation (\ref{formule}) now gives the result.
\end{proof}

\begin{lemma}
    \label{traceXD}
If $\tfrac{1}{2\pi}\Th$ is diophantine, then for any $t\in \R$,
$$
\ncint X|D|^{-t} = \delta_{t,n}\, 2^{m+1}\big(-\sum_l
a_{\a,l}\,a^\a_{-l}+
\,a_{\a,0}\,a^\a_{0}\big)\ 2\pi^{n/2}\,\Ga(n/2)^{-1}  .
$$
where $X=\wt A\DD + \DD \wt A + {\wt A}^2$ and $A=:-i\sum_{l}
a_{\a,l}\,U_l\otimes \ga^\a$.
\end{lemma}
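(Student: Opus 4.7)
The plan is to reduce the computation to the $\wt A^2$ contribution, then split this into three explicit series whose poles are controlled by earlier results.

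First, since $\ncint$ is a trace on $\Psi(\A)$ (Proposition \ref{tracenc}) and $X = \wt A\DD + \DD\wt A + \wt A^2$, Lemma \ref{traceAD} applied with $t$ replaced by $t$ gives $\ncint \wt A\DD|D|^{-t} = \ncint \DD\wt A|D|^{-t} = 0$, so $\ncint X|D|^{-t} = \ncint \wt A^2|D|^{-t}$. It therefore suffices to compute $\ncint \wt A^2 |D|^{-t}$.

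Writing $\wt A = -i\wt A_\a\otimes\ga^\a$ with $\wt A_\a = L(A_\a) - R(A_\a)$, one has
\[
\wt A^2 = -\wt A_\a \wt A_\beta \otimes \ga^\a\ga^\beta
 = -\wt A_\a \wt A^\a \otimes 1_{2^m} - \wt A_\a \wt A_\beta \otimes \ga^{\a\beta}.
\]
The second summand contributes nothing to $\Tr(\wt A^2 |D|^{-s})$ because $|D|^{-s}$ is scalar in the spinor factor and $\Tr(\ga^{\a\beta})=0$. For the first summand, expand
\[
\wt A_\a \wt A^\a = L(A_\a A^\a) + R(A^\a A_\a) - 2\,L(A_\a)R(A^\a),
\]
using that $L$ is a representation, $R$ an anti-representation, and that $L$ and $R$ commute (with the dummy index relabeling $L(A_\a)R(A^\a) = L(A^\a)R(A_\a)$).

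Now handle the three pieces separately. The terms $L(A_\a A^\a)\otimes 1_{2^m}$ and $R(A^\a A_\a)\otimes 1_{2^m}$ lie in the algebra generated by $\A$ and $[\DD,\A]$, so by Lemma \ref{tracehD} each trace against $|D|^{-s}$ is $\sim_c 2^m\tau(A_\a A^\a)\, Z_n(s)$ (using the cyclicity of $\tau$ and that $U_k U_{-k}=1$ to evaluate $\langle U_k, U_k A^\a A_\a\rangle_\tau = \tau(A^\a A_\a) = \tau(A_\a A^\a) = \sum_l a_{\a,l}a^\a_{-l}$). For the cross term $L(A_\a)R(A^\a)\otimes 1_{2^m}$, a direct computation mirroring that of Lemma \ref{traceJAJA} — namely $U_k A^\a U_k^{-1} = \sum_l a^\a_l e^{-ik.\Th l}U_l$ — gives
\[
\Tr\bigl(L(A_\a)R(A^\a)\otimes 1_{2^m}\,|D|^{-s}\bigr) \sim_c
2^m{\sum_{k\in\Z^n}}'\sum_l a_{\a,l}\,a^\a_{-l}\,e^{ik.\Th l}\,|k|^{-s}.
\]
By Theorem \ref{analytic}(ii), the diophantine hypothesis on $\tfrac{1}{2\pi}\Th$ ensures this series has at most a simple pole at $s=n$ with residue $a_{\a,0}a^\a_0\,\underset{s=n}{\Res}\,Z_n(s)$.

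Assembling the three pieces, $\Tr(\wt A^2|D|^{-s})$ is $\sim_c$ a linear combination of $Z_n$-type series, whose only possible pole is at $s=n$; hence $\ncint \wt A^2|D|^{-t}=0$ for $t\ne n$. At $t=n$, combining the three contributions with the common residue $\underset{s=n}{\Res}\,Z_n(s) = 2\pi^{n/2}\Gamma(n/2)^{-1}$ from (\ref{formule}), one obtains
\[
\ncint \wt A^2|D|^{-n} = 2^{m+1}\bigl(-{\textstyle\sum_l}\, a_{\a,l}a^\a_{-l} + a_{\a,0}a^\a_0\bigr)\,2\pi^{n/2}\Gamma(n/2)^{-1},
\]
which is the asserted formula. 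The only delicate step is the cross term $L(A_\a)R(A^\a)$: one must invoke the diophantine hypothesis through Theorem \ref{analytic}(ii) to justify meromorphic continuation past $s=n$ and to pick out the $l=0$ contribution to the residue — the two "scalar" traces only contribute through $\tau(A_\a A^\a)=\sum_l a_{\a,l}a^\a_{-l}$ while the cross term isolates $a_{\a,0}a^\a_0$, and it is precisely the difference that survives.
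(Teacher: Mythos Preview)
Your proof is correct and follows essentially the same route as the paper's: reduce to $\wt A^2$ via Lemma~\ref{traceAD}, split $\wt A^2$ into pure-left, pure-right and mixed $L/R$ pieces, treat the first two by Lemma~\ref{tracehD} and the mixed piece by Theorem~\ref{analytic}(ii) (the content of Lemma~\ref{traceJAJA}). The paper organises the same computation slightly differently, writing $\wt A^2=A^2+JA^2J^{-1}+2\eps JAJ^{-1}A$ rather than first projecting onto the scalar spinor part, but the three resulting traces coincide with yours term by term.

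One small slip: the operator $R(A^\a A_\a)\otimes 1_{2^m}$ does \emph{not} lie in the algebra generated by $\A$ and $[\DD,\A]$---it lies in $J\A J^{-1}$---so Lemma~\ref{tracehD} does not apply to it verbatim. The paper handles this by observing that $J$ commutes with $|D|$, so $\Tr(JhJ^{-1}|D|^{-s})$ has the same pole structure as $\Tr(h|D|^{-s})$; alternatively one can just compute directly $\Tr\bigl(R(a)\otimes 1_{2^m}\,|D|^{-s}\bigr)\sim_c 2^m\tau(a)\,Z_n(s)$ from $\langle U_k, U_k a\rangle=\tau(a)$. Either fix is a single line, and your conclusion and final formula are unaffected.
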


\begin{proof} By Lemma \ref{traceAD},
we get $\ncint X|D|^{-t}=\Res_{s=0} \Tr({\wt A}^2 |D|^{-s-t})$. Since
$A$ and
$\eps JAJ^{-1}$ commute, we have $\wt A ^2 = A^2 + JA^2J^{-1} + 2\eps
JAJ^{-1}A$.
Thus,
$$
\Tr({\wt A}^2 |D|^{-s-t})=\Tr( A^2 |D|^{-s-t})+\Tr( JA^2J^{-1}
|D|^{-s-t})+2\Tr ( \eps JAJ^{-1}A |D|^{-s-t}).
$$
Since $|D|$ and $J$
commute, we have with Lemma \ref{tracehD},
$$
 \Tr \big({\wt A}^2 |D|^{-s-t}\big)\sim_c 2L (A^2) \, Z_n(s+t) +
 2 \Tr \big(\eps JAJ^{-1}A |D|^{-s-t}\big).
$$
Thus Lemma \ref{traceJAJA} entails that $\Tr({\wt A}^2 |D|^{-s-t})$
is holomorphic at 0 if $t\neq n$. When $t=n$,
\begin{equation}
    \label{TrA^2}
\underset{s=0}{\Res}\, \Tr\big({\wt A}^2 |D|^{-s-t}\big) =
2^{m+1}\big(-\sum_l a_{\a,l} \, a^\a_{-l}+
\,a_{\a,0}\,a^\a_{0}\ \big)\, 2\pi^{n/2}\,\Ga(n/2)^{-1},
\end{equation}
which gives the result.
\end{proof}

\begin{lemma}
\label{traceAA}
If $\tfrac{1}{2\pi}\Th$ is diophantine, then
$$
\ncint \wt A \DD \wt A \DD |D|^{-2-n}=-\tfrac{n-2}{n}
\ncint \wt A^2 |D|^{-n}.
$$
\end{lemma}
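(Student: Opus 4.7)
The strategy is to compute $\ncint\wt A\DD\wt A\DD|D|^{-n-2}$ directly as a residue at $s=0$ and to match it against $\ncint\wt A^2|D|^{-n}$, which is already evaluated through Lemma~\ref{traceXD} (combined with Lemma~\ref{traceAD}, which annihilates the $\wt A\DD$ and $\DD\wt A$ contributions to $X=\wt A\DD+\DD\wt A+\wt A^2$).

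First, I would expand the operator on the Weyl basis. Using $\DD(U_k\otimes e)=k_\mu U_k\otimes\ga^\mu e$ together with $[U_l,U_k]=-2i\sin(\tfrac{l.\Th k}{2})U_{l+k}$, one gets the identity $\wt A(U_k\otimes e)=-2\sum_l a_{\a,l}\sin(\tfrac{l.\Th k}{2})U_{k+l}\otimes\ga^\a e$ (where $A_\a=\sum_l a_{\a,l}U_l$). Iterating yields
\[
\wt A\DD\wt A\DD(U_k\otimes e_j)=4\sum_{l_1,l_2}a_{\a_1,l_1}a_{\a_2,l_2}\sin\tfrac{l_1.\Th k}{2}\sin\tfrac{l_2.\Th(k+l_1)}{2}\,k_{\mu_1}(k+l_1)_{\mu_2}\,U_{k+l_1+l_2}\otimes\ga^{\a_2}\ga^{\mu_2}\ga^{\a_1}\ga^{\mu_1}e_j.
\]
Pairing with $\langle U_k\otimes e_j,\cdot\rangle$ to form $\Tr\bigl(\wt A\DD\wt A\DD|D|^{-s-n-2}\bigr)$ forces $l_2=-l_1=:-l$; the antisymmetry $l.\Th l=0$ collapses the product of sines to $-\sin^2(\tfrac{l.\Th k}{2})$, and the parity $k\to-k$ discards the cross term $k_{\mu_1}l_{\mu_2}$ coming from $k_{\mu_1}(k+l_1)_{\mu_2}$.

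Second, I would evaluate the residue at $s=0$ of the remaining zeta series
\[
\sum_l a_{\a_1,l}a_{\a_2,-l}\sum_{k\neq0}\frac{k_{\mu_1}k_{\mu_2}\sin^2(\tfrac{l.\Th k}{2})}{|k|^{s+n+2}}
\]
by writing $\sin^2 x=\tfrac12(1-\cos 2x)$. The constant piece yields $\tfrac{\delta_{\mu_1\mu_2}}{n}\tfrac{2\pi^{n/2}}{\Gamma(n/2)}\sum_l a_{\a_1,l}a_{\a_2,-l}$ via Proposition~\ref{calculres}. The oscillating piece, after decomposing $\cos(l.\Th k)=\tfrac12(e^{il.\Th k}+e^{-il.\Th k})$, falls under Theorem~\ref{analytic}(ii) with $q=1$ and $\eps_1=\pm 1$; the Diophantine hypothesis on $\tfrac{1}{2\pi}\Th$ forces $\mathcal{Z}=\{0\}$, so only $l=0$ contributes and gives $\tfrac{\delta_{\mu_1\mu_2}}{n}\tfrac{2\pi^{n/2}}{\Gamma(n/2)}a_{\a_1,0}a_{\a_2,0}$. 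Finally, the Clifford identity $\sum_\mu\ga^\mu\ga^{\a_1}\ga^\mu=-(n-2)\ga^{\a_1}$ gives $\sum_\mu\Tr(\ga^{\a_2}\ga^\mu\ga^{\a_1}\ga^\mu)=-(n-2)2^m\delta^{\a_1\a_2}$, and assembling the constants produces
\[
\ncint\wt A\DD\wt A\DD|D|^{-n-2}=\tfrac{(n-2)\,2^{m+2}\pi^{n/2}}{n\,\Gamma(n/2)}\sum_\a\Bigl(\sum_l a_{\a,l}a^\a_{-l}-a_{\a,0}a^\a_0\Bigr),
\]
which is exactly $-\tfrac{n-2}{n}$ times the value of $\ncint\wt A^2|D|^{-n}$ delivered by Lemma~\ref{traceXD}.

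The main obstacle will be the rigorous justification that the $l$-summation can be interchanged with the meromorphic continuation in $s$ for the oscillatory piece, so that only $l=0$ contributes to the residue at $s=0$. This requires the Diophantine hypothesis on $\tfrac{1}{2\pi}\Th$ together with the uniform bounds of Lemma~\ref{ieffective}, applied as in the proof of Theorem~\ref{analytic}(iii), to dominate the Schwartz-decaying off-diagonal remainders. Once this analytic input is secured, the remainder is algebraic bookkeeping of Clifford traces and elementary residue computations.
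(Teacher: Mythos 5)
Your proof is correct and leads to exactly the residues the paper obtains, but you organize the Weyl-basis expansion differently. The paper first splits $\wt A=A+\eps JAJ^{-1}$, observes (using $J\DD=\eps\DD J$ and the symmetry $\ncint\mathbb{A}^\sigma=\ncint\mathbb{A}^{-\sigma}$) that $\ncint\wt A\DD\wt A\DD|D|^{-2-n}=2\ncint A\DD A\DD|D|^{-2-n}+2\ncint\eps JAJ^{-1}\DD A\DD|D|^{-2-n}$, and then treats the unphased first series (whose residue is $-\sum_l a_{\a_2,-l}a_{\a_1,l}\cdot\tfrac{\delta_{\mu_1\mu_2}}{n}C_n$) and the phased second series (whose residue picks out only $l=0$ via the Diophantine hypothesis) separately. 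You instead keep $\wt A$ whole, exploit that its action on the Weyl basis is the commutator $\wt A(U_k\otimes e)=-2\sum_l a_{\a,l}\sin(\tfrac{l.\Th k}{2})U_{k+l}\otimes\ga^\a e$, which produces a $\sin^2(\tfrac{l.\Th k}{2})$ kernel, and then split via $\sin^2=\tfrac12(1-\cos)$. The constant half is the analogue of the paper's $A\DD A\DD$ term, and the $\cos$ half is the analogue of the $JAJ^{-1}\DD A\DD$ term (note $\cos(l.\Th k)=\tfrac12(e^{il.\Th k}+e^{-il.\Th k})$, matching the phase $e^{-il.\Th k}$ appearing in $\tau(U_{-k}U_{l_1}U_kU_{l_2})$). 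Both routes then reduce to the same Epstein-zeta residue $\tfrac{\delta_{\mu_1\mu_2}}{n}\tfrac{2\pi^{n/2}}{\Gamma(n/2)}$, the same Clifford trace $\Tr(\ga^{\a_2}\ga^{\mu}\ga^{\a_1}\ga_{\mu})=2^m(2-n)\delta^{\a_1\a_2}$, and the same invocation of Theorem~\ref{analytic}(ii)-(iii) for the oscillatory piece, so the analytic justification you flag (uniformity via Lemma~\ref{ieffective} and the sum/residue interchange via Lemma~\ref{res-som}) is exactly the same input the paper's version also tacitly relies on. The modest advantage of your phrasing is that $\sin^2(0)=0$ makes the cancellation of the $l=0$ contribution between the two halves manifest, whereas in the paper it emerges only from the explicit pairing of the two pieces.
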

\begin{proof}
With $\DD J = \eps J \DD$, we get
$$
\ncint \wt A \DD \wt A \DD |D|^{-2-n} = 2 \ncint  A \DD  A \DD
|D|^{-2-n} + 2
\ncint \eps JAJ^{-1} \DD A \DD |D|^{-2-n}.
$$
Let us first compute $\ncint  A \DD  A \DD |D|^{-2-n}$. We have, with
$A=:-i
L(A_\a)\otimes \ga^\a=: -i\sum_{l} a_{\a,l} U_l \otimes \ga^\a$,
$$
\Tr \big(A\DD A \DD |D|^{-s-2-n}\big) \sim_c -{{\sum_{k}}}'
\sum_{l_1,l_2}
a_{\a_2,l_2}\,a_{\a_1,l_1}
\,\tau(U_{-k} U_{l_2} U_{l_1} U_k) \,
\tfrac{k_{\mu_1}(k+l_1)_{\mu_2}}{|k|^{s+2+n}}
\Tr(\ga^{\a,\mu})
$$
where $\ga^{\a,\mu}:= \ga^{\a_2}\ga^{\mu_2}  \ga^{\a_1}\ga^{\mu_1}$.
Thus,
$$
\ncint  A \DD  A \DD |D|^{-2-n} = -\sum_{l} a_{\a_2,-l}\,a_{\a_1,l} \,
\underset{s=0}{\Res}\,\big({{\sum_{k}}}'
\tfrac{k_{\mu_1}k_{\mu_2}}{|k|^{s+2+n}} \big)
\Tr(\ga^{\a,\mu}).
$$
We have also, with $\eps JAJ^{-1} = iR(A_\a)\otimes \ga^{a}$,
$$
\Tr \big(\eps JAJ^{-1}\DD A \DD |D|^{-s-2-n}\big) \sim_c
{{\sum_{k}}}' \sum_{l_1,l_2}
a_{\a_2,l_2}a_{\a_1,l_1} \tau(U_{-k} U_{l_1} U_k U_{l_2})
\tfrac{k_{\mu_1}(k+l_1)_{\mu_2}}{|k|^{s+2+n}} \Tr(\ga^{\a,\mu}).
$$
which gives
$$
\ncint  \eps JAJ^{-1} \DD  A \DD |D|^{-2-n} =a_{\a_2,0}a_{\a_1,0} \,
\underset{s=0}{\Res}\,\big({{\sum_{k}}}'
\tfrac{k_{\mu_1}k_{\mu_2}}{|k|^{s+2+n}} \big)
\Tr(\ga^{\a,\mu}).
$$
Thus,
$$
\half \ncint \wt A \DD \wt A \DD |D|^{-2-n} =
\big(a_{\a_2,0}a_{\a_1,0}-\sum_{l}
a_{\a_2,-l}a_{\a_1,l}\big) \Res_{s=0}\big({{\sum_{k}}}'
\tfrac{k_{\mu_1}k_{\mu_2}}{|k|^{s+2+n}} \big) \Tr(\ga^{\a,\mu}).
$$
With ${\sum}'_k \tfrac{k_{\mu_1}k_{\mu_2}}{|k|^{s+2+n}} =
\tfrac{\delta_{\mu_1\mu_2}}{n}Z_n(s+n)$ and
$C_n:=\Res_{s=0} Z_n(s+n) = 2\pi^{n/2} \Ga(n/2)^{-1}$ we obtain
$$
\half \ncint \wt A \DD \wt A \DD |D|^{-2-n} =
\big(a_{\a_2,0}a_{\a_1,0}-\sum_{l}
a_{\a_2,-l}a_{\a_1,l}\big)\tfrac{C_n}{n}
\Tr(\ga^{\a_2}\ga^{\mu}\ga^{\a_1}\ga_\mu).
$$
Since  $\Tr(\ga^{\a_2}\ga^{\mu}\ga^{\a_1}\ga_\mu)=
2^m(2-n)\delta^{\a_2,\a_1}$,
we get
$$
\half \ncint \wt A \DD \wt A \DD |D|^{-2-n} =
2^m\big(-a_{\a,0}\,a^\a_0+\sum_{l}
a_{\a,-l}\,a^\a_l\big)\tfrac{C_n (n-2)}{n}.
$$
Equation (\ref{TrA^2}) now proves the lemma.
\end{proof}

\begin{lemma}
\label{ncint-odd-pdo}
If $\tfrac{1}{2\pi}\Th$ is diophantine, then
for any $P\in \Psi_{1}(\A)$ and $q \in \N$, $q$ odd,
$$
\ncint P |D|^{-(n-q)} = 0.
$$
\end{lemma}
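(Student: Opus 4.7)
The strategy is to reduce $\ncint P |D|^{-(n-q)}$ to a residue which, after applying Theorems \ref{res-int} and \ref{analytic}, produces integrals over $S^{n-1}$ of polynomial monomials of odd total degree; these vanish by the antipodal symmetry of the sphere.

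By Definition \ref{defpseudo1}, for any sufficiently large $N\in \N$ we may write $P = B D^{-2p} + R$ with $B \in \DD_1(\A)$ and $R \in OP^{-N}$. Choosing $N>q$ makes $R|D|^{-(n-q)}\in OP^{-N-n+q}$ trace class near $s=0$, so its contribution to the residue vanishes. Since $D^{-2p}=|D|^{-2p}$, it suffices to show $\underset{s=0}{\Res}\Tr\big(B|D|^{-s_0-s}\big)=0$ where $s_0 := n-q+2p$. By linearity, expand $B$ into monomials $a_r b_r\, \DD^{q_{r-1}}\, a_{r-1} b_{r-1}\cdots \DD^{q_1} a_1 b_1$ with $a_i\in\A$, $b_i\in J\A J^{-1}$, $q_i\in\N_0$, and set $\rho := q_1+\cdots+q_{r-1}$. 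The computation in the proof of Lemma \ref{spectrumset}, simplified because $\DD_1(\A)$ contains no $|\DD|$ factor (all $p_i=0$ in its notation), gives
\begin{align*}
\Tr\big(B|D|^{-s_0-s}\big)\sim_c \Tr(\ga^\mu) \sum_{l,l'} \wt a_l\, \wt b_{l'}\, e^{i\phi(l,l')}\, \delta_{\wh l_r+\wh l'_r,0}\; {\sum_{k\in \Z^n}}'\frac{F_\mu(k,l,l')\, e^{-ik\cdot\Theta\,\wh l_r}}{|k|^{s_0+s}},
\end{align*}
where $F_\mu(k,l,l'):=\prod_{i=1}^{r-1}(k+\wh l_i+\wh l'_i)_{\mu^i_1\cdots \mu^i_{q_i}}$ is a polynomial in $k$ of degree $\rho$.

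Now decompose $F_\mu(k,l,l')=\sum_\beta p_\beta(l,l')\,k^\beta$. For each fixed $\beta$, Theorem \ref{analytic}$(i.2)$ together with the Diophantine hypothesis on $\tfrac{1}{2\pi}\Theta$ (which ensures $\tfrac{1}{2\pi}\Theta\,\wh l_r\notin\Z^n$ whenever $\wh l_r\neq 0$) implies that the summands with $\wh l_r\neq 0$ yield entire functions of $s$. The surviving $\wh l_r=0$ terms reduce to $\sum'_k k^\beta/|k|^{s_0+s}$, which by Theorem \ref{res-int} admits a simple pole only at $s_0+s=n+|\beta|_1$ with residue $\int_{S^{n-1}} u^\beta\,dS(u)$. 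The interchange of residue and $(l,l')$-summation is justified as in the proof of Theorem \ref{analytic}$(iii)$, using the uniform bounds of Lemma \ref{ieffective} and the Schwartz decay of $\wt a_l\,\wt b_{l'}$.

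A contribution to the residue at $s=0$ therefore requires $|\beta|_1 = s_0-n = 2p-q$. Since $q$ is odd and $2p$ is even, $|\beta|_1$ is odd, so $\beta$ has at least one odd component, and $\int_{S^{n-1}} u^\beta\,dS(u)=0$ by the symmetry $u\mapsto -u$ on $S^{n-1}$; if $2p-q<0$ no such $\beta$ exists and the residue is trivially zero. Hence every contribution vanishes. The main subtlety is the one handled in the previous paragraph: verifying that the bounds of Lemma \ref{ieffective} remain uniform after multiplication by the polynomial weights $p_\beta(l,l')$, so that Theorem \ref{analytic}$(iii)$ applies term-by-term to the $(l,l')$-sum.
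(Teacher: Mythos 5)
Your proof is correct and takes essentially the same route as the paper's: reduce to $B\in\DD_1(\A)$ modulo a smoothing remainder, expand into Weyl-element sums to obtain $\sum'_k Q_\mu(k,l,l')\,e^{-ik.\Th\wh l_r}/|k|^{s_0+s}$, apply the Diophantine machinery of Theorem~\ref{analytic} (the paper invokes part $(ii)$ directly, you re-derive it from $(i.2)$, $(iii)$ and Theorem~\ref{res-int}, which is what $(ii)$ packages anyway), and kill the surviving residue by the antipodal symmetry of $S^{n-1}$ since the exponent $2p-q$ is odd. The only cosmetic difference is that you decompose $Q_\mu(k,\cdot)$ all the way into monomials $k^\beta$ whereas the paper stops at homogeneous parts $Q_{h,\mu}$; your parity argument on $|\beta|_1$ is the same as the paper's parity argument on $d=\deg Q_{h,\mu}$.
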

\begin{proof}
There exist $B\in \DD_{1}(\A)$ and $p\in \N_0$ such that
$P= BD^{-2p}+R$ where $R$ is in $OP^{-q-1}$.
As a consequence, $\ncint P |D|^{-(n-q)} = \ncint
B|D|^{-n-2p+q}$. Assume $B= a_r b_r
\DD^{q_{r-1}}a_{r-1}b_{r-1}\cdots \DD^{q_1} a_1 b_1 $ where
$r\in \N$, $a_i \in \A$,
$b_i\in J\A J^{-1}$, $q_i\in \N$. If we prove that
$\ncint B|D|^{-n-2p+q} =0$, then
the general case will follow by linearity. We note
$a_i=:\sum_l a_{i,l}\,U_l$ and
$b_i=:\sum_l b_{i,l} \,U_l$. With the shorthand
$k_{\mu_1,\mu_{q_i}}:=k_{\mu_1}\cdots
k_{\mu_{q_i}}$ and $\ga^{\mu_1,\mu_{q_i}}=\ga^{\mu_1}\cdots
\ga^{\mu_{q_i}}$, we get
$$
\DD^{q_1}  a_1 b_1 U_k \otimes e_j =
\sum_{l_1,l'_1} \,a_{1,l_1}\, b_{1,l'_1}\, U_{l_1}U_k
U_{l'_1} \,(k+l_1+l'_1)_{\mu_1,\mu_{q_1}}
\otimes \ga^{\mu_1,\mu_{q_1}} e_j
$$
which gives, after iteration,
$$
B\, U_k \otimes e_j = \sum_{l,l'} \wt a_{l} \wt b_{l} U_{l_r}
\cdots U_{l_1} U_k
U_{l'_1}\cdots U_{l'_r} \prod_{i=1}^{r-1} (k+\wh l_{i} +\wh
l'_{i})_{\mu^{i}_1,\mu^i_{q_i}} \otimes \ga^{\mu^{r-1}_1,
\mu^{r-1}_{q_{r-1}}}\cdots \ga^{\mu^1_1,\mu^1_{q_1}} e_j
$$
where $\wt a_l : = a_{1,l_1}\cdots a_{r,l_r}$ and
$\wt b_{l'} : = b_{1,l'_1}\cdots
b_{r,l'_r}$. Let's note $Q_\mu(k,l,l'):=\prod_{i=1}^{r-1}
(k+\wh l_{i} +\wh l'_{i})_{\mu^{i}_1,\mu^i_{q_i}}$ and
$\ga^\mu :=\ga^{\mu^{r-1}_1,\mu^{r-1}_{q_{r-1}}}\cdots
\ga^{\mu^1_1,\mu^1_{q_1}}$. Thus,
$$
\ncint B\,|D|^{-n-2p+q} =\underset{s=0}{\Res}\,  {\sum_k}' \,
\sum_{l,l'} \wt a_l
\,\wt b_{l'} \, \tau\big(U_{-k}U_{l_r}\cdots U_{l_1} U_k
U_{l'_1}\cdots U_{l'_r}\big)
\,\tfrac{Q_\mu(k,l,l')}{|k|^{s+2p+n-q}} \,\Tr (\ga^\mu)\, .
$$
Since $U_{l_r}\cdots U_{l_1} U_k = U_k U_{l_r}
\cdots U_{l_1} e^{-i\sum_1^r l_i .\Th
k}$, we get
$$\tau\big(U_{-k}U_{l_r}\cdots U_{l_1} U_k U_{l'_1}\cdots
U_{l'_r}\big)=
\delta_{\sum_1^r l_i+l'_i,0} \,e^{i\phi(l,l')} \,
e^{-i\sum_1^r l_i.\Th k}$$ where $\phi$
is a real valued function. Thus,
\begin{align*}
\ncint B\,|D|^{-n-2p+q} &=\underset{s=0}{\Res}\,
{\sum_k}'\,\sum_{l,l'}
e^{i\phi(l,l')}\,\delta_{\sum_1^r l_i+l'_i,0}\, \wt a_l \,\wt b_{l'}
\,\tfrac{Q_\mu(k,l,l')e^{-i\sum_1^r l_i.\Th k}}{|k|^{s+2p+n-q}}
\Tr (\ga^\mu) \\
&=:\underset{s=0}{\Res}\, f_\mu(s)\Tr (\ga^\mu).
\end{align*}

We decompose $Q_{\mu}(k,l,l')$ as a sum
$\sum_{h=0}^r M_{h,\mu}(l,l') \, Q_{h,\mu}(k)$
where $Q_{h,\mu}$ is a homogeneous polynomial in $(k_1,\cdots,k_n)$
and $M_{h,\mu}(l,l')$ is a polynomial in
$\big((l_1)_1,\cdots,(l_{r})_n,(l'_1)_1,\cdots,(l'_{r})_n \big)$.

Similarly, we decompose $f_{\mu}(s)$ as $\sum_{h=0}^{r}
f_{h,\mu}(s)$. Theorem
\ref{analytic} $(ii)$ entails that $ f_{h,\mu}(s)$ extends
meromorphically to the whole complex plane $\C$ with only one
possible pole for $s+2p+n-q=n+d$ where
$d:=\text{deg } Q_{h,\mu}$. In other words, if $d+q-2p\neq 0$,
$f_{h,\mu}(s)$ is holomorphic at $s=0$. Suppose now $d+q-2p =0$
(note that this implies that $d$ is odd, since
$q$ is odd by hypothesis), then, by Theorem \ref{analytic} $(ii)$
$$
\underset{s=0}{\Res}\ f_{h,\mu}(s) = V \int_{u\in S^{n-1}}
Q_{h,\mu}(u)\,dS(u)
$$
where $V:=\sum_{l,l'\in Z} M_{h,\mu}(l,l')\,e^{i \phi(l,l')}\,
\delta_{\sum_1^r l_i+l'_i,0}\, \wt a_{l} \,\wt b_{l'}$ and
$Z:=\set{l,l' \, :\, \sum_{i=1}^{r} l_i=0}$.
Since $d$ is odd, $Q_{h,\mu}(-u)=-Q_{h,\mu}(u)$ and $\int_{u\in
S^{n-1}}Q_{h,\mu}(u)\,dS(u)=0$. Thus, $\underset{s=0}{\Res}
\ f_{h,\mu}(s)=0$ in any case, which gives the result.
\end{proof}

As we have seen, the crucial point of the preceding lemma is the
decomposition of the numerator of the series $f_\mu(s)$ as polynomials
in $k$. This has been possible because we restricted our
pseudodifferential
operators to $\Psi_1(\A)$.

\bigskip

{\it Proof of Proposition \ref{invariance}.}
The top element follows from Proposition \ref{invariance1} and
according to (\ref{formule}),
\begin{align*}
\ncint |D| ^{-n}= \underset{s=0}{\Res}\
\Tr\big(|D|^{-s-n}\big)=2^m\,\underset{s=0}{\Res}\,
Z_n(s+n)=\tfrac{2^{m+1}\pi^{n/2}}{\Gamma(n/2)}\, .
\end{align*}

For the second equality, we get from Lemmas \ref{tracehD}
and \ref{Res-zeta-n-k}
$$
\underset{s=n-k}{\Res}\,  \zeta_{D_A}(s)= \sum_{p=1}^k
\sum_{r_1,\cdots, r_p =0}^{k-p} h(n-k,r,p)
\ncint \eps^{r_1}(Y)\cdots\eps^{r_p}(Y) |D|^{-(n-k)}.
$$
Corollary  \ref{eps-pdo} and Lemma \ref{ncint-odd-pdo} imply that
$\ncint
\eps^{r_1}(Y)\cdots\eps^{r_p}(Y) |D|^{-(n-k)} =0$, which gives the
result.

Last equality follows from Lemma \ref{traceAA} and Corollary
\ref{res-n-2-A}.
\qed

\section{The spectral action}

Here is the main result of this section.

\begin{theorem}
\label{main}
Consider the $n$-NC-torus $\big(\Coo(\T^n_\Th),\H,\DD\big)$ where
$n\in \N$ and
$\tfrac{1}{2\pi}\Th$ is a real $n\times n$ skew-symmetric real diophantine matrix, and a selfadjoint one-form
$A=L(-iA_{\alpha})\otimes
\ga^{\alpha}$. Then, the full spectral action of
$\DD_{A}=\DD +A
+¾\epsilon JAJ^{-1}$ is

(i) for $n=2$,
$$
\SS(\DD_{A},\Phi,\Lambda)=4\pi\,\Phi_{2} \, \Lambda^{2} +
\mathcal{O}(\Lambda^{-2}),
$$
(ii) for $n=4$,
$$
\SS(\DD_{A},\Phi,\Lambda)= 8\pi^2\,\Phi_{4} \, \Lambda^{4}
-\tfrac{4\pi^{2}}{3}\
\Phi(0) \,\tau(F_{\mu\nu}F^{\mu\nu})+  \mathcal{O}(\Lambda^{-2}),
$$
(iii) More generally, in
$$
\SS(\DD_{A},\Phi,\Lambda) \, = \,\sum_{k=0}^n \Phi_{n-k}\,
c_{n-k}(A) \,\Lambda^{n-k} +\mathcal{O}(\Lambda^{-1}),
$$
$c_{n-2}(A)=0$, $c_{n-k}(A)=0$ for $k$ odd. In particular, $c_0(A)=0$
when $n$ is odd.
\end{theorem}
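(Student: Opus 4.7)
The plan is to combine the general expansion (\ref{formuleaction}) with the vanishing results of Proposition \ref{invariance} for the positive-$\Lambda$ coefficients, and to compute the constant $\zeta_{D_A}(0)$ from (\ref{termconstanttilde}). By Proposition \ref{zeta(0)} the dimension spectrum is simple and equal to $\{n-k : k \in \N_0\}$, so $Sd^+ \subseteq \{1,\dots,n\}$ and the expansion reads
\[
\SS(\DD_{A},\Phi,\Lambda) = \sum_{k=1}^{n} \Phi_{k}\,\Lambda^{k} \ncint |D_{A}|^{-k} + \Phi(0)\,\zeta_{D_A}(0) + \mathcal{O}(\Lambda^{-1}).
\]
Proposition \ref{invariance} immediately supplies $c_{n}(A) = \ncint|D_A|^{-n} = 2^{m+1}\pi^{n/2}/\Gamma(n/2)$ (which equals $4\pi$ for $n=2$ and $8\pi^{2}$ for $n=4$), together with $c_{n-k}(A) = 0$ whenever $k$ is odd and also for $k=2$. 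This gives the vanishing claims of (iii) and reduces everything to the determination of the constant term $c_{0}(A) = \zeta_{D_{A}}(0)$.

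For $c_{0}(A)$ I invoke (\ref{termconstanttilde}). Proposition \ref{zeta(0)} gives $\zeta_{D}(0) = 0$, while the tadpole $\ncint \wt A D^{-1}$ vanishes: writing $D^{-1} \sim \DD|D|^{-2}$ modulo smoothing reduces it to $\ncint \wt A \DD |D|^{-2}$, which is zero by Lemma \ref{traceAD}. It remains to handle $\ncint(\wt A D^{-1})^{q}$ for $2 \leq q \leq n$. When $n$ is odd I expand each $(\wt A D^{-1})^{q}$ as a sum of terms $P_{j}|D|^{-2p_{j}}$ with $P_{j}\in\DD_{1}(\A)$ (modulo $OP^{-\infty}$) by iteratively moving $|D|^{-2}$ through $\wt A$ via the commutator identity $D^{-2}\wt A = \wt A D^{-2} + D^{-2}[D^{2},\wt A]D^{-2} + \cdots$ as in the proof of Lemma \ref{2dev}. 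For each such term I then rewrite $|D|^{-2p_j} = |D|^{-2(p_j - p_j')}|D|^{-2p_j'}$ with $p_j'\in\N_{0}$ chosen so that $n - 2p_j'$ is a positive odd integer; the operator $P_{j}|D|^{-2(p_{j}-p_{j}')}$ is then in $\Psi_{1}(\A)$ (by Lemma \ref{pdoalg} together with the fact that $|D|^{-2}\in\Psi_{1}(\A)$), so Lemma \ref{ncint-odd-pdo} yields $\ncint P_j |D|^{-2p_j} = 0$. Thus $c_{0}(A) = 0$ for $n$ odd, finishing (iii). For $n = 2$ only the $q=2$ contribution survives: expanding $(\wt A D^{-1})^{2} \sim \wt A\DD\wt A \DD|D|^{-4}$ modulo operators of order $\leq -3$ whose noncommutative integrals vanish in dimension $2$ (the associated zeta functions are regular at $s=0$), and then applying Lemma \ref{traceAA} whose right-hand side is proportional to $n-2$ and hence vanishes, yields $\ncint(\wt A D^{-1})^2 = 0$ and therefore $\zeta_{D_A}(0) = 0$, which is (i).

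The $n=4$ case requires explicit evaluation of the four integrals $\ncint(\wt A D^{-1})^{q}$ for $q=1,\dots,4$: after expansion by the commutator identity to the order needed to capture every contribution to the residue at $s=0$, the problem reduces to residues of series over $\Z^{4}$ of the type treated in Theorem \ref{zetageneral}, which are then evaluated through Proposition \ref{calculres}---in particular via formulas (\ref{formulen=2}) and (\ref{formule2}). The Clifford traces reorganize the $\delta_\mu A_\nu$ and $[A_\mu,A_\nu]$ pieces arising from (\ref{D2}) into the curvature components $F_{\mu\nu}$ of (\ref{Fmunu}), and Chamseddine--Connes-style cancellations parallel to (\ref{constant}) eliminate all non-gauge-invariant terms, leaving $\zeta_{D_A}(0) = -\tfrac{4\pi^2}{3}\tau(F_{\mu\nu}F^{\mu\nu})$, which is (ii). The improved error bound $\mathcal{O}(\Lambda^{-2})$ in both (i) and (ii) finally comes from the observation that the would-be $\Lambda^{-1}$ coefficient is proportional to $\ncint|D_A|^{-(n+1)}$, and $n+1$ is odd for $n=2,4$, so this vanishes by the same argument underlying Proposition \ref{invariance}. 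The main obstacle is the $n=4$ calculation itself: bookkeeping every commutator correction in the four contributions $\ncint(\wt A D^{-1})^q$ and verifying that all non-curvature pieces---in particular those introduced by the $\eps JAJ^{-1}$ component of $\wt A$---cancel to leave exactly the single gauge-invariant combination $\tau(F_{\mu\nu}F^{\mu\nu})$ demanded by the gauge covariance (\ref{gaugeDirac}).
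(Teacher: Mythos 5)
Your general framework is the right one and matches the paper's: expand via (\ref{formuleaction}), invoke Propositions \ref{zeta(0)} and \ref{invariance} for the simple dimension spectrum and the vanishing/value of $\ncint |D_A|^{-(n-k)}$, and compute $c_0(A)=\zeta_{D_A}(0)$ via (\ref{termconstanttilde}). Your treatment of the odd-$n$ case is correct and is essentially the paper's own argument (Lemma \ref{impair} and Corollary \ref{zetaimpair}): the decomposition you describe through Lemmas \ref{pdoalg} and \ref{ncint-odd-pdo} works, though the paper invokes Lemma \ref{ncint-odd-pdo} more directly with $q=n$, so your extra rewrite of $|D|^{-2p_j}$ is unnecessary. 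Your $n=2$ argument is a genuine and valid shortcut not taken in the paper: you observe that $\ncint(\wt A D^{-1})^2 \sim \ncint \wt A \DD \wt A \DD |D|^{-4} \bmod OP^{-3}$, that $OP^{-3}$ is trace-class in dimension $2$ so the correction has zero residue, and that Lemma \ref{traceAA} makes the leading piece proportional to $n-2=0$; the paper instead computes $\ncint (\mathbb{A}^\pm)^2$ and $\ncint \mathbb{A}^{\pm\mp}$ from scratch in Lemma \ref{term-n=2}. Your route is shorter, at the cost of having to verify that Lemma \ref{traceAA} really does hold for $n=2$ (it does; nothing in its proof restricts to $n=4$), so this is a welcome simplification.

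The gap is in part (ii), which is the quantitative heart of the theorem. You state that "the Clifford traces reorganize the $\delta_\mu A_\nu$ and $[A_\mu,A_\nu]$ pieces into $F_{\mu\nu}$" and that "Chamseddine--Connes-style cancellations eliminate all non-gauge-invariant terms, leaving $\zeta_{D_A}(0)=-\tfrac{4\pi^2}{3}\tau(F_{\mu\nu}F^{\mu\nu})$," but this is the content you are asked to prove, not a consequence you can cite. The paper's actual derivation (Lemmas \ref{formegenerale}, \ref{Termaterm}, \ref{double}) requires three nontrivial blocks of work that your sketch does not engage with. First, the crossed terms $\ncint \mathbb{A}^\sigma$ with $\sigma$ containing both $+$ and $-$ signs must be shown to vanish; this is Lemma \ref{Termaterm}(iv), which for $q=3,4$ hinges on Theorem \ref{analytic}(iii) (holomorphy of the off-diagonal sums under the diophantine hypothesis) followed by a delicate sign bookkeeping and explicit change of summation variables to make the surviving diagonal contributions cancel — the diophantine condition alone is not enough. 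Second, the diagonal integrals $\ncint (\mathbb{A}^+)^q$ for $q=2,3,4$ must be evaluated explicitly; each requires repeated application of the trick (\ref{trick-0}) to reduce $|k+\wh l_i|^{-2}$ to powers of $|k|^{-2}$ up to terms in (H1)/(H2), then Lemma \ref{res-som} to justify exchanging $\Res_{s=0}$ with $\sum_l$, then the residue formulas (\ref{formulen=2})--(\ref{formule2}), and finally lengthy $\gamma$-trace evaluations via (\ref{Wick}). Third, the resulting expressions in the Fourier coefficients $a_{\alpha,l}$ must be matched term-by-term against $\tau(F_{\mu\nu}F^{\mu\nu})$ expanded from (\ref{Fmunu}); this is done in Lemma \ref{double} and is not automatic. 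You acknowledge this is "the main obstacle" and that it requires "bookkeeping every commutator correction" — which is precisely a concession that the proof is not given. As it stands, (ii) is asserted rather than proved.

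A smaller remark on the remainder estimate: your explanation that the $O(\Lambda^{-2})$ in (i)--(ii) follows because the would-be $\Lambda^{-1}$ coefficient is $\ncint|D_A|^{-(n+1)}$ and $n+1$ is odd is the right idea and matches Proposition \ref{invariance}, but to make it rigorous one must appeal to a finer version of (\ref{formuleaction}) that exhibits the $\Lambda^{-1}$ coefficient explicitly before invoking its vanishing; (\ref{formuleaction}) as stated only gives $O(\Lambda^{-1})$.
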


\quad

This result (for $n=4$) has also been obtained in \cite{GIVas} using
the heat kernel method. It is however interesting to get the result
via direct computations of
(\ref{formuleaction}) since it shows how this formula is efficient.
As we will see, the computation of all the noncommutative integrals
require a lot of technical steps.
One of the main points, namely to isolate where the Diophantine
condition on $\Th$ is assumed, is outlined here.

\begin{remark}
Note that all terms must be gauge invariants, namely, according
to (\ref{gaugetransform}), invariant by
$A_{\alpha}\longrightarrow \gamma_{u}(A_{\alpha})=
uA_{\alpha}u^{*}+u\delta_{\alpha}(u^{*})$. A particular case is
$u=U_{k}$ where
$U_{k}\delta_{\alpha}(U_{k}^{*})=-ik_{\alpha} U_{0}$.

In the same way, note that there is no contradiction with
the commutative case where, for any selfadjoint one-form
$A$, $\DD_{A}=\DD$ (so $A$ is equivalent to 0!), since
we assume in Theorem \ref{main} that
$\Th$ is diophantine, so $\A$ cannot be commutative.
\end{remark}

\begin{conjecture}
The constant term of the
spectral action of $\DD_{A}$ on the noncommutative n-torus is
proportional to the constant term of the spectral action
of $\DD+A$ on the commutative n-torus.
\end{conjecture}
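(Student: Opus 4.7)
My plan is to leverage the perturbative formula (\ref{termconstanttilde}) already established in the paper to reduce the conjecture to a term-by-term combinatorial comparison with the commutative Seeley--DeWitt calculation for $\DD+A$. By (\ref{termconstanttilde}), $\zeta_{D_A}(0)-\zeta_D(0)=\sum_{q=1}^n\tfrac{(-1)^q}{q}\ncint(\wt A D^{-1})^q$, and by Proposition \ref{zeta(0)} the subtracted term vanishes. Moreover Corollary \ref{eps-pdo} together with Lemma \ref{ncint-odd-pdo} show that only even $q\le n$ contribute, so the NC constant term equals $\sum_{\text{even }q\le n}\tfrac{1}{q}\ncint(\wt A D^{-1})^q$. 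On the commutative $n$-torus the analogous expansion of $\zeta_{\DD+A}(0)$ given by (\ref{constant}) reduces, at constant order, to $\sum_{\text{even }q\le n}\tfrac{1}{q}\int(A\DD^{-1})^q$. The conjecture therefore amounts to a term-by-term proportionality between $\ncint(\wt A D^{-1})^q$ and $\int(A\DD^{-1})^q$ for each even $q\le n$, with a single overall constant independent of $A$.

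The first step is to expand $\ncint(\wt A D^{-1})^q$ in the spirit of the proofs of Lemma \ref{spectrumset} and Lemma \ref{ncint-odd-pdo}. Writing $\wt A=-i(L(A_{\a})-R(A_{\a}))\ox\ga^{\a}$ and $A_{\a}=\sum_l a_{\a,l}U_l$, each application of $\wt A$ on $U_k\ox e_j$ produces a sum of terms of the shape (Fourier coefficient)$\cdot(k+\wh l)_{\mu}U_{k+\wh l}\ox\ga^{\a}\ga^{\mu}e_j$. Iterating $q$ times and pairing with $U_{-k}$, the residue at $s=0$ of $\Tr\bigl((\wt A D^{-1})^q|D|^{-s}\bigr)$ falls into the framework of Theorem \ref{zetageneral}: the sum over the internal momenta $(l_1,\dots,l_q)$ is of Schwartz class, and by part (ii) only the resonant locus $\mathcal{Z}=\set{l\,:\,\sum_i\eps_i l_i=0}$ contributes.

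The heart of the argument is then to identify the resonant sum as $\tau$ of a gauge-invariant polynomial in the NC curvature $F_{\a\beta}$ of (\ref{Fmunu}). On $\mathcal{Z}$, the Weyl relation (\ref{rel1}) reorganizes the product $U_{l_1}\cdots U_{l_q}$ into a star-product monomial in the $A_{\a}$ which, under $\tau$, produces the same combinatorial polynomial as the commutative one but with pointwise products replaced by star products; the Clifford trace $\Tr(\ga^{\mu_1}\cdots\ga^{\mu_q})$ together with the $S^{n-1}$-angular integrals of Proposition \ref{calculres} contract indices in precisely the pattern of the Gilkey heat-kernel coefficient $a_n(\DD+A)$. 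Gauge invariance, inherited from (\ref{gaugeDirac})--(\ref{gaugetransform}) on the NC side and from the usual gauge invariance on the commutative side, then forces the result to be a linear combination of $\tau$-traces of $F^{n/2}$-monomials with the same relative coefficients as in the commutative Seeley--DeWitt coefficient.

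The main obstacle is precisely this last identification for arbitrary $n$. For $n=4$ it is already Theorem \ref{main}(ii), proved by explicit calculation using the $q=2$ and $q=4$ terms and the residue formulas (\ref{formule2}); for general even $n$ one must control simultaneously all even $q\le n$ and show that the intricate sum of Clifford traces, angular integrals and star-product phases collapses onto the same linear combination of $\tau(F_{\a_1\a_2}\cdots F_{\a_{n-1}\a_n})$-type invariants that appears commutatively. A natural route is induction on $n$ combined with dimensional analysis and the severe constraints imposed by gauge invariance, which reduce the identification to matching a finite number of scalar constants on a test family of connections --- for instance $A_\a$ supported on two Fourier modes $U_l,U_{-l}$ with $l$ chosen so that $l.\Th l=0$ --- where both the NC and commutative sides can be computed in closed form.
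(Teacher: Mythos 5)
The statement you are addressing is stated in the paper as a \emph{conjecture}: the authors prove it only in the cases covered by Theorem \ref{main} ($n=2$, $n=4$, and the vanishing statements of part (iii)), and leave the general proportionality open. Your text is likewise not a proof: after reducing via (\ref{termconstanttilde}) and Proposition \ref{zeta(0)}, the step you yourself call ``the heart of the argument'' --- identifying the resonant sum with the commutative Seeley--DeWitt polynomial in $F$ with the \emph{same} relative coefficients --- is exactly the content of the conjecture, and your appeal to gauge invariance to ``force'' this identification is circular: gauge invariance alone does not single out the linear combination of invariants of the type $\tau(F_{\a_1\a_2}\cdots F_{\a_{n-1}\a_n})$ (nor their coefficients), as many independent gauge-invariant monomials exist for general even $n$. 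The proposed induction on $n$ plus evaluation on a test family is only a programme; incidentally the condition $l.\Th l=0$ in your test family is vacuous since $\Th$ is skew-symmetric, so that family does not separate the candidate invariants.

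There is also a concrete error in the reduction itself. You claim that Corollary \ref{eps-pdo} and Lemma \ref{ncint-odd-pdo} kill all odd-$q$ terms, so that the constant term equals $\sum_{\text{even }q\le n}\tfrac1q\ncint(\wt A D^{-1})^q$. Lemma \ref{ncint-odd-pdo} states $\ncint P|D|^{-(n-q)}=0$ for $P\in\Psi_1(\A)$ and $q$ \emph{odd}; it concerns residues at the shifted points $n-q$ and is used in the paper only for odd $n$ (Lemma \ref{impair}) and for the coefficients $c_{n-k}(A)$ with $k$ odd (Proposition \ref{invariance}). It does not apply to $\ncint(\wt A D^{-1})^q$ with $q$ odd when $n$ is even: indeed, for $n=4$ the paper computes $\ncint(\mathbb A^{+})^3\neq0$ in general (Lemma \ref{Termaterm}(ii)), and this cubic term is essential to reconstruct the commutator part $[A_{\a_1},A_{\a_2}]$ of $F_{\a_1\a_2}$ in Lemma \ref{double}. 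Dropping the odd-$q$ terms would therefore not even reproduce the proven $n=4$ result $\zeta_{D_A}(0)-\zeta_D(0)=-c\,\tau(F_{\a_1\a_2}F^{\a_1\a_2})$, so the reduction on which your comparison rests is false as stated, and the conjecture remains open after your argument.
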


\begin{remark}
The appearance of a Diophantine condition for $\Th$ has been
characterized in dimension 2 by Connes \cite[Prop. 49]{NCDG} where in
this case, $\Th=\th\genfrac{(}{)}{0pt}{1}{\,0 \quad1}{-1\,\,\,\, 0 }$
with $\th \in \R$. In fact, the Hochschild cohomology
$H(\A_{\Th},{\A_{\Th}}^*)$ satisfies dim
$H^j(\A_{\Th},{\A_{\Th}}^*)=2$ (or $1$) for $j=1$ (or $j=2$) if and
only if the irrational number $\th$ satisfies a Diophantine condition
like $\vert 1-e^{i2\pi n \th} \vert^{-1} =\mathcal{O}(n^k)$ for some
$k$.

Recall that when the matrix $\Th$ is quite irrational (see \cite[Cor.
2.12]{Polaris}), then the C$^*$-algebra generated by $\A_{\Th}$ is
simple.
\end{remark}

\begin{remark}
It is possible to generalize above theorem to the case $\DD=-i\,{g^{\mu}}_{\nu} \, \delta_\mu \otimes \ga^\nu$ instead of \eqref{defDirac} when $g$ is a positive definite constant matrix. The formulae in Theorem \ref{main}  are still valid. 
\end{remark}

\subsection{Computations of $\ncint$}
In order to get this theorem, let us prove a few technical lemmas.

We suppose
from now on that $\Th$ is a skew-symmetric matrix in
$\mathcal{M}_n(\R)$. No other
hypothesis is assumed for $\Th$, except when it is explicitly stated.

When $A$ is a selfadjoint one-form, we define for $n\in N$, $q\in \N$,
$2\leq q \leq n$ and
$\sigma\in \{-,+\}^q$
\begin{align*}
\mathbb{A}^{+}&:=A\DD D^{-2},\\
\mathbb{A}^{-}&:= \epsilon JAJ^{-1} \DD D^{-2},\\
\mathbb{A}^{\sigma}&:=\mathbb{A}^{\sigma_q}\cdots
\mathbb{A}^{\sigma_1} .
\end{align*}

\begin{lemma}
\label{ncadmoins1}
We have for any $q\in \N$,
$$
\ncint (\wt A D^{-1})^q = \ncint (\wt A \DD D^{-2})^q =
\sum_{\sigma\in \set{+,-}^{q}}
\ncint \mathbb{A}^{\sigma}.
$$
\end{lemma}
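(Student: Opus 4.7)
My strategy is to attack the two equalities independently. The first, $\ncint(\wt A D^{-1})^q = \ncint(\wt A \DD D^{-2})^q$, I will obtain by showing that $D^{-1}$ differs from $\DD D^{-2}$ only by the smoothing operator $P_0$ and that the resulting correction terms all vanish under $\ncint$. The second, $\ncint(\wt A \DD D^{-2})^q = \sum_\sigma \ncint \mathbb{A}^\sigma$, will then follow from a noncommutative binomial expansion of $\wt A = A + \eps JAJ^{-1}$. Throughout I will use that we are in a context where the dimension spectrum is simple, so that $\ncint$ is a trace on $\Psi(\A)$ by Proposition \ref{tracenc}.

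For the first equality, I would start with the identity $D^{-1} = \DD D^{-2} + P_0$. This is a consequence of $\DD P_0 = P_0 \DD = 0$ and $P_0^2 = P_0$, which give $D^2 = \DD^2 + P_0$; the identity is then checked separately on $\Ker\DD$ (where $D$ acts as the identity while $\DD D^{-2}$ vanishes) and on $(\Ker\DD)^\perp$ (where $D = \DD$ and $P_0 = 0$). Consequently $\wt A D^{-1} = \wt A \DD D^{-2} + \wt A P_0$, and a straightforward $q$-fold expansion produces $(\wt A\DD D^{-2})^q$ plus $2^q - 1$ correction terms $C_\eta$, indexed by $\eta\in\{0,1\}^q\setminus\{0\}$, each containing at least one factor of $P_0$.

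The main technical point is to show that each such $C_\eta$ contributes $0$ to $\ncint$. Since $P_0\in OP^{-\infty}$ by Lemma \ref{finiterank}(iv) (applied with $A=0$, so that $P_A = P_0$), each $C_\eta$ is a finite-rank pseudodifferential operator. I will then exploit the fact that $|D|$ preserves $\Ker\DD$ and acts there as the identity: this gives the very explicit identity $|D|^{-s} P_0 = P_0 |D|^{-s} = P_0$ for every $s \in \C$. Combining this with the cyclicity of the trace for trace-class operators (valid since $C_\eta$ is finite-rank), one rewrites $\Tr(C_\eta |D|^{-s})$, for $\Re s$ large, as the trace of a finite-rank operator that is in fact independent of $s$, hence extending to an entire function of $s$ with vanishing residue at $s=0$. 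Summing the residues over $\eta$ gives the first equality. This step is the hard part; everything else is algebraic.

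For the second equality, writing $\wt A \DD D^{-2} = A\DD D^{-2} + \eps JAJ^{-1}\DD D^{-2} = \mathbb{A}^+ + \mathbb{A}^-$ and expanding $(\mathbb{A}^+ + \mathbb{A}^-)^q$ produces a sum indexed by all words $(\sigma_1,\ldots,\sigma_q) \in \{+,-\}^q$. After the trivial reindexing $\sigma_i \mapsto \sigma_{q+1-i}$ this sum matches the definition $\sum_\sigma \mathbb{A}^\sigma = \sum_\sigma \mathbb{A}^{\sigma_q}\cdots \mathbb{A}^{\sigma_1}$, and applying $\ncint$ concludes the proof.
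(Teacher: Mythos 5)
Your overall strategy matches the paper's: decompose $D^{-1}=\DD D^{-2}+P_0$, expand $(\wt A D^{-1})^q$, and argue that every correction term containing at least one factor of $P_0$ has vanishing noncommutative integral. The paper disposes of the correction terms in one step: $P_0\in OP^{-\infty}$ implies each such term lies in $OP^{-\infty}$ (by Lemma \ref{pdoalg}), and $\ncint T=0$ for $T\in OP^{-\infty}$ since $T|D|^{-s}$ is trace-class for every $s$, so $s\mapsto\Tr(T|D|^{-s})$ is holomorphic on $\C$ and has no pole at $0$.

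Your mechanism for the correction terms, however, does not work as written. You claim that, combining cyclicity with $P_0|D|^{-s}=|D|^{-s}P_0=P_0$, one rewrites $\Tr(C_\eta|D|^{-s})$ as the trace of a finite-rank operator \emph{independent of $s$}. That only succeeds when the $P_0$ inside $C_\eta$ lands immediately adjacent to $|D|^{-s}$ after cycling, which is not guaranteed. Already for $q=2$ the term $C_\eta=\wt A P_0\wt A\DD D^{-2}$ gives, after cycling and inserting $P_0=P_0^2$,
$$
\Tr(C_\eta|D|^{-s})=\Tr\bigl(P_0\,\wt A\,\DD D^{-2}|D|^{-s}\,\wt A\, P_0\bigr),
$$
and the $|D|^{-s}$ is now trapped between two copies of $\wt A$; for a general simple spectral triple this is genuinely $s$-dependent. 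What is true, and suffices, is the weaker claim that this is an \emph{entire} function of $s$: it equals the finite sum $\sum_i\langle \wt A e_i,\DD D^{-2}|D|^{-s}\wt A e_i\rangle$ over a basis $(e_i)$ of $\Ima P_0\subset\bigcap_k\Dom|D|^k$, and $s\mapsto|D|^{-s}v$ is entire for each such smooth vector $v$. An entire function has vanishing residue at $0$, and the argument closes. So the fix is to replace "independent of $s$" by "entire in $s$"; or, more economically, just observe $C_\eta\in OP^{-\infty}$ as the paper does, which avoids tracking where the $P_0$ sits. Your expansion $\wt A\DD D^{-2}=\mathbb A^++\mathbb A^-$ for the second equality is correct.
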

\begin{proof} Since $P_0 \in OP^{-\infty}$, $D^{-1} = \DD D^{-2} \mod
OP^{-\infty}$
and $\ncint (\wt A D^{-1})^q = \ncint (\wt A \DD D^{-2})^q$.
\end{proof}

\begin{lemma}
\label{symetrie}
Let $A$ be a selfadjoint one-form, $n\in \N$ and $q\in \N$ with
$2\leq q \leq n$ and
$\sigma\in \{-,+\}^q$. Then
$$
\ncint \mathbb{A}^{\sigma} =\ncint \mathbb{A}^{-\sigma}.
$$
\end{lemma}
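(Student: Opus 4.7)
The plan is to exploit the antiunitary real structure $J$: at the operator level it interchanges the building blocks $\mathbb{A}^+$ and $\mathbb{A}^-$, and at the level of the noncommutative integral this symmetry (up to complex conjugation) combined with a reality check yields the equality.

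First I would prove the operator identity $J\mathbb{A}^{\sigma}J^{-1}=\mathbb{A}^{-\sigma}$. Using $J\DD J^{-1}=\eps\DD$ from \eqref{Jcom}, the $J$-invariance of $\Ker\DD$ (Lemma \ref{compres} applied with $A=0$), which gives $JP_0J^{-1}=P_0$ and hence $JD^{-2}J^{-1}=D^{-2}$, together with $J^2=\pm 1$ which yields $J(\eps JAJ^{-1})J^{-1}=\eps A$ and $JAJ^{-1}=\eps\cdot(\eps JAJ^{-1})$, one checks factor by factor that $J\mathbb{A}^\pm J^{-1}=\mathbb{A}^\mp$. Iterating over the $q$ factors of $\mathbb{A}^\sigma=\mathbb{A}^{\sigma_q}\cdots\mathbb{A}^{\sigma_1}$ gives the claimed identity.

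Next I would translate this into a relation for $\ncint$. Since $J$ is antiunitary with $J|D|J^{-1}=|D|$, the identity $\Tr(JTJ^{-1})=\overline{\Tr T}$ for trace-class $T$, applied to $T=\mathbb{A}^\sigma|D|^{-s}$ (trace class for $\Re s>n-q$), gives, for real $s$, $\Tr(\mathbb{A}^{-\sigma}|D|^{-s})=\overline{\Tr(\mathbb{A}^\sigma|D|^{-s})}$. Taking residues at $s=0$ produces
\[
\ncint\mathbb{A}^{-\sigma}\;=\;\overline{\ncint\mathbb{A}^\sigma}.
\]

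The last step is to upgrade this into an equality by showing $\ncint\mathbb{A}^\sigma\in\R$. Expanding in the Weyl basis, as in the proofs of Proposition \ref{zeta(0)} and Lemma \ref{ncint-odd-pdo}, $\Tr(\mathbb{A}^\sigma|D|^{-s})$ appears as a real $\ga$-trace times a sum over $k\in\Z^n\setminus\set{0}$ and $l=(l_1,\ldots,l_q)$ subject to $\sum_i l_i=0$, of a product $\prod_iA_{\a_i,l_i}$ weighted by a real $k$-polynomial and a phase $e^{-\frac{i}{2}\sum_i\eta_{\sigma_i}\,l_i\cdot\Th(k+\wh l_{i-1})}$ with $\wh l_i:=l_1+\cdots+l_i$ and $\eta_\pm=\pm 1$. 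The substitution $(k,l)\mapsto(-k,-l)$ preserves the constraint $\sum_i l_i=0$, leaves both the phase and the $|k+\wh l_{i-1}|^{-2}$ factors invariant, produces $(-1)^q$ from the momentum polynomial, and by the anti-Hermiticity of $A_\a$ turns $\prod_iA_{\a_i,l_i}$ into $(-1)^q\prod_i\overline{A_{\a_i,l_i}}$; the two $(-1)^q$ cancel, and combined with the reversal-invariance of the real $\ga$-trace and the reality of the sphere integrals provided by Theorem \ref{zetageneral}, this identifies $\ncint\mathbb{A}^\sigma$ with its complex conjugate, giving the desired reality. The main obstacle will be this last step: the $k$-dependent piece of the phase and the shifts $\wh l_{i-1}$ inside the momentum polynomial obstruct a direct $k\mapsto -k$ parity cancellation at fixed $l$, so one must work with the explicit residue formula of Theorem \ref{zetageneral}, tracking how the imaginary contributions cancel after integration over the sphere and summation over the constrained lattice $l\in\set{\sum_i l_i=0}\cap\set{\sum_i \eta_{\sigma_i}l_i=0}$ that actually contributes to the residue under the Diophantine hypothesis on $\Th$.
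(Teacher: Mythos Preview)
Your first step---the operator identity $J\mathbb{A}^{\pm}J^{-1}=\mathbb{A}^{\mp}$, obtained from $JP_0=P_0J$, $JD^{-2}=D^{-2}J$ and $J\DD D^{-2}=\eps\,\DD D^{-2}J$---is exactly what the paper does. After that point, however, the paper and your proposal diverge sharply.

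The paper stops essentially at your step~1: having $J\mathbb{A}^{\sigma}J^{-1}=\mathbb{A}^{-\sigma}$, it writes
\[
\ncint\mathbb{A}^{\sigma}
=\ncint J\mathbb{A}^{\sigma_q}J^{-1}\cdots J\mathbb{A}^{\sigma_1}J^{-1}
=\ncint\mathbb{A}^{-\sigma}
\]
and attributes this to ``the trace property of $\ncint$''. There is no complex conjugation, no reality argument, no Weyl-basis expansion and no appeal to Theorem~\ref{zetageneral}. In other words the paper treats $\ncint(JTJ^{-1})=\ncint T$ as immediate; it does \emph{not} pass through $\ncint(JTJ^{-1})=\overline{\ncint T}$ followed by a separate reality proof. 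You are right that, since $J$ is antiunitary, the identity $\Tr(JTJ^{-1})=\overline{\Tr T}$ is what one gets a priori, so the paper's one-liner hides exactly the point you worry about; but that is how the paper proceeds.

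Your proposed completion (step~3) is where the real difficulty lies, and as written it does not close. Under your substitution $(k,l)\mapsto(-k,-l)$ the phase $e^{\frac{i}{2}\phi_\sigma(k,l)}$ is \emph{invariant} (both the $k\!\cdot\!\Th l_j$ and the $\wh l_{j-1}\!\cdot\!\Th l_j$ pieces are even), not conjugated; combined with $\wt a_{\a,-l}=(-1)^q\,\overline{\wt a_{\a,l}}$ and the $(-1)^q$ from the numerator, you obtain $f(s)=\lambda_\sigma\sum e^{\frac{i}{2}\phi_\sigma}g_\mu\,\overline{\wt a_{\a,l}}\,\Tr(\ga\cdots)$, whereas $\overline{f(s)}=(-1)^q\lambda_\sigma\sum e^{-\frac{i}{2}\phi_\sigma}g_\mu\,\overline{\wt a_{\a,l}}\,\Tr(\ga\cdots)$. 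The two differ by the sign of the phase and the factor $(-1)^q$, so reality of $f(s)$ (or of its residue) does not follow from this symmetry alone. Your fallback---pushing everything through the residue formula of Theorem~\ref{zetageneral}---would bring in the Diophantine hypothesis (which the lemma as stated does not assume) and still requires controlling the $l$-only phase and the $l$-dependent coefficients $M_{\a,\beta}(l)$ under $l\mapsto -l$; you have not carried this out, and it is not a small computation. In short: your diagnosis that reality is the crux is reasonable, but your proposed mechanism for establishing it does not work as stated, and the route you sketch is substantially heavier than the paper's argument.
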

\begin{proof}Let us first check that $JP_0 = P_0 J$. Since $\DD J =
\eps J\DD$, we get $\DD J P_0=0$ so $JP_0 = P_0 J P_0$. Since $J$ is
an antiunitary operator,
we get $P_0 J = P_0 J P_0$  and finally, $P_0 J = J P_0$.
As a consequence, we get $JD^2 = D^2 J$,
$J \DD D^{-2}=\eps \DD D^{-2} J$, $J\mathbb{A}^{+}J^{-1} =
\mathbb{A}^{-}$ and
$J\mathbb{A}^{-}J^{-1} = \mathbb{A}^{+}.$

In summary, $J\mathbb{A}^{\sigma_i}J^{-1} = \mathbb{A}^{-\sigma_i}$.

The trace property of $\ncint$ now gives
\begin{align*}
\ncint \mathbb{A}^{\sigma} &= \ncint \mathbb{A}^{\sigma_q}\cdots
\mathbb{A}^{\sigma_1} = \ncint J
\mathbb{A}^{\sigma_q}J^{-1}\cdots J\mathbb{A}^{\sigma_1}J^{-1} \ncint
\mathbb{A}^{-\sigma_q} \cdots
\mathbb{A}^{-\sigma_1}=\ncint \mathbb{A}^{-\sigma}.
\tag*{\qed}
\end{align*}
\hideqed
\end{proof}
\begin{definition}

In \cite{CC1} has been introduced the vanishing tadpole hypothesis:
\begin{align}
    \label{vanishtad}
    \ncint A D^{-1}=0, \text{ for all } A\in \Omega_{\DD}^{1}(\A).
\end{align}
\end{definition}
By the following lemma, this condition is satisfied for the
noncommutative torus, a fact more or less already known within the
noncommutative community \cite{Walter}.

\begin{lemma}
\label{tadpole}
Let $n\in \N$, $A= L(-iA_{\a})\otimes \gamma^{\a}=-i\sum_{l\in
\Z^n}a_{\alpha,l} \,
U_{l}\otimes \ga^{\alpha}$, $A_{\a}\in \A_{\Th}$,
$\set{a_{\alpha,l}}_{l}\in \SS
(\Z^n)$, be a hermitian one-form.
    Then, \\
    (i) $\ncint A^p D^{-q}  =  \ncint (\epsilon JAJ^{-1})^p
D^{-q}=0$ for $p\geq0$ and $1\leq q < n$ (case $p=q=1$ is tadpole
hypothesis.)\\
    (ii) If $\tfrac{1}{2\pi} \Th$ is diophantine, then $\ncint
BD^{-q}=0$ for $1\leq q < n$ and any $B$ in the algebra generated by
$\A$, $[\DD,\A]$, $J\A J^{-1}$ and $J[\DD,\A]J^{-1}$. 
\end{lemma}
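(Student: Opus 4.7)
The strategy is to reduce each case to a trace of the form $\Tr(MD^{-q}|D|^{-s})$ computed on the canonical basis $\{U_{k}\otimes e_{j}\}$ of $\H$ and to recognise the resulting sums as those analyzed in Theorems \ref{res-int} and \ref{analytic}. In every case the $k=0$ part of the basis contributes only a smooth function of $s$ (because $P_{0}\in OP^{-\infty}$), so one may work with $k\neq 0$ throughout. For part~(i) I would write $A=-i\sum_{\alpha,l}a_{\alpha,l}L(U_{l})\otimes\gamma^{\alpha}$; iterating \eqref{rel1} yields
\begin{equation*}
U_{l_{1}}\cdots U_{l_{p}}U_{k}=e^{i\phi(l)}e^{-\frac{i}{2}\hat l_{p}.\Th k}U_{\hat l_{p}+k},\qquad \hat l_{p}:=l_{1}+\cdots+l_{p},
\end{equation*}
so the diagonal selection $\langle U_{k},\,\cdot\,U_{k}\rangle$ forces $\hat l_{p}=0$, which simultaneously annihilates the $k$-dependent phase (equivalently, by cyclicity of $\tau$, $\tau(U_{-k}U_{l_{1}}\cdots U_{l_{p}}U_{k})=\tau(U_{l_{1}}\cdots U_{l_{p}})$). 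After the Dirac trace, the $k$-sum collapses up to an $s$-independent constant to either $Z_{n}(s+q)$ for $q$ even, whose unique pole lies at $s=n-q\neq 0$, or to $\sum'_{k}k_{\mu}|k|^{-s-q-1}\equiv 0$ for $q$ odd by the $k\to-k$ symmetry; in both cases $\Res_{s=0}=0$. The case $(\eps JAJ^{-1})^{p}=(iR(A_{\alpha})\otimes\gamma^{\alpha})^{p}$ is identical, using $R(U_{l_{1}})\cdots R(U_{l_{p}})U_{k}=U_{k}U_{l_{p}}\cdots U_{l_{1}}$.

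For part~(ii), every monomial in $\A\cup[\DD,\A]\cup J\A J^{-1}\cup J[\DD,\A]J^{-1}$ becomes, after using that left and right multiplications commute, of the form $L(c)R(d)\otimes\Gamma$ with $c,d\in\A$ and $\Gamma$ a product of Dirac matrices. Writing $c=\sum_{l}c_{l}U_{l}$, $d=\sum_{l'}d_{l'}U_{l'}$, \eqref{rel1} together with the skew-symmetry of $\Th$ (so that $l.\Th l=0$ and $k.\Th l=-l.\Th k$) gives
\begin{equation*}
\tau\bigl(U_{-k}L(U_{l})R(U_{l'})U_{k}\bigr)=e^{i\psi(l,l')}\,\delta_{l+l',0}\,e^{ik.\Th l}.
\end{equation*}
The new feature compared with (i) is the \emph{residual} $k$-phase $e^{ik.\Th l}$. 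After the Dirac trace and a Schwartz-preserving partial summation over the internal indices of the $c$- and $d$-factors having fixed total charge $m\in\Z^{n}$ (which preserves the Schwartz class since a convolution of Schwartz sequences is Schwartz), one obtains
\begin{equation*}
\Tr(BD^{-q}|D|^{-s})\sim_{c}\sum_{m\in\Z^{n}}b(m)\,{\sum_{k\in\Z^{n}}}'\frac{P(k)\,e^{ik.\Th m}}{|k|^{s+q+\deg P}},
\end{equation*}
with $b\in\SS(\Z^{n})$ and $P(k)\in\{1,k_{\mu}\}$. This is precisely the setting of Theorem \ref{analytic}(ii), whose Diophantine hypothesis on $\tfrac{1}{2\pi}\Th$ furnishes a meromorphic extension with a unique possible pole at $s=n-q$. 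Since $q<n$ that pole is strictly positive, $s=0$ is a regular point, and $\ncint BD^{-q}=0$.

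The work beyond the invocation of Theorem \ref{analytic}(ii) is bookkeeping: tracking the phases accumulating when one commutes Weyl generators in order to display the characteristic factor $e^{ik.\Th l}$, and verifying that the reduced single-variable sequence $b$ obtained after the partial summation is still Schwartz. The conceptual point I would emphasise is that in part~(i) the $k$-phase always cancels on the diagonal, so no Diophantine hypothesis is needed, whereas the presence of a non-trivial right-multiplicative factor $R(d)$ in (ii) leaves the non-trivial $k$-phase $e^{ik.\Th m}$, which is exactly where the Diophantine condition becomes indispensable.
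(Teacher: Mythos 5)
Your proposal is correct and follows essentially the same route as the paper: in part (i) the absence of a right multiplication means the $k$-phase disappears (by $\tau(U_{-k}XU_k)=\tau(X)$), and the $k$-sum reduces to a homogeneous $\zeta^P$-type series whose only pole sits at $s=n-q>0$; in part (ii) the surviving phase $e^{ik.\Th m}$ is handled by Theorem~\ref{analytic}~(ii), again giving a unique possible pole at $s=n-q>0$. The paper presents (i) uniformly for $\ncint A^p(\eps JAJ^{-1})^{p'}D^{-q}$ and then sets $p'=0$ or $p=0$, and for (ii) simply refers to an adaptation of the proof of Lemma~\ref{ncint-odd-pdo}, keeping the multi-index sum $\sum_{l,l'}$ and quoting Theorem~\ref{analytic}~(ii) directly rather than performing the extra ``partial summation'' you introduce; but that step is harmless (the constrained sum over internal indices with fixed total charge is a convolution and so remains Schwartz), and the conceptual mechanism is identical. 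One small inaccuracy worth fixing in your write-up: after expanding $D^{-q}\sim\DD^q D^{-2q}$ the $k$-sum carries $k_{\mu_1}\cdots k_{\mu_q}/|k|^{s+2q}$ (degree-$q$ numerator over $|k|^{s+2q}$), not $P(k)/|k|^{s+q+\deg P}$ with $P\in\{1,k_\mu\}$; the pole location $s=n-q$ you then state is nevertheless the correct one.
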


\begin{proof}
$(i)$ Let us compute $$\ncint A^p(\epsilon
JAJ^{-1})^{p'}D^{-q}.$$ 
With $A=L(-i A_\a)\otimes \ga^\a$ and
$\epsilon J A
J^{-1}=R(i A_\a)\otimes \ga^\a$, we get
$$
A^p=L(-i A_{\a_1})\cdots L(-i A_{\a_p}) \otimes \ga^{\a_1}\cdots
\ga^{\a_p}
$$
and
$$
(\epsilon J A J^{-1})^{p'}=R(i A_{\a'_1})\cdots R(i A_{\a'_{p'}})
\otimes
\ga^{\a'_1}\cdots \ga^{\a'_{p'}}.
$$
We note $\wt a_{\a,l}:= a_{\a_1,l_1}\cdots a_{\a_p,l_p}$. Since
$$
L(-i A_{\a_1})\cdots L(-i A_{\a_p})R(i A_{\a'_1})\cdots R(i
A_{\a'_{p'}}) U_k= (-i)^p
\, i^{p'} \sum_{l,l'} \wt a_{\a,l} \, \wt a_{\a',l'} \, U_{l_1}\cdots
U_{l_p} U_k
U_{l'_{p'}}\cdots U_{l'_1},
$$
and
$$
U_{l_1}\cdots U_{l_p} U_k= U_k U_{l_1}\cdots U_{l_p} \, e^{-i(\sum_i
l_i).\Th k},
$$
we get, with
\begin{align*}
&U_{l,l'}:=U_{l_1}\cdots U_{l_p}U_{l'_{p'}}\cdots U_{l'_1},\\
&g_{\mu,\a,\a'}(s,k,l,l'):= e^{ik. \Th \sum_j l_j} \,
\tfrac{k_{\mu_{1}}\ldots
k_{\mu_{q}}}{\vert k \vert^{s+2q}} \, \wt a_{\a,l} \,  \wt a_{\a',l'},
\\
&\ga^{\a,\a',\mu}:=\ga^{\a_1}\cdots\ga^{\a_{p}}\ga^{\a'_1}\cdots
\ga^{\a'_{p'}} \ga^{\mu_1}\cdots \ga^{\mu_q},
\end{align*}
$$
A^p(\epsilon JAJ^{-1})^{p'}D^{-q}|D|^{-s} U_k\otimes e_i\sim_c (-i)^p
\, i^{p'}
\sum_{l,l'} g_{\mu,\a,\a'}(s,k,l,l') \, U_k U_{l,l'} \otimes
\ga^{\a,\a',\mu} e_i.
$$
Thus, $\ncint A^p(\epsilon JAJ^{-1})^{p'}D^{-q} =
\underset{s=0}{\Res}\ f(s)$
where
\begin{align*}
f(s):&=\Tr\big(A^p(\epsilon JAJ^{-1})^{p'}D^{-q}|D|^{-s}\big)\\
&\sim_c (-i)^p \, i^{p'} {\sum_{k\in\Z^{n}}}' \langle U_{k}\otimes
e_{i},\sum_{l,l'}
g_{\mu,\a,\a'}(s,k,l,l')  U_k
U_{l,l'} \otimes \ga^{\a,\a',\mu} e_i \rangle\\
&\sim_c (-i)^p \, i^{p'} {\sum_{k\in\Z^{n}}}' \, \,\tau\big(
\sum_{l,l'}
g_{\mu,\a,\a'}(s,k,l,l')
U_{l,l'} \big) \Tr(\ga^{\mu,\a,\a'})\\
&\sim_c (-i)^p \, i^{p'} {\sum_{k\in\Z^{n}}}' \sum_{l,l'}
g_{\mu,\a,\a'}(s,k,l,l') \, \tau
\big(U_{l,l'} \big) \Tr(\ga^{\mu,\a,\a'}).
\end{align*}
It is straightforward to check that the series
${\sum}'_{k,l,l'} g_{\mu,\a,\a'}(s,k,l,l') \, \tau\big( U_{l,l'}
\big)$
is absolutely summable if $\Re(s)>R$ for a $R>0$. Thus, we can
exchange the summation
on $k$ and $l,l'$, which gives
$$
f(s)\sim_c (-i)^p \, i^{p'} \sum_{l,l'}  {\sum_{k\in\Z^{n}}}'
g_{\mu,\a,\a'}(s,k,l,l') \,
\tau \big( U_{l,l'} \big) \Tr(\ga^{\mu,\a,\a'}).
$$
If we suppose now that $p'=0$, we see that,
$$
f(s)\sim_c  (-i)^p \sum_{l}  {\sum_{k\in\Z^{n}}}'   \,
\tfrac{k_{\mu_{1}}\ldots
k_{\mu_{q}}}{\vert k \vert^{s+2q}} \, \wt a_{\a,l} \,
\delta_{\sum_{i=1}^p l_i,0}
\Tr(\ga^{\mu,\a,\a'})
$$
which is, by Proposition \ref{calculres}, analytic at 0. In
particular, for
$p=q=1$, we see that $\ncint A D^{-1} =0$, i.e. the
vanishing tadpole
hypothesis is satisfied. Similarly, if we suppose $p=0$,
we get
$$
f(s)\sim_c  (-i)^{p'} \sum_{l'}  {\sum_{k\in\Z^{n}}}'   \,
\tfrac{k_{\mu_{1}}\ldots
k_{\mu_{q}}}{\vert k \vert^{s+2q}} \, \wt a_{\a,l'} \,
\delta_{\sum_{i=1}^{p'} {l'}_i,0}
\Tr(\ga^{\mu,\a,\a'})
$$
which is holomorphic at 0.

$(ii)$ Adapting the proof of Lemma \ref{ncint-odd-pdo} to our setting
(taking $q_i=0$, and 
adding gamma matrices components), we see that
$$
\ncint B\,D^{-q} =\underset{s=0}{\Res}\, {\sum_k}'\,\sum_{l,l'}
e^{i\phi(l,l')}\,\delta_{\sum_1^r l_i+l'_i,0}\, \wt a_{\a,l} \,\wt
b_{\beta,l'}
\,\tfrac{k_{\mu_1}\cdots k_{\mu_q}\,e^{-i\sum_1^r l_i.\Th
k}}{|k|^{s+2q}}
\Tr (\ga^{(\mu,\a,\beta)})
$$
where $\ga^{(\mu,\a,\beta)}$ is a complicated product of gamma matrices.
By Theorem \ref{analytic} $(ii)$, since we suppose here that
$\tfrac{1}{2\pi} \Th$ is
diophantine, this residue is 0.
\end{proof}

\subsubsection{Even dimensional case}
\begin{corollary}

Same hypothesis as in Lemma \ref{tadpole}.

(i) Case $n=2$:
    \begin{align*}
\ncint A^q D^{-q}= -\delta_{q,2}\,4\pi  \,\tau\big(A_{\a} A^{\a}
\big) \,.
\end{align*}
    (ii) Case $n=4$: with the shorthand
$\delta_{\mu_1,\ldots,\mu_4}:=
\delta_{\mu_1\mu_2}\delta_{\mu_3\mu_4}+\delta_{\mu_1\mu_3}\delta_{\mu_2\mu_4}
+\delta_{\mu_1\mu_4}\delta_{\mu_2\mu_3}$,
\begin{align*}
\ncint A^q D^{-q}= \delta_{q,4}\,\tfrac{\pi^2}{12}
\tau\big(A_{\a_1}\cdots A_{\a_4}
\big)\Tr(\ga^{\a_1}\cdots\ga^{\a_{4}}\ga^{\mu_1}\cdots\ga^{\mu_4})
\delta_{\mu_1,\ldots,\mu_4} \,.
\end{align*}
\end{corollary}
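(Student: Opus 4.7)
The strategy is to use Lemma~\ref{tadpole} to eliminate the cases $q\ne n$, and then to specialize the computation in its proof to $q=n$ before applying Proposition~\ref{calculres}.

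First, for $1\le q<n$, Lemma~\ref{tadpole}(i) gives $\ncint A^q D^{-q} = 0$ directly. For $q>n$, the operator $A^q D^{-q}$ lies in $OP^{-q}$, and since $-q<-n$ this operator is trace class, so $s\mapsto \Tr(A^q D^{-q}|D|^{-s})$ is holomorphic in a neighborhood of $s=0$ and the residue vanishes. Hence only the term $q=n$ can contribute, accounting for the $\delta_{q,n}$ factor.

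For $q=p=n$, I recycle the computation from the proof of Lemma~\ref{tadpole} (specialized to $p'=0$). Taking the residue at $s=0$ and using that on the support of $\delta_{\sum_i l_i,0}$ the oscillating factor $e^{ik.\Th\sum_j l_j}$ trivializes to $1$, one arrives at
\begin{align*}
\ncint A^n D^{-n} = (-i)^n\,\tau(A_{\a_1}\cdots A_{\a_n})\,\Tr(\ga^{\a_1}\cdots\ga^{\a_n}\ga^{\mu_1}\cdots\ga^{\mu_n})\,R_{\mu_1,\ldots,\mu_n},
\end{align*}
where $R_{\mu_1,\ldots,\mu_n} := \underset{s=0}{\Res}\,{\sum}'_{k\in\Z^n}\tfrac{k_{\mu_1}\cdots k_{\mu_n}}{|k|^{s+2n}}$, and where I used the identity $\sum_l a_{\a_1,l_1}\cdots a_{\a_n,l_n}\,\tau(U_{l_1}\cdots U_{l_n}) = \tau(A_{\a_1}\cdots A_{\a_n})$. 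A routine but necessary step is to justify exchanging the sums over $k$ and $l$ for $\Re(s)$ large, relying on the Schwartz decay of the $a_{\a,l}$'s before taking the residue.

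Formulas~(\ref{formulen=2}) and (\ref{formule2}) of Proposition~\ref{calculres} then furnish $R_{\mu_1\mu_2}=\pi\,\delta_{\mu_1\mu_2}$ in dimension $2$ and $R_{\mu_1,\ldots,\mu_4}=\tfrac{\pi^2}{12}\,\delta_{\mu_1,\ldots,\mu_4}$ in dimension $4$. The latter, together with $(-i)^4=1$, yields (ii) directly. For (i), $(-i)^2=-1$ and I contract the gamma matrices using $\sum_\mu \ga^\mu\ga^\mu = n\cdot 1_{2^m}$ and $\Tr(\ga^{\a_1}\ga^{\a_2}) = 2^m\delta^{\a_1\a_2}$, which gives $\delta_{\mu_1\mu_2}\Tr(\ga^{\a_1}\ga^{\a_2}\ga^{\mu_1}\ga^{\mu_2})=n\cdot 2^m\,\delta^{\a_1\a_2}=4\delta^{\a_1\a_2}$, whence $\ncint A^2 D^{-2} = -4\pi\,\tau(A_\a A^\a)$. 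The only nontrivial technical points are the bookkeeping of the gamma-matrix and index contractions and the Fubini exchange; crucially, no Diophantine hypothesis on $\Th$ is needed here, since the Moyal phase $e^{ik.\Th\sum l_j}$ trivializes on the locus $\sum l_j=0$ enforced by the trace.
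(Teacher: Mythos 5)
Your proof is correct and takes essentially the same route as the paper: reduce to the case $q=n$ (via Lemma \ref{tadpole}(i) for $q<n$), reuse the computation from that lemma's proof with $p'=0$, $p=q=n$, and finish with Proposition \ref{calculres}. The only additions are the brief handling of $q>n$ (where $A^q D^{-q}$ is trace class so the residue vanishes — a case the paper leaves implicit) and the explicit gamma-matrix contraction in dimension 2, both of which check out; your observation that no Diophantine hypothesis is needed here is also consistent with the paper, since Lemma \ref{tadpole}(i) does not invoke it.
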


\begin{proof}
$(i,ii)$ The same computation as in Lemma \ref{tadpole} $(i)$ (with
$p'=0$, $p=q=n$) gives
\begin{align*}
\ncint A^n D^{-n}=\underset{s=0}{\Res}(-i)^{n}
\big({\sum_{k\in\Z^{n}}}'\tfrac{k_{\mu_{1}}\ldots k_{\mu_{n}}}{\vert k
    \vert^{s+2n}}\big) \, \tau\big(\sum_{l\in (\Z^n)^n}
\wt a_{\a,l} U_{l_1}\cdots U_{l_{n}}
    \big) \,
\Tr(\ga^{\a_1}\cdots\ga^{\a_{n}}\ga^{\mu_1}\cdots\ga^{\mu_n})
\end{align*}
and the result follows from Proposition \ref{calculres}.
\end{proof}

We will use few notations:

 If $n\in \N$, $q\geq 2$, $l:=(l_1,\cdots,l_{q-1})\in
(\Z^n)^{q-1}$, $\a:=(\a_1,\cdots,\a_q)\in \{1,\cdots,n\}^q$, $k\in
\Z^n \backslash
\{0\}$, $\sigma\in \{-,+\}^q$, $(a_i)_{1\leq i\leq n}\in
(\mathcal{S}(\Z^n))^n$,
\begin{align*}
&l_q:=-\sum_{1\leq j\leq q-1} l_j \, , \quad
\lambda_\sigma:=(-i)^q\prod_{j=1\ldots
q}\sigma_j \, , \quad
\wt a_{\a,l}:= a_{\a_1,l_1}\ldots a_{\a_q,l_q}\,,\\
& \phi_\sigma(k,l):=\sum_{1\leq j\leq q-1} (\sigma_j-\sigma_q)\,
k.\Th l_j +
\sum_{2\leq
j\leq q-1} \sigma_j \, (l_1+\ldots +l_{j-1}).\Th l_j \, ,\\
& g_\mu(s,k,l):=\tfrac{k_{\mu_1}(k+l_1)_{\mu_2}\ldots (k+l_1+\ldots
+l_{q-1})_{\mu_q}}{|k|^{s+2}|k+l_1|^2\ldots|k+l_1+\ldots+l_{q-1}|^2}
\, ,
\end{align*}
with the convention $\sum_{2\leq j\leq q-1} = 0$ when $q=2$, and
$g_\mu(s,k,l)=0$
whenever $\wh l_i=-k$ for a $1\leq i\leq q-1$.

\begin{lemma}\label{formegenerale}
Let $A= L(-iA_{\a})\otimes \gamma^{\a}=-i\sum_{l\in \Z^n}a_{\alpha,l}
\, U_{l}\otimes
\ga^{\alpha}$ where $A_{\a}=-A_{\a}^*\in \A_{\Th}$ and
$\set{a_{\alpha,l}}_{l}\in \SS
(\Z^n)$, with $n\in \N$, be a hermitian one-form, and let $2\leq q
\leq n$, $\sigma\in
\{-,+\}^q$.

Then, $\ncint \mathbb{A}^{\sigma}= \underset{s=0}{\Res}\ f(s)$ where
$$
f(s):=\sum_{l\in (\Z^{n})^{q-1}} {\sum_{k\in\Z^n}}' \,
\lambda_\sigma\  e^{\tfrac i2
\phi_\sigma(k,l)}\ g_\mu(s,k,l)\ \wt a_{\a,l}\
\Tr(\ga^{\a_q}\ga^{\mu_q}\cdots\ga^{\a_1}\ga^{\mu_1}).
$$
\end{lemma}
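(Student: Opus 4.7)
The plan is to compute $\ncint \mathbb{A}^\sigma = \underset{s=0}{\Res}\,\Tr(\mathbb{A}^\sigma |D|^{-s})$ by evaluating $\mathbb{A}^\sigma(U_k\otimes e_j)$ explicitly in the eigenbasis of $D$, summing the diagonal matrix elements, and justifying the exchange of summations before extracting the residue. Note that $\mathbb{A}^\sigma$ annihilates $\Ker \DD = U_0\otimes \C^{2^m}$ since $\DD D^{-2}(U_0\otimes e_j)=0$, so the $k=0$ contribution vanishes identically and the primed sum in the statement is natural.

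First I would derive a closed formula for a single application $\mathbb{A}^\sigma(U_k\otimes e_j)$ with $k\neq 0$. From $A = L(-iA_\alpha)\otimes\gamma^\alpha$ and $\epsilon JAJ^{-1} = R(iA_\alpha)\otimes\gamma^\alpha$, combining $U_lU_k = e^{-(i/2)l.\Th k}U_{l+k}$ with $U_kU_l = e^{(i/2)l.\Th k}U_{l+k}$ and $\DD D^{-2}(U_k\otimes e_j) = k_\mu|k|^{-2}U_k\otimes\gamma^\mu e_j$ yields, uniformly in $\sigma\in\{+,-\}$,
\[
\mathbb{A}^\sigma(U_k\otimes e_j) = -i\sigma\sum_{l\in\Z^n} a_{\alpha,l}\,k_\mu|k|^{-2}\,e^{-(i\sigma/2)\,l.\Th k}\,U_{l+k}\otimes\gamma^\alpha\gamma^\mu e_j.
\]

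Second, with $k_j:= k+\wh l_j$ (so $k_0=k$), iterating the above formula $q$ times expresses $\mathbb{A}^\sigma(U_k\otimes e_j)$ as a sum over $(l_1,\ldots,l_q)\in (\Z^n)^q$ of $\prod_j(-i\sigma_j)\,\wt a_{\alpha,l}\,\prod_j k_{j-1,\mu_j}|k_{j-1}|^{-2}\,\exp\bigl(-\tfrac{i}{2}\sum_j\sigma_j\,l_j.\Th k_{j-1}\bigr)$ acting on $U_{k_q}\otimes\gamma^{\alpha_q}\gamma^{\mu_q}\cdots\gamma^{\alpha_1}\gamma^{\mu_1}e_j$. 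Taking the inner product with $U_k\otimes e_j$ and summing over $j$ imposes $\wh l_q=0$, i.e.\ $l_q = -(l_1+\cdots+l_{q-1})$, thereby collapsing the $l$-sum to $(\Z^n)^{q-1}$ and producing $\Tr(\gamma^{\alpha_q}\gamma^{\mu_q}\cdots\gamma^{\alpha_1}\gamma^{\mu_1})$. Absorbing the factor $|k|^{-s}$ coming from $|D|^{-s}$ into $|k|^{-2}\to|k|^{-s-2}$ identifies the remaining rational expression in $(k,l)$ with $g_\mu(s,k,l)$ and the product of signs with $\lambda_\sigma$. The accumulated phase is then reduced using skew-symmetry of $\Th$: the identity $\wh l_{q-1}.\Th\wh l_{q-1}=0$ kills the $j=q$ term in the internal part and leaves $\sum_{j=2}^{q-1}\sigma_j\,l_j.\Th(l_1+\cdots+l_{j-1})$; the $k$-dependent part telescopes via $l_q=-\wh l_{q-1}$ into $\sum_{j=1}^{q-1}(\sigma_j-\sigma_q)\,l_j.\Th k$. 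Using once more $v.\Th w = -w.\Th v$ converts the total exponent into $\tfrac{i}{2}\phi_\sigma(k,l)$ as defined.

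The main obstacle is commuting the $l$-sum with the $k$-sum prior to taking the residue. I would invoke Lemma \ref{abs-som} with $p=q-1$, $r=2$, and a monomial of total degree at most $q$ in $k$ to obtain absolute summability of the double series in the half-plane $\Re(s) > n-q$, which permits Fubini there. The function $f(s)$ then admits a meromorphic continuation to all of $\C$ by Theorem \ref{zetageneral}, so the residue at $s=0$ is well-defined and coincides with $\ncint \mathbb{A}^\sigma$.
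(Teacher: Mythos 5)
Your proposal is correct and follows essentially the same route as the paper's proof: compute $\mathbb{A}^{\sigma_j}(U_r\otimes v)$ explicitly (the two formulas agree, since your phase $e^{-(i\sigma/2)l.\Th k}$ equals the paper's $e^{\sigma\,\tfrac i2\,k.\Th l}$ by skew-symmetry of $\Th$), iterate $q$ times, let the trace impose $\sum_j l_j=0$ to collapse the $l$-sum to $(\Z^n)^{q-1}$, verify the phase reduces to $\tfrac i2\phi_\sigma(k,l)$, and invoke Lemma~\ref{abs-som} to justify Fubini. The only cosmetic differences are that you handle $k=0$ by noting $\mathbb{A}^\sigma$ annihilates $\Ker\DD$ (the paper uses $|r|^2+\delta_{r,0}$ instead), and you explicitly cite Theorem~\ref{zetageneral} for the meromorphic continuation of $f$, which the paper leaves implicit here.
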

\begin{proof}
By definition, $\ncint \mathbb{A}^{\sigma}= \underset{s=0}{\Res}\
f(s)$ where
$$
\Tr(\mathbb{A}^{\sigma_q}\cdots \mathbb{A}^{\sigma_1}
|D|^{-s})\sim_c {\sum_{k\in\Z^n}}' \langle U_k\otimes
e^i ,|k|^{-s}\,\mathbb{A}^{\sigma_q}\cdots \mathbb{A}^{\sigma_1}
U_k\otimes e_i\rangle =: f(s).
$$
Let $r\in \Z^n$ and $v\in \C^{2^m}$. Since $A=L(-i A_\a)\otimes
\ga^\a$, and $\epsilon
JAJ^{-1}=R(i A_\a)\otimes \ga^{\a}$, we get
\begin{align*}
\mathbb{A}^{+}U_r\otimes v &= A\DD D^{-2} U_r\otimes
v=A\tfrac{r_\mu}{|r|^2+\delta_{r,0}} U_r \otimes
\ga^{\mu}v=-i\tfrac{r_\mu}{|r|^2+\delta_{r,0}} A_\a U_r \otimes
\ga^\a\ga^{\mu}v\,
,  \\
\mathbb{A}^{-}U_r\otimes v &=\epsilon JAJ^{-1}\DD D^{-2} U_r\otimes
v=\epsilon
JAJ^{-1}\tfrac{r_\mu}{|r|^2+\delta_{r,0}} U_r \otimes
\ga^{\mu}v=i\tfrac{r_\mu}{|r|^2+\delta_{r,0}}U_r A_\a
\otimes \ga^\a\ga^{\mu}v.
\end{align*}
With $U_l U_r=e^{\tfrac i2 r.\Th l} U_{r+l}$ and $U_r U_l=e^{-\tfrac
i2 r.\Th l}
U_{r+l}$, we obtain, for any $1\leq j\leq q$,
$$
\mathbb{A}^{\sigma_j}U_r\otimes v=\sum_{l\in \Z^n} (-\sigma_j)
\, i\, e^{\sigma_j\, \tfrac i2
r.\Th l}\, \tfrac{r_\mu}{|r|^2+\delta_{r,0}}\, a_{\a,l} \, U_{r+l}
\otimes
\ga^{\a}\ga^{\mu}v.
$$
We now apply $q$ times this formula to get
$$
|k|^{-s} \mathbb{A}^{\sigma_q}\cdots \mathbb{A}^{\sigma_1}
U_k\otimes e_i = \lambda_\sigma \sum_{l\in
(\Z^{n})^q} e^{\tfrac i2 \phi_\sigma(k,l)}\ g_\mu(s,k,l)\ \wt
a_{\a,l}\ U_{k+\sum_j
l_j} \otimes \ga^{\a_q}\ga^{\mu_q}\cdots\ga^{\a_1}\ga^{\mu_1}e_i
$$
with
\begin{align*}
\phi_\sigma(k,l)&:=\sigma_1\,  k.\Th l_1+\sigma_2 \, (k+l_1).\Th
l_2+\ldots
+\sigma_q \, (k+l_1+\ldots+l_{q-1}).\Th l_q.
\end{align*}
Thus,
\begin{align*}
f(s) &= {\sum_{k\in\Z^n}}' \, \tau \big(\lambda_\sigma
\sum_{l\in (\Z^{n})^q} e^{\tfrac i2
\phi_\sigma(k,l)}\ g_\mu(s,k,l)\ \wt a_{\a,l}\ U_{\sum_j
l_j}e^{\tfrac i2 k.\Th \sum_j
l_j}\big) \Tr(\ga^{\a_q}\ga^{\mu_q}\cdots\ga^{\a_1}\ga^{\mu_1})\\
&={\sum_{k\in\Z^n}}' \,  \lambda_\sigma \sum_{l\in (\Z^{n})^q}
e^{\tfrac i2
\phi_\sigma(k,l)}\ g_\mu(s,k,l)\ \wt a_{\a,l}\ \delta(\sum_j l_j)
\Tr(\ga^{\a_q}\ga^{\mu_q}\cdots\ga^{\a_1}\ga^{\mu_1})\\
&= {\sum_{k\in\Z^n}}' \,  \lambda_\sigma \sum_{l\in (\Z^{n})^{q-1}}
e^{\tfrac i2
\phi_\sigma(k,l)}\ g_\mu(s,k,l)\ \wt a_{\a,l}\
\Tr(\ga^{\a_q}\ga^{\mu_q}\cdots\ga^{\a_1}\ga^{\mu_1})
\end{align*}
where in the last sum $l_q$ is fixed to $-\sum_{1\leq j\leq q-1} l_j$
and thus,
$$
\phi_\sigma(k,l)=\sum_{1\leq j\leq q-1} (\sigma_j-\sigma_q) \, k.\Th
l_j + \sum_{2\leq
j\leq q-1} \sigma_j \, (l_1+\ldots +l_{j-1}).\Th l_j.
$$
By Lemma \ref{abs-som}, there exists a $R>0$ such that for any $s\in
\C$ with
$\Re(s)>R$, the family
$$\big(e^{\tfrac i2 \phi_\sigma(k,l)}\
g_\mu(s,k,l)\ \wt a_{\a,l}\big)_{(k,l)\in (\Z^n \setminus
\set{0})\times (\Z^{n})^{q-1}}$$
is absolutely summable as a linear combination of families of the
type considered in that lemma. As a consequence, we can
exchange the summations on $k$ and $l$, which gives the result.
\end{proof}

In the following, we will use the shorthand
$$
c:=\tfrac{4\pi^{2}}{3}.
$$

\begin{lemma}\label{Termaterm} Suppose $n=4$. Then, with the same
hypothesis of Lemma
\ref{formegenerale},
\begin{align*}
\hspace{-2cm}\text{(i)} \quad \quad &
\tfrac 12 \ncint (\mathbb A^+)^2= \tfrac 12 \ncint (\mathbb A^-)^2= c
\,
\sum_{l\in\Z^4} \, a_{\alpha_{1},l} \, a_{\alpha_{2},-l} \,
\big(l^{\alpha_{1}}l^{\alpha_{2}}
- \delta^{\alpha_{1}\alpha_{2}} \vert l \vert^2\big).\\
\hspace{-2cm}\text{(ii)} \quad  \quad & \hspace{-0.45cm}
-\tfrac 13\ \ncint (\mathbb A^+)^3=-\tfrac 13 \ncint (\mathbb
A^-)^3=4c
\,\sum_{l_i \in \Z^4}
a_{\a_3,-l_1-l_2}\,a^{\a_1}_{l_2}\,a_{\a_1,l_1}\ \sin \tfrac{l_1.\Th
l_2}{2}\,l_1^{\a_3}.\\
\hspace{-1cm}\text{(iii)} \quad \quad &
\tfrac 14 \ncint (\mathbb A^+)^4=\tfrac 14 \ncint (\mathbb A^-)^4=
2c\,
\sum_{l_i \in \Z^4} a_{\alpha_{1},-l_1-l_2-l_3}\,
a_{{\alpha_{2}},l_3} \, a^{\alpha_{1}}_{l_2}
\, a^{\alpha_{2}}_{l_1} \sin \tfrac{l_1 .\Th (l_2+l_3)}{2}\, \sin
\tfrac{ l_2 .\Th
l_3}{2}.
\end{align*}
\vspace{-0.3cm}

(iv) Suppose $\tfrac {1}{2\pi}\Th$ diophantine. Then the
crossed terms in
$\ncint (\mathbb A^+ + \mathbb A^-)^q$ vanish:  if $C$ is the set of
all $\sigma\in \{-,+\}^q$ with $2\leq q\leq 4$, such that there exist
$i,j$ satisfying
$\sigma_i \neq \sigma_j$, we have $ \sum_{\sigma \in C} \, \ncint
\mathbb{A}^{\sigma} =0.
$
\end{lemma}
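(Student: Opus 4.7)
The uniform starting point for all four parts is Lemma \ref{formegenerale}, which writes each $\ncint \mathbb{A}^\sigma$ as $\underset{s=0}{\Res}\, f_\sigma(s)$ for an explicit $f_\sigma$ carrying the phase $e^{(i/2)\phi_\sigma(k,l)}$, the rational symbol $g_\mu(s,k,l)$, and a gamma-matrix trace. Two regimes then emerge: when $\sigma$ is constant (pure $+$ or pure $-$), the coefficients $\sigma_j - \sigma_q$ in $\phi_\sigma(k,l)$ all vanish, so the $k$-dependent part of the phase disappears; when $\sigma$ is mixed, it does not, and the Diophantine hypothesis on $\tfrac{1}{2\pi}\Th$ becomes essential.

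For (i)--(iii) I would use Lemma \ref{symetrie} to reduce to $\sigma = (+,\ldots,+)$, so that $\phi_\sigma$ becomes the purely internal phase $\sum_{2\leq j\leq q-1}(l_1+\cdots+l_{j-1}).\Th\, l_j$. Then I expand each denominator in $g_\mu(s,k,l)$ via the identity (\ref{trick-0}) to isolate the monomials $k^\beta/|k|^{s+2q}$ that contribute a pole at $s=0$; the order needed is finite and is dictated by the degrees that give a nonzero residue through Proposition \ref{calculres}. Equations (\ref{formulen=2}) and (\ref{formule2}) then evaluate the relevant $S^{n-1}$-integrals. Standard gamma-matrix identities reduce $\Tr(\ga^{\a_q}\ga^{\mu_q}\cdots\ga^{\a_1}\ga^{\mu_1})$, contracted with the $\delta$-combinations coming from those integrals, to the scalar factors that appear in the stated right-hand sides. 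A final symmetrization by permutations of the $l_j$, exploiting the antisymmetry of $\Th$, converts the remaining exponentials $e^{(i/2)\,l\cdot\Th\, l'}$ into the product-of-sines structure in the formulae for $q=3,4$.

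For (iv), the key vector is $m_\sigma(l):=\sum_{j=1}^{q-1}(\sigma_j-\sigma_q)\,l_j\in\Z^n$, which is a nontrivial integer linear combination of the $l_j$ whenever $\sigma\in C$. Splitting $\sum_{l\in(\Z^n)^{q-1}}=\sum_{l\in\mathcal{Z}_\sigma}+\sum_{l\notin\mathcal{Z}_\sigma}$ with $\mathcal{Z}_\sigma:=\{l\,:\,m_\sigma(l)=0\}$, Theorem \ref{analytic}(iii) (or equivalently Theorem \ref{zetageneral}) applied to the complement shows, thanks to the Diophantine hypothesis, that $\sum_{l\notin\mathcal{Z}_\sigma}$ continues holomorphically to all of $\C$ and so contributes nothing to $\underset{s=0}{\Res}$. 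On the diagonal $\mathcal{Z}_\sigma$, the $k$-phase disappears and the computation reduces to the pure-$\sigma$ type from (i)--(iii). The remaining task is to check that $\sum_{\sigma\in C}\,\underset{s=0}{\Res}\, f_\sigma(s)\big|_{\mathcal{Z}_\sigma}=0$, which I would obtain by pairing each $\sigma$ with an appropriate involuted partner (combining reversal with a global flip, reflecting the action of $J$) and showing, via the sign $\lambda_\sigma=(-i)^q\prod_j\sigma_j$ together with the cyclicity of the gamma trace and the substitution $l\to-l$ that preserves $\mathcal{Z}_\sigma$, that paired contributions are opposite.

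The main obstacle is this cancellation step in (iv): bookkeeping the signs from $\lambda_\sigma$, the reordering of $\ga^{\a_i}\ga^{\mu_i}$ blocks inside the trace, and the effect of the involution on the coefficient $\wt a_{\a,l}$ require careful matching, especially for $q=4$ where $|C|=14$. By contrast, everything in (i)--(iii) is an application of the residue calculus already developed in Proposition \ref{calculres} and the gamma identities recalled in Section 4, so those parts should be essentially mechanical once the expansion of $g_\mu$ is carried to the correct finite order.
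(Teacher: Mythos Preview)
Your outline for (i)--(iii) is exactly the paper's route: start from Lemma \ref{formegenerale}, expand the denominators of $g_\mu$ via (\ref{trick-0}) until the remainder satisfies (H1), evaluate the surviving residues through Proposition \ref{calculres} and (\ref{formule2}), contract the gamma trace, and then permute the $l_j$ to reach the stated sine expressions. Likewise, for (iv) your split $\sum_{l\in\mathcal{Z}_\sigma}+\sum_{l\notin\mathcal{Z}_\sigma}$ with Theorem \ref{analytic}(iii) killing the off-diagonal part is precisely what the paper does.

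The one place where your proposal understates the difficulty is the cancellation on $\mathcal{Z}_\sigma$ for $q=4$. The $J$-symmetry of Lemma \ref{symetrie} does reduce the fourteen mixed $\sigma$'s to seven, and for $q=3$ a genuine pairing of the six mixed $\sigma$'s into three opposite pairs works as you suggest. But for $q=4$ the seven surviving terms do \emph{not} cancel in pairs. In the paper each of the seven is computed explicitly after the gamma trace has been reduced via (\ref{formule2}) to $2c(\delta^{\a_4\a_3}\delta^{\a_2\a_1}+\delta^{\a_4\a_1}\delta^{\a_3\a_2}-2\delta^{\a_4\a_2}\delta^{\a_3\a_1})$; the sum of the seven kernels is the \emph{nonzero} expression
\[
\sum_{\sigma\in C_7} K_\sigma(l_1,l_2,l_3)=2i\bigl(\delta_{l_3,0}-\delta_{l_1+l_2+l_3,0}\bigr)\sin\tfrac{l_1.\Th l_2}{2},
\]
and only after contracting with $\wt a_{\a,l}\,\delta_{\sum l_i,0}\,\delta^{\a_4\a_2}\delta^{\a_3\a_1}$ and performing two further permutations of the $l_i$ does the total vanish. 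So the final cancellation is a symmetry of the $l$-summation against the Schwartz coefficients, not a $\sigma$-pairing; your ``involuted partner'' mechanism would need to be replaced by this explicit computation at that step. (For $q=2$ the situation is simpler still: $\mathcal{Z}_\sigma=\{0\}$ and the residue there vanishes directly by Proposition \ref{res-int}.)
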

\begin{proof}
$(i)$ Lemma \ref{formegenerale} entails that
$\ncint \mathbb{A}^{++}=
\underset{s=0}{\Res} \sum_{l\in \Z^n} - f(s,l)$ where
$$
f(s,l):=  {\sum_{k\in\Z^n}}'
\tfrac{k_{\mu_1}(k+l)_{\mu_2}}{|k|^{s+2}|k+l|^2}\ \wt
a_{\a,l}\ \Tr(\ga^{\a_2}\ga^{\mu_2}\ga^{\a_1}\ga^{\mu_1}) \, \text{
and
} \, \wt a_{\a,l}:=a_{\alpha_1,l}\, a_{\alpha_2,-l} \,.
$$
We will now reduce the computation of the residue of an expression
involving terms
like $\vert k+l\vert^{2}$ in the denominator to the computation of
residues of zeta
functions. To proceed, we use (\ref{trick-0}) into an expression like
the one
appearing in $f(s,l)$. We see that the last term on the righthandside
yields a
$Z_{n}(s)$ while the first one is less divergent by one power of $k$.
If this is not
enough, we repeat this operation for the new factor of $\vert
k+l\vert^{2}$ in the
denominator. For $f(s,l)$, which is quadratically divergent at $s=0$,
we have to repeat
this operation three times before ending with a convergent result.
All the remaining
terms are expressible in terms of $Z_{n}$ functions. We get, using
three times
(\ref{trick-0}),
\begin{equation}\label{trick-1}
\tfrac{1}{|k+l|^2}=\tfrac{1}{|k|^2} - \tfrac{2k.l+|l|^2}{|k|^4} +
\tfrac{(2k.l+|l|^2)^2}{|k|^6} - \tfrac{(2k.l+|l|^2)^3}{|k|^6|k+l|^2}
\, .
\end{equation}
Let us define
$$
f_{\a,\mu}(s,l):={\sum_{k\in\Z^n}}'
\tfrac{k_{\mu_1}(k+l)_{\mu_2}}{|k|^{s+2}|k+l|^2}\ \wt a_{\a,l}
$$
so that $f(s,l)=
f_{\a,\mu}(s,l)\Tr(\ga^{\a_2}\ga^{\mu_2}\ga^{\a_1}\ga^{\mu_1})$.
Equation (\ref{trick-1}) gives
$$
f_{\a,\mu}(s,l) = f_1(s,l) - f_2(s,l) + f_3(s,l) - r(s,l)
$$
with obvious identifications. Note that the function
$$
r(s,l) ={\sum_{k\in\Z^n}}'
\tfrac{k_{\mu_1}(k+l)_{\mu_2}(2kl+|l|^2)^3}{|k|^{s+8}|k+l|^2}\ \wt
a_{\a,l}
$$
is a linear combination of functions of the type $H(s,l)$ satisfying
the hypothesis of
Corollary \ref{res-somH}. Thus, $r(s,l)$ satisfies (H1) and with the
previously seen
equivalence relation modulo functions satisfying this hypothesis we
get
$f_{\a,\mu}(s,l) \sim f_1(s,l) - f_2(s,l) + f_3(s,l)$.

Let's now compute $f_1(s,l)$.
$$
f_1(s,l) = {\sum_{k\in\Z^n}}'
\tfrac{k_{\mu_1}(k+l)_{\mu_2}}{|k|^{s+4}}\ \wt a_{\a,l}
= \wt a_{\a,l}\, {\sum_{k\in\Z^n}}'
\tfrac{k_{\mu_1}k_{\mu_2}}{|k|^{s+4}} + 0.
$$
Proposition \ref{res-int} entails that $s\mapsto {\sum_{k\in\Z^n}}'
\tfrac{k_{\mu_1}k_{\mu_2}}{|k|^{s+4}}$ is holomorphic at 0. Thus,
$f_1(s,l)$ satisfies
(H1), and $f_{\a,\mu}(s,l) \sim - f_2(s,l) + f_3(s,l)$.

Let's now compute $f_2(s,l)$ modulo (H1). We get, using several times
Proposition
\ref{res-int},
\begin{align*}
f_2(s,l)&={\sum_{k\in\Z^n}}'
\tfrac{k_{\mu_1}(k+l)_{\mu_2}(2kl+|l|^2)}{|k|^{s+6}}\ \wt
a_{\a,l} = {\sum_{k\in\Z^n}}' \tfrac{(2kl) k_{\mu_1} k_{\mu_2}+(2kl)
k_{\mu_1}
l_{\mu_2}+|l|^2 k_{\mu_1}
k_{\mu_2}+l_{\mu_2}|l|^2k_{\mu_1}}{|k|^{s+6}}\ \wt
a_{\a,l}\\
&\sim 0 + {\sum_{k\in\Z^n}}' \tfrac{(2kl) k_{\mu_1}
l_{\mu_2}}{|k|^{s+6}}\ \wt
a_{\a,l}+{\sum_{k\in\Z^n}}' \tfrac{|l|^2 k_{\mu_1}
k_{\mu_2}}{|k|^{s+6}}\ \wt
a_{\a,l}+0\, .
\end{align*}
Recall that ${\sum}'_{k\in \Z^n} \tfrac{k_ik_j}{|k|^{s+6}} =
\tfrac{\delta_{ij}}{n}
Z_n(s+4)$. Thus,
$$
f_2(s,l) \sim 2 l^i l_{\mu_2} \wt a_{\a,l} \tfrac{\delta_{i\mu_1}}{n}
Z_n(s+4)+
|l|^2\,\wt a_{\a,l}\, \tfrac{\delta_{\mu_1\mu_2}}{n} Z_n(s+4).
$$
Finally, let us compute $f_3(s,l)$ modulo (H1) following the same
principles:
\begin{align*}
f_3(s,l)&={\sum_{k\in\Z^n}}'
\tfrac{k_{\mu_1}(k+l)_{\mu_2}(2kl+|l|^2)^2}{|k|^{s+8}}\
\wt a_{\a,l}\\&= {\sum_{k\in\Z^n}}' \tfrac{(2kl)^2 k_{\mu_1}
k_{\mu_2} + (2kl)^2
k_{\mu_1} l_{\mu_2} + |l|^4 k_{\mu_1}k_{\mu_2} + |l|^4
k_{\mu_1}l_{\mu_2} +
(4kl)|l|^2k_{\mu_1}k_{\mu_2}+(4kl)|l|^2k_{\mu_1}l_{\mu_2}}{|k|^{s+8}}\
\wt
a_{\a,l}\\&\sim 4 l^{i}l^{j} \, {\sum_{k\in\Z^n}}'\tfrac{k_i k_j
k_{\mu_1}
k_{\mu_2}}{|k|^{s+8}} \wt a_{\a,l}+ 0.
\end{align*}
In conclusion,
$$
f_{\a,\mu}(s,l) \sim - \tfrac 14 (2 l_{\mu_1} l_{\mu_2} +|l|^2\,\,
\delta_{\mu_1\mu_2}) \wt a_{\a,l} Z_n(s+4) + 4 l^{i}l^{j}\,\wt
a_{\a,l}
{\sum_{k\in\Z^n}}'\tfrac{k_i k_j k_{\mu_1} k_{\mu_2}}{|k|^{s+8}}=:
g_{\a,\mu}(s,l).
$$
Proposition (\ref{res-int}) entails that $Z_n(s+4)$ and $s\mapsto
{\sum_{k\in\Z^n}}'\tfrac{k_i k_j k_{\mu_1} k_{\mu_2}}{|k|^{s+8}}$
extend
holomorphically in a punctured open disk centered at 0. Thus,
$g_{\a,\mu}(s,l)$
satisfies (H2) and we can apply Lemma \ref{res-som} to get
$$
-\ncint (\mathbb{A}^+)^2= \underset{s=0}{\Res} \sum_{l\in \Z^n}
f(s,l)= \sum_{l\in
\Z^n}\underset{s=0}{\Res}\ g_{\a,\mu}(s,l)
\Tr(\ga^{\a_2}\ga^{\mu_2}\ga^{\a_1}\ga^{\mu_1})=:\sum_{l\in
\Z^n}\underset{s=0}{\Res}\
g(s,l).
$$
The problem is now reduced to the computation of
$\underset{s=0}{\Res}\ g(s,l)$.
Recall that Res$_{s=0} \, Z_{4}(s+4)=2\pi^2$ by (\ref{formule}) or
(\ref{formule1}),
and
$$
{\rm Res}_{s=0}\,{\sum_{k\in\Z^{n}}}'\,\tfrac{k_{i}k_{j}k_{l}k_{m}}
{\vert
k\vert^{s+8}}=(\delta_{ij}\delta_{lm}+\delta_{il}\delta_{jm}
+\delta_{im}\delta_{jl})\,\tfrac{\pi^2}{12}.
$$
Thus,
$$
\underset{s=0}{\Res}\ g_{\a,\mu}(s,l) = -\tfrac{\pi^2}{3}\,\wt
a_{\a,l}\,(l_{\mu_1}l_{\mu_2}+\tfrac 12 |l|^2\delta_{\mu_1\mu_2}).
$$
We will use
\begin{align}
    \label{Wick}
\Tr(\ga^{\mu_{1}}\cdots\ga^{\mu_{2j}})=\Tr(1)\, \sum_{\text{all
pairings of }\set{1\cdots 2j}} s(P) \, \delta_{\mu_{P_{1}}\mu_{P_{2}}}
 \delta_{\mu_{P_{3}}\mu_{P_{4}}}\cdots
 \delta_{\mu_{P_{2j-1}}\mu_{P_{2j}}}
\end{align}
where $s(P)$ is the signature of the permutation $P$ when
$P_{2m-1}<P_{2m}$ for $1 \leq m \leq n$. This gives
\begin{align}
\label{Wick1}
\Tr(\ga^{\alpha_{2}}\ga^{\mu_2}\ga^{\alpha_{1}}\ga^{\mu_1}) = 2^{m}
(\delta^{\alpha_{2}\mu_2}\delta^{\alpha_{1}
\mu_1}-\delta^{\alpha_{1}\alpha_{2}}\delta^{\mu_2\mu_1}+
\delta^{\alpha_{2}\mu_1}\delta^{\mu_2\alpha_{1}}).
\end{align}
Thus,
\begin{align*}
\underset{s=0}{\Res}\ g(s,l)&= -c\,\wt
a_{\a,l}\,(l_{\mu_1}l_{\mu_2}+\tfrac 12
|l|^2\delta_{\mu_1\mu_2}) (\delta^{\alpha_{2}\mu_2}\delta^{\alpha_{1}
\mu_1}-\delta^{\alpha_{1}\alpha_{2}}\delta^{\mu_2\mu_1}+
\delta^{\alpha_{2}\mu_1}\delta^{\mu_2\alpha_{1}})\\
&=-2c\,\wt a_{\a,l}\,(l^{\a_1}l^{\a_2}-\delta^{\a_1\a_2} |l|^2).
\end{align*}
Finally,
$$
\tfrac 12 \ncint (\mathbb{A}^+)^2= \tfrac 12 \ncint (\mathbb{A}^-)^2=
c \,
\sum_{l\in\Z^n} \, a_{\alpha_{1},l} \, a_{\alpha_{2},-l} \,
\big(l^{\alpha_{1}}l^{\alpha_{2}} - \delta^{\alpha_{1}\alpha_{2}}
\vert l \vert
^2\big).
$$

$(ii)$ Lemma \ref{formegenerale} entails that $\ncint
\mathbb{A}^{+++}=
\underset{s=0} {\Res} \sum_{(l_1,l_2)\in (\Z^n)^2} f(s,l)$ where
\begin{align*}
f(s,l)&:=  {\sum_{k\in\Z^n}}' i\,e^{\tfrac i2 l_1\Th
l_2}\,\tfrac{k_{\mu_1}(k+l_1)_{\mu_2}(k+\wh
l_2)_{\mu_3}}{|k|^{s+2}|k+l_1|^2|k+\wh
l_2|^2}\ \wt a_{\a,l}
\Tr(\ga^{\a_3}\ga^{\mu_3}\ga^{\a_2}\ga^{\mu_2}\ga^{\a_1}\ga^{\mu_1})\\
&=:f_{\a,\mu}(s,l)\Tr(\ga^{\a_3}\ga^{\mu_3}\ga^{\a_2}\ga^{\mu_2}\ga^{\a_1}\ga^{\mu_1}),
\end{align*}
and $\wt a_{\a,l}:=a_{\alpha_1,l_1}\, a_{\alpha_2,l_2}\,
a_{\alpha_3,-\wh l_2}$ with
$\wh l_2:=l_1+l_2$.

We use the same technique as in $(i)$:
\begin{align*}
\tfrac{1}{|k+l_1|^2}&=\tfrac{1}{|k|^2} -
\tfrac{2k.l_1+|l_1|^2}{|k|^4} +
\tfrac{(2k.l_1+|l_1|^2)^2}{|k|^4|k+l_1|^2}\, , \\
\tfrac{1}{|k+\wh l_2|^2}&=\frac{1}{|k|^2} - \tfrac{2k.\wh l_2+|\wh
l_2|^2}{|k|^4} +
\tfrac{(2k.\wh l_2+|\wh l_2|^2)^2}{|k|^4|k+\wh l_2|^2}
\end{align*}
and thus,
\begin{equation}\label{trick-2}
\tfrac{1}{|k+l_1|^2|k+\wh l_2|^2}=\tfrac{1}{|k|^4}
-\tfrac{2k.l_1}{|k|^6}-\tfrac{2k.\wh l_2}{|k|^6} +R(k,l)
\end{equation}
where the remain $R(k,l)$ is a term of order at most $-6$ in $k$.
Equation
(\ref{trick-2}) gives
$$
f_{\a,\mu}(s,l) = f_1(s,l) + r(s,l)
$$
where $r(s,l)$ corresponds to $R(k,l)$. Note that the function
$$
r(s,l) ={\sum_{k\in\Z^n}}'i\,e^{\tfrac i2 l_1\Th l_2}\,
\tfrac{k_{\mu_1}(k+l)_{\mu_2}(k+\wh l_2)_{\mu_3}R(k,l)}{|k|^{s+2}}\
\wt a_{\a,l}
$$
is a linear combination of functions of the type $H(s,l)$ satisfying
the hypothesis of
Corollary (\ref{res-somH}). Thus, $r(s,l)$ satisfies (H1) and
$f_{\a,\mu}(s,l) \sim
f_1(s,l)$.

Let us compute $f_1(s,l)$ modulo (H1)
\begin{align*}
f_1(s,l) &= {\sum_{k\in\Z^n}}'i\,e^{\tfrac i2 l_1\Th l_2}\,
\tfrac{k_{\mu_1}(k+l_1)_{\mu_2}(k+\wh l_2)_{\mu_3}}{|k|^{s+6}}\ \wt
a_{\a,l} -
{\sum_{k\in\Z^n}}' i\,e^{\tfrac i2 l_1\Th
l_2}\,\tfrac{k_{\mu_1}(k+l_1)_{\mu_2}(k+\wh
l_2)_{\mu_3}(2k. l_1+2k.\wh
l_2)}{|k|^{s+8}}\ \wt a_{\a,l}\\
&\sim {\sum_{k\in\Z^n}}'i\,e^{\tfrac i2 l_1\Th
l_2}\,\tfrac{k_{\mu_1}k_{\mu_2} \wh
{l_2}_{\mu_3}+k_{\mu_1}k_{\mu_3}{ l_1}_{\mu_2}}{|k|^{s+6}}\ \wt
a_{\a,l}-
{\sum_{k\in\Z^n}}' i\,e^{\tfrac i2 l_1\Th
l_2}\,\tfrac{k_{\mu_1}k_{\mu_2}k_{\mu_3}(2k.l_1+2k.\wh
l_2)}{|k|^{s+8}}\
\wt a_{\a,l}\\
&= i\,e^{\tfrac i2 l_1\Th l_2}\,\wt a_{\a,l}\big( ({l_{1}}_{\mu_2}
\delta_{\mu_1
\mu_3}+\wh {l_2}_{\mu_3} \delta_{\mu_1 \mu_2}) \, \,\tfrac 14
Z_{4}(s+4) -
2(l_{1}^{i}+\wh l_{2}^{i}){\sum_{k\in\Z^{n}}}'
\tfrac{k_{\mu_1}k_{\mu_2}
k_{\mu_3}k_{i}}{\vert k\vert^{s+8}}\big)\\&=:g_{\a,\mu}(s,l).
\end{align*}
Since $g_{\a,\mu}(s,l)$ satisfies (H2), we can apply Lemma
\ref{res-som} to
get
\begin{align*}
\ncint (\mathbb{A^+})^3&= \underset{s=0}{\Res} \sum_{(l_1,l_2)\in
(\Z^n)^2} f(s,l)\\
&= \sum_{(l_1,l_2)\in (\Z^n)^2}\underset{s=0}{\Res}\ g_{\a,\mu}(s,l)
\Tr(\ga^{\a_3}\ga^{\mu_3}\ga^{\a_2}\ga^{\mu_2}\ga^{\a_1}\ga^{\mu_1})=:\sum_{l}
X_l.
\end{align*}
Recall that $l_3:=-l_1-l_2=-\wh l_2$. By (\ref{formule1}) and
(\ref{formule2}),
\begin{align*} \underset{s=0}{\Res}\ g_{\a,\mu}(s,l) &i\,e^{\tfrac i2
l_1\Th l_2}\,\wt a_{\a,l}\big(
2(-l_{1}^{i}+l_{3}^{i})\tfrac{\pi^2}{12} (\delta_{\mu_1
\mu_2}\delta_{\mu_3
i}+\delta_{\mu_1 \mu_3}\delta_{\mu_2 i}+\delta_{\mu_1 i}\delta_{\mu_2
\mu_3})\\ &+
({l_{1}}_{\mu_2}\delta_{\mu_1 \mu_3} -{l_{3}}_{\mu_3}\delta_{\mu_1
\mu_2}) \tfrac
{\pi^2}{2}\big).
\end{align*}
We decompose $X_l$ in five terms:
$X_l= 2^m\ \tfrac{\pi^2}{2} \ i\,e^{\tfrac i2 l_{1}\Th l_{2}}\, \wt
a_{\a,l}\,(T_1+T_2+T_3+T_4+T_5)$
where
\begin{align*}
T_0 &:= \tfrac 13 (-l_1^{i}+l_3^{i})(\delta_{\mu\nu}\delta_{\rho
i}+\delta_{\mu\rho}\delta_{\nu i}+\delta_{\mu i}\delta_{\nu\rho})
+{l_1}_\nu
\delta_{\mu\rho} -{l_3}_\rho \delta_{\mu\nu},\\
T_1&:=(\delta^{\a_3\rho}\delta^{\a_2\nu}\delta^{\a_1\mu}
-\delta^{\a_3\rho}\delta^{\a_2\a_1}\delta^{\mu\nu}+
\delta^{\a_3\rho}\delta^{\a_2\mu}\delta^{\a_1\nu})T_0,\\
T_2&:=(-\delta^{\a_2\a_3}\delta^{\rho\nu}\delta^{\a_1\mu}
+\delta^{\a_2\a_3}\delta^{\a_1\rho}\delta^{\mu\nu}-
\delta^{\a_2\a_3}\delta^{\rho\mu}\delta^{\a_1\nu})T_0,\\
T_3&:=(\delta^{\a_3\nu}\delta^{\a_2\rho}\delta^{\a_1\mu}
-\delta^{\a_3\nu}\delta^{\a_1\rho}\delta^{\a_2\mu}+
\delta^{\a_3\nu}\delta^{\rho\mu}\delta^{\a_1\a_2})T_0,\\
T_4&:=(-\delta^{\a_1\a_3}\delta^{\a_2\rho}\delta^{\mu\nu}
+\delta^{\a_1\a_3}\delta^{\rho\nu}\delta^{\a_2\mu}-
\delta^{\a_1\a_3}\delta^{\rho\mu}\delta^{\a_2\nu})T_0,\\
T_5&:=(\delta^{\a_3\mu}\delta^{\a_2\rho}\delta^{\a_1\nu}
-\delta^{\a_3\mu}\delta^{\rho\nu}\delta^{\a_1\a_2}+
\delta^{\a_3\mu}\delta^{\a_1\rho}\delta^{\a_2\nu})T_0 .
\end{align*}
With the shorthand $p:=-l_1-2l_3$, $q:=2l_1+l_3$,
$r:=-p-q=-l_1+l_3$, we compute each
$T_i$, and find
\begin{align*}
3T_1&=\delta^{\a_1\a_2}(2-2^m) p^{\a_3}
+ \delta^{\a_3\a_1} q^{\a_2}-\delta^{\a_2\a_1}
q^{\a_3}+\delta^{\a_3\a_2}q^{\a_1} + \delta^{\a_3\a_2}r^{\a_1}
-\delta^{\a_2\a_1}r^{\a_3}+\delta^{\a_3\a_1}r^{\a_2},\\
3T_2&=(2^m-2)\delta^{\a_2\a_3}p^{\a_1} -2^m \delta^{\a_2\a_3}q^{\a_1}
-2^m
\delta^{\a_2\a_3}r^{\a_1},\\
3T_3&=\delta^{\a_1\a_3}p^{\a_2}-\delta^{\a_2\a_3}p^{\a_1}
+\delta^{\a_1\a_2}p^{\a_3}+2^m
\delta^{\a_2\a_1}q^{\a_3}+\delta^{\a_3\a_2}r^{\a_1}-\delta^{\a_3\a_1}r^{\a_2}
+\delta^{\a_1\a_2}r^{\a_3},\\
3T_4&=-\delta^{\a_1\a_3}2^m p^{\a_2}-\delta^{\a_1\a_3}2^m q^{\a_2}
+\delta^{\a_1\a_3}(2^m-2)r^{\a_2},\\
3T_5&=\delta^{\a_1\a_3}p^{\a_2}-\delta^{\a_1\a_2}p^{\a_3}
+\delta^{\a_3\a_2}p^{\a_1}+\delta^{\a_3\a_2}q^{\a_1}
-\delta^{\a_1\a_2}q^{\a_3}+\delta^{\a_3\a_1}q^{\a_2}
+(2-2^m)\delta^{\a_1\a_2}r^{\a_3}.
\end{align*}
Thus,
\begin{equation}
X_l = 2^m\ \tfrac{2\pi^2}{3} i\, e^{\tfrac i2 l_{1}.\Th l_{2}}\, \wt
a_{\a,l}
\,(q^{\a_3}\delta^{\a_1\a_2}
+r^{\a_2}\delta^{\a_1\a_3}+p^{\a_1}\delta^{\a_2\a_3})
\label{formuleA3-1}
\end{equation}
and
\begin{align*}
\ncint (\mathbb{A}^+)^3=i\,2 c\,  (S_1+S_2+S_3),
\end{align*}
where $S_1$, $S_2$ and $S_3$ correspond to respectively
$q^{\a_3}\delta^{\a_1\a_2}$,
$r^{\a_2}\delta^{\a_1\a_3}$ and $p^{\a_1}\delta^{\a_2\a_3}$. In
$S_1$, we permute the
$l_i$ variables the following way: $l_1\mapsto l_3$, $l_2\mapsto
l_1$, $l_3\mapsto
l_2$. Therefore, $l_3.\Th\, l_1 \mapsto l_3.\Th\, l_1$ and $q \mapsto
r$. With a
similar permutation of the $\a_i$, we see that $S_1=S_2$. We apply
the same principles
to prove that $S_1=S_3$ (using permutation $l_1\mapsto l_2$,
$l_2\mapsto l_3$,
$l_3\mapsto l_1$). Thus,
$$
\tfrac 13\ \ncint (\mathbb{A}^+)^3= i\ 2c\ \sum_{l_i} \wt a_{\a,l}\,
 e^{\tfrac i2 l_{1}.\Th l_{2}}\ (l_1-l_2)^{\a_3}
\delta^{\a_1\a_2}= S_4-S_5,
$$
where $S_4$ correspond to $l_1$ and $S_5$ to $l_2$. We permute the
$l_i$ variables in
$S_5$ the following way: $l_1\mapsto l_2$, $l_2\mapsto l_1$,
$l_3\mapsto l_3$, with a
similar permutation on the $\a_i$. Since $l_1.\Th\, l_2 \mapsto
-l_1.\Th\, l_2$, we
finally get
$$
\tfrac 13\ \ncint (\mathbb{A}^+)^3=-4c \sum_{l_i}
a_{\a_1,l_1}\,a_{\a_2,l_2}\,
a_{\a_3,-l_1-l_2}\ \sin \tfrac{l_1.\Th l_2}{2}\ l_1^{\a_3} \,
\delta^{\a_1\a_2}.
$$

$(iii)$ Lemma \ref{formegenerale} entails that
$\ncint \mathbb{A}^{++++}= \underset{s=0}{\Res} \sum_{(l_1,l_2,l_3)\in
(\Z^n)^3} f_{\mu,\a}(s,l) \Tr\ga^{\mu,\a}$
where
\begin{align*}
&\theta:=l_1.\Th l_2+ l_1.\Th l_3 + l_2. \Th l_3,\\
&\Tr\ga^{\mu,\a}:= \Tr(\ga^{\a_4}\ga^{\mu_4}\ga^{\a_3}\ga^{\mu_3}
\ga^{\a_2}\ga^{\mu2}\ga^{\a_1}\ga^{\mu_1}),\\
&f_{\mu,\a}(s,l):= {\sum_{k\in\Z^n}}'e^{\tfrac i2 \theta}\,
\tfrac{k_{\mu_1}(k+l_1)_{\mu_2}(k+\wh l_2)_{\mu_3}(k+\wh
l_3)_{\mu_4}}{|k|^{s+2}|k+l_1|^2|k+\wh l_2|^2|k+\wh l_3|^2}\ \wt
a_{\a,l},\\
&\wt a_{\a,l}:=a_{\alpha_1,l_1}\, a_{\alpha_2,l_2}\,
a_{\alpha_3,l_3}\,a_{\alpha_4,-l_1-l_2-l_3}.
\end{align*}

Using (\ref{trick-0}) and Corollary \ref{res-somH} successively, we
find
$$
f_{\mu,\a}(s,l)\sim  {\sum_{k\in\Z^{n}}}'e^{\tfrac i2 \theta}\,
\tfrac{k_{\mu_1}k_{\mu_2}k_{\mu_3}k_{\mu_4}}{\vert k\vert^{s+2}\vert
k+l_{1}\vert^{2}{\vert k+l_{1}+l_{2}\vert^2} \vert k+l_1+l_2+l_3
\vert^2} \, \wt
a_{\a,l} \sim {\sum_{k\in\Z^{n}}}' e^{\tfrac i2 \theta}\,
\tfrac{k_{\mu_1}k_{\mu_2}k_{\mu_3}k_{\mu_4}}{\vert k\vert^{s+8}} \,
\wt a_{\a,l}.
$$

Since the function ${\sum_{k\in\Z^{n}}}'e^{\tfrac i2 \theta}\,
\tfrac{k_{\mu_1}k_{\mu_2}k_{\mu_3}k_{\mu_4}}{\vert k\vert^{s+8}} \,
\wt a_{\a,l}$
satisfies (H2), Lemma \ref{res-som} entails that
$$
\ncint (\mathbb{A}^+)^4= \sum_{(l_1,l_2,l_3)\in (\Z^n)^3} e^{\tfrac
i2 \theta}\,\wt
a_{\a,l}\,\underset{s=0}{\Res}\ {\sum_{k\in\Z^{n}}}'
\tfrac{k_{\mu_1}k_{\mu_2}k_{\mu_3}k_{\mu_4}}{\vert k\vert^{s+8}}
\Tr\ga^{\mu,\a}
=:\sum_l X_l.
$$
Therefore, with (\ref{formule2}), we get
$X_l =\tfrac{\pi^2}{12}\, \wt a_{\a,l}\, e^{\tfrac i2 \th}\,
(A+B+C)$,
where
\begin{align*}
A&:=\Tr(\ga^{\a_4}\ga^{\mu_4}\ga^{\a_3}
\ga_{\mu_4}\ga^{\a_2}\ga^{\mu_2}\ga^{\a_1}\ga_{\mu_2}),\\
B&:=\Tr(\ga^{\a_4}\ga^{\mu_4}\ga^{\a_3}\ga^{\mu_2}\ga^{\a_2}\ga_{\mu_4}\ga^{\a_1}\ga_{\mu_2}),\\
C&:=\Tr(\ga^{\a_4}\ga^{\mu_4}\ga^{\a_3}
\ga_{\mu_2}\ga^{\a_2}\ga^{\mu_2}\ga^{\a_1}\ga_{\mu_4}).
\end{align*}
Using successively $\{\gamma^{\mu},\gamma^{\nu}\}=2\delta^{\mu\nu}$
and
$\gamma^\mu\gamma_\mu=2^m\ 1_{2^m}$, we see that
\begin{align*}
A&=C=4\ \Tr(\ga^{\a_4}\ga^{\a_3}\ga^{\a_2}\ga^{\a_1}),\\
B&=-4\ \big(\Tr(\ga^{\a_4}\ga^{\a_3}\ga^{\a_1}\ga^{\a_2}) +
\Tr(\ga^{\a_4}\ga^{\a_2}\ga^{\a_3}\ga^{\a_1})\big).
\end{align*}
Thus,
$
A+B+C=8\ 2^m \big( \delta^{\a_4\a_3}\delta^{\a_2\a_1}
+\delta^{\a_4\a_1}\delta^{\a_3\a_2}-2\delta^{\a_4\a_2}\delta^{\a_3\a_1}\big),
$
and
\begin{equation}
X_l =\tfrac{2\pi^2}{3}\ 2^m\ e^{\tfrac i2 \th}\, \wt a_{\a,l}\,  \big(
\delta^{\a_4\a_3}\delta^{\a_2\a_1} +\delta^{\a_4\a_1}\delta^{\a_3\a_2}
-2\delta^{\a_4\a_2}\delta^{\a_3\a_1}\big).\label{formuleA4c}
\end{equation}
By (\ref{formuleA4c}), we get
$$
\ncint (\mathbb{A}^+)^4 = 2c\ (-2T_1 +T_2+T_3),
$$
where
\begin{align*}
T_1&:=\sum_{l_1,\ldots,l_4}a_{\a_4,l_4}\,a_{\a_3,l_3}\, a_{\a_2,l_2}
\,a_{\a_1,l_1} \,
e^{\tfrac i2 \th}\
\delta_{0,\sum_i l_i}\ \delta^{\a_4\a_2}\,\delta^{\a_3\a_1},\\
T_2&:=\sum_{l_1,\ldots,l_4}a_{\a_4,l_4}\,
a_{\a_3,l_3}\,a_{\a_2,l_2}\,a_{\a_1,l_1} \,
e^{\tfrac i2 \th}\
\delta_{0,\sum_i l_i}\ \delta^{\a_4\a_3}\,\delta^{\a_2\a_1},\\
T_3&:=\sum_{l_1,\ldots,l_4}a_{\a_4,l_4}\,a_{\a_3,l_3}\,a_{\a_2,l_2}\,a_{\a_1,l_1}
\,
e^{\tfrac i2\th}\ \delta_{0,\sum_i l_i}\
\delta^{\a_4\a_1}\,\delta^{\a_3\a_2}.
\end{align*}
We now proceed to the following permutations
of the $l_i$ variables in the $T_1$ term :
$l_1\mapsto l_2$, $l_2 \mapsto l_1$,
$l_3 \mapsto l_4$, $l_4 \mapsto l_3$. While
$\sum_i l_i$ is invariant, $\th$ is modified :
$\th \mapsto l_2 .\Th l_1 + l_2 .\Th
l_4 + l_1 .\Th  l_4$. With $\delta_{0,\sum_i l_i}$
in factor, we can let $l_4$ be
$-l_1-l_2-l_3$, so that $\th \mapsto -\th$. We also
permute the $\a_i$ in the same
way. Thus,
$$
T_1=\sum_{l_1,\ldots,l_4}a_{\a_3,l_3}\,a_{\a_4,l_4} \, a_{\a_1,l_1}
\,a_{\a_2,l_2} \,
e^{-\tfrac i2 \th}\ \delta_{0,\sum_i l_i}\ \delta^{\a_3\a_1}
\,\delta^{\a_4\a_2}.
$$
Therefore,
\begin{equation}
2T_1 = 2\sum_{l_1,\ldots,l_4}a_{\a_4,l_4} \, a_{\a_3,l_3}\,
a_{\a_2,l_2}
\,a_{\a_1,l_1}\ \cos \tfrac{\th}{2}\ \delta_{0,\sum_i l_i}\
\delta^{\a_4\a_2} \,
\delta^{\a_3\a_1}\label{formuleT1}.
\end{equation}
The same principles are applied to $T_2$ and $T_3$.
Namely, the permutation
$l_1\mapsto l_1$, $l_2\mapsto l_3$, $l_3
\mapsto l_2$, $l_4 \mapsto l_4$ in $T_2$ and
the permutation $l_1\mapsto l_2$,
$l_2\mapsto l_3$, $l_3 \mapsto l_1$, $l_4 \mapsto
l_4$ in $T_3$ (the $\a_i$ variables are permuted the same way) give
\begin{align*}
T_2 &= \sum_{l_1,\ldots,l_4}a_{\a_4,l_4}a_{\a_3,l_3} a_{\a_2,l_2} \,
a_{\a_1,l_1} \,
e^{\tfrac i2 \phi}\
\delta_{0,\sum_i l_i}\ \delta^{\a_4\a_2}\,\delta^{\a_3\a_1},\\
T_3 &= \sum_{l_1,\ldots,l_4}a_{\a_4,l_4} \, a_{\a_3,l_3} a_{\a_2,l_2}
\, a_{\a_1,l_1}
\,e^{-\tfrac i2 \phi}\ \delta_{0,\sum_i l_i}\
\delta^{\a_4\a_2}\,\delta^{\a_3\a_1}
\end{align*}
where $\phi:=l_1 .\Th\ l_2 + l_1 .\Th\ l_3 - l_2 .\Th\ l_3$. Finally,
we get
\begin{align}
\ncint (\mathbb{A}^+)^4&=4c\ \sum_{l_1,\ldots,l_4}a_{{\a_{1}},l_4} \,
a_{{\a_{2}},l_3}\,
a^{\a_{1}}_{l_2} \, a^{\a_{2}}_{l_1} \, \delta_{0,\sum_i l_i}(\cos
\tfrac{\phi}{2}
-\cos \tfrac{\th}{2})\nonumber\\
&=8c\ \sum_{l_1,\ldots,l_3}a_{{\a_{1}},-l_1-l_2-l_3}\,
a_{{\a_{2}},l_3} \,
a^{\a_{1}}_{l_2} \, a^{\a_{2}}_{l_1} \sin \tfrac{l_1.\Th
(l_2+l_3)}{2}\ \sin \tfrac{
l_2 .\Th l_3}{2}.\label{formuleA4}
\end{align}

$(iv)$ Suppose $q=2$. By Lemma \ref{formegenerale}, we get
$$
\ncint \mathbb{A}^{\sigma}= \underset{s=0}{\Res} \sum_{l\in \Z^n}
\lambda_\sigma
f_{\a,\mu}(s,l) \Tr(\ga^{\a_2}\ga^{\mu_2}\ga^{\a_1}\ga^{\mu_1})
$$
where
$$
f_{\a,\mu}(s,l):=  {\sum_{k\in\Z^n}}'
\tfrac{k_{\mu_1}(k+l)_{\mu_2}}{|k|^{s+2}|k+l|^2}\,e^{i \eta \,  k.\Th
l}
\,\wt a_{\a,l}\,
$$
and $\eta:=\half (\sigma_1-\sigma_2) \in \{-1,1\}$. As in the proof of
$(i)$, since the presence of the phase does not change the
fact that $r(s,l)$ satisfies (H1), we get
$$
f_{\a,\mu}(s,l) \sim f_1(s,l) - f_2(s,l) + f_3(s,l)
$$
where
\begin{align*}
f_1(s,l) &= {\sum_{k\in\Z^n}}'
\tfrac{k_{\mu_1}(k+l)_{\mu_2}}{|k|^{s+4}}\,e^{i \eta \,
k.\Th l}\, \wt a_{\a,l},\\
f_2(s,l)&={\sum_{k\in\Z^n}}'
\tfrac{k_{\mu_1}(k+l)_{\mu_2}(2k.l+|l|^2)}{|k|^{s+6}}\,e^{i \eta \,
k.\Th l} \,  \wt
a_{\a,l} ,\\
f_3(s,l)&={\sum_{k\in\Z^n}}'
\tfrac{k_{\mu_1}(k+l)_{\mu_2}(2k.l+|l|^2)^2}{|k|^{s+8}}\,e^{i \eta \,
k.\Th l} \, \wt
a_{\a,l}.
\end{align*}
Suppose that $l=0$. Then $f_2(s,0)=f_3(s,0)=0$ and Proposition
\ref{res-int} entails
that
\begin{align*}
f_1(s,0)&= {\sum}_{k\in\Z^n}' \tfrac{k_{\mu_1}k_{\mu_2}}{|k|^{s+4}}\,
\wt a_{\a,0}
\end{align*}
is holomorphic at 0 and so is $f_{\a,\mu}(s,0)$.

Since $\tfrac {1}{2\pi}\Th$
is diophantine, Theorem \ref{analytic} $3$ gives us the result.

Suppose $q=3$. Then Lemma \ref{formegenerale} implies that
$$
\ncint \mathbb{A}^{\sigma}
= \underset{s=0}\Res\ {\sum}_{l\in (\Z^n)^{2}} \,  f_{\mu,\a}(s,l)\,
\Tr(\ga^{\mu_3}\ga^{\a_3}\cdots \ga^{\mu_1}\ga^{\a_1})
$$
where
$$
f_{\mu,\a}(s,l):= {\sum}'_{k\in \Z^n}\la_\sigma e^{ik.\Th(\eps_1
l_1+\eps_2 l_2)}
e^{\tfrac i2 \sigma_2 l_1.\Th
l_2}\tfrac{k_{\mu_1}(k+l_1)_{\mu_2}(k+l_1+
l_2)_{\mu_3}}{|k|^{s+2}|k+l_1|^2|k+l_1+l_{2}|^2}\,\wt a_{\a,l},
$$
and $\eps_i :=\half(\sigma_i - \sigma_3)\in \{-1,0,1\}$. By hypothesis
$(\eps_1,\eps_2)\neq (0,0)$. There are six possibilities for the
values of
$(\eps_1,\eps_2)$, corresponding to the six possibilities for the
values of $\sigma$:
$(-,-,+)$, $(-,+,+)$, $(+,-,+)$, $(+,+,-)$, $(-,+,-)$, and $(+,-,-)$.
As in $(ii)$, we
see that
\begin{align*}
f_{\mu,\a}(s,l)&\sim\big( {\sum_{k\in\Z^n}}' \tfrac{e^{ik.\Th(\eps_1
l_1+\eps_2
l_2)}k_{\mu_1}(k+l_1)_{\mu_2}(k+\wh l_2)_{\mu_3}}{|k|^{s+6}}\\
& \hspace{1cm}- {\sum_{k\in\Z^n}}'
\tfrac{e^{ik.\Th(\eps_1 l_1+\eps_2 l_2)}k_{\mu_1}(k+l_1)_{\mu_2}(k+\wh
l_2)_{\mu_3}(2k. l_1+2k.\wh l_2)}{|k|^{s+8}}\, \la_\sigma \,\wt
a_{\a,l}\,e^{\tfrac i2
\sigma_2 l_1.\Th l_2}.
\end{align*}
With $Z:=\{(l_1,l_2) \ : \ \eps_1 l_1 + \eps_2 l_2=0\}$, Theorem
\ref{analytic} $(iii)$
entails that $\sum_{l\in (\Z^n)^2\setminus Z}f_{\mu,\a}(s,l)$ is
holomorphic at 0.
To conclude we need to prove that
$$\sum_{\sigma} g(\sigma) := \sum_{\sigma} \sum_{l\in
Z}f_{\mu,\a}(s,l)\,
\Tr(\ga^{\mu_3}\ga^{\a_3}\cdots \ga^{\mu_1}\ga^{\a_1})$$ is
holomorphic at 0. By
definition, $\la_\sigma = i\sigma_1\sigma_2\sigma_3$
and as a consequence, we check that
\begin{align*}
g(-,-,+)=-g(+,+,-),\quad
g(+,-,+)=-g(+,-,-),\quad
g(-,+,+)=-g(-,+,-),
\end{align*}
which implies that $\sum_{\sigma} g(\sigma)=0$. The result follows.

Suppose finally that $q=4$. Again, Lemma \ref{formegenerale} implies
that
$$
\ncint \mathbb{A}^{\sigma} = \underset{s=0}\Res\ \sum_{l\in
(\Z^n)^{3}} f_{\mu,\a}(s,l)\,
\Tr(\ga^{\mu_4}\ga^{\a_4}\cdots \ga^{\mu_1}\ga^{\a_1})
$$
where
$$
f_{\mu,\a}(s,l):= {\sum_{k\in \Z^n}}'\la_\sigma\, e^{ik.\Th
\sum_{i=1}^{3}\eps_i l_i}
\, e^{\tfrac i2 (\sigma_2 l_1.\Th l_2+\sigma_3 (l_1+l_2).\Th
l_3)} \, \tfrac{k_{\mu_1}(k+l_1)_{\mu_2}(k+l_1+
l_2)_{\mu_3}(k+l_1+l_2+l_3)_{\mu_4}}
{|k|^{s+2}|k+l_1|^2|k+l_1+l_{2}|^2|k+l_1+l_2+l_3|^2}\, \wt
a_{\a,l}
$$
and $\eps_i :=\half(\sigma_i - \sigma_4)\in \{-1,0,1\}$. By hypothesis
$(\eps_1,\eps_2,\eps_3)\neq (0,0,0)$. There are fourteen
possibilities for the values
of $(\eps_1,\eps_2,\eps_3)$, corresponding to the fourteen
possibilities for the
values of $\sigma$: $(-,-,-,+)$, $(-,-,+,+)$, $(-,+,-,+)$,
$(+,-,-,+)$, $(-,+,+,+)$,
$(+,-,+,+)$, $(+,+,-,+)$, $(+,+,+,-)$, $(-,-,+,-)$, $(-,+,-,-)$,
$(+,-,-,-)$,
$(-,+,+,-)$, $(+,-,+,-)$ and $(+,+,-,-)$. As in $(ii)$, we see that,
with the
shorthand $\th_\sigma:=\sigma_2 l_1.\Th l_2+\sigma_3 (l_1+l_2).\Th
l_3$,
\begin{align*}
f_{\mu,\a}(s,l)\sim {\sum}'_{k\in \Z^n}\la_\sigma \, e^{ik.\Th
\sum_{i=1}^3\eps_i l_i}
\, e^{\tfrac i2 \th_\sigma} \,
\tfrac{k_{\mu_1}k_{\mu_2}k_{\mu_3}k_{\mu_4}}{|k|^{s+8}}\wt
a_{\a,l}=:g_{\mu,\a}(s,l)\, .
\end{align*}
With $Z_\sigma:=\{(l_1,l_2,l_3) \ : \ \sum_{i=1}^3\eps_i l_i=0\}$,
Theorem
\ref{analytic} $(iii)$, the series
$\sum_{l \in (\Z^n)^3 \setminus Z_\sigma}f_{\mu,\a}(s,l)$ is
holomorphic at 0. To conclude, we need to prove that
$$
\sum_{\sigma} g(\sigma) := \sum_{\sigma}\underset{s=0}\Res\
\sum_{l\in Z_\sigma}g_{\mu,\a}(s,l)\,
\Tr(\ga^{\mu_4}\ga^{\a_4}\cdots \ga^{\mu_1}\ga^{\a_1})=0.
$$
Let $C$ be the set of
the fourteen values of $\sigma$ and $C_7$ be the set of the seven
first values of
$\sigma$ given above. Lemma \ref{symetrie} implies
$$
\sum_{\sigma\in C} g(\sigma) = 2\sum_{\sigma\in C_7} g(\sigma).
$$
Thus, in the following, we restrict to these seven values. Let us note
$F_\mu(s):={\sum}'_{k\in
\Z^n}\tfrac{k_{\mu_1}k_{\mu_2}k_{\mu_3}k_{\mu_4}}{|k|^{s+8}}$
so that
$$
g(\sigma)=\underset{s=0} \Res \ F_\mu(s) \, \la_\sigma \, \sum_{l\in
Z_\sigma} e^{\tfrac i2
\th_\sigma} \, \wt a_{\a,l}\, \Tr(\ga^{\mu_4}\ga^{\a_4}\cdots
\ga^{\mu_1}\ga^{\a_1}).
$$
Recall from (\ref{formuleA4c}) that
$$
\underset{s=0}\Res\ F_\mu(s) \Tr(\ga^{\mu_4}\ga^{\a_4}\cdots
\ga^{\mu_1}\ga^{\a_1})
=2c \big( \delta^{\a_4\a_3}\delta^{\a_2\a_1}
+\delta^{\a_4\a_1}\delta^{\a_3\a_2}
-2\delta^{\a_4\a_2}\delta^{\a_3\a_1}\big).
$$
As a consequence, we get, with $\wt a_{\a,l}:= a_{\a_1,l_1}\cdots
a_{\a_4,l_4}$,
\begin{align*}
g(\sigma)&=2c\la_\sigma \sum_{l\in (\Z^n)^{4}} e^{\tfrac i2
\th_\sigma} \, \wt
a_{\a,l} \,\delta_{\sum_{i=1}^4 l_i,0} \,\delta_{\sum_{i=1}^3\eps_i
l_i,0}\big(
\delta^{\a_4\a_3}\delta^{\a_2\a_1} +\delta^{\a_4\a_1}\delta^{\a_3\a_2}
-2\delta^{\a_4\a_2}\delta^{\a_3\a_1}\big)\\
&=:2c\la_\sigma(T_1+T_2-2 T_3).
\end{align*}
We proceed to the following change of variable in $T_1$: $l_1\mapsto
l_1$, $l_2\mapsto
l_3$, $l_3\mapsto l_2$, $l_4\mapsto l_4$. Thus, we get
$\th_\sigma\mapsto
\psi_\sigma:=\sigma_2 l_1.\Th l_3+\sigma_3(l_1+l_3).\Th l_2$, and
$\sum_{i=1}^3\eps_i
l_i \mapsto \eps_1 l_1 + \eps_3 l_2 + \eps_2 l_3=:u_\sigma(l)$. With
a similar
permutation on the $\a_i$, we get
$$
T_1 =\sum_{l\in (\Z^n)^{4}} e^{\tfrac i2 \psi_\sigma}\,\wt
a_{\a,l}\,\delta_{\sum_{i=1}^4 l_i,0} \,\delta_{\eps_1 l_1 + \eps_3
l_2 + \eps_2
l_3,0}\, \delta^{\a_4\a_2}\delta^{\a_3\a_1}.
$$
We proceed to the following change of variable in $T_2$: $l_1\mapsto
l_2$, $l_2\mapsto
l_3$, $l_3\mapsto l_1$, $l_4\mapsto l_4$. Thus, we get
$\th_\sigma\mapsto
\phi_\sigma:=\sigma_2 l_2.\Th l_3+\sigma_3(l_2+l_3).\Th l_1$, and
$\sum_{i=1}^3\eps_i
l_i \mapsto \eps_3 l_1 + \eps_1 l_2 + \eps_2 l_3=:v_\sigma(l)$. After
a similar
permutation on the $\a_i$, we get
$$
T_2 ={\sum}_{l\in (\Z^n)^{4}} \, e^{\tfrac i2 \phi_\sigma}\,\wt
a_{\a,l}\,\delta_{\sum_{i=1}^4 l_i,0} \,\delta_{\eps_3 l_1 + \eps_1
l_2 + \eps_2
l_3,0} \, \delta^{\a_4\a_2}\delta^{\a_3\a_1}.
$$
Finally, we proceed to the following change of variable in $T_3$:
$l_1\mapsto l_2$,
$l_2\mapsto l_1$, $l_3\mapsto l_4$, $l_4\mapsto l_3$. Thus, we get
$\th_\sigma\mapsto
-\th_\sigma$, and $\sum_{i=1}^3\eps_i l_i \mapsto (\eps_2-\eps_3) l_1
+
(\eps_1-\eps_3) l_2 - \eps_3 l_3=:w_\sigma(l)$. With a similar
permutation on the
$\a_i$, we get
$$
T_3 ={\sum}_{l\in (\Z^n)^{4}}\, e^{-\tfrac i2 \th_\sigma}\,\wt
a_{\a,l}\,\delta_{\sum_{i=1}^4 l_i,0} \,\delta_{(\eps_2-\eps_3) l_1 +
(\eps_1-\eps_3)
l_2 - \eps_3 l_3,0}\delta^{\a_4\a_2}\delta^{\a_3\a_1}.
$$
As a consequence, we get
$$
g(\sigma)=2c {\sum}_{l\in (\Z^n)^{4}} K_\sigma(l_1,l_2,l_3)\,\wt
a_{\a,l}\,\delta_{\sum_{i=1}^4
l_i,0}\,\delta^{\a_4\a_2}\delta^{\a_3\a_1},
$$
where
$
K_\sigma(l_1,l_2,l_3)=\la_\sigma\big(e^{\tfrac i2
\psi_\sigma}\,\delta_{u_\sigma(l),0}+e^{\tfrac i2
\phi_\sigma}\,\delta_{v_\sigma(l),0}
-e^{\tfrac i2 \th_\sigma}\, \delta_{\sum_{i=1}^3\eps_i
l_i,0}-e^{-\tfrac i2
\th_\sigma}\, \delta_{w_\sigma(l),0}\big).
$

The computation of $K_\sigma(l_1,l_2,l_3)$ for the seven values of
$\sigma$ yields
\begin{align*}
K_{--++}(l_1,l_2,l_3)&=\delta_{l_1+l_3,0}+\delta_{l_2+l_3,0}-\delta_{l_1+l_2,0}-\delta_{l_1+l_2,0},\\
K_{-+-+}(l_1,l_2,l_3)&=\delta_{l_1+l_2,0}+\delta_{l_1+l_2,0}-\delta_{l_1+l_3,0}-\delta_{l_1+l_3,0},\\
K_{--++}(l_1,l_2,l_3)&=\delta_{l_2+l_3,0}+\delta_{l_1+l_3,0}-\delta_{l_2+l_3,0}-\delta_{l_2+l_3,0},\\
K_{---+}(l_1,l_2,l_3)&=-\big( e^{\tfrac i2 l_1.\Th l_2
}\delta_{\sum_{i=1}^3 l_i,0} +
e^{\tfrac i2 l_2.\Th l_1}\delta_{\sum_{i=1}^3 l_i,0}
- e^{\tfrac i2 l_2.\Th l_1} \delta_{\sum_{i=1}^3 l_i,0}- e^{\tfrac i2
l_1.\Th l_2} \delta_{l_3,0} \big),\\
K_{-+++}(l_1,l_2,l_3)&=-\big(e^{\tfrac i2 l_3.\Th l_2 }\delta_{l_1,0}
+ e^{\tfrac i2
l_3.\Th l_1}\delta_{l_2,0}
- e^{\tfrac i2 l_2.\Th l_3}\delta_{l_1,0} - e^{\tfrac i2 l_3.\Th
l_1}\delta_{l_2,0} \big),\\
K_{+-++}(l_1,l_2,l_3)&=-\big(e^{\tfrac i2 l_1.\Th l_2 }\delta_{l_3,0}
+ e^{\tfrac i2
l_2.\Th l_1} \delta_{l_3,0}-
e^{\tfrac i2 l_1.\Th l_3}\delta_{l_2,0} - e^{\tfrac i2 l_3.\Th
l_2}\delta_{l_1,0} \big),\\
K_{++-+}(l_1,l_2,l_3)&=-\big(e^{\tfrac i2 l_1.\Th l_3 }\delta_{l_2,0}
+ e^{\tfrac i2
l_2.\Th l_3}\delta_{l_1,0} - e^{\tfrac i2 l_1.\Th l_2}\delta_{l_3,0}
- e^{\tfrac i2
l_2.\Th l_1}\delta_{\sum_{i=1}^3 l_i,0} \big).
\end{align*}
Thus,
$$
\sum_{\sigma\in C_7} K_\sigma(l_1,l_2,l_3) =
2i(\delta_{l_3,0}-\delta_{\sum_{i=1}^3
l_i,0})\sin \tfrac {l_1.\Th l_2}{2}
$$
and
$$
\sum_{\sigma\in C_7} g(\sigma)=i 4c  \sum_{l\in (\Z^n)^{4}}
(\delta_{l_3,0}-\delta_{\sum_{i=1}^3 l_i,0})\sin \tfrac {l_1.\Th
l_2}{2}\,\wt
a_{\a,l}\,\delta_{\sum_{i=1}^4
l_i,0}\,\delta^{\a_4\a_2}\delta^{\a_3\a_1}.
$$
The following change of variables: $l_1\mapsto l_2$, $l_1\mapsto
l_2$, $l_3\mapsto
l_4$, $l_4\mapsto l_3$ gives
$$
\sum_{l\in (\Z^n)^{4}} \delta_{\sum_{1}^3 l_i,0}\sin \tfrac {l_1.\Th
l_2}{2}\,\wt
a_{\a,l}\,\delta_{\sum_{1}^4
l_i,0}\,\delta^{\a_4\a_2}\delta^{\a_3\a_1} =-\sum_{l\in
(\Z^n)^{4}} \delta_{l_3,0} \sin \tfrac {l_1.\Th l_2}{2}\,\wt
a_{\a,l}\,\delta_{\sum_{1}^4
l_i,0}\,\delta^{\a_4\a_2}\delta^{\a_3\a_1}
$$
so
$$
\sum_{\sigma\in C_7} g(\sigma)=i8c  \sum_{l\in (\Z^n)^{4}}
\delta_{l_3,0}\,\sin \tfrac
{l_1.\Th l_2}{2}\,\wt a_{\a,l}\,\delta_{\sum_{1}^4
l_i,0}\,\delta^{\a_4\a_2}\delta^{\a_3\a_1}.
$$
Finally, the change of variables: $l_2\mapsto l_4, l_4\mapsto l_2$
gives
$$
\sum_{l\in (\Z^n)^{4}} \delta_{l_3,0}\,\sin \tfrac {l_1.\Th
l_2}{2}\,\wt
a_{\a,l}\,\delta_{\sum_{1}^4
l_i,0}\,\delta^{\a_4\a_2}\delta^{\a_3\a_1} = -\sum_{l\in
(\Z^n)^{4}} \delta_{l_3,0}\,\sin \tfrac {l_1.\Th l_2}{2}\,\wt
a_{\a,l}\,\delta_{\sum_{1}^4
l_i,0}\,\delta^{\a_4\a_2}\delta^{\a_3\a_1}
$$
which entails that $\sum_{\sigma\in C_7} g(\sigma)=0$.
\end{proof}

\begin{lemma}
\label{double}
Suppose $n=4$ and $\tfrac{1}{2\pi}\Th$ diophantine. For any
self-adjoint one-form $A$,
$$
\zeta_{D_A}(0)-\zeta_{D}(0)=-c\,
\tau(F_{\a_1,\a_2}F^{\a_1\a_2}).
$$
\end{lemma}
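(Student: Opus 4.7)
The strategy is to apply the general formula (\ref{termconstanttilde}) in dimension $n=4$,
$$
\zeta_{D_{A}}(0)-\zeta_{D}(0)=\sum_{q=1}^{4}\tfrac{(-1)^{q}}{q}\ncint(\wt A D^{-1})^{q},
$$
and identify the four resulting terms with the Fourier expansion of $-c\,\tau(F_{\a\beta}F^{\a\beta})$, using
\begin{align*}
F_{\a\beta}F^{\a\beta}&=2\,(\delta_{\a}A_{\beta})(\delta^{\a}A^{\beta})-2\,(\delta_{\a}A_{\beta})(\delta^{\beta}A^{\a})+4\,(\delta_{\a}A_{\beta})[A^{\a},A^{\beta}]+[A_{\a},A_{\beta}][A^{\a},A^{\beta}],
\end{align*}
after applying $\tau$ and the antisymmetry $F_{\a\beta}=-F_{\beta\a}$.

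First I would dispose of $q=1$: by Lemma \ref{tadpole}(i) (the vanishing tadpole on $\T_\Th^n$), both $\ncint AD^{-1}$ and $\ncint\eps JAJ^{-1}D^{-1}$ are zero, hence $\ncint\wt A D^{-1}=0$. Next, for $q=2,3,4$ I invoke Lemma \ref{ncadmoins1} to write $\ncint(\wt A D^{-1})^{q}=\sum_{\sigma\in\{+,-\}^{q}}\ncint\mathbb{A}^{\sigma}$; the key point is then Lemma \ref{Termaterm}(iv), which uses the Diophantine hypothesis on $\tfrac{1}{2\pi}\Th$ to kill all mixed-sign terms, and Lemma \ref{symetrie}, which yields $\ncint(\mathbb{A}^{+})^{q}=\ncint(\mathbb{A}^{-})^{q}$. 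Thus only
$$
\ncint(\wt A D^{-1})^{q}=2\,\ncint(\mathbb{A}^{+})^{q},\qquad q=2,3,4,
$$
survives, and the explicit values of the right-hand sides are supplied by Lemma \ref{Termaterm}(i)--(iii).

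The core of the argument is then the order-by-order identification. Using $A_{\a}=\sum_{l}a_{\a,l}U_{l}$, $\delta_{\mu}(U_{l})=il_{\mu}U_{l}$ and $\tau(U_{l_{1}}\cdots U_{l_{q}})=e^{i\phi(l)}\delta_{\sum l_{i},0}$ with the phase computed from (\ref{rel1}), I would rewrite each polynomial term in $\tau(F_{\a\beta}F^{\a\beta})$ as a Fourier sum: the quadratic part becomes $2\sum_{l}a_{\beta,l}a^{\beta}_{-l}|l|^{2}-2\sum_{l}a_{\a,l}a_{\beta,-l}l^{\a}l^{\beta}$, which directly matches $-\tfrac{1}{c}\cdot\tfrac{1}{2}\ncint(\wt A D^{-1})^{2}$ by Lemma \ref{Termaterm}(i); the cubic part reduces (after using the cyclic symmetry to combine the two orderings of $\delta_{\a}A_{\beta}A^{\a}A^{\beta}$ and extracting a sine from $e^{-i l_{1}\Th l_{2}/2}-e^{il_{1}\Th l_{2}/2}$) to a triple Fourier sum of the exact shape appearing in Lemma \ref{Termaterm}(ii), up to relabelings of dummy indices $(l_{1},l_{2})\mapsto(-l_{1}-l_{2},l_{2})$ and $l_{1}\leftrightarrow l_{2}$ which produce the required sign; and the quartic part $\tau([A_{\a},A_{\beta}][A^{\a},A^{\beta}])$ gives a quadruple sum that, after writing each commutator as $e^{-il_{i}\Th l_{j}/2}-e^{il_{i}\Th l_{j}/2}$, produces the double $\sin\!\tfrac{l_{1}\Th(l_{2}+l_{3})}{2}\sin\!\tfrac{l_{2}\Th l_{3}}{2}$ structure of Lemma \ref{Termaterm}(iii).

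The main obstacle will be the cubic and quartic matching in the previous paragraph: because the trace on the torus is cyclic only up to the nontrivial phases from (\ref{rel1}), one cannot naively equate the Fourier sums coming from $\tau(F^{2})$ with those coming from $\ncint(\mathbb{A}^{+})^{q}$; they agree only after suitable changes of summation variable (typically $l_{i}\leftrightarrow-\sum_{j}l_{j}$ and permutations of the $l_{i}$) that flip the sign of the relevant $\sin$ factors and thereby produce the global factor $-c$. Once these rearrangements are carried out, the three contributions assemble into
$$
\zeta_{D_{A}}(0)-\zeta_{D}(0)=-c\big(\text{quad}+\text{cub}+\text{quart}\big)=-c\,\tau(F_{\a_{1}\a_{2}}F^{\a_{1}\a_{2}}),
$$
as claimed.
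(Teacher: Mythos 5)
Your proposal is correct and follows essentially the same route as the paper: start from \eqref{termconstanttilde}, use Lemma~\ref{ncadmoins1}, kill crossed terms by Lemma~\ref{Termaterm}(iv) plus Lemma~\ref{symetrie} (and $q=1$ by the tadpole), then match the Fourier coefficients of $\tau(F_{\a_1\a_2}F^{\a_1\a_2})$ order by order against Lemma~\ref{Termaterm}(i)--(iii). One small nit: $\tau$ is genuinely cyclic on $\Coo(\T^n_\Th)$ (the phases sit in the product $U_kU_q$, not in the trace), so the caveat in your last paragraph is not a real obstacle --- the relabelings you describe go through exactly because $\tau$ is a trace.
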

\begin{proof}
By (\ref{termconstanttilde}) and Lemma \ref{ncadmoins1} we get
$$
\zeta_{D_A}(0)-\zeta_{D}(0) =
\sum_{q=1}^n \tfrac {(-1)^q}{q} \sum_{\sigma\in \set{+,-}^q}\ncint
\mathbb{A}^\sigma.
$$
By Lemma \ref{Termaterm} $(iv)$, we see that the crossed terms all
vanish. Thus,
with Lemma \ref{symetrie}, we get
\begin{equation}
\label{n4eq1}
\zeta_{D_A}(0)-\zeta_{D}(0) = 2 \sum_{q=1}^n
\tfrac {(-1)^q}{q} \ncint (\mathbb{A}^+)^q.
\end{equation}

By definition,
\begin{align*}
F_{\alpha_{1} \alpha_{2}}&=i\sum_{k}\big(a_{\alpha_{2},k}\,
k_{\alpha_{1}}-a_{\alpha_{1},k}\, k_{\alpha_{2}}\big)U_{k}
+\sum_{k,\,l}a_{\alpha_{1},k}\,a_{\alpha_{2},l}\,[U_{k},U_{l}]\\
&=i\sum_{k}\big[(a_{\alpha_{2},k} \, k_{\alpha_{1}} -a_{\alpha_{1},k}
\,
k_{\alpha_{2}}) -2\sum_{l}a_{\alpha_{1},k-l}\,a_{\alpha_{2},l}\,
\sin(\tfrac{k.\Th
l}{2})\big] \, U_{k}.
\end{align*}
Thus
\begin{align*}
\tau(F_{\alpha_{1}\alpha_{2}} F^{\alpha_{1}\alpha_{2}})
&=\sum_{\alpha_{1},\,\alpha_{2}=1}^{2^m}\, \sum_{k\in \Z^4}\big[
(a_{\alpha_{2},k} \, k_{\alpha_{1}}-a_{\alpha_{1},k} \,
k_{\alpha_{2}})
-2\sum_{l'\in\Z^4}a_{\alpha_{1},k-l'}\,a_{\alpha_{2},l'}\,
\sin(\tfrac{k.\Th l'}{2})\big]\\
&\hspace{2.2cm} \big[(a_{\alpha_{2},-k} \, k_{\alpha_{1}}
-a_{\alpha_{1},-k}\, k_{\alpha_{2}})
-2\sum_{l"\in
\Z^4}a_{\alpha_{1},-k-l"}\,a_{\alpha_{2},l"}\,\sin(\tfrac{k.\Th
l"}{2})\big].
\end{align*}
One checks that the term in $a^q$ of $\tau(F_{\alpha_{1}\alpha_{2}}
F^{\alpha_{1}\alpha_{2}})$ corresponds to the term $\ncint
(\mathbb{A}^+)^q$ given by
Lemma \ref{Termaterm}. For $q=2$, this is
$$
-2\sum_{l\in\Z^4,\,\alpha_{1},\,\alpha_{2}} \, a_{\alpha_{1},l} \,
a_{\alpha_{2},-l}
\, \big(l_{\alpha_{1}}l_{\alpha_{2}} - \delta_{\alpha_{1}\alpha_{2}}
\vert l \vert
^2\big).
$$
\noindent For $q=3$, we compute the crossed terms:
$$
i\sum_{k,k',l} (a_{{\a_{2}},k} \, k_{\a_{1}} - a_{{\a_{1}},k} \,
k_{\a_{2}})\
a_{k'}^{\a_1}\ a_{l}^{\a_2} \big(U_k[U_{k'},l]+[U_{k'},U_l]U_k \big),
$$
which gives the following $a^3$-term in
$\tau(F_{\alpha_{1}\alpha_{2}} F^{\alpha_{1}\alpha_{2}})$
$$
-8\sum_{l_i} a_{\a_3,-l_1-l_2}\,a^{\a_1}_{l_2}\,a_{\a_1,l_1}\ \sin
\tfrac{l_1.\Th
l_2}{2}\,l_1^{\a_3} .
$$
\noindent For $q=4$, this is
$$-4\sum_{l_i}a_{\alpha_{1},-l_1-l_2-l_3}\,
a_{{\alpha_{2}},l_3} \, a^{\alpha_{1}}_{l_2} \, a^{\alpha_{2}}_{l_1}
\sin \tfrac{l_1
.\Th (l_2+l_3)}{2}\, \sin \tfrac{ l_2 .\Th l_3}{2}
$$
which corresponds to the term $\ncint (\mathbb{A}^+)^4$.
We get finally,
\begin{equation}\label{n4eq2}
\sum_{q=1}^n
\tfrac {(-1)^q}{q} \ncint (\mathbb{A}^+)^q
=- \tfrac c2 \tau(F_{\a_1,\a_2}F^{\a_1\a_2}).
\end{equation}
Equations (\ref{n4eq1}) and (\ref{n4eq2}) yield the result.
\end{proof}

\begin{lemma}
    \label{term-n=2}
Suppose $n=2$. Then, with the same hypothesis as in Lemma
\ref{formegenerale},
\begin{align*}
\hspace{-5.9cm}\text{(i)} \quad \quad &  \ncint (\mathbb
A^+)^2= \ncint (\mathbb A^-)^2=0.
\end{align*}
(ii) Suppose $\tfrac{1}{2\pi}\Th$ diophantine. Then
$$
\ncint \mathbb A^+ \mathbb A^{-}
=  \ncint \mathbb A^- \mathbb A^+=0.
$$
\end{lemma}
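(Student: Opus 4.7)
The plan is to mirror the proof of Lemma \ref{Termaterm} in dimension $n=2$, exploiting the key simplification that the trace identity
$$
\Tr(\ga^a\ga^\mu\ga^b\ga_\mu) = 2^m(2-n)\,\delta^{ab}
$$
vanishes identically for $n=2$. For part (i), take $\sigma=(+,+)$ so that $\phi_\sigma\equiv 0$ and $\lambda_\sigma=-1$; Lemma \ref{formegenerale} then gives
$$
\ncint(\mathbb{A}^+)^2 \,=\, -\underset{s=0}{\Res}\,\sum_{l\in\Z^2}\tilde a_{\a,l}\,f_{\a,\mu}(s,l)\,\Tr(\ga^{\a_2}\ga^{\mu_2}\ga^{\a_1}\ga^{\mu_1}),
$$
with $f_{\a,\mu}(s,l)={\sum_k}' \tfrac{k_{\mu_1}(k+l)_{\mu_2}}{|k|^{s+2}|k+l|^2}$. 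Following Lemma \ref{Termaterm}(i), I would expand $|k+l|^{-2}$ via (\ref{trick-1}) until the remainder $r(s,l)$ satisfies (H1). By Proposition \ref{calculres} combined with the parity symmetry $k\mapsto -k$, the only expansion term which carries a pole at $s=0$ in dimension $n=2$ is the leading one,
$$
f_1(s,l)\,\sim\,\tilde a_{\a,l}\,\tfrac{\delta_{\mu_1\mu_2}}{2}\,Z_2(s+2),
$$
with residue $\pi\delta_{\mu_1\mu_2}\tilde a_{\a,l}$; all subleading terms produce residues at $s\le -2$. Applying Lemma \ref{res-som} to commute $\underset{s=0}{\Res}$ with $\sum_l$ yields
$$
\ncint(\mathbb{A}^+)^2 \,=\, -\pi\sum_{l\in\Z^2}a_{\a_1,l}\,a_{\a_2,-l}\,\Tr(\ga^{\a_2}\ga^\mu\ga^{\a_1}\ga_\mu)\,=\,0,
$$
and $\ncint(\mathbb{A}^-)^2=\ncint(\mathbb{A}^+)^2$ is immediate from Lemma \ref{symetrie}.

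For part (ii), exactly the same expansion applies to $\sigma=(+,-)$ and $(-,+)$, with the additional phase $e^{\tfrac{i}{2}(\sigma_1-\sigma_2)\,k\cdot\Th l}$ now present. I would split $\sum_{l\in\Z^2}$ into the $l=0$ contribution and the $l\neq 0$ contribution. For $l=0$ the phase disappears and the computation of (i) applies verbatim, producing a result proportional to the vanishing trace $\Tr(\ga^{\a_2}\ga^\mu\ga^{\a_1}\ga_\mu)$. For $l\neq 0$, the diophantine hypothesis on $\tfrac{1}{2\pi}\Th$ permits invoking Theorem \ref{analytic}(iii): the series $\sum_{l\neq 0} b(l)\, f_{\pm\tfrac{1}{2\pi}\Th l}(s)$ arising from each term of the expansion extends holomorphically to all of $\C$, hence contributes nothing to the residue at $s=0$. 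Combining the two yields $\ncint\mathbb{A}^+\mathbb{A}^-=\ncint\mathbb{A}^-\mathbb{A}^+=0$.

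The main technical step is the careful bookkeeping of which expansion terms retain a pole at $s=0$ in dimension $n=2$ and the verification of (H1) for the final remainder --- exactly the same type of calculation performed in Lemma \ref{Termaterm}(i), but shorter here. The essential reduction is that in $n=2$, every subleading term in the expansion (\ref{trick-1}) has its pole displaced to $s\le -2$, so the whole question collapses to the single identity $\Tr(\ga^a\ga^\mu\ga^b\ga_\mu)=2^m(2-n)\delta^{ab}$, which vanishes precisely when $n=2$.
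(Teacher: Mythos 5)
Your proposal is correct and follows essentially the same route as the paper: apply Lemma \ref{formegenerale}, expand $|k+l|^{-2}$ via (\ref{trick-0}) until the remainder satisfies (H1), commute residue and sum by Lemma \ref{res-som}, compute the surviving residue via Proposition \ref{calculres}, and observe that the contracted trace $\Tr(\ga^{a}\ga^{\mu}\ga^{b}\ga_{\mu})=2^m(2-n)\delta^{ab}$ vanishes at $n=2$; for (ii) the diophantine hypothesis plus Theorem \ref{analytic}(iii) kills the $l\neq 0$ sector, reducing to the $l=0$ computation of (i). The only cosmetic difference is that you invoke the full three-step expansion (\ref{trick-1}) where the paper observes that for $n=2$ a single application of (\ref{trick-0}) already produces an (H1)-remainder.
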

\begin{proof}
$(i)$ Lemma \ref{formegenerale} entails that $\ncint \mathbb{A}^{++}=
\underset{s=0}{\Res} \sum_{l\in \Z^2} - f(s,l)$ where
$$
f(s,l):=  {{\sum}'_{k\in\Z^2}}
\tfrac{k_{\mu_1}(k+l)_{\mu_2}}{|k|^{s+2}|k+l|^2}\ \wt
a_{\a,l}\
\Tr(\ga^{\a_2}\ga^{\mu_2}\ga^{\a_1}\ga^{\mu_1})=:f_{\mu,\a}(s,l)
\Tr(\ga^{\a_2}\ga^{\mu_2}\ga^{\a_1}\ga^{\mu_1})
$$
and $\wt a_{\a,l}:=a_{\alpha_1,l}\, a_{\alpha_2,-l}$. This time,
since $n=2$, it is
enough to apply just once (\ref{trick-0}) to obtain an absolutely
convergent series.
Indeed, we get with (\ref{trick-0})
$$
f_{\mu,\a}(s,l)= {\sum_{k\in\Z^2}}'
\tfrac{k_{\mu_1}(k+l)_{\mu_2}}{|k|^{s+4}}\ \wt
a_{\a,l}- {\sum_{k\in\Z^2}}'
\tfrac{k_{\mu_1}(k+l)_{\mu_2}(2k.l+|l|^2)}{|k|^{s+4}|k+l|^2}\ \wt
a_{\a,l}.
$$
and the function $ r(s,l):= {\sum}'_{k\in\Z^2}
\tfrac{k_{\mu_1}(k+l)_{\mu_2}(2k.l+|l|^2)}{|k|^{s+4}|k+l|^2}\ \wt
a_{\a,l} $ is a
linear combination of functions of the type $H(s,l)$ satisfying the
hypothesis of
Corollary \ref{res-somH}. As a consequence, $r(s,l)$ satisfies (H1)
and
$$
f_{\mu,\a}(s,l)\sim {\sum_{k\in\Z^2}}'
\tfrac{k_{\mu_1}(k+l)_{\mu_2}}{|k|^{s+4}}\ \wt
a_{\a,l}\sim  {\sum_{k\in\Z^2}}'
\tfrac{k_{\mu_1}k_{\mu_2}}{|k|^{s+4}}\ \wt a_{\a,l}
$$ Note that the function $(s,l)\mapsto
h_{\mu,\a}(s,l):={\sum}'_{k\in\Z^2} \tfrac{k_{\mu_1}k_{\mu_2}}
{|k|^{s+4}}\ \wt a_{\a,l}$ satisfies (H2). Thus, Lemma \ref{res-som}
yields
$$
\underset{s=0}{\Res}\ f(s,l) =\sum_{l\in \Z^2} \underset{s=0}{\Res}\
h_{\mu,\a}(s,l)\Tr(\ga^{\a_2}\ga^{\mu_2}\ga^{\a_1}\ga^{\mu_1}).
$$
By Proposition \ref{calculres}, we get $\underset{s=0}{\Res}\
h_{\mu,\a}(s,l) =
\delta_{\mu_1\mu_2}\,\pi\, \wt a_{\a,l}$. Therefore,
$$
\ncint \mathbb{A}^{++}=-\pi \sum_{l\in \Z^2}\wt a_{\a,l}
\Tr(\ga^{\a_2}\ga^{\mu}\ga^{\a_1}\ga_{\mu})=0
$$
according to (\ref{Wick1}).

$(ii)$ By Lemma \ref{formegenerale}, we obtain that
$\ncint \mathbb{A}^{-+}= \underset{s=0}{\Res} \sum_{l\in \Z^2}
\lambda_\sigma
f_{\a,\mu}(s,l) \Tr(\ga^{\a_2}\ga^{\mu_2}\ga^{\a_1}\ga^{\mu_1})$
where $\la_\sigma=-(-i)^2=1$ and
$$
f_{\a,\mu}(s,l):=  {\sum_{k\in\Z^2}}'
\tfrac{k_{\mu_1}(k+l)_{\mu_2}}{|k|^{s+2}|k+l|^2}
\, e^{i \eta \,  k.\Th l} \,\wt a_{\a,l}\,
$$
and $\eta:=\half (\sigma_1-\sigma_2) =-1$. As in the proof of $(i)$,
since
the presence of the phase does not change the fact that $r(s,l)$
satisfies (H1), we
get
$$
f_{\a,\mu}(s,l) \sim {\sum_{k\in\Z^2}}'
\tfrac{k_{\mu_1}(k+l)_{\mu_2}}{|k|^{s+4}}\,e^{i \eta \, k.\Th l}\, \wt
a_{\a,l}:=g_{\a,\mu}(s,l)\, .
$$
Since $\tfrac {1}{2\pi}\Th$ is diophantine,  the functions $
s\mapsto \sum_{l\in\Z^2\backslash \{0\}} g_{\a,\mu}(s,l) $ are
holomorphic at $s=0$ by Theorem \ref{analytic} $3$. As a consequence,
$$
\ncint \mathbb{A}^{-+}=\underset{s=0}\Res\ g_{\a,\mu}(s,0)
\Tr(\ga^{\a_2}\ga^{\mu_2}\ga^{\a_1}\ga^{\mu_1})=\underset{s=0}\Res\
{\sum_{k\in\Z^2}}'
\tfrac{k_{\mu_1}k_{\mu_2}}{|k|^{s+4}}\,\wt
a_{\a,0}\,\Tr(\ga^{\a_2}\ga^{\mu_2}\ga^{\a_1}\ga^{\mu_1}).
$$
Recall from Proposition \ref{res-int} that
$\Res_{s=0}\,{\sum}'_{k\in\Z^2}\,\tfrac{k_{i}k_{j}} {\vert
k\vert^{s+4}}=\delta_{ij}\,\pi$. Thus, again with (\ref{Wick1}),
\begin{align*}
\ncint \mathbb{A}^{-+}=\wt a_{\a,0}\,
\pi\,\Tr(\ga^{\a_2}\ga^{\mu}\ga^{\a_1}\ga_{\mu})=0.
\tag*{\qed}
\end{align*}
\hideqed
\end{proof}

\begin{lemma}
    \label{termeconstantn=2}
Suppose $n=2$ and $\tfrac{1}{2\pi}\Th$ diophantine. For any self-adjoint one-form $A$,
\begin{align*}
\zeta_{D_A}(0)-\zeta_{D}(0)= 0.
\end{align*}
\end{lemma}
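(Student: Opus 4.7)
The plan is to apply the general formula \eqref{termconstanttilde} in the case $n=2$, which reduces the identity to
\[
\zeta_{D_A}(0)-\zeta_{D}(0) = -\ncint \wt A D^{-1} + \tfrac12 \ncint (\wt A D^{-1})^2,
\]
and then to show that each of the two terms on the right vanishes separately.

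For the first (tadpole-like) term, I would invoke Lemma \ref{ncadmoins1} to rewrite $\ncint \wt A D^{-1} = \ncint \mathbb{A}^+ + \ncint \mathbb{A}^-$. Since $\mathbb{A}^+ = A\DD D^{-2}$ and $\mathbb{A}^- = \epsilon JAJ^{-1}\DD D^{-2}$ differ from $AD^{-1}$ and $\epsilon JAJ^{-1} D^{-1}$ respectively only by a smoothing operator ($P_0\in OP^{-\infty}$), the vanishing of both contributions follows from Lemma \ref{tadpole}$(i)$ applied with $p=1$ and $q=1<n=2$, which is exactly the vanishing tadpole property.

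For the quadratic term, I would again use Lemma \ref{ncadmoins1} to decompose
\[
\ncint (\wt A D^{-1})^2 = \ncint \mathbb{A}^{++} + \ncint \mathbb{A}^{--} + \ncint \mathbb{A}^{+-} + \ncint \mathbb{A}^{-+}.
\]
The two diagonal terms $\ncint (\mathbb{A}^\pm)^2 = \ncint \mathbb{A}^{\pm\pm}$ vanish by Lemma \ref{term-n=2}$(i)$ (this is the place where the $n=2$ gamma-matrix identity $\Tr(\ga^{\alpha_2}\ga^\mu\ga^{\alpha_1}\ga_\mu)$ yields zero after the single application of \eqref{trick-0}). The two crossed terms $\ncint \mathbb{A}^{\pm\mp}$ vanish by Lemma \ref{term-n=2}$(ii)$, which is where the diophantine hypothesis on $\tfrac{1}{2\pi}\Th$ enters: it ensures, via Theorem \ref{analytic}$(iii)$, that the sum over $l\neq 0$ in the relevant series is holomorphic at $s=0$, leaving only the $l=0$ contribution, which is then killed by the same gamma trace identity \eqref{Wick1}.

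Putting these vanishings together yields $\zeta_{D_A}(0)-\zeta_D(0)=0$. The proof is essentially a bookkeeping step once the technical lemmas are in place; the real work has already been done in establishing Lemmas \ref{tadpole} and \ref{term-n=2}, and the only subtle point to highlight is that, in contrast to the $n=4$ case where the quartic and cubic terms produce the Yang--Mills action, for $n=2$ every term already vanishes at the quadratic level, so no analogue of $\tau(F_{\mu\nu}F^{\mu\nu})$ appears.
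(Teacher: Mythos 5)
Your proof is correct and follows the same route as the paper: formula \eqref{termconstanttilde} for $n=2$, then Lemma \ref{ncadmoins1} to decompose into $\mathbb{A}^\sigma$'s, then Lemma \ref{term-n=2} to kill the $q=2$ contributions. You are slightly more explicit than the paper's one-line proof (which only cites \eqref{termconstanttilde}, Lemma \ref{ncadmoins1} and Lemma \ref{term-n=2}) in that you spell out why the $q=1$ tadpole term $\ncint \wt A D^{-1}$ vanishes, namely via Lemma \ref{tadpole}$(i)$ with $p=q=1<n$ after reducing $\mathbb{A}^\pm$ to $AD^{-1}$ and $\epsilon JAJ^{-1}D^{-1}$ modulo a smoothing operator; the paper leaves this step implicit, so your bookkeeping fills a small gap without changing the approach.
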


\begin{proof}
As in Lemma \ref{double}, we use (\ref{termconstanttilde}) and Lemma
\ref{ncadmoins1}
so the result
follows from Lemma \ref{term-n=2}.
\end{proof}

\subsubsection{Odd dimensional case}

\begin{lemma}
\label{impair}
Suppose $n$ odd and $\tfrac{1}{2\pi}\Th$
diophantine.
Then for any self-adjoint 1-form $A$ and $\sigma\in \{-,+\}^q$ with
$2\leq q\leq n$,
$$
\ncint \mathbb{A}^\sigma = 0\, .
$$
\end{lemma}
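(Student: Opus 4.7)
The plan is to reduce $\ncint \mathbb{A}^\sigma$ to residues of zeta functions covered by Theorem~\ref{analytic}, and then to exploit a parity obstruction coming from the oddness of $n$, exactly in the spirit of the proof of Lemma~\ref{ncint-odd-pdo}.

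First, I would apply Lemma~\ref{formegenerale} to write $\ncint \mathbb{A}^\sigma = \underset{s=0}{\Res}\, f(s)$ where the $k$-dependent phase in $f(s)$ is $e^{ik.\Theta\, c_\sigma(l)}$ with $c_\sigma(l) := \tfrac{1}{2}\sum_{j=1}^{q-1}(\sigma_j-\sigma_q)\, l_j$, and the remaining $k$-dependent factor is the rational function $g_\mu(s,k,l)$. Next, I would apply the identity (\ref{trick-0}) iteratively to each factor $1/|k+\wh l_i|^2$ appearing in $g_\mu$, exactly as was done in Lemma~\ref{zetageneral-lem} (and as Corollary~\ref{res-somH} ensures the remainders satisfy hypothesis (H1)). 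Modulo such remainders, $f(s)$ is a finite linear combination of series of the form
$$
\sum_{l\in (\Z^n)^{q-1}} a_\sigma(l)\ {\sum_{k\in \Z^n}}'\, \frac{Q(k,l)\ e^{i k.\Theta\, c_\sigma(l)}}{|k|^{s+2q+2|j|_1}},
$$
where $|j|_1\in \N_0$, $a_\sigma(l)$ is Schwartz in $l$, and $Q(k,l)$ is polynomial in $(k,l)$; this is the same structural reduction used in Lemma~\ref{ncint-odd-pdo}.

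Then, decomposing $Q(k,l) = \sum_{d,\beta} M_{d,\beta}(l)\, k^\beta$ into monomials in $k$ (with $|\beta|_1=d$), I would invoke Theorem~\ref{analytic}(ii)--(iii), whose hypotheses are met since $\tfrac{1}{2\pi}\Theta$ is diophantine. Each resulting series then extends meromorphically to $\C$ with only a simple possible pole at $s = n + d - 2q - 2|j|_1$, so a pole at $s=0$ requires $d = 2q + 2|j|_1 - n$, and Lemma~\ref{res-som} lets me commute the residue with the sum over $l$. The residue at $s=0$ is, up to Schwartz coefficients,
$$
\sum_{l\in \mathcal{Z}_\sigma} a_\sigma(l)\, M_{d,\beta}(l)\ \int_{u\in S^{n-1}} u^\beta\, dS(u),
$$
with $\mathcal{Z}_\sigma$ as in Theorem~\ref{analytic}.

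Finally, comes the parity obstruction: since $n$ is odd while $2q + 2|j|_1$ is even, the required degree $d = 2q + 2|j|_1 - n$ is odd, so $u^\beta$ is odd on $S^{n-1}$ and $\int_{S^{n-1}} u^\beta\, dS(u)=0$. Every residue therefore vanishes, giving $\ncint \mathbb{A}^\sigma = 0$. The only genuine work is the justification of the expansion and of the interchange of summation and residue; the diophantine hypothesis is the key input that makes Theorem~\ref{analytic} applicable at this step, and once this is granted the parity argument is essentially the same as in Lemma~\ref{ncint-odd-pdo}.
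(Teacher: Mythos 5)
Your proof is correct, but it takes a genuinely different route from the one in the paper. The paper's proof of Lemma~\ref{impair} is a one-liner: it observes that $\mathbb{A}^{+}=(A\DD)D^{-2}$ and $\mathbb{A}^{-}=(\eps JAJ^{-1}\DD)D^{-2}$ lie in $\Psi_1(\A)$, hence so does $\mathbb{A}^\sigma$ since $\Psi_1(\A)$ is an algebra (Lemma~\ref{pdoalg}), and then applies Lemma~\ref{ncint-odd-pdo} with the odd value $q=n$ to get $\ncint \mathbb{A}^\sigma=\ncint \mathbb{A}^\sigma\,|D|^{-(n-n)}=0$. You instead re-derive the parity obstruction directly for $\mathbb{A}^\sigma$: you start from the explicit formula of Lemma~\ref{formegenerale}, expand the rational function $g_\mu$ via (\ref{trick-0}) modulo (H1)-type remainders (Corollary~\ref{res-somH}, Lemma~\ref{res-som}), decompose into monomials in $k$, and then apply Theorem~\ref{analytic}~(ii)--(iii) together with the vanishing of $\int_{S^{n-1}}u^\beta\,dS(u)$ for $|\beta|_1$ odd. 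Both proofs rest on the same ultimate mechanism — Theorem~\ref{analytic} plus odd-degree spherical vanishing — but package the reductions differently. The paper's route is more modular: inside Lemma~\ref{ncint-odd-pdo}, the pseudodifferential decomposition $P=BD^{-2p}+R$ yields numerators that are already polynomial in $k$, so no (\ref{trick-0}) expansion is needed, at the cost of having first established that $\mathbb{A}^\sigma\in\Psi_1(\A)$ and that $\Psi_1(\A)$ is closed under products. Your direct route sidesteps that algebraic preparation but must redo the $|k+\wh l_i|^{-2}$ expansion and the (H1)/(H2) commutation bookkeeping explicitly, exactly as Lemma~\ref{Termaterm} does in the $n=4$ case; it also makes the role of the Diophantine hypothesis transparent, entering precisely to kill the contribution of $l\notin\mathcal{Z}_\sigma$ via Theorem~\ref{analytic}~(iii), whereas in the paper this is encapsulated inside Lemma~\ref{ncint-odd-pdo}.
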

\begin{proof} Since $\mathbb{A}^\sigma \in \Psi_1(\A)$, Lemma
\ref{ncint-odd-pdo} with $k=n$ gives the result.
\end{proof}

\begin{corollary}
\label{zetaimpair}
With the same hypothesis of Lemma \ref{impair}, for any self-adjoint
one-form $A$,
$\zeta_{D_A}(0)-\zeta_{D}(0)=0.$
\end{corollary}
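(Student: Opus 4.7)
The plan is to combine equation~(\ref{termconstanttilde}) with Lemmas~\ref{ncadmoins1}, \ref{impair}, and a small supplement from Lemma~\ref{ncint-odd-pdo} to handle the one term not covered by Lemma~\ref{impair}. Specifically, (\ref{termconstanttilde}) reduces the claim to showing that every summand in
$$
\zeta_{D_{A}}(0)-\zeta_{D}(0)
=\sum_{q=1}^{n}\tfrac{(-1)^{q}}{q}\ncint(\wt A\,D^{-1})^{q}
$$
vanishes under the standing hypotheses that $n$ is odd and $\tfrac{1}{2\pi}\Th$ is Diophantine.

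First I would use Lemma~\ref{ncadmoins1} to expand each term as a finite sum
$$
\ncint(\wt A\,D^{-1})^{q}=\sum_{\sigma\in\{+,-\}^{q}}\ncint\mathbb{A}^{\sigma},
$$
so that the problem is reduced to checking that $\ncint\mathbb{A}^{\sigma}=0$ for every $\sigma\in\{+,-\}^{q}$ and every $q$ with $1\le q\le n$. For $2\le q\le n$ this is precisely Lemma~\ref{impair}, so no new work is needed there.

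The remaining case is $q=1$, which is not formally covered by the statement of Lemma~\ref{impair} (its index runs over $2\le q\le n$). Here I would argue directly: $\mathbb{A}^{+}=A\DD D^{-2}$ and $\mathbb{A}^{-}=\eps JAJ^{-1}\DD D^{-2}$ both lie in $\Psi_{1}(\A)$, since $\wt A\in\DD_{1}(\A)$ and $\DD D^{-2}$ is a pseudodifferential operator in $\Psi_{1}(\A)^{-1}$. Applying Lemma~\ref{ncint-odd-pdo} with the odd integer $q=n$ (using that $n$ is odd by hypothesis) and with $P=\mathbb{A}^{\pm}\in\Psi_{1}(\A)$ then yields
$$
\ncint\mathbb{A}^{\pm}=\ncint\mathbb{A}^{\pm}\,|D|^{-(n-n)}=0,
$$
so the $q=1$ contribution also vanishes. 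Summing the (all vanishing) contributions gives the corollary.

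There is no substantial obstacle: the whole argument is a bookkeeping check that every $\ncint\mathbb{A}^{\sigma}$ lands within the scope of Lemma~\ref{ncint-odd-pdo}, which is the structural vanishing result (specific to $\Psi_{1}(\A)$ and to the Diophantine assumption on $\tfrac{1}{2\pi}\Th$) that carries all the weight. The only mild point to be careful about is the case $q=1$, where one has to invoke Lemma~\ref{ncint-odd-pdo} directly rather than quoting Lemma~\ref{impair}.
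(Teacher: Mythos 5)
Your argument is correct and follows the same route as the paper: express $\zeta_{D_A}(0)-\zeta_D(0)$ via (\ref{termconstanttilde}), decompose with Lemma~\ref{ncadmoins1}, and kill each $\ncint\mathbb{A}^\sigma$ by the structural vanishing of Lemma~\ref{ncint-odd-pdo}. You are in fact a bit more careful than the paper here: the paper's proof simply cites Lemma~\ref{impair}, whose statement only covers $2\le q\le n$, and silently leaves the $q=1$ contribution $\ncint\mathbb{A}^{\pm}$ aside (it does vanish, either by the tadpole Lemma~\ref{tadpole}(i) once $n\ge 3$, or — as you note — directly from Lemma~\ref{ncint-odd-pdo}, whose proof never uses $q\ge 2$). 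Your explicit treatment of $q=1$ via $\mathbb{A}^{\pm}\in\Psi_1(\A)$ and Lemma~\ref{ncint-odd-pdo} with $q=n$ closes that small gap cleanly without changing the overall strategy.
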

\begin{proof} As in Lemma \ref{double}, we use
(\ref{termconstanttilde}) and Lemma \ref{ncadmoins1}
so the result
follows from Lemma \ref{impair}.
\end{proof}

\subsection{Proof of the main result}
\begin{proof}[Proof of Theorem \ref{main}.]
$(i)$ By (\ref{formuleaction}) and Proposition \ref{invariance}, we
get
$$
\SS(\DD_{A},\Phi,\Lambda) \, = \, 4\pi \Phi_{2}\,
    \Lambda^{2}  + \Phi(0) \,
    \zeta_{D_{A}}(0) + \mathcal{O}(\Lambda^{-2}),
$$
where $\Phi_{2}= \half\int_{0}^{\infty} \Phi(t) \, dt$. By Lemma
\ref{termeconstantn=2},
$\zeta_{D_A}(0) - \zeta_{D}(0) = 0$ and from Proposition
\ref{zeta(0)}, $\zeta_{D}(0)=0$,
so we get the result.

$(ii)$ Similarly,
$\SS(\DD_{A},\Phi,\Lambda) \, =  8 \pi^2\, \Phi_{4}\,
    \Lambda^{4} + \Phi(0) \,
    \zeta_{D_{A}}(0) + \mathcal{O}(\Lambda^{-2})$
with $\Phi_{4}= \half\int_{0}^{\infty} \Phi(t) \, t \, dt$.
Lemma \ref{double} implies that
$\zeta_{D_A}(0) - \zeta_D(0)=-c\,\tau(F_{\mu\nu}F^{\mu\nu})$
and by Proposition \ref{zeta(0)},
$\zeta_{D_A}(0)=-c\,\tau(F_{\mu\nu}F^{\mu\nu})$ leading to the
result.

$(iii)$ is a direct consequence of (\ref{formuleaction}),
Propositions \ref{zeta(0)}, \ref{invariance}, and Corollary
\ref{zetaimpair}.
\end{proof}

\appendix
\section{Appendix}

\subsection{Proof of Lemma \ref{propOP}}

$(i)$ We have $|D|T|D|^{-1} = T + \delta(T) |D|^{-1}$
and $|D|^{-1}T|D| =  T  - |D|^{-1}\delta(T)$.
A recurrence proves that for any $k\in \N$,
$|D|^{k}T|D|^{-k}= \sum_{q=0}^k \genfrac{(}{)}{0pt}{1}{q}{k} \,
\delta^q(T)
|D|^{-q}$ and we get $|D|^{-k}T|D|^{k}= \sum_{q=0}^k
(-1)^q\genfrac{(}{)}{0pt}{1}{q}{k} \, |D|^{-q} \delta^q(T)$.

As a consequence, since $T$, $|D|^{-q}$
and $\delta^q(T)$ are in $OP^0$
for any $q\in \N$, for any $k\in \Z$,
$|D|^{k}T|D|^{-k}\in OP^0$. Let us fix $p\in
\N_0$ and define $F_p(s):=\delta^p(|D|^{s}T|D|^{-s})$
for $s\in \C$. Since for $k\in \Z$, $F_p(k)$ is bounded,
a complex interpolation proves that $F_p(s)$ is bounded, which
gives $|D|^{s} T |D|^{-s} \in OP^0$.

$(ii)$ Let $T\in OP^\a$ and $T'\in OP^\beta$. Thus,
$T|D|^{-\a}$, $T'|D|^{-\beta}$ are in $OP^0$. By $(i)$ we get
$|D|^{\beta}T|D|^{-\a}|D|^{-\beta} \in OP^0$, so
$T'|D|^{-\beta}|D|^{\beta}T|D|^{-\beta-\a}\in OP^0$. Thus,
$T'T|D|^{-(\a+\beta)}\in OP^{0}$.

$(iii)$ For $T\in OP^\a$, $|D|^{\a-\beta}$ and
$T|D|^{-\a}$ are in $OP^0$, thus
$T|D|^{-\beta}=T|D|^{-\a}|D|^{\a-\beta} \in OP^0$.

$(iv)$ follows from $\delta(OP^0)\subseteq OP^0$.

$(v)$ Since $ \nabla(T)=\delta(T)|D| + |D|\delta(T)-[P_0,T]$,
the result follows from $(ii)$, $(iv)$ and the fact that $P_0$ is in
$OP^{-\infty}$.

\subsection{Proof of Lemma \ref{pdoalg}}

The non-trivial part of the proof is the stability under the product
of operators. Let $T,
T' \in \Psi(\A)$. There exist $d,d' \in \Z$ such that for any
$N\in\N$, $N> |d|+|d'|$,
there exist $P,P'$ in $\DD(\A)$, $p,p'\in \N_0$, $R\in OP^{-N-d'}$,
$R'\in OP^{-N-d}$ such
that $T=PD^{-2p}+R$, $T'=P'D^{-2p'}+R'$, $PD^{-2p}\in OP^d$ and
$P'D^{-2p'}\in
OP^{d'}$.

Thus, $TT'=PD^{-2p}P'D^{-2p'} + RP'D^{-2p'}+PD^{-2p}R' +RR'$.

We also have $RP'D^{-2p'}\in OP^{-N-d'+d'} = OP^{-N}$ and similarly,
$PD^{-2p}R'
\in OP^{-N}$. Since $RR'\in OP^{-2N}$, we get
$$
TT'\sim PD^{-2p}P'D^{-2p'} \mod OP^{-N}.
$$
If $p=0$, then $TT'\sim QD^{-2p'} \mod OP^{-N}$ where $Q=PP'\in
\DD(\A)$ and
$QD^{-2p'}\in OP^{d+d'}$. Suppose $p\neq 0$. A recurrence proves that
for any $q\in
\N_0$,
$$
D^{-2}P' \sim \sum_{k=0}^q (-1)^k \nabla^k(P') D^{-2k-2}
+(-1)^{q+1}D^{-2}
\nabla^{q+1}(P') D^{-2q-2} \mod OP^{-\infty}\, .
$$
By Lemma \ref{propOP} $(v)$, the remainder is in $OP^{d'+2p'-q-3}$,
since $P'\in
OP^{d'+2p'}$. Another recurrence gives for any $q\in \N_0$,
$$
D^{-2p}P'\sim \sum_{k_1,\cdots,k_p=0}^q (-1)^{|k|_1}
\nabla^{|k|_1}(P')
D^{-2|k|_1-2p} \, \mod  OP^{d'+2p' -q -1 - 2p}.
$$
Thus, with $q_N=N+d+d'-1$,
$$
TT'\sim \sum_{k_1,\cdots,k_p=0}^{q_N} (-1)^{|k|_1} P
\nabla^{|k|_1}(P')
D^{-2|k|_1-2(p+p')} \, \mod OP^{-N}.
$$
The last sum can be written $Q_N D^{-2r_N}$ where $r_N:=p\,
q_N+(p+p')$. Since $Q_N\in
\DD(\A)$ and $Q_N D^{-2r_N}\in OP^{d+d'}$, the result follows.

\subsection{Proof of Proposition \ref{tracenc} }
Let $P\in OP^{k_1}, \,Q\in OP^{k_2} \in \Psi(\A)$. With
$[Q,|D|^{-s}]= \big(Q-\sigma_{-s}(Q)\big) \,|D|^{-s}$ and the
equivalence
$Q-\sigma_{-s}(Q) \sim  -\sum_{r=1}^N g(-s,r) \, \eps^r(Q) \mod
OP^{-N-1+k_2}$,
we get
$$
P[Q,|D|^{-s}] \sim -\sum_{r=1}^N  g(-s,r)\,P \eps^r(Q) |D|^{-s} \mod
OP^{-N-1+k_1+k_2-\Re(s)}
$$
which gives, if we choose $N=n+k_1+k_2$,
$$
\underset{s=0}{\Res}\ \Tr\big( P
[Q,|D|^{-s}] \big)= -\sum_{r=1}^{n+k_1+k_2}\underset{s=0}{\Res}\
g(-s,r) \Tr \big( P \eps^r(Q)|D|^{-s} \big).
$$
By hypothesis $s\mapsto \Tr \big( P \eps^r(Q)|D|^{-s} \big)$ has only
simple
poles. Thus, since $s=0$ is a zero of the analytic function $s\mapsto
g(-s,r)$ for any
$r\geq 1$, we have $\underset{s=0}{\Res}\ g(-s,r) \,\Tr \big( P
 \eps^r(Q)|D|^{-s}\big) =0$, which entails that
$\underset{s=0}{\Res}\ \Tr\big(
P[Q,|D|^{-s}]\big)=0$ and thus
\begin{align*}
\ncint PQ= \underset{s=0}{\Res}\ \Tr\big( P|D|^{-s} Q\big)\, .
\end{align*}
When $s\in \C$ with $\Re(s)> 2 \max(k_1+n+1,k_2)$, the operator $P
|D|^{-s/2}$ is trace-class while $|D|^{-s/2} Q$ is bounded, so
$\Tr\big( P|D|^{-s} Q\big) = \Tr \big(|D|^{-s/2} QP|D|^{-s/2}\big) =
\Tr\big(\sigma_{-s/2}(Q P)|D|^{-s}\big)$.
Thus, using (\ref{one-par}) again,
$$
\underset{s=0}{\Res}\ \Tr\big( P |D|^{-s} Q\big)  = \ncint Q P
 +\sum_{r=1}^{n+k_1+k_2} \underset{s=0}{\Res}\ g(-s/2,r)
\Tr\big(\eps^r(Q P)
|D|^{-s}\big).
$$
As before, for any $r\geq 1$, $\underset{s=0}{\Res}\ g(-s/2,r)
\Tr\big(\eps^r(Q P) |D|^{-s}\big)=0$ since $g(0,r) =0$ and the
spectral triple
is simple. Finally,
\begin{align*}
\underset{s=0}{\Res}\ \Tr\big( P |D|^{-s} Q\big) = \ncint Q P.
\end{align*}

\vspace{1cm}

\section*{Acknowledgments}

\hspace{\parindent}
We thank Pierre Duclos, Emilio Elizalde, Victor Gayral,  Thomas
Krajewski, Sylvie Paycha, Joe Varilly, Dmitri Vassilevich and Antony
Wassermann
for helpful discussions and St\'ephane Louboutin for his help
with Proposition \ref{calculres}.

A. Sitarz would like to thank the CPT-Marseilles for its hospitality
and the Universit\'e de Provence for its financial support and
acknowledge the support of Alexander von Humboldt Foundation through
the Humboldt Fellowship.

\newpage

\end{document}